\newif\ifFull
\newif\ifAppendix
\newif\ifLipics
\newif\ifSketch
\newif\ifAbbv
\newif\ifNewSchool
\newif\ifRemove
\newcommand\sketch[0]{Sketch}
\newcommand\sketch[0]{Proof (Sketch)}
\newcommand\sketch[0]{Sketch}
\definecolor {sepia} {rgb} {0.75,0.30,0.15}
\newtheorem{definition}{Definition}
\newtheorem{theorem}{Theorem}
\newtheorem{corollary}{Corollary}
\newtheorem{lemma}{Lemma}
\newtheorem{claim}{Claim}
\newtheorem{observation}{Observation}
\newtheorem{observation}{Observation}
\newtheorem{cclaim}{Claim}
\let\doendproof\endproof
\renewcommand\endproof{~\hfill$\boxtimes$\doendproof}
\definecolor {infocolor} {rgb} {0.6,0.6,0.6}
\definecolor {ideacolor} {rgb} {0.2,0.6,0.2}
\newenvironment {repeatlemma} [1]
{\noindent {\bf Lemma~\ref{#1}.}\ \slshape} {\normalfont}
\title{Reconstructing Generalized Staircase Polygons with Uniform Step Length}
\author{Nodari Sitchinava\thanks{Department of Information and Computer Sciences, University of Hawaii at Manoa, USA.
  \texttt{nodari@hawaii.edu}}
\and Darren Strash\thanks{
Department of Computer Science, Colgate University, Hamilton, NY, USA.
  \texttt{dstrash@cs.colgate.edu}}%
\else
\ifLipics
\titlerunning{Staircase Polygons with Uniform Step Length} 
\author[1]{Nodari Sitchinava\thanks{This material is based upon work supported by the National Science Foundation under Grant No. 1533823.}}
\author[2]{Darren Strash}
\affil[2]{Department of Information and Computer Sciences, University of Hawaii at Manoa, USA.\\
  \texttt{nodari@hawaii.edu}}
\affil[2]{Department of Computer Science, Colgate University, Hamilton, NY, USA.\\
  \texttt{dstrash@cs.colgate.edu}}

\authorrunning{N. Sitchinava and D. Strash}

\Copyright{Nodari Sitchinava and Darren Strash}

\subjclass{F.2.2 Nonnumerical Algorithms and Problems, G.2.2 Graph Theory}
\keywords{Dummy keyword -- please provide 1--5 keywords}

\else
\author{Nodari Sitchinava\inst{1}$^\text{,}$\thanks{This material is based upon work supported by the National Science Foundation under Grant No. 1533823.}, \and Darren Strash\inst{2}}
\institute{Department of Information and Computer Sciences, University of Hawaii at Manoa, USA.
  \email{nodari@hawaii.edu}%
  \and%
Department of Computer Science, Colgate University, Hamilton, NY, USA.
  \email{dstrash@cs.colgate.edu}%
}

\titlerunning{Reconstructing Generalized Staircase Polygons with Uniform Step Length}
\authorrunning{N. Sitchinava and D. Strash}

\fi 
\fi 

\ifLipics
\EventEditors{John Q. Open and Joan R. Acces}
\EventNoEds{2}
\EventLongTitle{42nd Conference on Very Important Topics (CVIT 2016)}
\EventShortTitle{CVIT 2016}
\EventAcronym{CVIT}
\EventYear{2016}
\EventDate{December 24--27, 2016}
\EventLocation{Little Whinging, United Kingdom}
\EventLogo{}
\SeriesVolume{42}
\ArticleNo{23}
\fi

\newcommand{\etal}{{et al.}}

\newcommand{\niceremark}[3]{\textcolor{red}{\textsc{#1 #2: }}\textcolor{blue}{\textsf{#3}}}
\newcommand{\nodari}[2][says]{\niceremark{Nodari}{#1}{#2}}
\newcommand{\darren}[2][says]{\niceremark{Darren}{#1}{#2}}
\newcommand{\idea}[1]{\textbf{Idea:}~\textcolor{ideacolor}{#1}}

\renewcommand{\idea}[1]{}	      
\renewcommand{\niceremark}[3]{}  

\newcommand{\C}{\mathcal{C}}
\newcommand{\R}{\mathcal{R}}
\newcommand{\fig}{Figure}
\renewcommand{\sec}{Section}
\ifFull
\else
\ifLipics
\else
\renewcommand{\fig}{Fig.}
\renewcommand{\sec}{Sect.}
\fi 
\fi 

\begin{document}
\maketitle

\begin{abstract}
\emph{Visibility graph reconstruction}, which asks us to construct a polygon that has a given visibility graph, is a fundamental problem with unknown complexity (although visibility graph recognition is known to be in PSPACE).
We show that two classes of uniform step length polygons can be reconstructed efficiently by finding and removing rectangles formed between consecutive convex boundary vertices called tabs. 
In particular, we give an $O(n^2m)$-time reconstruction algorithm for orthogonally convex polygons, 
where $n$ and $m$ are the number of vertices and edges in the visibility graph, respectively.
We further show that reconstructing a monotone chain of staircases (a histogram) is fixed-parameter tractable, when parameterized on the number of tabs, and polynomially solvable in time $O(n^2m)$ under reasonable alignment restrictions.

\keywords{Visibility graphs $\cdot$ Polygon reconstruction $\cdot$ Visibility graph recognition $\cdot$ Orthogonal polygons $\cdot$ Fixed-parameter tractability}
\end{abstract}

\section {Introduction}
Visibility graphs, used to capture visibility in or between polygons, are simple but powerful tools in computational geometry. They are integral to solving many fundamental problems, such as routing in polygons, and art gallery and watchman problems, to name a few.
Efficient, and even worst-case optimal, algorithms exist for computing a visibility graph from an input polygon~\cite{ghosh-mount-optimal-visibility}; however, comparatively little is known about the reverse direction: the so-called visibility graph \emph{recognition} and \emph{reconstruction} problems.

In this paper, we study \emph{vertex-vertex visibility graphs}, which are formed by visibility between pairs vertices of a polygon. Given a graph $G=(V,E)$, the visibility graph recognition problem asks if $G$ is the visibility graph of \emph{some} polygon. Similarly, the visibility graph reconstruction problem asks us to construct a polygon with $G$ as a visibility graph. Surprisingly, recognition of simple polygons is only known to be in PSPACE~\cite{everett-thesis}, and it is still unknown if simple polygons can be reconstructed in polynomial time.
Therefore, current solutions are typically for restricted classes of polygons. 
\ifFull
Even characterizations are only known for special classes of visibility graphs~\cite{esa-pseudo-polygons,orourke-streinu-1997}, and only four necessary conditions are known~\cite{ghosh-book} for standard visibility graphs.
\fi{}

\ifFull
\subsection{Special Classes}
\else
\subsection{Special Classes}
\fi 
A well-known result due to ElGindy~\cite{elgindy-thesis} is that every maximal outerplanar is a visibility graph and a polygon can be reconstructed from every such graph in polynomial time. Other special classes rely on a unique configuration of reflex and convex chains, which restrict visibility.
\ifFull
For instance, spiral polygons, which consist of a single reflex chain and a convex chain can be recognized in linear time~\cite{everett-spiral-1990}. Furthermore, tower polygons (also called funnel polygons), which consist of two reflex chains connected by a single edge, can be recognized in linear time~\cite{choi-funnel-1995}. Perhaps the most general class of visibility graphs that can be reconstructed in polynomial time are those from which from which a special class of $3$-matroids can be constructed in polynomial time~\cite{abello-kumar-2002}. Abello and Kumar~\cite{abello-kumar-1995} further showed that $2$-spirals fit into this class, and thus can be reconstructed in polynomial time.  Other special cases, such as convex polygons with a single convex hole~\cite{cai-everett-1995} can be reconstructed in polynomial time.

Related to our work here, Colley~\cite{colley-thesis-1991,colley-cccg-1992} showed that if each face of a maximal outerplanar graph is replaced by a clique on the same number of vertices, the resulting graph is a visibility graph of a uni-monotone polygon (monotone with respect to a single edge), and such a polygon can be reconstructed if the Hamiltonian cycle of the boundary edges is given as input. However, as noted in~\cite{evens-saeedi-2015}, not every uni-monotone polygon (even those with uniformly spaced vertices) has such a visibility graph. Evans and Saeedi~\cite{evens-saeedi-2015} further characterized terrain visibility graphs, which consist of a single monotone polygonal line.
\else
For instance, spiral polygons~\cite{everett-spiral-1990}, and tower polygons~\cite{choi-funnel-1995} (also called funnel polygons), can be reconstructed in linear time, and each consists of one and two reflex chains, respectively. $2$-spirals can also be reconstructed in polynomial time~\cite{abello-kumar-1995}, as can a more general class of visibility graphs related to $3$-matroids~\cite{abello-kumar-2002}.

For monotone polygons, Colley~\cite{colley-thesis-1991,colley-cccg-1992} showed that if each face of a maximal outerplanar graph is replaced by a clique on the same number of vertices, then the resulting graph is a visibility graph of some uni-monotone polygon (monotone with respect to a single edge), and such a polygon can be reconstructed if the Hamiltonian cycle of the boundary edges is known. However, not every uni-monotone polygon (even those with uniformly spaced vertices) has such a visibility graph~\cite{evens-saeedi-2015}. Finally, Evans and Saeedi~\cite{evens-saeedi-2015} characterized terrain visibility graphs, which consist of a single monotone polygonal line. 
\fi

\ifFull
Surprisingly little is known about visibility graphs of orthogonal polygons: we are only aware of results for \emph{orthogonal convex fans}, which consist of a single staircase and an extra vertex. These are also known as \emph{staircase polygons}. In particular, Abello and E\u{g}ecio\u{g}lu~\cite{abello-uniform-step-length} show that orthogonal convex fans with uniform side-length can be reconstructed in linear time; however, their construction relies on a simpler definition of visibility, which allows for blocking vertices. Furthermore, Abello~\etal~\cite{abello-convex-fans} show that general orthogonal convex fans are \emph{recognizable} in polynomial time, but reconstruction is still open even for this simple class of orthogonal polygons. Other algorithms for orthogonal polygons use different visibility representations, which seem to be easier to work with for orthogonal polygons.
\else
For orthogonal polygons, \emph{orthogonal convex fans} (also known as \emph{staircase polygons}), which consist of a single staircase and an extra vertex, can be recognized in polynomial time~\cite{abello-convex-fans}; however---strikingly---the \emph{only} class of orthogonal polygons known to be reconstructible in polynomial time is the staircase polygon with uniform step lengths, due to Abello and E\u{g}ecio\u{g}lu~\cite{abello-uniform-step-length}. %
\fi %
\ifFull %
 Other algorithms for orthogonal polygons use different visibility representations, which seem to be easier to work with for orthogonal polygons.
\subsection{Other Visibility Representations}
In the standard visibility graph of a polygon $P$, \ifFull also known as a vertex-vertex visibility graph,\fi{} edges are formed between vertices $u$ and $v$ if and only if the line segment $uv$ does not intersect the exterior of $P$. That is, the line of sight is allowed along edges, and through vertices of the polygon. As mentioned above, few results exist in the standard model of visibility. Instead, researchers have looked at other definitions of visibility, such as vertex-edge visibility, or edge-edge visibility. For orthogonal polygons in particular, edge-edge visibility graphs~\cite[\sec~7.3]{orourke-art-gallery-survey} consist of two disjoint trees and the polygon can be reconstructed when these trees ``mesh well''. Furthermore, orthogonal polygons can be reconstructed entirely from ``stabs''---for each vertex, we are given the edges hit by horizontal (or vertical) rays shot from the vertex~\cite{jackson-wismath-stabs} in $O(n\log n)$ time, if we are given the Hamiltonian cycle of the boundary edges with the input. See Asano~et al.~\cite{asano-survey} or Ghosh~\cite{ghosh-book} for a thorough review of results on visibility graphs.
\else
 Other algorithms for orthogonal polygons use different visibility representations such as vertex-edge or edge-edge visibility~\cite[\sec~7.3]{orourke-art-gallery-survey}, or ``stabs''~\cite{jackson-wismath-stabs}. See Asano~et al.~\cite{asano-survey} or Ghosh~\cite{ghosh-book} for a thorough review of results on visibility graphs.
\fi 

\ifFull
\subsection{Our Results}
\else
\subsection{Our Results}
\fi 
In this work, we investigate reconstructing polygons consisting of multiple uniform step length staircases.  
\ifFull To the best of our knowledge, the \emph{only} pure visibility reconstruction result for orthogonal polygons is for a single uniform step length staircase~\cite{abello-uniform-step-length}.\fi{} We first show that orthogonally convex polygons can be reconstructed in time $O(n^2m)$. We further show that reconstructing orthogonal uni-monotone polygons is fixed-parameter tractable, when parameterized on the number of the horizontal convex-convex boundary edges in the polygon. We also provide an $O(n^2m)$ time algorithm under reasonable alignment assumptions.
As a consequence of our reconstruction technique, we can also recognize the visibility graphs of these classes of polygons with the same running times. 

\section{Preliminaries}
Let $P$ be a polygon on $n$ vertices.
\ifFull We define visibility as follows.

\begin{definition}[visibility]
\label{definition:visibility}
Two points $p$ and $q$ are visible in polygon $P$ if line segment $pq$ does not intersect the exterior of $P$.
\end{definition}
\else
We say that a point $p$ \emph{sees} a point $q$ (or $p$ and $q$ are \emph{visible}) in polygon $P$ if the line segment $pq$ does not intersect the exterior of $P$.
\fi
Under this definition, visibility is allowed along edges and through vertices.

\ifFull
\begin{definition}[visibility graph]
A visibility graph $G_P=(V_P,E_P)$ of polygon $P$ has a vertex $v_p\in V_P$ for each vertex $p$ of $P$, and an edge $(v_p, v_q)\in E_P$ if and only if vertices $p$ and $q$ are visible in $P$.
\end{definition}
We further define a \emph{visibility graph} $G_P=(V_P,E_P)$ of polygon $P$ that has a vertex $v_p\in V_P$ for each vertex $p$ of $P$, and an edge $(v_p, v_q)\in E_P$ if and only if vertices $p$ and $q$ are visible in $P$. Due to the one-to-one relationship between polygon and graph vertices, we we reference the vertex of a visibility graph as we would a vertex of the polygon.
\else
\fi

\ifFull
We now give some basic definitions for graphs and visibility graphs in particular.
\else
\fi
For our visibility graph discussion, we adopt standard notation for graphs and polygons. In particular, for a graph $G=(V,E)$, we denote the \emph{neighborhood} of a vertex $v\in V$ by $N(v) = \{u\mid (v,u)\in E\}$, and denote the number of vertices and edges by $n=|V|$ and $m=|E|$, respectively. For a visibility graph $G_P=(V_P,E_P)$ of a polygon $P$, we call an edge in $G_P$ that is an edge of $P$ a \emph{boundary edge}. Other edges (diagonals in $P$) are {\em non-boundary edges}.

Finally, critical to our proofs is the fact that a maximal clique in $G_P$ corresponds to a maximal (in the number of vertices) convex region $R\subseteq P$ whose vertices are defined by vertices of $P$.  A vertex $v$ is called \emph{simplicial} if $N(v)$ forms a clique, or equivalently $v$ is in exactly one maximal clique. For our work here, we further adapt this definition for an edge. We say that an edge $(u,v)$ is {\em $1$-simplicial} if $N(u)\cap N(v)$ is a clique, or equivalently $(u,v)$ is in exactly one maximal clique\footnote{This is not to be confused with \emph{simplicial edges}, which are defined elsewhere\ifFull in the literature \fi{} to be edges $(u,v)$ such that for every $w\in N(u)$ and $x\in N(v)$, $w$ and $x$ are adjacent.}. The intuition behind why we consider $1$-simplicial edges is that, in orthogonal polygons with edges of uniform length, boundary edges between convex vertices are $1$-simplicial, with the vertices of the clique forming a rectangle. (See \fig~\ref{figure:maximal-clique}.)

\ifFull
For our running times, we rely on the following claim for $1$-simplicial edges.
\begin{cclaim}
\label{claim:simplicial-k-clique-time}
We can test if $(u,v)$ is $1$-simplicial and in a maximal $k$-clique in time $O(kn)$.
\end{cclaim}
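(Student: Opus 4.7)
The plan will be to compute the unique candidate maximal clique containing $(u,v)$ and verify its properties directly. Since any vertex in a clique with both $u$ and $v$ must lie in $N(u) \cap N(v)$, if $(u,v)$ belongs to a maximal clique at all then that clique is exactly $C = \{u,v\} \cup (N(u) \cap N(v))$. So the entire test reduces to computing $C$, checking its size, and checking that $N(u) \cap N(v)$ is a clique.

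First I will compute $S := N(u) \cap N(v)$ in $O(n)$ time using an auxiliary Boolean mark array: mark every $w \in N(u)$, then scan $N(v)$ and collect the marked vertices (clearing marks afterwards). If $|S| \neq k-2$, then $|C| \neq k$, and I return "no" immediately; this handles the size condition.

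Next I will test whether $S$ itself is a clique. A naive pairwise check over all $\binom{|S|}{2}$ pairs using adjacency-list lookup could cost $\Theta(k^2 n)$, which is too slow; the main obstacle is avoiding this blow-up. Instead, I will iterate once over each $w \in S$, spending $O(n)$ time to rewrite the mark array with $N(w)$, and then $O(k)$ time to verify that every other vertex of $S$ is marked (i.e., $S \setminus \{w\} \subseteq N(w)$). Each vertex thus costs $O(n)$, and since $|S| = k-2$, the total cost of this step is $O(kn)$.

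If both checks pass, $C$ is a clique of size exactly $k$. It is automatically maximal, because any vertex extending $C$ would be a common neighbor of $u$ and $v$ and hence already contained in $S$; and it is the \emph{only} maximal clique through $(u,v)$, because every such clique is contained in $\{u,v\} \cup (N(u) \cap N(v)) = C$, which is exactly the $1$-simpliciality condition. Summing the two phases gives $O(n) + O(kn) = O(kn)$ overall, as claimed.
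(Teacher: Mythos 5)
Your proposal is correct and follows essentially the same route as the paper: compute $N(u)\cap N(v)$ in $O(n)$ time with a mark array, check its size, and verify it is a clique in $O(kn)$ time. You additionally spell out how the clique check achieves $O(kn)$ rather than $O(k^2)$ adjacency queries and why the resulting clique is maximal and unique, details the paper leaves implicit; the only (cosmetic) divergence is that you count the clique as $\{u,v\}\cup(N(u)\cap N(v))$ of size $k$ whereas the paper's proof checks $|N(u)\cap N(v)|=k$.
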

\begin{proof} Compute $X = N(u)\cap N(v)$ in $O(n)$ time by marking vertices in $N(u)$ and checking for marked vertices in $N(v)$, then check that $|X| = k$ and $X$ is a clique in $O(kn)$ time.
\end{proof}
\else
Our running times depend on the following observation for $1$-simplicial edges.

\begin{observation}
\label{claim:simplicial-k-clique-time}
We can test if $(u,v)$ is $1$-simplicial and in a maximal $k$-clique in time $O(kn)$.
\end{observation}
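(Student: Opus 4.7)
The plan is to compute the common neighborhood $X = N(u)\cap N(v)$ explicitly and then decide whether $X$ is a clique of size exactly $k$, since $(u,v)$ is $1$-simplicial and sits in a maximal $k$-clique precisely when $|X|=k$ and $X$ induces a clique. Assuming the visibility graph is stored as adjacency lists, together with a Boolean scratch array of size $n$, I would first mark every vertex of $N(u)$ in the array in time $O(|N(u)|) = O(n)$, then scan $N(v)$ and collect into $X$ those entries whose mark is set, simultaneously maintaining the count $|X|$, in time $O(|N(v)|) = O(n)$. If $|X| \ne k$ we reject immediately, so assume $|X| = k$.

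To verify that $X$ induces a clique within the remaining budget, I would mark all $k$ vertices of $X$ in a second scratch array, and then, for each $x \in X$, scan $N(x)$ and count how many of its neighbors carry the $X$-mark. The set $X$ is a clique if and only if this count equals $k-1$ for every $x \in X$. Each such scan costs $O(|N(x)|) = O(n)$, and there are $k$ of them, contributing $O(kn)$ in total. Combined with the $O(n)$ cost of forming $X$, the overall running time is $O(kn)$, as claimed. To keep the procedure reusable across many invocations without paying $\Theta(n)$ for reinitialization each time, I would clear the mark arrays only on the entries actually set, by rescanning $N(u)$ and $X$; this does not affect the asymptotic bound.

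There is no real obstacle here: the statement is essentially a bookkeeping observation in the standard intersect-and-verify paradigm. The only subtlety worth flagging is the avoidance of a full $\Theta(n)$ initialization of the scratch arrays per call, handled as described above.
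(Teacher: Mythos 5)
Your proposal is correct and follows essentially the same route as the paper: compute $X = N(u)\cap N(v)$ in $O(n)$ time by marking $N(u)$ and scanning $N(v)$, then verify the size and the clique property in $O(kn)$ time by scanning the adjacency list of each vertex of $X$. The remark about clearing only the touched entries of the scratch arrays is a sensible implementation detail but adds nothing beyond the paper's argument.
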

\fi{}

\begin{figure}[!tb]
\begin{center}
\subfloat[]{\includegraphics[scale=0.60]{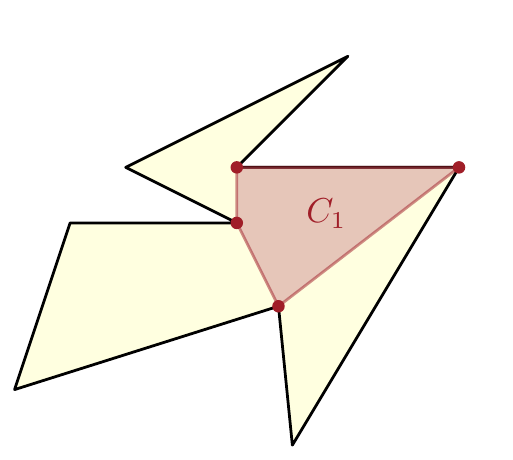}}
\subfloat[]{\includegraphics[scale=0.60]{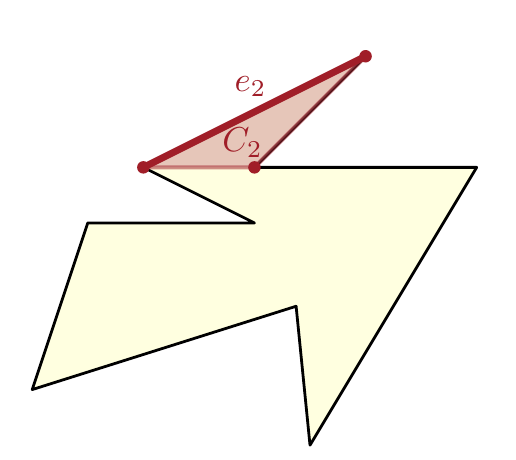}}
\subfloat[]{\includegraphics[scale=0.60]{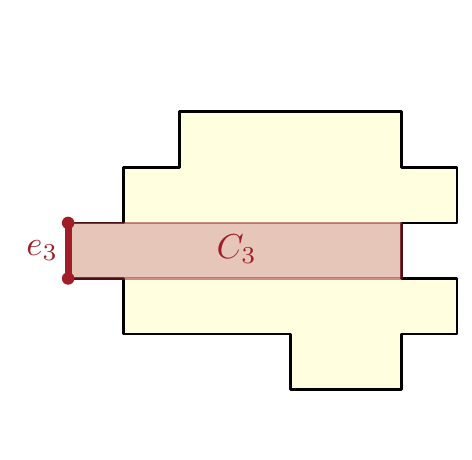}}
\caption{Maximal convex regions on vertices of polygons are maximal cliques in visibility graphs. (b)-(c) $1$-simplicial edges are in exactly one maximal clique.}
\label{figure:maximal-clique}
\end{center}
\end{figure}

\ifFull

\subsection{Uniform Step Length Staircase Polygons}
We briefly mention the construction of uniform step length staircase polygons, which were shown to be reconstructible in polynomial time by Abello and E\u{g}ecio\u{g}lu~\cite{abello-uniform-step-length}.

Let $P$ be a staircase polygon on $n$ vertices $\{v_0,v_1,\ldots,v_{n-1}\}$, where $v_0v_1\cdots v_{n-1}$ is the clockwise ordering of the vertices around the polygon. Notice that there is one unique vertex (let's say $v_0$) that sees all vertices, and has two convex boundary neighbors $v_1$ and $v_{n-1}$. Then, there are $(n-3)$ vertices, where each convex vertex $v_i$ only sees $v_0$ and its two boundary neighbors $v_{i-1}$, and $v_{i+1}$, and each reflex vertex $v_j$ sees $v_0$, convex neighbors on the staircase $v_{j-1}$ and , $v_{j+1}$, and reflex vertices $v_{j-2}$ and $v_{j+2}$, which block $v_j$'s view to other reflex vertices by Abello and E\u{g}ecio\u{g}lu's definition~\cite{abello-uniform-step-length}.

For the construction, we first assume that the polygon has more than $12$ vertices to avoid special cases. We first identify $v_0$, which is determined completely by its degree $(n-1)$. We then reconstruct by separating the vertices into convex and reflex vertices and ordering them along the staircase. Each convex vertex has exactly 3 neighbors, each reflex has exactly 5 neighbors. Then the order of the vertices is determined completely by the neighborhoods of each convex vertex. In particular, two convex vertices $v_i$ and $v_j$ are consecutive along the boundary if and only if $|N(v_i)\cap N(v_j)\setminus\{v_0\}| = 1$. If the vertices are neighbors, then they are $v_1,v_2$ and $v_{n-2},v_{n-1}$. Otherwise, $N(v_i)\cap N(v_j)\setminus\{v_0\}$ contains only the unique reflex vertex between $v_i$ and $v_j$ along the staircase. This can be done in $O(n+m)$ time.

Note that by changing the visibility model, things become only slightly more complex. All reflex vertices see all other reflex vertices, and vertices $v_1$ and $v_{n-1}$ see all reflex vertices as well. However, it is still true that each convex vertex on the staircase has only three neighbors. Therefore, reconstruction can still be done by ordering the convex vertices along the staircase in $O(n+m)$ time.
\fi




\ifRemove
\else
\subsection{Maximal Convex Regions in Polygon}
We now introduce a sweep-line proof technique that we later use to understand the structure of visibility graphs.

Recall that a maximal convex region (maximal in the sense that the number of vertices shared with polygon $P$) in a polygon $P$ is equivalent to a maximal clique in its visibility graph $G_P$. Further note that $G_P$ is connected, as there is a Hamiltonian cycle in $G_P$ representing the boundary of $P$. Then all maximal cliques in $G_P$ contain at least some edge $e$. Then a maximal convex region in $P$ can be constructed starting with a line $l$ from the line segment in $P$ realized by $e$, and sweeping in the two directions normal to $e$, first in one direction, then in the other.

As we sweep $l$ away from $e$, we track the vertices of $P$ intersecting the sweep line that can be added to $R$, we further accumulate vertices, which we assign to upper and lower convex chains, which will form the region $R$ when they meet. It is sufficient to track the last two vertices (a visibility edge) added to each (upper and lower) chain. We say a vertex is \emph{valid} when it is contained between $l$ and the rays shot along each of these last edges call these \emph{chain rays}. Such a vertex may be added to a chain, and hence $R$. Thus, we only maintain valid vertices intersecting the sweep line $l$.

\begin{lemma}
We can build a maximal convex region by sweeping a line $l$ and always adding the vertices intersecting $l$ to the upper or lower hull one at a time.
\end{lemma}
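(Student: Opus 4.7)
The plan is to prove the lemma by establishing two invariants maintained by the sweep: (i) the partial upper and lower chains are convex at every step, and (ii) no vertex that belongs to some maximal convex region $R \supseteq e$ is ever discarded. Together with termination (when the chains meet), these invariants imply that the output is a convex region of $P$ realized on polygon vertices whose extension by any additional vertex would break convexity, i.e., a maximal convex region.

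First I would fix an edge $e$ of $G_P$ corresponding to a chord (or boundary edge) in $P$, initialize the upper and lower chains with the endpoints of $e$, and sweep the line $l$ orthogonally away from $e$ in one of the two normal directions. For the convexity invariant, observe that the chain rays emanating from the most recently added edge of each chain are, by definition, the supporting lines along which convexity could fail. A newly encountered vertex $v$ is marked valid exactly when it lies in the wedge between $l$ and these chain rays; appending $v$ to the corresponding chain therefore produces a convex turn, preserving convexity. I would formalize this by an induction on the number of vertices added, noting that updating the ``last edge'' to end at $v$ correctly updates the chain ray for subsequent steps.

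Second, I would argue completeness by contrapositive: suppose $v$ is a vertex of some maximal convex region $R$ containing $e$, and that during the sweep $l$ reaches $v$. At that moment, every previously added vertex also lies in $R$ (inductively), so the polygonal chain of added edges coincides with a portion of the boundary of $R$. Because $R$ is convex, $v$ must lie in the wedge bounded by the current chain rays on $v$'s side, hence it is valid and will be added. Dually, any invalid vertex cannot belong to any convex region extending the current chains, so skipping it never costs us maximality. After both sweep directions are exhausted, the two chains close into a simple convex polygon $R'$ on polygon vertices, which by the two invariants must be a maximal clique's realization.

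The main obstacle I anticipate is handling the interface between the two sweep directions and resolving ties, for instance when $l$ meets two vertices simultaneously or when a vertex is geometrically valid for both chains. I would address this by first sweeping completely in one direction to obtain a committed partial region, then restarting the sweep in the opposite direction using the now-fixed chain endpoints as seeds, and by tie-breaking ambiguous assignments according to the side of the line through $e$ on which the vertex lies. A short case analysis then confirms that neither rule can violate the convexity or completeness invariants, completing the argument.
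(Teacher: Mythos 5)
The paper states this lemma with no proof at all (the entire subsection is an informal sketch of the sweep technique), so there is nothing to compare line by line; your proposal follows the same construction the surrounding text describes --- sweep $l$ away from $e$, maintain upper and lower chains, and use the chain rays to define which vertices are valid. Your convexity invariant (i) and the termination argument are exactly what a complete proof would need, and the tie-breaking discussion addresses real issues the paper glosses over.

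The gap is in invariant (ii). You claim that no vertex belonging to \emph{some} maximal convex region $R$ containing $e$ is ever discarded, and the inductive step asserts that ``every previously added vertex also lies in $R$.'' This fails whenever $e$ is contained in more than one maximal clique: the greedy sweep may already have committed to a vertex $w$ lying in a different maximal region $R'$ that is not visible from $v$ (or whose addition makes $v$ invalid), so $v$ is correctly rejected even though $v \in R$. Two valid vertices encountered at nearby sweep positions can be mutually exclusive, and which one survives depends on processing order. What the greedy sweep actually guarantees is the weaker statement that the output cannot be extended by any further vertex of $P$ --- i.e., it is \emph{an} inclusion-maximal convex region, which is all the lemma claims --- and your observation that an invalid vertex cannot extend the current chains to a convex region is the right ingredient for exactly that. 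The stronger property, that every valid vertex is added and the result is \emph{the} maximal clique of $e$, holds only when $e$ is $1$-simplicial, which is precisely the remark the paper makes immediately after the lemma and the only case its algorithms rely on. I would restate (ii) as inclusion-maximality of the output, prove it directly from the validity test, and derive the $1$-simplicial case as a corollary.
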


Further, if we choose a $1$-simplicial edge as the starting point of the sweep, then its maximal clique is computed by adding every valid vertex encountered during the sweep.
\fi
\ifFull
\darren{TODO: mention counting argument of Abello and E.}
\fi 

\ifNewSchool
\section{Attempt at unifying double staircases and monotone staircases}

We begin with the following lemma that allows us to identify every tab edge. 
\begin{lemma}
An edge $(u,v)$ in a monotone staircase polygon is a tab edge iff it is 1-simplicial and is in a maximal 4-clique.
\end{lemma}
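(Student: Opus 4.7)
The plan is to prove both directions separately, with the forward direction being a fairly direct geometric calculation and the reverse direction being the real work. For the forward direction, I would take a tab edge $(u,v)$, identify the rectangular protrusion with corners $u, v, u', v'$ (where $u', v'$ are the two reflex base vertices of the tab), and argue these four vertices form a maximal clique containing $(u,v)$. The clique has size exactly four because, by uniform step length, no other polygon vertex can lie inside this unit-step-wide rectangle; invoking the sweep-line characterization of maximal convex regions from the previous subsection (starting the sweep from the edge $(u,v)$), the only valid vertices encountered before the sweep closes are $u'$ and $v'$. For 1-simpliciality I then need $N(u)\cap N(v) = \{u',v'\}$: any common neighbor must lie in the downward visibility cone from both $u$ and $v$, which is bounded by the rays $uv'$ and $vu'$ passing through the tab's base; by the monotone staircase geometry this cone, past the base, contains no further polygon vertices.

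The reverse direction is the main obstacle and will require a case analysis that I would structure in two stages. First, I would assume $(u,v)$ is 1-simplicial and contained in a maximal 4-clique $K$, and use the fact that $K$'s vertices form a mutually visible configuration to conclude that they are the four corners of an axis-aligned rectangle $R \subseteq P$. The goal of the first stage is to show that $R$ must be a tab rectangle, i.e., a one-step-wide protrusion. If $R$ were wider than one step, monotonicity plus uniform step length would force an extra convex polygon vertex on its top boundary, which would either enlarge $K$ beyond four (contradicting $|K|=4$) or appear as an additional common neighbor of $u$ and $v$ (contradicting 1-simpliciality). Similarly, if $R$ sat interior to the polygon rather than at a protrusion, the corresponding convex region would be strictly extensible, contradicting the maximality of the clique.

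Once $R$ is pinned down as a tab rectangle with exactly four corners, the remaining task is to show that $(u,v)$ must be the top edge of $R$, connecting the two convex boundary vertices, rather than one of the other three sides. For each alternative—the two vertical sides and the bottom reflex-reflex edge—I would exhibit an explicit common neighbor of $u$ and $v$ lying outside $\{x,y\}$, drawn from the adjacent cell of the polygon sharing that side. For instance, if $(u,v)$ were the bottom edge of the tab, then the convex top vertex of a neighboring tower would be visible from both endpoints, violating 1-simpliciality. This rules out every configuration except the tab-edge case, completing the reverse direction. The hard part, as noted, is ensuring that the common-neighbor witnesses in this second stage always exist in a monotone staircase polygon; this will rely on the fact that every non-tab side of a unit rectangle borders further polygon interior with accessible vertices.
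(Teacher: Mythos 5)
Your forward direction is correct and matches the one-line justification the paper gives in \sec~\ref{subsection:decomposition}: for a tab $(u,v)$ the common neighborhood $N(u)\cap N(v)$ is exactly the two vertices one level below, so the tab is $1$-simplicial and its maximal clique is the unit square. The reverse direction, however, contains a gap that cannot be closed, because the ``if'' direction of the statement is false and the paper never proves it. The paper states this explicitly when finding initial tabs: every tab edge is $1$-simplicial and in a maximal $4$-clique, but ``the converse is not necessarily true.'' What the paper actually establishes is the strictly weaker Lemma~\ref{lemma:isolated-vertices} (every $1$-simplicial edge in a maximal $4$-clique contains a tab vertex \emph{or} an isolated vertex), and it then separates the genuine tabs from the other candidates by additional visibility tests in Lemma~\ref{lemma:monotone-tab-id}, rather than by any structural equivalence.

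The step of your argument that fails is the very first move of the reverse direction: ``conclude that [the four clique vertices] are the four corners of an axis-aligned rectangle $R\subseteq P$.'' A maximal clique corresponds to a maximal convex region, which in a histogram need not be a rectangle. In a polygon with two or more tabs, a \emph{crossing edge} $(u,v)$ joining a bottom vertex of one tower to a vertex of a different tower can be $1$-simplicial with $|N(u)\cap N(v)|=2$; the resulting maximal $4$-clique spans a non-rectangular (essentially triangular) convex region and contains no tab vertex at all. These are precisely the non-tab cliques shown in \fig~\ref{figure:monotone-staircase-tab-candidates} and produced in Cases~2.1(a) and~2.2(a) of the proof of Lemma~\ref{lemma:isolated-vertices}. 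The equivalence fails even with a single tab: in the eight-vertex double staircase with unit steps, the edge from a tab vertex to the base vertex of the opposite staircase is $1$-simplicial and lies in a maximal $4$-clique (tab vertex, the two reflex vertices, and a base vertex, whose convex hull is a degenerate triangle), yet it is not a tab. Since your stage-one claim that the clique must be a one-step-wide protruding rectangle is where these counterexamples enter, the stage-two case analysis never confronts them, and no choice of common-neighbor witnesses can rescue the argument. The correct target for the reverse direction is Lemma~\ref{lemma:isolated-vertices}, with the filtering of false candidates deferred to the algorithm of Lemma~\ref{lemma:monotone-tab-id}.
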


To reconstruct monotone staircase polygon, we iteratively remove tab cliques until the remaining graph is $K_4$. (While the above lemma is sufficient to identify tab edges in the input monotone staircase polygon, it does not hold for truncated tabs, i.e., for detecting newly created tabs once original tab cliques are removed. We present how to iteratively remove tab cliques later...) 

\newcommand{\pyramid}{pyramid\xspace}
\newcommand{\pyramids}{pyramids\xspace}

\newcommand{\topV}{top vertex\xspace}
\newcommand{\topVs}{top vertices\xspace}
\newcommand{\bottomV}{bottom vertex\xspace}
\newcommand{\bottomVs}{bottom vertices\xspace}
\newcommand{\fixed}{fixed\xspace} 
\newcommand{\companion}{companion\xspace} 

This decomposes the monotone polygon into axis-aligned rectangles whose contact graph is a tree $T$ (see \fig~\ref{figure:monotone-staircase-and-contact-tree} for an example). We root $T$ at the rectangle containing the base edge (the base rectangle), consequently,  each leaf is a rectangle containing a tab (a tab rectangle). The decomposition process associates rectangles (quadruplets of vertices) with each node of $T$. Observe that each such rectangle consists of two {\em \topVs} that are convex and two {\em \bottomVs} that are either both reflex or are both convex {\em base} vertices. 
Given a \topV (resp., \bottomV) $p$, we say the other \topV (resp., \bottomV) in the rectangle, denoted by $\bar{p}$, is its \companion vertex.

Lemma~\ref{lemma:convex-reflex} describes how we can distinguish \topVs from \bottomVs, and identify tab, base and dent vertices efficiently. Observe that since the monotone staircase polygon contains unit-length edges, the tree decomposition and the differentiation of vertices into \topVs and \bottomVs defines the $y$-coordinate of each vertex in the polygon. Thus it remains to determine the $x$-coordinate of each polygon vertex. 

We say the contact tree $T$ is {\em \fixed} if at each node of $T$, the left to right order of the children is determined.
Given a \fixed contact tree $T$, Lemmas~\ref{} and \ref{lemma:x-coords} show that the $x$-coordinate of each vertex is uniquely defined by whether the vertex is on the left or the right side of the rectangle in the rectangle decomposition. 
Therefore, to compute the $x$-coordinates, we need to color all vertices with two colors ($0$ and $1$), denoting if it belongs to the left or the right side the corresponding rectangle. 

Let us describe a high level picture of how we perform the coloring.

\newcommand{\dependent}{orientation dependent\xspace}
\newcommand{\dependency}{orientation dependency\xspace}

We say a vertex $p$ and $q$ are {\em \dependent}, 
if they must be colored in two different colors. For example, every vertex $p$ and its companion $\bar{p}$ are \dependent. In Lemma~\ref{lemma:dependency} we identify all \dependent pairs of vertices.

In Lemma~\ref{} we show how given a \fixed contact tree $T$, we identify the colors of all vertices whose colors are a function of the visibility, imposed by the structure of the tree. 

We use the above information as follows. First, we construct a \dependency graph $G'=(V, E')$, where edge $(u,v) \in E'$ if $u$ and $v$ are \dependent. Next, for each possible \fixed version of $T$ (i.e. for each ordering of leaves in $T$), we color a subset of vertices of $G'$ using the structure of the \fixed tree. Finally, we try to 2-color $G'$, using these initial colors on a subset of nodes. If there exists at least one ordering of leaves that allows 2-coloring of $G'$, we use the coloring and Lemma~\ref{lemma:x-coords} to determine the $x$-coordinates of every vertex.

\subsection{Details}

\newcommand{\visq}[1]{\ensuremath{vis_{q, \bar{q}}(#1)}}

\begin{lemma}\label{lemma:dependency}
  Consider any two pairs of vertices $\{p, \bar{p}\}$ and  $\{q, \bar{q}\}$.  Let $\visq{p}$ be a set of vertices among ${q, \bar{q}}$ that $p$ sees. Then the following cases define all possible \dependent relationships between $\{p, \bar{p}\}$ and $\{q, \bar{q}\}$:

\begin{enumerate}
    \item $\visq{p} = \visq{\bar{p}}$: $p$ is not \dependent on either $q$ or $\bar{q}$; $\bar{p}$ is not \dependent on either $q$ or $\bar{q}$.
    \item $\visq{p} \neq \visq{\bar{p}}$:
      \begin{enumerate}
	\item \label{item:vis11} $|\visq{p}| = |\visq{\bar{p}}| = 1$:  $p$ and $\visq{p}$ are \dependent and so are $\bar{p}$ and $\visq{\bar{p}}$.

	\item \label{item:vis02} $|\visq{p}| = 0$, $|\visq{\bar{p}}| = 2$: impossible.
	\item \label{item:vis20} $|\visq{p}| = 2$, $|\visq{\bar{p}}| = 0$: impossible.
	\item \label{item:vis01} $|\visq{p}| = 0$, $|\visq{\bar{p}}| = 1$: $\bar{p}$ and $\visq{\bar{p}}$ are \dependent.
	\item \label{item:vis10} $|\visq{p}| = 1$, $|\visq{\bar{p}}| = 0$: $p$ and $\visq{p}$ are \dependent.

	\item \label{item:vis12} $|\visq{p}| = 1$, $|\visq{\bar{p}}| = 2$: $p$ and $\visq{p}$ are \dependent.
	\item \label{item:vis21} $|\visq{p}| = 2$, $|\visq{\bar{p}}| = 1$: $\bar{p}$ and $\visq{\bar{p}}$ are \dependent.
      \end{enumerate}
\end{enumerate}
\end{lemma}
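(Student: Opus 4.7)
The plan is to carry out an exhaustive case analysis on the visibility patterns between the companion pair $\{p,\bar{p}\}$ and the companion pair $\{q,\bar{q}\}$, using the rigid geometric structure of the uniform-step orthogonal monotone polygon and its rectangle decomposition. Since each of $|\visq{p}|$ and $|\visq{\bar{p}}|$ lies in $\{0,1,2\}$, there are nine raw combinations, and splitting the three diagonal cases by whether $\visq{p}=\visq{\bar{p}}$ gives exactly the enumeration listed in the lemma; so the first step is to verify that the seven enumerated subcases exhaust all possibilities.

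Next I would fix, without loss of generality, the orientation of the rectangle containing $\{p,\bar{p}\}$ (say $p$ on the left, $\bar{p}$ on the right) and analyze what happens when the rectangle containing $\{q,\bar{q}\}$ is in each of its two possible orientations. Because the steps are unit length and the decomposition places the two rectangles at specific levels of the tree, the line of sight between a vertex of one rectangle and a vertex of the other is either unobstructed along an axis-aligned or diagonal direction, or blocked by the vertical side of an intervening rectangle. This gives a finite table: for each relative placement and each combined orientation, I can read off which of the four candidate visibility edges $(p,q),(p,\bar{q}),(\bar{p},q),(\bar{p},\bar{q})$ exist.

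For case 1, the visibility pattern remains invariant when either rectangle is flipped, so the coloring of $\{q,\bar{q}\}$ is unconstrained by the coloring of $\{p,\bar{p}\}$; no dependency arises. For case 2a the pattern distinguishes the two orientations of $\{q,\bar{q}\}$ relative to $\{p,\bar{p}\}$, and the key geometric fact I would use is that in a uniform-step staircase a visibility edge between two companion-pair members at different levels can only be realized when they sit on opposite sides of their respective rectangles (otherwise the rectangle's own vertical edge blocks the sightline); hence $p$ and $\visq{p}$ must receive opposite colors, and likewise for $\bar{p}$ and $\visq{\bar{p}}$. Cases 2d–2g are then direct consequences of the same principle applied to the single side on which visibility is asymmetric.

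The main obstacle is the impossibility claim in cases 2b and 2c, where $p$ sees neither vertex but $\bar{p}$ sees both (or vice versa). My plan is to argue by a monotonicity-of-sight property: if $\bar{p}$ sees both $q$ and $\bar{q}$, then the entire horizontal segment $q\bar{q}$ lies in the visibility kernel of $\bar{p}$; combined with the fact that $\bar{p}$ and $p$ differ only along a unit horizontal step within the same rectangle, the staircase structure forces at least one of $q,\bar{q}$ to remain visible from $p$ (the blocking rectangle side, if any, can obstruct at most one of the two sightlines simultaneously, since $q$ and $\bar{q}$ flank that side). This uses the uniform step length crucially, since without it an arbitrarily tall intervening step could be constructed to block both sightlines from $p$. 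Once this is established, the enumeration in the lemma statement is complete and correct.
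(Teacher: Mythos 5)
Your enumeration of the nine raw combinations and the observation that the seven listed subcases exhaust them is fine, and you correctly locate the substance of the lemma in the impossibility of cases (b)--(c) and in the dependency derivations for the asymmetric cases. However, the geometric core of your argument has genuine gaps. First, companion vertices $p$ and $\bar{p}$ are \emph{not} one unit apart: they are the two top (or two bottom) corners of a rectangle of the decomposition, whose width is $2\cdot\mathrm{size}(v)-1$ and can be linear in $n$. Your impossibility argument for (b)--(c) leans directly on this unit separation (``the blocking rectangle side\dots can obstruct at most one of the two sightlines''), so it does not survive. Second, your ``key geometric fact''---that a visibility edge between members of two companion pairs at different levels can only be realized when they sit on opposite sides of their rectangles---is false as stated: in case 1 with $p$ seeing both $q$ and $\bar{q}$, these two vertices occupy opposite sides of their rectangle, so $p$ cannot be ``opposite'' to both. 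The correct statement is conditional (if $p$ sees \emph{exactly one} of $q,\bar{q}$, then that one lies on the opposite side), and proving that conditional is precisely the content of case (a), not something that can be assumed. Relatedly, the ``finite table'' of relative placements does not exist: two rectangles of the contact tree can be separated by arbitrary intervening structure, so which of the four cross-visibilities hold cannot be read off a bounded local configuration.

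The paper's proof replaces all of this local reasoning with a single mechanism. Assume w.l.o.g.\ $y(p)>y(q)$, and let $p'$ and $\bar{p}'$ be the vertical projections of $p$ and $\bar{p}$ onto the line through $q\bar{q}$; by monotonicity of the histogram, the rectangle $p\bar{p}\bar{p}'p'$ is empty. Each hypothesized visibility (e.g., $\bar{p}$ seeing both $q$ and $\bar{q}$) makes a triangle with base on the segment $q\bar{q}$ empty, and the coloring hypothesis under attack determines whether the quadrilateral $qp\bar{p}\bar{q}$ is convex (hence, its boundary lying in the simply connected polygon, empty) or self-intersecting. Gluing the empty projection rectangle to the empty triangle yields a region containing a segment from $p$ (or $\bar{p}$) to a vertex it was assumed not to see, which is the contradiction in every subcase, including the impossibility of (b)--(c). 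To repair your outline you would need to replace the unit-step/blocking-edge reasoning with this projection-and-union argument.
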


\begin{figure}
  \includegraphics[scale=0.60]{figures/vis11}
\quad 
  \includegraphics[scale=0.60]{figures/vis02} \\
  \includegraphics[scale=0.60]{figures/vis01}
\quad 
  \includegraphics[scale=.60]{figures/vis12}
\quad 
\caption: Cases~\ref{item:vis11}, \ref{item:vis02}, \ref{item:vis01} and \ref{item:vis12}. \\
\\
\label{figure:dependencies}
\end{figure}

\begin{proof}
Case 1: If $\visq{p} = \visq{\bar{p}}$ then both $p$ and $\bar{p}$ either (a) see both $q$ and $\bar{q}$ or (b) see neither of them. Since both cases are vertically (?) symmetric, coloring of $p$ or $\bar{p}$ has no effect on coloring of $q$ or $\bar{q}$.

Let's consider Case 2: $\visq{p} \neq \visq{\bar{p}}$. Observe, that if $y(p) = y(q)$, then in monotone staircases, this falls under Case 1. Thus, for the rest of the proof, w.l.o.g. we assume that $y(p) > y(q)$.

Let $p' = (x(p), y(q))$ and $\bar{p}' = (x(\bar{p}), y(q))$ be the projections of $p$ and $\bar{p}$  on the line segment $q\bar{q}$. We know that the rectangle $p\bar{p}\bar{p}'p'$ is empty (refer to Fig.~\ref{figure:dependencies} for illustration).

Case~\ref{item:vis11}: Let $q = \visq{p}$ and $\bar{q} = \visq{\bar{p}}$. Assume $p$ and $q$ are both assigned color $b$, then $\bar{p}$ and $\bar{q}$ must be assigned color $1-b$. Then the quadrilateral $qp\bar{p}\bar{q}$ must be empty, implying that $q$ is visible from $\bar{p}$ and $\bar{q}$ is visible from $p$ and contradicting that $|\visq{p}| = |\visq{\bar{p}}| = 1$

Case~\ref{item:vis02}: There are two cases to consider.  First case: $q$ and $p$ are assigned the same color. Since $\bar{p}$ sees both $q$ and $\bar{q}$ we know that the $\triangle q\bar{p}\bar{q}$ is empty and, consequently, $\triangle \bar{p}'p\bar{q} \subseteq \triangle q\bar{p}\bar{q}$ is also empty. Then the $p'p\bar{p}\bar{q} = \square p'p\bar{p}\bar{p}' \cup \triangle \bar{p}'\bar{p}\bar{q}$ is also empty, implying that $\bar{q}$ is visible from $p$ -- a contradiction.  The second case ($\bar{q}$ and $p$ are assigned the same color) is proven to be impossible symmetrically.

Case~\ref{item:vis20}: Symmetric to Case~\ref{item:vis02}.

Case~\ref{item:vis01}: Let $\bar{q} = \visq{\bar{p}}$ and assume $\bar{q}$ and $\bar{p}$ are assigned the same color. Then $\triangle \bar{p}'\bar{p}\bar{q}$ is empty.  Then the polygon $p'p\bar{p}\bar{q} = \square p'p\bar{p}\bar{p}' \cup \triangle \bar{p}'\bar{p}\bar{q}$ is also empty, implying that $\bar{q}$ is visible from $p$ -- a contradiction that neither $q$ nor $q'$ are visible from $p$.

Case~\ref{item:vis10}: Symmetric to Case~\ref{item:vis01}.

Case~\ref{item:vis12}: Let $q = \visq{p}$ and assume $p$ and $q$ are assigned the same color. Then the quadrilateral  $qp\bar{p}\bar{q}$ must be empty, implying that $\bar{q}$ is visible from $p$  and contradicting that $|\visq{p}|  = 1$.

Case~\ref{item:vis21}: Symmetric to Case~\ref{item:vis12}.

\end{proof}

\newcommand{\size}[0]{\mathrm{size}}

\begin{lemma}\label{lemma:x-coords}
Let $v$ be a node in $T$. Let $\size(v)$ be the size of the subtree rooted at $v$ (including $v$). Let $p$ and $q$ be the two \topVs in the rectangle associated with $v$. Then $|x(p) - x(q)| = 2\cdot \size(v) - 1$.
\end{lemma}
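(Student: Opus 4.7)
The natural approach is induction on $\size(v)$.

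For the base case, when $v$ is a leaf of $T$, $\size(v) = 1$ and $v$'s rectangle is a tab. The uniform step length assumption forces the two top convex vertices of any tab to be exactly one unit apart, so $|x(p) - x(q)| = 1 = 2\cdot 1 - 1$, as required.

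For the inductive step, let $c_1, c_2, \ldots, c_k$ denote the children of $v$ in the given left-to-right order. I would first argue that each $c_i$'s rectangle sits directly on top of $v$'s rectangle, with the two reflex bottom vertices of $c_i$ lying on the top edge of $v$ at the same height as $p$ and $q$. By the inductive hypothesis, the horizontal width of $c_i$'s rectangle is $2\size(c_i) - 1$, and because $c_i$ is a rectangle, this is also the horizontal extent that $c_i$ occupies along the top edge of $v$.

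The crux is then to show that between any two consecutive siblings $c_i$ and $c_{i+1}$, and between the outermost sibling and each endpoint $p$ and $q$, the horizontal gap along the top of $v$'s rectangle is exactly one unit, giving a total of $k+1$ gaps. With this in hand the calculation is immediate:
\[
|x(p) - x(q)| \;=\; \sum_{i=1}^{k} \bigl(2\size(c_i) - 1\bigr) \;+\; (k+1) \;=\; 2\bigl(\size(v) - 1\bigr) - k + k + 1 \;=\; 2\size(v) - 1.
\]

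The main obstacle I expect is the structural justification that each gap is precisely one polygon edge of unit length. For this, I would combine two facts already implicit in the decomposition. First, the uniform step length assumption forces every polygon edge to have length one and rules out degenerate collinear vertices, so any maximal horizontal boundary stretch on the top of $v$'s rectangle is a single edge of length one. Second, the construction of $T$ guarantees that every reflex polygon vertex on the top edge of $v$'s rectangle arises as a bottom vertex of some child of $v$ — otherwise it would witness an additional tab and thus an additional child, contradicting the child list $c_1,\ldots,c_k$. Together these two facts pin down the top edge of $v$'s rectangle as $k$ child-occupied intervals of widths $2\size(c_i) - 1$ alternating with exactly $k+1$ unit-length polygon edges, which is exactly what the induction needs.
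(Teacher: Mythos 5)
Your proof is correct and follows essentially the same route as the paper's: induction on the contact tree, with the width of $v$'s rectangle accounted as the children's widths (each $2\,\size(c_i)-1$ by the inductive hypothesis) plus $k+1$ unit gaps, which is exactly the paper's identity $|x(p)-x(q)| = 1 + \sum_{i=1}^{k}\left(1+|x(p_i)-x(q_i)|\right)$. The only difference is that you spell out the structural justification for the unit gaps (alternation of horizontal and vertical unit edges, and every reflex vertex on the top of $v$'s rectangle being a bottom vertex of some child), which the paper leaves implicit.
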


\begin{proof}
By induction. {\bf Base case:} $v$ is a leaf. $|x(p)-x(q)| = 1 = 2\cdot \size(v) - 1$.

{\bf Inductive case:} Let $u_1, ..., u_k$ be the children of $v$ and let  $p_i$ and $q_i$ be the \topVs of rectangle associated with $u_i$. Then 
\begin{eqnarray*}
|x(p) - x(q)| &=& 1 + \sum_{i=1}^k \left(1+|x(p_i)-x(q_i)|\right)       \\
	      &=& 1 + \sum_{i=1}^k \left(1+2\cdot \size(u_i)-1\right) \\
	      &=& 1 + 2\sum_{i=1}^k \size(u_i) \\
	      &=& 1 + 2\cdot (\size(v)-1) \\
	      &=& 2\cdot \size(v) - 1. 
\end{eqnarray*} 
\end{proof}

\begin{lemma}\label{lemma:fixed-color}
Let $v$ be a node in the contact tree $T$ with 2 or more children and let $u$ and $w$ be the leftmost and rightmost leaves of the subtree of $T$ rooted at $v$. Let $p$  be a bottom vertex of $v$. Then $p$ must be colored $0$ iff it sees some \topV of $u$. Similarly, $p$ must be colored $1$ iff it sees some \topV of $w$.
\end{lemma}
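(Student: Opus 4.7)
By the mirror symmetry of the statement, it suffices to prove that $p$ is colored $0$ iff $p$ sees some top vertex of $u$; the second equivalence then follows by reflecting the argument horizontally.

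For the forward direction, the recursive decomposition together with Lemma~\ref{lemma:x-coords} places the top-right corner of $u$ at $(x_L(v)+d+1,\ y_{\text{bot}}(v)+d+1)$, where $d$ is the depth of $u$ below $v$ and we have used that each rectangle has unit height and a unit gap on either side of each child. Since $p$, being colored $0$, sits at $(x_L(v),\ y_{\text{bot}}(v))$, the segment from $p$ to this top-right corner has slope $1$, and at each integer $y$-level $y_{\text{bot}}(v)+k$ it passes through the bottom-left corner of the $k$-th rectangle on the leftmost chain from $v$ down to $u$. Each such corner is a polygon vertex lying on the boundary, and between consecutive corners the segment lies strictly inside the corresponding rectangle; visibility along the boundary and through the interior therefore allows $p$ to see this top vertex of $u$.

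For the reverse direction I prove the contrapositive: if $p$ is colored $1$ and $t$ is any top vertex of $u$, then $p$ does not see $t$. Let $c_1$ be the leftmost child of $v$, so $u$ lies in the subtree rooted at $c_1$. Because $v$ has at least two children, $\size(v)\geq\size(c_1)+2$, and Lemma~\ref{lemma:x-coords} gives $x_R(v)-x_R(c_1)\geq 3$. A direct slope computation, using unit rectangle heights and $x(t)\leq x_L(c_1)+1$, shows the segment $pt$ crosses $y=y_{\text{top}}(v)$ at some $x^{\ast}\geq x_R(c_1)$. When $x^{\ast}>x_R(c_1)$, just above $y_{\text{top}}(v)$ the segment enters either the gap immediately to the right of $c_1$ (which is polygon exterior above $y_{\text{top}}(v)$) or a sibling of $c_1$, from which it must eventually exit leftward into the gap separating it from $c_1$; in both sub-cases the segment crosses exterior before reaching $t$. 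When $x^{\ast}=x_R(c_1)$, the segment grazes the bottom-right corner of $c_1$ with absolute slope $1/(x_R(v)-x_R(c_1))\leq 1/3$, and so inside $c_1$ it drifts at least $3$ units leftward per unit of $y$.

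The boundary case $x^{\ast}=x_R(c_1)$ is the main obstacle, and it is handled by iterating the above argument at each successive level of $c_1$'s subtree: at level $k$ the segment either exits into a gap or grazes another bottom-right corner of some rectangle on $v$'s leftmost chain. Because Lemma~\ref{lemma:x-coords} bounds the width of each descendant rectangle by $2\size(\cdot)-1$, the accumulated leftward drift exceeds the available horizontal room after finitely many levels, at which point the segment must enter a gap and leave the polygon strictly before reaching height $y(t)$. In every case $p$ cannot see $t$, completing the contrapositive and the lemma.
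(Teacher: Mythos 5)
The paper gives no proof of this lemma to compare against---its proof body in the source reads only ``To be done, but easy''---so your argument stands alone. Your forward direction is correct: with $p$ at the bottom-left corner of $R_v$, the slope-$1$ segment to the top-right corner of $u$ passes through the bottom-left (reflex) corners of the rectangles on the leftmost chain and otherwise stays in the leftmost unit column of each of those rectangles, so under the paper's definition of visibility (which permits sight through vertices) $p$ sees that top vertex.

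The reverse direction, however, rests on a false intermediate claim. The assertion that $pt$ crosses $y=y_{\mathrm{top}}(v)$ at some $x^{*}\ge x_R(c_1)$ does not follow from $x_R(v)-x_R(c_1)\ge 3$ and fails in general: let $v$ have two children, where $c_1$ has five leaf children $g_1,\dots,g_5$ and $c_2$ is a leaf. Placing $x_L(v)=0$ and $y_{\mathrm{bot}}(v)=0$, we get $x_R(v)=15$, $x_R(c_1)=12$, $u=g_1$ with top-left vertex $t=(2,3)$, and the segment from $p=(15,0)$ to $t$ crosses $y=1$ at $x^{*}=15-13/3\approx 10.67<12$. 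The segment enters $R_{c_1}$ directly without grazing its bottom-right corner, so neither of your sub-cases applies, and the concluding ``iterate level by level'' sketch is anchored to the grazing case that does not occur. (Your bound $x(t)\le x_L(c_1)+1$ is also wrong whenever $u$ is not a child of $v$.) The lemma itself is still true, and the correct quantitative fact is simpler: the leftward drift of $pt$ per unit of height is $\bigl(x_R(v)-x(t)\bigr)/(d+1)$, where $d$ is the depth of $u$ below $v$. The leftmost chain contributes $d+1$ nodes to $v$'s subtree and the second child at least one more, so Lemma~\ref{lemma:x-coords} gives $x_R(v)-x(t)\ge 2\,\mathrm{size}(v)-d-2\ge d+2>d+1$, i.e., the drift exceeds $1$. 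Hence at height $y_{\mathrm{bot}}(v)+d$, the bottom of $u$'s level, the segment lies strictly to the right of $x_R(u)=x_L(v)+d+1$, and on its way up to $t$ it must traverse, at heights strictly inside $u$'s level, the open unit gap immediately to the right of $u$, which is exterior to the polygon. This single step replaces your whole case analysis and closes the gap.
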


\begin{proof}
To be done, but easy. 
\end{proof}

\else
\newcommand{\pyramid}{pyramid\xspace}
\newcommand{\pyramids}{pyramids\xspace}

\newcommand{\topV}{top vertex\xspace}
\newcommand{\topVs}{top vertices\xspace}
\newcommand{\bottomV}{bottom vertex\xspace}
\newcommand{\bottomVs}{bottom vertices\xspace}
\newcommand{\fixed}{fixed\xspace} 
\newcommand{\companion}{companion\xspace} 

\fi 

\newif\ifOLD
\OLDfalse
\ifOLD

We say two rectangles $R_1$ and $R_2$ in the decomposition are \emph{orientation-fixed} if a \topV from sees a \bottomV of another.  Then these two rectangles must be oriented so that the visibility constraint holds. That is, fixing an orientation of one rectangle fixes the orientation of the other.

Note that, every rectangle is orientation-fixed with some leaf rectangle (as there is a path from $v\in T$ to a leaf that only goes through descendants). Therefore, it is sufficient to fix the ordering (and orientation) of the leaves, which induces an ordering/orientation of the tree. Note that for double staircases, $T$ is a path, and the root rectangle is orientation-fixed with every other rectangle (a base vertex is seen by every \topV of the other rectangles). Hence, orienting the only tab rectangle determines the positions of the convex vertices on the double staircase.

 In \sec~\ref{sec:xxx} we show how to determine the ordering of leaves.

\begin{corollary}
The two \topVs of each rectangle are orientation-fixed. 
\end{corollary}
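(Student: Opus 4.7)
The plan is to derive the corollary directly from the \dependent relationship between \companion vertices that was already established in the section. My first step would be to unpack what ``orientation-fixed'' means when the term is applied to two vertices within a single rectangle rather than to two rectangles: the natural reading, consistent with the rest of the section, is that the orientation of a \topV is its color in the $\{0,1\}$ coloring scheme — whether it lies on the left or the right side of its associated rectangle — and that two vertices are orientation-fixed exactly when fixing the orientation of one forces the orientation of the other.

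Next, I would invoke two facts already on the table. By the definition of \companion vertex given above, the two \topVs of any rectangle in the decomposition are companions of each other. The sentence immediately preceding Lemma~\ref{lemma:dependency} asserts that every vertex and its companion are \dependent, and \dependent was defined to mean that the two vertices must receive different colors in any valid $2$-coloring. Combining these two statements, the two \topVs of a rectangle are forced to sit on opposite sides of that rectangle, so committing to a side for one immediately determines the side for the other — which is precisely the content of being orientation-fixed under the reading above.

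As a geometric sanity check I would also note that Lemma~\ref{lemma:x-coords} pins the two \topVs of the rectangle at a node $v$ to a fixed horizontal separation of $2\cdot \size(v) - 1$, so once one \topV is placed at a particular $x$-coordinate, the other is forced to the unique $x$-coordinate on the opposite side at this fixed distance. This gives a second, independent confirmation that there is no freedom in the relative orientation of the two \topVs, but it is not strictly needed — the dependency of companions already suffices.

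The main obstacle is purely definitional rather than combinatorial: the term ``orientation-fixed'' was originally introduced for pairs of rectangles joined by a visibility edge between a \topV of one and a \bottomV of the other, whereas the corollary reuses the term for two vertices sharing a single rectangle. The only real work in writing the proof is to make that translation precise — once the bridge between ``\dependent pair of vertices'' and ``orientation-fixed pair of \topVs'' is spelled out, the corollary is immediate from the companion observation and Lemma~\ref{lemma:dependency}.
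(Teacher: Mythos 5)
Your proof is correct in substance, but it takes a noticeably heavier route than the paper, which never touches the coloring framework for this statement. In the paper, the corollary carries no proof of its own; instead, the text immediately afterward introduces a vertex-level definition of orientation-fixed --- $p$ and $q$ are orientation-fixed if $p$ sees $q$ but does not see $\bar{q}$ --- and remarks that since $p$ does not see itself, $p$ and its \companion $\bar{p}$ are automatically orientation-fixed ($p$ sees $\bar{p}$, as they are corners of an empty axis-aligned rectangle of the decomposition, and $\bar{\bar{p}} = p$ is invisible to $p$). Since the two \topVs of a rectangle are companions by definition, the corollary is a one-line consequence of that definition; the odd source ordering, with the statement preceding the definition that discharges it, is why you found no attached proof to anticipate. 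Your route instead imports the orientation-dependency machinery (companions must receive different colors in any valid $2$-coloring) together with the separation identity $|x(p)-x(q)| = 2\cdot\mathrm{size}(v)-1$ as corroboration, and your translation between the rectangle-level and vertex-level senses of ``orientation-fixed'' matches the paper's intent exactly. The comparison is instructive: the paper's definitional criterion is checkable directly in the visibility graph ($p$ sees $q$ but not $\bar{q}$), which is what the reconstruction algorithm needs to detect forced orientations from $G_P$ alone, whereas your coloring-based reading states the semantic consequence (opposite sides, hence opposite colors) rather than a graph-testable condition; conversely, your version supplies an independent geometric confirmation via the subtree-size lemma, which, as you yourself note, is not strictly needed. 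One small caution: the companion-dependency assertion you lean on is stated in the paper only as an unproved example preceding the dependency lemma, so strictly speaking your argument bottoms out in the same trivial observation the paper makes explicit --- that two distinct corners on the same side pair of a rectangle cannot share a side.
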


Given a rectangle $R$ and a top (resp., bottom) vertex $t$ (resp., $b$), we denote the other top (resp., bottom) vertex $\bar{t}$ (resp., $\bar{b}$).

\begin{definition}
  Given a vertex $p$ and $q$, we say $p$ and $q$ are {\em orientation-fixed} if $p$ sees $q$, but does not see $\bar{q}$. Since $p$ does not see itself, the above definition implies that $p$ and $\bar{p}$ are orientation-fixed.
\end{definition}

\begin{lemma}
  Let $u$ and $v$ be two nodes in the tree decomposition $T$ and let $R_u$ and $R_v$ be the rectangles corresponding to $u$ and $v$. If $p \in R_u$ and $q \in R_v$  are orientation fixed, then either $u$ or $v$ are the lowest common ancestor of $u$ and $v$. 
\end{lemma}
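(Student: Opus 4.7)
The plan is to argue the contrapositive. Assume the lowest common ancestor $w$ of $u$ and $v$ in $T$ is distinct from both $u$ and $v$, so that $u$ and $v$ lie in two different child-subtrees of $w$. I will show that in this configuration $p$ and $q$ cannot be orientation-fixed: $p$ either sees both $q$ and $\bar q$, or sees neither of them.

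I would first unpack the geometric picture of the decomposition. By the construction of the contact tree $T$, sibling subtrees of $w$ occupy disjoint contiguous horizontal ranges of columns sitting strictly above the top edge of $R_w$, and each tree edge corresponds to a unit step up in height. Consequently $R_u$ and $R_v$ lie strictly above $R_w$ in disjoint $x$-ranges, and the portion of the polygon in the horizontal strip between them has its top strictly below both the bottom edges of $R_u$ and $R_v$. Without loss of generality place $R_u$ to the left of $R_v$, and let $H$ denote the maximum $y$-coordinate achieved by the polygon in the strip strictly between them; note that $H$ is strictly smaller than the $y$-coordinates of the bottom edges of both $R_u$ and $R_v$.

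Next, assume $p$ sees $q$. The segment $\overline{pq}$ lies inside $P$, and in the middle strip between $R_u$ and $R_v$ it stays weakly above $y=H$. Since $q$ and $\bar q$ share a horizontal edge of $R_v$, they differ only by a translation that keeps them inside $R_v$'s footprint. Hence $\overline{p\bar q}$ agrees with $\overline{pq}$ outside a small neighborhood of the right endpoint, where it remains inside $R_v$ and continues to sail above the skyline of height $H$. A symmetric argument handles the case $R_v$ to the left of $R_u$, as well as swapping the roles of $p$ and $q$. In every case $p$ also sees $\bar q$, contradicting orientation-fixedness.

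The main obstacle will be rigorously justifying that the endpoint perturbation from $q$ to $\bar q$ does not introduce a polygon-edge crossing of $\overline{p\bar q}$ near $R_v$. I expect a short case analysis on whether $p$ and $q$ are top or bottom vertices of their rectangles, using the facts that $R_u$ and $R_v$ are empty axis-aligned rectangles inside $P$ and that no vertex in the strip between them reaches above $y=H$, will close all subcases. The overarching observation is that when $R_u$ and $R_v$ lie in different subtrees of their common ancestor, the visibility cone from $p$ into $R_v$ is at least as wide as the footprint of $R_v$, so the two corners of $R_v$ sharing a horizontal edge are either both visible or both hidden from $p$.
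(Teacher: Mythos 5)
Your high-level strategy (contrapositive: if the LCA $w$ is neither $u$ nor $v$, then $p$ sees both of $q,\bar q$ or neither) matches the paper's, but the geometric core of your argument does not hold. First, the claim that the skyline $H$ of the polygon strictly between $R_u$ and $R_v$ lies strictly below the bottom edges of both rectangles is false: when $u$ and $v$ lie in different child-subtrees of $w$, other children of $w$ and their entire subtrees may sit between them and rise arbitrarily high; and even in the cleanest case, where $u$ and $v$ are themselves children of $w$, the gap between them is the top edge of $R_w$, which is at exactly the height of the bottom edges of $R_u$ and $R_v$, not strictly below them.

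Second, and more fundamentally, your visibility reasoning is inverted for a histogram. The interior of $P$ lies \emph{below} the upper staircase boundary, so a segment that ``sails above the skyline'' over the strip between the two subtrees passes through the exterior of $P$; it witnesses non-visibility, not visibility. The correct mechanism is the opposite of the one you describe: any segment from $p\in R_u$ to $q\in R_v$ must cross the gap over $R_w$'s top edge, where the skyline equals the level of that edge, while both endpoints lie at or above that level; hence the segment can stay inside $P$ only if it is horizontal at exactly that level. This forces $p$ and $q$ to be bottom vertices of sibling rectangles (children of $w$), in which case the horizontal line through $p$, $q$ and $\bar q$ runs along the boundary and interior, so $p$ sees both $q$ and $\bar q$; in every other configuration (non-siblings in different subtrees) $p$ sees neither $q$ nor $\bar q$. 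That two-case split --- siblings versus non-siblings --- is exactly the paper's proof and is the piece your argument is missing; your closing assertion that the visibility cone from $p$ into $R_v$ is as wide as $R_v$'s footprint happens to be true only because visibility degenerates to a single horizontal line, not because the segment clears a low skyline.
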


\begin{proof}
  Assume $u$ and $v$ are not ancestors of each other. If $u$ and $v$ are siblings, then either $u$ sees both $v$ and $\bar{v}$ or does not see either of them. Similarly, either $v$ sees both $u$ and $\bar{u}$ or does not see either of them. If they are in different subtrees and are not siblings then $u$ sees neither $v$ nor $\bar{v}$ and $v$ sees neither $u$ nor $\bar{u}$.
\end{proof}

\begin{definition}
We say two rectangles $R_1$ and $R_2$ are orientation-fixed if at least one vertex of $R_1$ is orientation-fixed with at least one vertex of $R_2$.
\end{definition}

\begin{lemma}
Given a node $v \in T$ with $k$ children $u_1, \dots, u_k$, rectangle $R_v$ is orientation fixed with $R_{u_1}$ and $R_{u_k}$.
\end{lemma}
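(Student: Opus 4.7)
The plan is to exhibit an explicit orientation-fixed pair of vertices between $R_v$ and $R_{u_1}$; the mirror argument applied to the right side of $R_v$ and the rightmost child $u_k$ handles the analogous claim for $R_{u_k}$. Let $b$ denote the bottom-left vertex of $R_v$, and let $p,\bar{p}$ denote the top-left and top-right vertices of $R_{u_1}$ respectively (so $p$ and $\bar{p}$ are companions). I will show that $b$ sees $\bar{p}$ but does not see $p$; by definition this makes $b$ and $\bar{p}$ orientation-fixed, and therefore $R_v$ orientation-fixed with $R_{u_1}$.

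First I would pin down coordinates. Under uniform step length each rectangle of the decomposition is one unit tall, and the proof of Lemma~\ref{lemma:x-coords} already shows that $R_v$'s top edge contains a unit-length polygon-boundary strip immediately to the left of $R_{u_1}$ (and unit-length strips between consecutive children and to the right of $R_{u_k}$). After translation I may therefore write $R_v = [x_L,x_R]\times[y_v-1,y_v]$ and $R_{u_1} = [x_L+1,\,x_L+2\,\size(u_1)]\times[y_v,y_v+1]$, so $b = (x_L,y_v-1)$, $p = (x_L+1,y_v+1)$, and $\bar{p} = (x_L+2\,\size(u_1),y_v+1)$. The crux is then a pair of short slope computations: the segment $b\bar{p}$ crosses $y=y_v$ at $x = x_L+\size(u_1)$, which lies in the horizontal extent $[x_L+1,\,x_L+2\,\size(u_1)]$ of $R_{u_1}$ because $\size(u_1)\ge 1$, so this segment lies in $R_v$ for $y\le y_v$ and in $R_{u_1}$ for $y\ge y_v$ and visibility holds. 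The segment $bp$ instead crosses $y=y_v$ at $x = x_L+\tfrac{1}{2}$, which is strictly to the left of the left edge $x = x_L+1$ of $R_{u_1}$, so just above $y_v$ the segment lies outside every child rectangle and hence in the exterior of $P$; thus $b$ cannot see $p$.

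The real obstacle I expect is not the slope arithmetic but a careful check that no other part of $P$ can block or redirect either candidate segment---for instance a parent rectangle of $R_v$, the other children $u_2,\dots,u_k$, or descendants of $u_1$. This is handled by the observation that both segments are contained in the vertical band $[x_L,\,x_L+2\,\size(u_1)]\times[y_v-1,y_v+1]$, within which $P$ coincides exactly with $R_v\cup R_{u_1}$: by the proof of Lemma~\ref{lemma:x-coords} the siblings $u_2,\dots,u_k$ all begin at $x\ge x_L+2\,\size(u_1)+1$; everything in the subtree rooted at $u_1$ sits at $y\ge y_v+1$; and any ancestor chain of $v$ only touches $R_v$ from below, at $y\le y_v-1$. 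Hence nothing outside $R_v\cup R_{u_1}$ can intersect either segment, and the visibility analysis above is conclusive.
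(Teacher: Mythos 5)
Your proposal is correct and follows exactly the paper's argument: the paper also picks the left bottom vertex of $R_v$ and the two top vertices of $R_{u_1}$ and asserts (without the coordinate computation, which you supply) that the former sees the right top vertex but not the left one, with the symmetric argument for $u_k$. Your explicit placement of the rectangles and the check that nothing else in $P$ interferes are consistent with the decomposition established in Lemma~\ref{lemma:x-coords}, so nothing is missing.
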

\begin{proof}
Let $b_v^l \in R_v$ be the left \bottomV and let $t_{u_1}^l$ and $t_{u_1}^r$ be the left and right \topVs of $u_1$. Then it is easy to see that $b_v^l$ sees $t_{u_1}^r$, but does not see $t_{u_1}^l$.

Similarly, $b_v^r \in R_v$ sees $t_{u_k}^l$, but does not see $t_{u_k}^r$.  

\end{proof}

\begin{lemma}
All rectangles on the path from root to the right-most leaf and to the left-most leaf are orientation-fixed. 
\end{lemma}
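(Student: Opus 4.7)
The plan is to proceed by straightforward induction along each of the two paths, using the previous lemma as the inductive step. Recall that the previous lemma shows that for any node $v$ with children $u_1,\dots,u_k$, the rectangle $R_v$ is orientation-fixed with $R_{u_1}$ (its leftmost child) and with $R_{u_k}$ (its rightmost child).

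First, I would observe that the root-to-leftmost-leaf path in $T$ is a sequence $v_0, v_1, \dots, v_h$ in which $v_0$ is the root, $v_h$ is the leftmost leaf, and each $v_{i+1}$ is the leftmost child of $v_i$. Applying the previous lemma at each internal $v_i$, the rectangles $R_{v_i}$ and $R_{v_{i+1}}$ are orientation-fixed by a concrete pair of mutually visible vertices (the left bottom vertex of $R_{v_i}$ and the right top vertex of $R_{v_{i+1}}$). Thus every consecutive pair of rectangles along the path is orientation-fixed. The same reasoning applied to the root-to-rightmost-leaf path, using the pairing of each $v_i$ with its rightmost child, shows that every consecutive pair along that path is orientation-fixed as well.

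Next, I would argue that pairwise orientation-fixedness of consecutive rectangles propagates along the entire path: fixing the orientation of $R_{v_0}$ forces the orientation of $R_{v_1}$ (by the visibility constraint from the previous lemma), which in turn forces the orientation of $R_{v_2}$, and so on down to $R_{v_h}$. This gives the desired global statement that all rectangles on the path are simultaneously orientation-fixed once the orientation of the root rectangle is chosen.

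The main (minor) obstacle is being careful about the transitivity of the orientation-fixed relation: the previous lemma only provides pairwise fixedness between a parent and one of its extreme children, so I must verify that chaining these constraints is consistent. This follows because each pairwise constraint is witnessed by a specific pair of mutually visible polygon vertices in the two rectangles, and these witnesses are independent across different parent-child pairs, so no pair of constraints can be in conflict.
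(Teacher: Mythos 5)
Your proof is correct and follows essentially the same route as the paper's: both arguments induct along the root-to-extreme-leaf paths and invoke the preceding lemma (a parent rectangle is orientation-fixed with its leftmost and rightmost children) at each step to chain the constraints down the path. Your added remark on the consistency of chaining the pairwise constraints is a harmless elaboration of what the paper leaves implicit.
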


\begin{proof}
By induction on level of a node.
{\bf Base case:} parent rectangle of the lowest-level leaf is orientation-fixed with the right-most and left-most leaf node by above lemma.

{\bf Inductive step:} Assume the the rectangle at every node $v$ at level $l$ are orientation-fixed with all the nodes on the paths from $v$ to the right-most and left-most leaves. Consider a node $u$ at level $l+1$. By the above lemma, $u$ is orientation-fixed with $v_l$, the left-most child of $v$, and $v_r$, the right-most child of $v$, which lie on the path to the right-most and left-most leaves. 
\end{proof}

\begin{lemma}
  Let $u$ and $v$ be two nodes in the tree decomposition $T$, with specific order among children at each node, and let $R_u$ and $R_v$ be the rectangles corresponding to $u$ and $v$. If $p \in R_u$ and $q \in R_v$  are orientation fixed, then $|x(p)-x(q)| = f(d(u,v)$, where $d(u,v)$ is the distance between $u$ and $v$ in the properly defined Euler tour on $T$.
\end{lemma}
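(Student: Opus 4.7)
The plan is to define the Euler tour $\tau$ of $T$ as the DFS traversal that visits children in the fixed left-to-right order and traverses each tree edge twice (once descending, once ascending). The polygon boundary is then traced by $\tau$: each visit to a tree node corresponds to a specific vertex of its rectangle, and consecutive visits in $\tau$ correspond to adjacent polygon vertices, connected by unit polygon edges. Four visits to a node $w$ in an extended tour (two on descent, two on ascent, one per side of $R_w$) are associated with the four vertices of $R_w$ in the cyclic order they appear on the boundary.

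First, I would establish the correspondence between $\tau$ and the polygon boundary using Lemma~\ref{lemma:x-coords}. By induction on subtree size, the segment of $\tau$ traversing the subtree rooted at $w$ traces a portion of the boundary whose horizontal extent matches the rectangle width $2\size(w) - 1$; moreover, the boundary steps within this segment alternate between horizontal and vertical unit edges in a pattern determined solely by the tour's step parity. This is the key invariant: each tour step corresponds to exactly one unit polygon edge, and the horizontal/vertical label of that edge is a function of the tour position modulo~$2$, independent of which subtree is being traversed.

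Next, by the earlier lemma on orientation-fixed vertices and LCAs, we may assume WLOG that $u$ is an ancestor of $v$. Then $p$ and $q$ lie on specific sides of their respective rectangles (determined by the orientation-fixed relationship), and the $\tau$-segment from $p$'s occurrence to $q$'s occurrence lies entirely within the subtree rooted at $u$. Counting the horizontal unit edges along this segment yields $|x(p) - x(q)|$, and by the parity-alternation invariant this count is determined entirely by the number of tour steps between the two occurrences, i.e.\ by $d(u,v)$, producing an explicit $f$.

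The main obstacle is defining $\tau$ and $d(u,v)$ carefully enough that $f$ is well-defined---in particular, disambiguating the occurrences of each tree node in $\tau$, specifying which occurrence we use for each of $u$ and $v$ based on the side of $R_u$ that $p$ lies on and the side of $R_v$ that $q$ lies on, and handling the corner case where $u = v$ but $p \neq q$ (where the two \topVs of the same rectangle are orientation-fixed, and the $x$-distance $2\size(u)-1$ from Lemma~\ref{lemma:x-coords} must be recovered as $f$ evaluated at the intra-node tour distance). Once this bookkeeping is in place, $f$ is determined by the alternation pattern of boundary edges together with the width formula.
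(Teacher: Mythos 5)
The paper never actually proves this lemma: it appears in a disabled portion of the source with no argument attached, and even the function $f$ and the ``properly defined Euler tour'' are left unspecified, so there is no proof of record to compare yours against. Judged on its own, your approach is sound and is the natural way to make the statement precise and true. The boundary of the histogram restricted to the subtree of a node $w$ is exactly the sequence consisting of the bottom-left and top-left corners of $R_w$, then the children's blocks in left-to-right order separated by unit dent edges, then the top-right and bottom-right corners of $R_w$; the $4\cdot\mathrm{size}(w)-1$ unit edges in this sequence alternate vertical/horizontal with the vertical ones at even step indices, and because the upper chain is $x$-monotone with unit steps, every horizontal edge contributes exactly $+1$ to $x$. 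Hence $|x(p)-x(q)|$ is the number of odd indices among the steps between the two occurrences, consistent with the width formula $2\cdot\mathrm{size}(w)-1$ that you correctly reuse.

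Two items you dismiss as bookkeeping do need to be stated for the lemma to be literally true. First, you should invoke the $x$-monotonicity of the upper chain explicitly: counting horizontal edges yields $|x(p)-x(q)|$ only because none of them is a $-1$ step. Second, the horizontal-edge count between tour positions $i$ and $j$ is $\lfloor j/2\rfloor-\lfloor i/2\rfloor$, which depends on the parity of $i$ and not only on $j-i$; so $f$ is a well-defined function of the occurrence-to-occurrence distance only after you fix the convention that bottom-left and top-right corners occupy even tour positions while top-left and bottom-right corners occupy odd ones --- which is exactly what your corner-type disambiguation provides, and one can check that the resulting residues modulo $4$ never collide for the orientation-fixed pairs that actually arise. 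With those conventions, and the preceding lemma letting you assume $u$ is an ancestor of $v$, your argument goes through.
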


We construct graph $G = (V,E)$, where $V$ is a set of the vertices of the polygon and $(u, v) \in E$ if vertices $u$ and $v$ are orientation fixed in the polygon. 

\begin{lemma}
If at least one vertex $v$ of a subtree $T'$ is orientation fixed with a vertex in a different subtree $T''$, and $v$ is not a \topV of the dent rectangle of $T'$ and is not a \bottomV of the tab rectangle, then all vertices in $T'$ are orientation-fixed to $T''$.
\end{lemma}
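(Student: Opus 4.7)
The plan is to establish orientation-fixing as a propagating property along $T'$: starting from the single witnessed vertex $v$, the constraint with $T''$ spreads to all vertices of $T'$, with the only possible barriers being the two extremal vertex types explicitly excluded by the hypothesis.

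First, I will handle propagation inside a single rectangle. Since $v$ and its companion $\bar v$ are orientation-fixed to each other (by the preceding corollary on companion vertices), any witness $w \in T''$ showing that $v$ is orientation-fixed forces $\bar v$ to be orientation-fixed too: a short case analysis against the visibility patterns enumerated in Lemma~\ref{lemma:dependency} shows that $\bar v$ must see exactly one of $w$ and $\bar w$ without seeing the other. Repeating the same argument applied to the top pair and bottom pair of the rectangle containing $v$, together with the fact that both tops share a $y$-coordinate and both bottoms share a $y$-coordinate, completes the stage: all four vertices of $v$'s rectangle are orientation-fixed with some vertex of $T''$.

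Second, I will propagate across parent--child edges of the contact tree inside $T'$. For adjacent rectangles $R_u$ and $R_{u'}$ sharing a horizontal contact, I will argue that if all four vertices of $R_u$ are orientation-fixed with $T''$, then so are all vertices of $R_{u'}$. The key geometric observation is that the shared horizontal edge constrains the $x$-range of $R_{u'}$'s tops to lie inside the $x$-range of $R_u$'s bottoms (or symmetrically), so a sightline from a vertex of $R_u$ to $T''$ that witnesses orientation-fixing can be translated vertically to a sightline from the corresponding vertex of $R_{u'}$ to a vertex of $T''$, by uniform-step-length and monotonicity. Iterating this step both upward toward the root of $T'$ and downward toward its leaves, starting from $v$'s rectangle, reaches every rectangle of $T'$.

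Finally, I will argue that the two excluded vertex types are exactly where the above propagation can break. A \topV of the dent rectangle of $T'$ lies on the interface between $T'$ and its parent in the contact tree, so its orientation-fixing relation with $T''$ can be induced externally through the parent chain, rather than from inside $T'$; starting the propagation from such a vertex gives no leverage to move strictly into $T'$. Symmetrically, a \bottomV of a tab rectangle is a leaf boundary vertex with no child rectangle underneath, and its visibility to $T''$ may be a purely local leaf-geometry artifact rather than a symptom of a global orientation constraint. Excluding these two cases, the combined local and adjacency propagation covers every vertex of $T'$. I expect the main obstacle to be the adjacency step: the case analysis must guarantee that no reflex corner in $T'$ blocks the translated sightline to $T''$, and I anticipate needing an auxiliary visibility-cone argument in the same spirit as Lemma~\ref{lemma:dependency} to handle all relative placements of $T''$ with respect to $T'$ (above, below, left, or right of the lowest common ancestor).
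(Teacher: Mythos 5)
The paper never proves this lemma: it appears only as a bare statement in a draft (compiled-out) section, so there is no reference proof to compare against, and your argument has to stand on its own. It does not, for two concrete reasons. First, your within-rectangle step contradicts the paper's own case analysis of orientation dependencies (the lemma you cite). You assert that if $v$ sees $w$ but not $\bar w$, then the companion $\bar v$ ``must see exactly one of $w$ and $\bar w$.'' But that case analysis explicitly admits the configuration in which one companion sees exactly one of $\{w,\bar w\}$ while the other sees \emph{neither} (its $1$-versus-$0$ cases; only the $0$-versus-$2$ cases are ruled out). In that configuration $\bar v$ is not orientation-fixed with $w$ or $\bar w$, and it may see no vertex of $T''$ at all. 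Since the definition of orientation-fixed requires $p$ to actually see $q$, the relation does not literally propagate to $\bar v$. The statement is only salvageable if ``orientation-fixed to $T''$'' is read transitively, through chains of companion and dependent pairs inside $T'$; your proof would then have to be reorganized around the connectivity of that dependency relation restricted to $T'$, not around direct sightlines into $T''$.

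Second, the parent--child step (``translate the sightline vertically'') is geometrically unsound, and under the literal reading the conclusion it is meant to establish is false. Any segment from a vertex of $T'$ to a vertex of $T''$ must pass over the dent separating the two subtrees while staying below the dent's level across the dent's entire horizontal footprint. As you ascend a tall tower of $T'$, this quickly becomes impossible: a reflex vertex high in $T'$ sees nothing of a neighboring subtree across a low dent, so no vertical translate of a lower vertex's witnessing sightline stays inside the polygon, and the inductive step ``all four corners of $R_u$ fixed $\Rightarrow$ all four corners of the child fixed'' fails. Finally, your closing discussion of why the two excluded vertex types are the only obstructions is a plausibility argument rather than a proof; the relevant fact to isolate is that exactly those vertices can have neighborhoods identical to their companions', which places them in the no-dependency case of the cited case analysis, and your write-up does not establish this.
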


\begin{lemma}
$G$ is bipartite.
\end{lemma}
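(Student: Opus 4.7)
The plan is to construct an explicit proper 2-coloring of $G$ directly from the geometry of a polygon $P$ realizing the input visibility graph. Using the rectangle decomposition associated with the contact tree $T$, every vertex $p$ of $P$ is the left or the right vertex of its rectangle (since the two \topVs, and likewise the two \bottomVs, of any rectangle are distinguished by their $x$-coordinates). Define a coloring $\chi : V \to \{0,1\}$ by setting $\chi(p) = 0$ if $p$ is the left vertex of its rectangle, and $\chi(p) = 1$ if it is the right vertex.

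The next step is to verify that $\chi$ is proper. Fix any edge $(p,q) \in E$. By the definition of orientation-fixed, $p$ sees $q$ but does not see $\bar{q}$, so $\visq{p} = \{q\}$; together with the possible values of $|\visq{\bar{p}}|$, the pair falls into case~\ref{item:vis11}, \ref{item:vis10}, or \ref{item:vis12} of Lemma~\ref{lemma:dependency}. In each of these cases Lemma~\ref{lemma:dependency} asserts that $p$ and $q$ are \dependent, and inspecting its proof shows that any realization must assign $p$ and $q$ to opposite sides of their respective rectangles. Since $\chi$ comes from such a realization, $\chi(p)\ne\chi(q)$, so every edge of $G$ is bichromatic.

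Combining these two steps yields a proper 2-coloring of $G$, and therefore $G$ is bipartite. I expect the main obstacle to be ensuring that $\chi$ is unambiguous at vertices that lie on the shared boundary of two rectangles of $T$, for instance a reflex \bottomV of a parent rectangle that reappears as a \topV of a child rectangle. One must show that the ``left vs.\ right'' label of such a vertex is consistent across both of its rectangles, so that $\chi$ does not conflict with itself. Once this consistency is established from the way children are attached in the decomposition, the case-by-case appeal to Lemma~\ref{lemma:dependency} finishes the argument.
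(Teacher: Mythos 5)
Your overall strategy is the same one the paper sketches: two-color each vertex by whether it is the left or the right corner of its rectangle in the decomposition, and argue that every edge of $G$ joins a left vertex to a right vertex. (The consistency worry you raise at the end is moot: the quadruplets of vertices associated with the nodes of $T$ partition the vertex set --- a reflex vertex lies \emph{on} the top edge of the rectangle below it but is a \emph{corner} only of the rectangle above it --- so $\chi$ is well defined.) The paper, however, only \emph{asserts} the left/right alternation and openly concedes that it does not prove it, so the entire substance of your proposal is the reduction to Lemma~\ref{lemma:dependency}, and that reduction has a gap.

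From $(p,q)\in E$ you correctly get $\mathrm{vis}_{q,\bar q}(p)=\{q\}$, but to land in one of the three cases you invoke you also need $\mathrm{vis}_{q,\bar q}(\bar p)\neq \mathrm{vis}_{q,\bar q}(p)$; your three cases are exactly $\mathrm{vis}_{q,\bar q}(\bar p)\in\bigl\{\{\bar q\},\ \emptyset,\ \{q,\bar q\}\bigr\}$, and the possibility $\mathrm{vis}_{q,\bar q}(\bar p)=\{q\}$ --- both $p$ and its companion see $q$ while neither sees $\bar q$ --- is never addressed. That situation falls under Case~1 of Lemma~\ref{lemma:dependency}, which concludes that \emph{no} constraint is imposed, and whose own proof tacitly assumes the common visible set is either all of $\{q,\bar q\}$ or empty, never the singleton $\{q\}$. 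This is not a dismissible corner case: in that configuration $(p,q)$, $(\bar p,q)$ and $(p,\bar p)$ are all edges of $G$, i.e.\ a triangle, and since exactly one of $p,\bar p$ shares a side with $q$, the coloring $\chi$ would fail on one of those edges. In other words, excluding this configuration is itself an instance of the bipartiteness you are trying to prove, and it requires a genuine geometric argument about unit-step histograms (e.g.\ that emptiness of the triangle $p\bar p q$ together with $x$-monotonicity forces $\bar q$ to be visible from one of $p,\bar p$) that appears neither in your write-up nor in Lemma~\ref{lemma:dependency}.
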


\begin{proof}
Let $(p, q) \in E$ and let $p \in R_p$ and $q \in R_q$. Then either ($p$ is the left vertex of $R_p$ and $q$ is a right vertex of $R_q$) or ($p$ is the right vertex of $R_p$ and $q$ is the left vertex of $R_q$). It might be hard to prove this and might be unnecessary...
\end{proof}

 There are $O(k!2^k)$ such orderings (and orientations) for all leaf rectangles, where $k$ is the number of tabs.

In fact, we can extend this idea to paths in the monotone staircase:

\begin{lemma}
The base rectangle of a monotone staircase is orientation-fixed with all rectangles on the paths in $T$ to the leaves containing the left-most and right-most tabs
in the monotone staircase.
\end{lemma}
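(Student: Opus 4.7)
My plan is to construct, for each rectangle $R_i$ on the right-most path, an explicit pair of vertices that satisfies the definition of orientation-fixed, with one vertex in $R_\text{base}$ and the other in $R_i$; the analogous argument on the left-most path is a mirror image. Label the right-most path by $R_0,R_1,\ldots,R_k$, where $R_0=R_\text{base}$ and $R_k$ is the tab rectangle containing the right-most tab. Let $b_0^r$ be the right \bottomV of $R_0$, and for each $i$ let $t_i^l$ and $t_i^r$ be the left and right \topVs of $R_i$, so that $t_i^l=\overline{t_i^r}$. The base case $i=1$ is already given by the earlier lemma that $R_v$ is orientation-fixed with its right-most child, and the plan is to push this to arbitrary $i$.

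The heart of the proof is a single geometric claim: $b_0^r$ sees $t_i^r$ for every $1\le i\le k$, and $b_0^r$ does not see $t_i^l$ for any $1\le i\le k$. For the visibility half, I would exploit the fact that the ``right-most child'' relation, applied repeatedly, forces the right vertical edge of every $R_i$ with $i\ge 1$ to lie on the outer right boundary of the polygon above the top chord of $R_{i-1}$; walking upward from $b_0^r$ along that outer boundary traces a monotone staircase whose vertical segments have $t_i^r$ at their tops. The open segment $b_0^r t_i^r$ is then bounded on its right by the polygon boundary itself and on its left by the union $\bigcup_{j\le i} R_j$, both of which lie in $P$, so the segment stays inside $P$. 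For the non-visibility half, $t_i^l$ lies strictly to the left of this right boundary, and the reflex polygon vertex positioned just below and to the right of $t_i^l$ on the right staircase lies inside the closed wedge from $b_0^r$ through $t_i^l$, hence the segment $b_0^r t_i^l$ must exit $P$ across the short vertical boundary edge rising above that reflex vertex.

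Combining these two halves, the pair $(b_0^r,t_i^r)$ satisfies $b_0^r$ sees $t_i^r$ but not $\overline{t_i^r}=t_i^l$, making $b_0^r$ and $t_i^r$ orientation-fixed, and therefore $R_0$ and $R_i$ orientation-fixed by definition. The symmetric argument using the left \bottomV $b_0^l$ of $R_0$ together with the right \topVs $t_i^r$ (now playing the role of the blocked companion) along the left-most path disposes of the other path and completes the proof. The main obstacle I anticipate is justifying the existence and precise placement of the ``blocking reflex vertex'' in the non-visibility half: its existence at every depth of the right-most path rests on the histogram hypothesis (the right boundary above any height is a monotone decreasing unit-step staircase) and on the decomposition property that every non-leaf node of $T$ has its right-most child separated from its left siblings by reflex polygon vertices on the shared top chord, and these structural facts must be combined carefully to pin the blocking vertex into the correct wedge.
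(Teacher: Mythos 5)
There is a genuine error at the heart of your argument: the central geometric claim is stated backwards. A convex vertex $t_i^r$ at the top of a vertical edge of the \emph{outer right} boundary has its two incident boundary edges going down and to the left, so the interior angle at $t_i^r$ opens into the quadrant $\{x \le x(t_i^r),\ y \le y(t_i^r)\}$; the right base vertex $b_0^r$ lies strictly to the right of $t_i^r$ (its $x$-coordinate is larger by $i$), hence outside that quadrant, so $t_i^r$ cannot see $b_0^r$ at all. Concretely, the segment $\overline{b_0^r t_i^r}$ exits $P$ through the first horizontal step of the right boundary: just above height $1$ the segment sits at $x > x(b_0^r) - 1$, while the polygon there extends only to $x = x(b_0^r) - 1$. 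Your justification that the segment is ``bounded on its right by the polygon boundary itself'' is exactly where the argument breaks --- that boundary is a staircase stepping \emph{inward} as it rises, so the chord from $b_0^r$ cuts the corners of the steps rather than staying inside them. The companion claim, that $b_0^r$ does \emph{not} see $t_i^l$, is also false: $t_i^l$ is the top of the \emph{left} vertical edge of $R_i$, a convex vertex of an ascending staircase whose interior angle opens toward the lower right, i.e., toward $b_0^r$, and the nesting of the right-spine rectangles ($R_i$'s bottom edge lies inside $R_{i-1}$'s top edge, so the left endpoints only move rightward going up) shows that $\overline{b_0^r t_i^l}$ stays inside $\bigcup_{j\le i} R_j$. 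There is no blocking reflex vertex in that wedge.

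The fix is simply to swap the roles of the two \companion \topVs, which recovers the paper's (one-line) proof: the right base vertex sees the \emph{left} \topV of every rectangle on the right-most path and fails to see its right \companion on the outer right boundary, and symmetrically the left base vertex sees the right \topVs along the left-most path. With that reversal, your overall structure --- exhibiting, for each spine rectangle, one \topV seen by a base vertex whose \companion is not seen --- is exactly the intended argument, and your nesting computation (applied to the correct diagonal) is the right way to make the paper's one-liner rigorous.
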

\begin{proof}
The left base vertex sees all (right) convex vertices in the left-most tab's double staircase, and the right base vertex sees all (left) convex
vertices in the right-most tab's double staircases.
\end{proof}

\begin{lemma}
\label{lemma:convex-reflex}
Given a decomposition of the monotone polygon into rectangles and the corresponding contact graph $T$ with associated quadruplets of vertices with each node,  we can distinguish \topVs from \bottomVs in $O(nm)$ (?) time. Moreover, we can identify tab vertices, base vertices and dent vertices.  
\end{lemma}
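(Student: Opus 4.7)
The plan is to process the contact tree $T$ bottom-up, using the characterization of tab edges as 1-simplicial edges sitting in a maximal $4$-clique (the earlier lemma) to distinguish top from bottom at the leaves, and then propagating these labels toward the root. Once every quadruplet is labeled, the three vertex types fall out directly from the position of a vertex in $T$.

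\textbf{Step 1 (leaves).} Let $R$ be a leaf of $T$ with associated quadruplet $Q_R$. Since $R$ is a tab rectangle, exactly one of the six pairs in $Q_R$ is a tab edge, namely the 1-simplicial edge lying in a maximal $4$-clique. Testing each pair with Observation~\ref{claim:simplicial-k-clique-time} takes $O(n)$ time, so we can identify the top edge of $R$, and hence its two top vertices, in $O(n)$ time per leaf. The remaining two vertices of $Q_R$ are the bottom vertices of $R$. Summed over all leaves of $T$, this stage costs $O(n^2)$.

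\textbf{Step 2 (internal nodes).} Visit internal nodes in post-order. At an internal node $R$ with already-labeled children $R_1,\ldots,R_k$, the two top vertices of $R$ are precisely those vertices of $Q_R$ that also appear as bottom vertices in the quadruplet of some $R_i$; the remaining two vertices of $Q_R$ are the bottom vertices of $R$. This rule is justified by the geometry of the uniform-step-length decomposition, in which every contact-tree edge between $R$ and a child is witnessed by a vertex shared as a top corner of $R$ and a bottom corner of that child. The labeling at each node requires only a constant number of set-membership tests, so this stage costs $O(n)$ overall.

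\textbf{Step 3 (classification).} With top/bottom labels in hand, a polygon vertex is a \emph{tab} vertex iff it is a top vertex of a leaf of $T$; a \emph{base} vertex iff it is a bottom vertex of the root of $T$; and a \emph{dent} vertex in all remaining cases (equivalently, it is a reflex vertex appearing as the bottom of some non-root rectangle). The degenerate case in which $T$ is a single node is handled by noting that the 1-simplicial test applied to the unique $K_4$ still determines the top edge, with the other two vertices serving as base vertices. This pass is $O(n)$, giving total time $O(n^2)$, which is subsumed by the claimed $O(nm)$ bound since $m=\Omega(n)$ for a connected visibility graph.

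The principal obstacle is justifying the shared-vertex rule used in Step 2: one must argue that in every valid decomposition of a uniform-step-length histogram each top corner of an internal rectangle coincides with a bottom corner of \emph{exactly one} of its children, while no bottom corner of the same internal rectangle is shared with any child. This requires a careful case analysis at nodes with multiple stacked children, where the ``middle'' children's bottom corners are dent vertices lying on the interior of $R$'s top edge and must not be confused with the two corner vertices of $Q_R$ that the rule identifies as top vertices.
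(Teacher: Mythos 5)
Your Step~2 rests on a premise that is false for this decomposition, and it is exactly the point you yourself flagged as ``the principal obstacle.'' In the rectangle decomposition of a uniform-step-length \hpolygon~the quadruplets associated with the nodes of $T$ are pairwise \emph{disjoint}: every polygon vertex is a corner of exactly one rectangle (each peeling step removes all four vertices of a tab clique). A parent's two top corners are \emph{convex} vertices lying at the same level as its children's bottom corners but horizontally offset outward; in a double staircase the rectangle at depth $i$ has corners at $x=-(i-1)$ and $x=i$, so consecutive rectangles share no vertex. This is also explicit in the paper's peeling step (Lemma~\ref{lemma:truncated-tab}), where the newly exposed tab vertices $t'_a,t'_b$ are \emph{other} top vertices ``on the same level as $u$ and $v$,'' not $u$ and $v$ themselves. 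Consequently the set ``vertices of $Q_R$ that appear as bottom vertices of some child'' is empty at every internal node, and your propagation never labels anything. The paper does not propagate along $T$ at all: it first computes the set of all tab edges (Lemma~\ref{lemma:monotone-tab-id}, via $1$-simplicial edges in maximal $4$-cliques filtered by the \iso{} criterion of Lemma~\ref{lemma:isolated-vertices}) and then classifies globally --- a top vertex sees nothing above its own level, so a non-tab vertex is a bottom vertex if and only if it sees some tab vertex, and every bottom vertex does; equivalently, the iterative peeling itself labels the two tab vertices of each removed rectangle as top vertices and the other two corners as bottom vertices.

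Two smaller problems. Your Step~3 declares every remaining bottom vertex a dent vertex, but dent vertices are only the reflex vertices incident to a horizontal dent edge at a local minimum of the \hpolygon~(the ``interior'' bottom corners of consecutive siblings in $T$); this is a strict subset of the reflex vertices, and separating them from ordinary staircase reflex vertices depends on the left-to-right ordering of siblings, which is precisely the expensive part of the whole reconstruction. And in the single-rectangle degenerate case all six pairs of the $K_4$ pass the $1$-simplicial test, so no particular edge is singled out as the tab (harmless for reconstruction, but not what you claim). Step~1 on its own is sound: within a leaf's quadruplet the tab edge is indeed the unique pair that is $1$-simplicial and in a maximal $4$-clique.
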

\begin{proof}
To be written... Relies on Lemma~\ref{lemma:topV} to identify \topVs. Can identify tab vertices via 1-simplicial 4-cliques. Can identify base vertices by Lemma~\ref{lemma:baseV}.
\end{proof}

\begin{lemma}
\label{lemma:topV}
With the exception of the tab rectangles and the base rectangle, each \topV sees exactly one base vertex or exactly one tab vertex.
\end{lemma}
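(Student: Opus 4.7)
The plan is to exploit the monotone staircase structure and uniform step length to pin down the unique base or tab vertex visible from $p$. Let $p$ be the top-left vertex of $R$ (the top-right case follows by symmetry). Since $p$ is a convex vertex of the polygon, the two incident polygon edges at $p$ go right along $R$'s top edge (ending at $\bar{p}$) and down along $R$'s left wall (ending at the reflex bottom-left vertex of $R$), confining $p$'s visibility to the down-right quadrant. In particular, no vertex strictly above $p$ or strictly to the left of $p$ lies in the visibility cone, so any visible base or tab vertex must be reached through this cone.

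Next, I would show that the line from $p$ to the opposite-side base vertex (the bottom-right base vertex $b_r$) stays entirely within the polygon. The key step is to use the uniform step length to compare the slope of this line against the staircase's descending slope: because $R$ is an internal rectangle, $p$'s $y$-coordinate lies strictly between the tab level and the base level, and the line's $y$-value at each integer $x$-coordinate between $p$ and $b_r$ is bounded above by the height of the column at $x$, so the line never crosses above the polygon's top boundary. By analogous arguments, the line from $p$ to the same-side base vertex exits via a shorter column on $p$'s side, and any line from $p$ to a tab vertex would require a slope steeper than the staircase allows and thus exits through the top of an intervening column. When the polygon's structure is such that the line of sight from $p$ to $b_r$ is itself obstructed, the symmetric role is played by a distinguished tab vertex inside $R$'s subtree, reached by the extremal ray of $p$'s visibility cone; again the uniform step length forces at most one such tab to be visible.

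The main obstacle I anticipate is the case analysis required to cleanly separate the base-visible case from the tab-visible case as a function of $R$'s position in $T$, and to verify in each case that no second base or tab vertex sneaks into $p$'s visibility cone. I expect the argument will hinge on the rigidity imposed by uniform step length: the slopes of lines from $p$ to base and tab vertices are tightly constrained, so only one such slope (per case) yields a segment that remains inside the polygon, and every alternative target can be ruled out by exhibiting a column top or reflex vertex that blocks the corresponding line. An induction on the depth of $R$ in $T$, combined with the structural properties of \topVs and \bottomVs established earlier, should make this bookkeeping manageable.
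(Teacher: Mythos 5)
There is no author proof to compare against here: this lemma sits in a disabled draft block of the source and its ``proof'' reads literally ``To be written\dots''; the compiled paper abandons it and instead classifies vertices by the complementary fact that a top vertex sees \emph{no} tab vertex while every bottom vertex sees at least one. So your attempt has to be judged on its own, and unfortunately its load-bearing step is the one that fails. Your fallback case --- ``when the line of sight from $p$ to the far base vertex is obstructed, the role is played by a distinguished tab vertex in $R$'s subtree'' --- cannot be made to work, and your own cone analysis shows why: $p$'s visibility is confined to the closed quadrant below its level and toward the polygon's interior, so every tab of $R$'s subtree (all strictly above $p$) is invisible outright, and a tab vertex $t$ of another pyramid at or below $p$'s level is a local maximum of the upper boundary, so any segment from $p$ to $t$ of strictly negative slope sits strictly above the polygon's ceiling in the column just before $t$, while a horizontal approach is severed by the dent separating the two pyramids. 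A top vertex therefore never sees a tab vertex.

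Consequently the statement itself is false once a deep dent intervenes. Take the uniform-length histogram with base from $(0,0)$ to $(17,0)$ whose upper boundary rises by unit steps to a tab at height $5$ over $[4,5]$, descends to a dent at height $1$ over $[8,9]$, rises to a second tab at height $5$ over $[12,13]$, and descends to $(17,0)$. The rectangle $[3,6]\times[3,4]$ is neither a tab rectangle nor the base rectangle, and its top vertex $p=(3,4)$ sees no tab vertex (all four are at height $5>4$) and no base vertex: $(0,0)$ lies outside $p$'s quadrant, and the segment from $(3,4)$ to $(17,0)$ has height $18/7\approx 2.57$ at $x=8$, where the ceiling over the dent is $1$. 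So $p$ sees zero base-or-tab vertices. Your slope comparison is correct for a single pyramid --- there the segment from a level-$\ell$ top vertex to the opposite base vertex passes at height $\ell j/(2k-\ell)\le j$ beneath every level-$j$ reflex vertex of the opposite staircase, which is exactly the uniform-step rigidity you invoke --- but the bound is destroyed by any dent lower than that segment, and no induction on the contact tree can repair a false claim. If you need a visibility criterion that actually separates top from bottom vertices, prove the complementary statement the final paper relies on: a top vertex sees no tab vertex, whereas every bottom vertex sees at least one.
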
 
\begin{proof}
To be written...
\end{proof}

\begin{lemma}
\label{lemma:baseV}
In the base rectangle, base vertices see at least one tab vertex, while \topVs see no tab vertex.
\end{lemma}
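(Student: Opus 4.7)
The plan is to prove each half of the lemma separately via explicit geometric constructions over the rectangle decomposition.

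For the first claim, let $\ell$ be the left base vertex of the base rectangle, and let $R_t$ be the leftmost tab rectangle, reached by following the leftmost root-to-leaf path in $T$. I would show that $\ell$ sees the right \topV of $R_t$ by exhibiting an explicit sight line. The uniform step length forces each rectangle on the leftmost path to be shifted exactly one unit rightward from its parent, so the left \bottomV of each such rectangle (a reflex polygon vertex) lies on a common diagonal through $\ell$, and the right \topV of $R_t$ also lies on this diagonal. The segment threads between consecutive reflex corners along the diagonal, with each fragment passing through the interior of a single rectangle on the leftmost path. Hence the sight line remains inside the polygon and $\ell$ sees a tab vertex. The right base vertex is handled symmetrically using the rightmost root-to-leaf path.

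For the second claim, let $t$ be the top-left \topV of the base rectangle. By the decomposition, $t$ is a convex polygon vertex at which the left polygon boundary (ascending vertically from $\ell$) meets the horizontal top edge of the base rectangle. Consequently, the polygon has no interior directly above $t$: the region immediately above the vertical ray from $t$ is exterior, since no rectangle of the decomposition lies above the column containing $t$'s $x$-coordinate. Every tab vertex $v$ satisfies $y_v > y_t$ and $x_v > x_t$ strictly, so any line segment from $t$ to $v$ begins by entering the exterior region above $t$ and thus exits the polygon before reaching $v$. Therefore $t$ has no visible tab vertex, and the top-right \topV is handled by symmetry.

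The main difficulty is the diagonal-sight-line argument in the first claim. It relies critically on the uniform step length and on a precise understanding of the horizontal offsets of rectangles along the leftmost root-to-leaf path in $T$. Special care is required when intermediate rectangles on this path are wider than a single unit: the target tab vertex of $R_t$ and the slope of the sight line must be chosen consistent with the cumulative geometry along the path, and one must verify (likely by induction on path length) that the chosen segment stays bounded by the rightmost walls of each rectangle it enters, so that no \topV or \bottomV of an intermediate rectangle obstructs the sight line.
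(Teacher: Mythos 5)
Your proof is correct. There is nothing in the paper to compare it against: the paper's own proof of this lemma reads ``To be written...'' (the lemma sits in a disabled block of the source), and the closest the paper comes is the unproved assertion elsewhere that the left base vertex sees all \topVs on the left-most tab's down-staircase. Your diagonal argument supplies exactly the missing justification: with unit steps, the left walls of the rectangles on the leftmost root-to-leaf path of $T$ sit at $x=0,1,2,\dots$, so their left \bottomVs $(1,1),(2,2),\dots$ and the right \topV of the leftmost tab rectangle all lie on the line of slope $1$ through $\ell$, and the segment passes through these reflex corners (permitted under the paper's visibility definition) while each open fragment stays strictly inside one rectangle of the path. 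The ``main difficulty'' you flag about intermediate rectangles being wider than one unit in fact dissolves: on the level $[i,i+1]$ the diagonal occupies exactly the horizontal range $[i,i+1]$, which is contained in every rectangle of the path regardless of its width, since each such rectangle has its left wall at $x=i$ and width at least one; no induction beyond this observation is needed. Your second half is also sound: a \topV of the base rectangle is a convex corner whose only interior directions point weakly downward or along the top edge of the base rectangle, so any segment toward a vertex at a strictly higher level immediately leaves the polygon, and every tab vertex lies at a strictly higher level.
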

\begin{proof}
To be written...
\end{proof}

We note that it is possible to reconstruct the monotone staircase by defining left-to-right ordering the children of each node in $T$ and by specifying a left-right orientation of the top edge of the node's corresponding rectangle (a tab that was removed at some point during the algorithm). That is, it is possible to place the rectangles in the correct position and orientation so that the visibility graph of the output monotone staircase is correct.

Since each edge of the polygon is unit length, the information provided in the above process is sufficient to determine the $y$-coordinate of each polygon vertex. Thus, it remains to determine the $x$-coordinates. 

Define the {\em \topVs} of a rectangle and the {\em \bottomVs}. Note, every base vertex and a dent vertex is a \bottomV, every tab vertex is a \topV. Equivalently, every \topV is convex, while \bottomV can be either reflex, or in case of the base vertex, it's convex.

Let {\em \pyramid} be a double staircase that starts at a base or a dent vertex, goes up to a tab edge, and then down to a base or dent vertex. Observe, that there is a unique \pyramid associated with each of $k$ tabs and in case of a double staircase, the only \pyramid consists of two staircases of equal lengths, while in case of monotone staircase, each of $k$ \pyramids consists of two staircases of not necessarily equal lengths. 

Observe that the decomposition of the polygon into rectangles and construction of $T$ associates quadruples of vertices (which define rectangles of the \pyramids) with each node of $T$. The structure of the tree defines the structure of the \pyramids. 

Therefore, to determine the $x$-coordinates of all the polygon vertices, it is sufficient to define a (valid/correct/satisfying/appropriate?) left-to-right ordering of \pyramids and a (valid/correct/satisfying/appropriate?) assignment of the vertices  to the left or right staircase of the \pyramid that they belong to.  

Note that due to the one-to-one correspondence between the leaves of $T$ and the \pyramids, it is sufficient to focus on the left-to-right ordering of leaves in $T$. (This step is unnecessary for a double staircase polygon, because there is only one tab/leaf.)

Finally, we describe how to associate vertices of the polygon with the left/right staircases of each \pyramid (equivalently, define valid (?) left-right orientation of the vertices within each rectangle)

\fi
\ifRemove
\else
\begin{figure}[!tbh]
\begin{center}
\includegraphics[width=0.47\textwidth]{figures/monotone-staircase-decomposition2}\hspace{0.3cm}
\includegraphics[width=0.47\textwidth]{figures/monotone-staircase-decomposition}
\caption{Four different decompositions of a monotone staircase polygon into double staircase polygons. Each decomposition is computed by removing a different tab last. Top-left: A left-most decomposition. Top-right: a right-most decomposition.}
\label{figure:monotone-staircase-decompositions}
\end{center}
\end{figure}
\fi

\section{Uniform-Length Orthogonally Convex Polygons}
We first turn our attention to a restricted class of orthogonal polygons that have only uniform-length (or equivalently, unit-length) edges.
Let $P$ be an orthogonal polygon with uniform-length edges such that no three consecutive vertices on $P$'s boundary are collinear, and further let $P$ be \emph{orthogonally convex}\footnote{That is, any two points in $P$ can be connected by a staircase contained in $P$.}. We call $P$ a \emph{uniform-length orthogonally convex polygon} (UP).
Note that every vertex $v_i$ on $P$'s boundary is either convex or reflex.
We call boundary edges between two convex vertices in a uniform-length orthogonal polygon $P$ \emph{tabs} and a tab's endvertices \emph{tab vertices}.
%
We reconstruct the polygon by computing the clockwise ordering of vertices of the UP.

Note that the boundary of a UP consists of four tabs connected via staircases. For ease of exposition, we imagine the UP embedded in $\mathbb{R}^2$ with polygon edges axis-aligned. We call the tab with the largest $y$-coordinate the north tab, and we similarly name the others the south, east, and west tabs. We similarly refer to the four boundary staircases as northwest, northeast, southeast, and southwest.

We only consider polygons with more than $12$ vertices, which eliminates many special cases.\ifFull\footnote{While simple rules may extend to these smaller cases, recognition and reconstruction can be done in constant time by generating all possible uniform-length orthogonally convex polygons, computing the visibility graph for each one, and checking if the visibility graph is isomorphic to our input graph.} \else{} Smaller polygons can be solved in constant time via brute force.\fi

We first introduce several structural lemmas which help us identify convex vertices in a UP, which is key to our reconstruction.

\ifFull
\begin{observation}
In a visibility graph of a UP, there is exactly one maximal clique containing all reflex vertices. Moreover, this clique contains no tab vertex. 
\end{observation}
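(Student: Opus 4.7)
My plan is to establish three facts in sequence. First, the reflex vertices pairwise see each other and therefore form a clique. Second, no tab vertex sees every reflex vertex, so no tab vertex can join this clique. Third, the set of non-tab non-reflex vertices that see every reflex vertex is itself a clique, so the maximal clique containing the reflex vertices is unique.

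For the first fact, I would argue that the convex hull $H$ of the reflex vertices is contained in $P$. Uniform step length together with orthogonal convexity forces the reflex vertices on each of the four staircases to be collinear on a common diagonal line of slope $\pm 1$, so $H$ is a convex polygon (a possibly-degenerate octagon) whose boundary consists of at most four horizontal or vertical extremal sides together with the four diagonal staircase lines. I would verify $H \subseteq P$ by comparing horizontal slices at each height in the vertical range of $H$: the slice of $H$ is an interval determined by its diagonal sides, while orthogonal convexity plus the uniform-step structure guarantees that $P$'s (single-segment) horizontal slice at the same height is at least as wide. Hence every segment between two reflex vertices lies in $H \subseteq P$, giving mutual visibility.

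For the second fact, I would examine one tab vertex in detail, say the top-right corner $v$ of the north tab. Its outgoing boundary edge runs vertically south, and the polygon immediately below the north tab is no wider than the tab itself, so any visibility ray from $v$ with a positive eastward component leaves $P$ immediately. Hence $v$ cannot see any reflex vertex strictly east of its own $x$-coordinate, and for a UP with more than twelve vertices such a reflex vertex always exists on the NE or SE staircase. The remaining seven tab vertices are handled by symmetric arguments, pairing each tab with a reflex vertex on the far side of an adjacent staircase.

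For the third fact, the only vertices that can extend the reflex clique are interior convex vertices on the staircases, since tab vertices are excluded by the previous step. Using the diagonal-collinearity from the first step, I would show that any interior convex vertex adjacent to every reflex vertex lies on or inside $H$, and then a further horizontal-slice argument shows that any two such candidates are mutually visible. The main obstacle is precisely this uniqueness step, because in principle two interior convex vertices could each see every reflex vertex individually but fail to see each other; ruling this out is where the precise diagonal structure of uniform-step staircases will be essential.
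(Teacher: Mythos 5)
Your first step is sound: the reflex vertices of the four staircases lie on four lines of slope $\pm1$, their convex hull is contained in $P$, and so they form a clique. The proof breaks in your second step, at the sentence claiming that a reflex vertex strictly east of the north tab's right vertex ``always exists on the NE or SE staircase'' once $n>12$. Such a vertex exists on the NE staircase only if that staircase has at least two reflex vertices, and on the SE staircase only if the east tab lies more than one unit east of the north tab; both conditions fail when the two staircases on that side are the \emph{short} staircases of an irregular UP with a single reflex vertex each, which is compatible with $n>12$ because the opposite (long) staircases can be arbitrarily long. In that situation the tab vertex's visibility quadrant contains every reflex vertex, and it in fact sees all of them, so the ``moreover'' clause is not merely unproved --- it is false. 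Concretely, take the $16$-vertex IUP with boundary (counterclockwise) $(0,0)$, $(0,-1)$, $(-1,-1)$, $(-1,-2)$, $(0,-2)$, $(0,-3)$, $(1,-3)$, $(1,-4)$, $(2,-4)$, $(2,-3)$, $(3,-3)$, $(3,-2)$, $(2,-2)$, $(2,-1)$, $(1,-1)$, $(1,0)$; its reflex vertices are $R=\{(0,-1),(0,-2),(1,-3),(2,-3),(2,-2),(1,-1)\}$, all with $x\ge 0$, and the north-tab vertex $(0,0)$ sees every one of them (for instance the segment to $(2,-3)$ crosses the levels $y=-1$ and $y=-2$ at $x=2/3$ and $x=4/3$, both interior points of $P$).

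The same polygon defeats the uniqueness step that you yourself flag as the main obstacle: $R\cup\{(0,0),(2,-4)\}$ and $R\cup\{(0,-3),(2,-1)\}$ are both cliques, and since $(0,-3)$ does not see $(2,-4)$ (it lies below $(0,-3)$'s visibility quadrant) these extend to two \emph{distinct} maximal cliques, each containing all reflex vertices --- so neither the count ``exactly one'' nor the exclusion of tab vertices survives. For comparison, the paper offers no proof of this observation at all (it is stated as an aside in the full version and never invoked by the reconstruction algorithm), so there is no authorial argument to measure yours against; but your outline cannot be completed as written. Your plan does go through under the stronger hypothesis that every staircase carries at least two reflex vertices (then every tab vertex, and every non-innermost convex vertex, provably misses some reflex vertex, and the surviving candidates pairwise see one another), and any repaired statement should either assume this or weaken the conclusion accordingly.
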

\fi

\begin{lemma}
\label{lemma:convex-neighbor-one-clique}
For every convex vertex $u$ in a UP there is a convex vertex $v$, such that $(u,v) \in E_P$ and $(u,v)$ is $1$-simplicial.
\end{lemma}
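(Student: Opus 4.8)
The engine of my proof would be the observation that a convex vertex has interior angle exactly $90^\circ$, so in a UP the vertices visible from a convex vertex $u$ all lie in a single axis-aligned quadrant anchored at $u$ --- the quadrant opened by $u$'s two incident (one horizontal, one vertical) edges as they point into $P$. By the four-fold symmetry of a UP I may assume this is the closed \emph{southwest} quadrant, i.e.\ $u$ sees only vertices weakly below and to its left. If $u$ is a tab vertex I am already done: a tab is a convex--convex boundary edge, and as recorded in the preliminaries (cf.\ \fig~\ref{figure:maximal-clique}) such an edge is $1$-simplicial with its unique maximal clique forming a rectangle, so the opposite tab endpoint is the required $v$. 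Hence I focus on a convex vertex $u$ lying in the interior of a staircase.

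For such a $u$ I would take $v$ to be the far corner of a maximal axis-aligned rectangle $R\subseteq P$ anchored at $u$: $R$ has $u$ as its northeast corner and extends to the southwest, and I claim its southwest corner is a convex vertex $v$ of $P$ whose two incident edges point north and east (the complementary, northeast quadrant). Granting this, $R\subseteq P$ gives that $u$ and $v$ see each other along the diagonal of $R$, so $(u,v)\in E_P$. The clean part is then $1$-simpliciality: any common neighbour $w$ of $u$ and $v$ must be weakly southwest of $u$ and weakly northeast of $v$, so $w$ lies in the closed box $R$; since $R\subseteq P$, any such polygon vertex lies on $\partial R$, and because $R$ is convex all polygon vertices on $\partial R$ are pairwise visible. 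Thus $N(u)\cap N(v)$ is contained in a clique and is itself a clique, so $(u,v)$ lies in the single maximal clique realised by $R$ and is $1$-simplicial. The point worth stressing is that \emph{complementary} quadrants confine the common neighbourhood to a \emph{bounded} box; this is exactly what rules out spurious common neighbours escaping downward and makes the clique test local.

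The main obstacle is the claim that the far corner of $R$ is an honest convex vertex of $P$, rather than a point pinned by two unrelated blockers, and that an empty such $R$ exists at all. This is where orthogonal convexity and the uniform step length must do the work: orthogonal convexity forces the part of $\partial P$ bounding $R$ from the southwest to be a single staircase, and I would argue that among maximal inscribed rectangles anchored at $u$ one has its southwest corner exactly at a turn of that staircase, which the no-three-collinear and unit-length hypotheses make a convex vertex. I expect the delicate sub-step to be proving that at least one complementary-orientation convex vertex $v$ always admits such an empty box with $u$; I would establish this by sweeping outward from $u$ as in the maximal-convex-region sweep described earlier, tracking the first complementary convex vertex the sweep certifies, and then invoking the four rotations of the argument to cover the remaining quadrant types of $u$.
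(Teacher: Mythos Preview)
Your approach is essentially the paper's: handle tabs separately, then for a non-tab convex vertex $u$ pick a convex vertex $v$ on the opposite staircase, and argue $1$-simpliciality because the complementary quadrants of $u$ and $v$ force $N(u)\cap N(v)$ into the axis-aligned rectangle $R$ with $u,v$ as opposite corners, which lies in $P$ and is convex.

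The one place you make your life harder than the paper does is insisting that $v$ be the far corner of a \emph{maximal} inscribed rectangle anchored at $u$, which is why you end up with the ``main obstacle'' of showing that corner is an honest convex vertex. The paper sidesteps this entirely: it simply takes \emph{any} convex vertex $v$ on the opposite staircase visible from $u$ (e.g.\ for $u$ on the northwest staircase, any visible convex vertex on the southeast staircase). The rectangle with $u,v$ as opposite corners then need not be maximal---it just needs to lie in $P$, which orthogonal convexity gives immediately---and the same quadrant argument you wrote down shows $N(u)\cap N(v)\subseteq R$ is a clique. So the existence question reduces to ``does $u$ see at least one convex vertex on the opposite staircase,'' which is much softer than pinning down the corner of a maximal rectangle. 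Your sweep argument would work, but it is overkill for what the lemma needs.
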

\begin{proof}
If $u$ is a tab vertex, then the other tab vertex $v$ is also convex and $(u,v)$ is $1$-simplicial.
Otherwise, without loss of generality, suppose that $u$ is on the northwest staircase. Then there is a convex vertex $v$ on the southeast staircase that is visible from $u$. Edge $(u,v)$ is in exactly one maximal clique, consisting of $u$, $v$, the reflex vertices within the rectangle $R$ defined by $u$ and $v$ as the opposite corners, and any other corners of $R$ that are convex vertices of the polygon.
\end{proof}

\begin{lemma}
\label{lemma:convex-neighbor-two-cliques}
In a UP, if $u$ or $v$ is a reflex vertex, then edge $(u,v)$ is not $1$-simplicial. 
\end{lemma}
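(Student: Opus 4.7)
The plan is to show that $N(u) \cap N(v)$ is not a clique, which is equivalent to showing that $(u,v)$ lies in at least two maximal cliques. I will do so by exhibiting two vertices $x, y \in N(u) \cap N(v)$ that cannot see each other. Without loss of generality, assume $u$ is reflex.

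The central geometric fact is that $u$'s two boundary neighbors $a$ and $b$---both convex by the alternating convex-reflex pattern along each staircase in a UP---are not mutually visible. Indeed, the segment $ab$ is the diagonal of the unit-square ``notch'' at $u$ (the exterior region adjacent to $u$'s reflex corner), and this notch lies outside $P$. The pair $(a, b)$ is then the natural candidate for our two non-visible witnesses, provided both lie in $N(v)$.

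In the generic case where $v$ sees both $a$ and $b$, we are done: $a$ and $b$ lie in $N(u)\cap N(v)$ and are mutually invisible. By orthogonal convexity of $P$, the fact that $uv$ lies in $P$ implies that the ``perturbed'' segments $av$ and $bv$ also lie in $P$, except in two corner configurations that must be handled separately. The first is when $v$ is itself a boundary neighbor of $u$, so $(u,v)$ is a boundary edge; say $v = a$. Here $b \notin N(v)$, and we instead take $v$'s other boundary neighbor $a'$ (which sees $u$ through the polygon interior adjacent to $v$, and sees $v$ by boundary adjacency), paired with a reflex vertex on the far side of $P$ whose segment to $a'$ must cross the notch at $u$. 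The second corner case is when $v$ lies in a direction from $u$ such that the segment $av$ (say) exits $P$ through the notch; here $a \notin N(v)$, and we replace $a$ with a reflex vertex $r$ on the opposite staircase from $u$. Then $r$ sees $u$ (since all reflex vertices in a UP are mutually visible), sees $v$ (by orthogonal convexity), but fails to see $b$ because the segment $rb$ must cross the notch at $u$.

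The main obstacle will be the careful geometric verification of the visibility and non-visibility claims in the corner cases, particularly identifying the substitute witness and confirming the ``crosses the notch'' property. This requires a case analysis based on the quadrant of $v$ relative to the notch at $u$ and on which staircases $u$ and $v$ lie; the assumption $n > 12$ ensures there are always sufficiently many reflex vertices available to serve as substitute witnesses when the natural candidates $a$ or $b$ fail.
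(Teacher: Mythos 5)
Your central geometric fact is correct and is a nice local observation: the two convex boundary neighbors $a,b$ of a reflex vertex $u$ are mutually invisible because $\overline{ab}$ is the diagonal of the exterior unit-square notch at $u$. When $v$ happens to see both $a$ and $b$, the pair $(a,b)$ indeed witnesses that $N(u)\cap N(v)$ is not a clique. However, the corner cases are where all the difficulty lives, and your treatment of them has two genuine problems. First, the case analysis is incomplete: there are vertices $v$ that see $u$ but see \emph{neither} $a$ nor $b$, e.g.\ the reflex vertex two steps further along $u$'s own staircase (with $u=(0,9)$, $a=(0,10)$, $b=(-1,9)$, take $v=(-2,8)$; it sees $u$ since all reflex vertices are mutually visible, but its segments to $a$ and to $b$ each cross an exterior notch). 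In that situation neither of your two corner configurations applies and no witness pair is supplied. Second, the substitute witnesses you propose do not work as justified: for a reflex vertex $r$ on the staircase opposite $u$, the segment $\overline{rb}$ emanates from the corner $b$ of the notch \emph{away} from the notch (into the polygon's interior), so it does not cross the notch at $u$, and such $r$ generically \emph{does} see $b$. The paper's Observations in Appendix~\ref{section:unit-orthogonal-appendix} prove precisely this kind of positive visibility (a vertex that sees a reflex vertex with suitable slope also sees its convex boundary neighbors), which is the opposite of the non-visibility you need. The same misattribution occurs in your first corner case, where the blocking of $\overline{ra'}$ is caused by the notch at $a'$'s own reflex neighbor, not the notch at $u$, and the visibility of $r$ from $v$ is asserted rather than shown.

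For comparison, the paper sidesteps these issues by placing the invisible pair not at $u$'s notch but at the notch of a carefully chosen reflex vertex $z$ on the southeast staircase: the witnesses are $z$'s two convex boundary neighbors $w,w'$ (invisible to each other for exactly the reason you identify), and the substantive work---done via the slope observations---is showing that \emph{both} $u$ and $v$ see both $w$ and $w'$ in every placement of $v$. If you want to salvage your approach, you would need an analogous argument selecting, for each problematic position of $v$, a reflex vertex whose two convex neighbors are both seen by $u$ and by $v$; as written, the proof has a gap.
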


\begin{proof}[\sketch\footnote{Full proofs may be found in Appendices~\ref{section:unit-orthogonal-appendix} and~\ref{section:monotone-staircase-appendix}.}]
If both $u$ and $v$ are reflex, then $(u,v)$ is in one maximal clique consisting of only reflex vertices and another one that includes some convex vertex $w$. If one of $u$ or $v$ is convex, there exist two convex vertices $w$ and $w'$, forming two distinct maximal cliques with $(u,v)$. \ifAppendix See Appendix~\ref{subsection:two-cliques} for the full proof.\fi{}
\end{proof}

\ifFull
Thus, by testing if each edge is in more than one maximal clique, we can determine which vertices are convex and reflex.
\begin{lemma}
We can identify all convex and reflex vertices in a visibility graph of a UP in $O(n^2m)$ time.
\end{lemma}
\begin{proof}
For each edge, compute if it is $1$-simplicial $O(n^2)$ time. If so, its endvertices are convex. Any vertices not assigned to be convex are reflex.
\end{proof}
\else
Lemma~\ref{lemma:convex-neighbor-two-cliques} states that only edges between convex vertices can be $1$-simplicial. Hence it allows us to identify all convex vertices, by checking for each edge $(u,v)$ if $N(u)\cap N(v)$ is a clique in $O(n^2)$ time, leading to the following lemma.
\begin{lemma}
We can identify all convex and reflex vertices in a visibility graph of a UP in $O(n^2m)$ time.
\end{lemma}
\fi

\ifFull
We now divide the class of uniform-length orthogonal polygons into two classes of polygons, which we consider in turn. We call these classes \emph{regular} and \emph{irregular}.
\fi{}

\ifFull
\begin{definition}[regularity]
We call a UP \emph{regular} if each of its staircase boundaries have the same number of vertices. Otherwise, we call it \emph{irregular}, consisting of two \emph{long} and two \emph{short} staircases.
\end{definition}
\else
We say a UP is \emph{regular} if each of its staircase boundaries have the same number of vertices. Otherwise, we call it \emph{irregular}, consisting of two \emph{long} and two \emph{short} staircases.
\fi{}
\ifFull
We first consider regular uniform-length orthogonal polygons, which are simpler to recognize and reconstruct than their irregular counterparts.

\subsubsection{Regular Uniform-Length Orthogonal Polygons}

\begin{lemma}
\label{lemma:regular-cliques-convex-vertices}
In a regular uniform-length orthogonally-convex polygon, there are only two maximal 8-cliques that each contain four convex vertices.
\end{lemma}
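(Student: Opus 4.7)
The plan is to translate the clique statement into geometry, enumerate the finitely many axis-aligned rectangles whose four corners are convex polygon vertices, and then count reflex vertices on their boundaries. I would begin with the standard correspondence between maximal cliques in $G_P$ and maximal axis-aligned rectangles contained in $P$: the vertices of the clique are exactly the polygon vertices on the boundary of the rectangle. A ``quadrant'' observation then asserts that the $90^\circ$ interior angle at a convex vertex of a UP opens into exactly one of the four quadrants---SE for convex vertices on the NW staircase (including its two tab endpoints), SW for NE, NE for SW, and NW for SE. This forces every convex polygon vertex on the boundary of an interior rectangle $R$ to sit at a corner of $R$, and only at the corner matching its interior quadrant. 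Hence any 8-clique with four convex vertices arises from a maximal rectangle whose four corners are convex polygon vertices, one from each staircase, and whose sides collectively contain exactly four reflex vertices.

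Next I would parameterize the candidate rectangles. Uniform unit step length plus orthogonal convexity forces each staircase to be a strict alternation of one unit horizontal and one unit vertical step, and regularity forces all four staircases to have the same number $k+1$ of convex vertices. Placing coordinates so that the lower endpoint of the west tab is the origin, the convex vertices on the four staircases sit at predictable integer positions (e.g.\ $(i,i)$ for $0 \le i \le k$ on NW, up to reflection on the others). Requiring the four chosen corners to form an axis-parallel rectangle collapses to a single integer parameter $a \in \{0,\dots,k\}$, giving exactly $k+1$ candidate rectangles $R_a = [a,\,2k+1-a] \times [-a-1,\,a]$.

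Finally I would tally reflex polygon vertices on $\partial R_a$ using the explicit coordinates of each staircase's reflex vertices. For the extremal rectangles $R_0$ (the horizontal band spanning the west and east tabs) and $R_k$ (the vertical band spanning the north and south tabs), the two short sides have length $1$ and carry no interior polygon vertex, while each long side carries exactly two reflex vertices, giving $4 + 4 = 8$ vertices and a true 8-clique. For intermediate $0 < a < k$, all four sides are long enough to each contain two reflex vertices, giving $4 + 8 = 12$ vertices and thus a 12-clique rather than an 8-clique. Exactly two 8-cliques with four convex vertices therefore remain, corresponding to the central horizontal and vertical bands.

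The step I expect to be the main obstacle is the quadrant observation itself: pinning down cleanly that each convex vertex of a UP can serve as a rectangle corner in only one orientation, thereby forcing the one-to-one pairing between rectangle corners and staircases. Once this local-geometry fact is in hand, the rest is the finite enumeration enabled by the rigid geometry of a regular UP.
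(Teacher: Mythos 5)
Your proposal is correct and follows essentially the same route as the paper: identify the two tab-spanning rectangles as the only $8$-cliques with four convex vertices, and show that every other clique with four convex vertices picks up eight reflex vertices and is therefore a $12$-clique. Your version supplies details the paper leaves implicit---the quadrant observation forcing the clique's convex region to be an axis-aligned rectangle with one corner per staircase, and the explicit one-parameter enumeration with reflex-vertex counts---but the underlying counting argument is the same.
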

\begin{proof}
No two convex vertices on the same staircase are visible, therefore, each convex vertex in a clique must come from a different staircase. Consider the two maximal cliques containing the tabs. These are 8-cliques and contain four convex vertices. Any other clique containing four convex vertices contain eight reflex vertices, and therefore have twelve vertices total.
\end{proof}

Let us call the two $8$-cliques containing the tabs \emph{primary} cliques.

\begin{lemma}
\label{lemma:regular-8-edges}
In a regular uniform-length orthogonally-convex polygon, we can find a set of eight edges that contains the four tabs in polynomial time.
\end{lemma}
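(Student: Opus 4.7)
The plan is to locate the two primary $8$-cliques $C_1$ and $C_2$ promised by Lemma~\ref{lemma:regular-cliques-convex-vertices}, and then extract four candidate edges from each. Having already classified the polygon's vertices into convex and reflex types, I would enumerate maximal cliques of size $8$ and retain those with exactly four convex vertices; by Lemma~\ref{lemma:regular-cliques-convex-vertices} there are exactly two such cliques, and each corresponds to the maximal axis-aligned rectangle between one pair of opposite tabs (the N/S pair for $C_1$ and the E/W pair for $C_2$).

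Within each $C_i$ the four convex vertices sit at the corners of the rectangle and form a $K_4$ whose six edges split into two tabs (opposite short sides of the rectangle), two long side edges, and two diagonals. The central structural claim I would prove is that the two tabs and the two diagonals are $1$-simplicial, while the two long side edges are not. The argument is by visibility cones: each convex orthogonal corner has interior angle $90^{\circ}$, so its visibility is confined to a single axis-aligned quadrant. For a tab pair or a diagonal pair the two endpoints' quadrants intersect exactly in the rectangle of $C_i$, whose polygon vertices are precisely the eight vertices of $C_i$; therefore the common visibility coincides with $C_i$ and the edge lies in only one maximal clique. For a long side pair, the quadrants' intersection is a half-strip extending beyond one long side of the rectangle, and regularity forces the polygon to extend into that strip, producing a common neighbor in another maximal clique, so the long side edges are not $1$-simplicial.

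Having established the claim, the algorithm outputs, for each primary clique, the four $1$-simplicial convex-convex edges (the two tabs plus the two diagonals), yielding eight edges that include the four tabs. Each $1$-simpliciality test takes $O(n^2)$ time via Observation~\ref{claim:simplicial-k-clique-time}, and there are only a constant number of convex-convex edges per primary clique to examine, so the procedure easily fits within the $O(n^2 m)$ budget.

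The main obstacle I anticipate is the ``no extra common neighbor'' half of the cone argument: one must verify rigorously that for any tab or diagonal pair, every line of sight that leaves $C_i$'s rectangle is blocked by the flanking staircases before it can reach the other endpoint. This is precisely where regularity, namely equal-length staircases on each side of $C_i$, becomes essential—the symmetric blocking of visibility is what keeps the common neighborhood confined to $C_i$, and it is the analogous step that is expected to fail in the irregular case.
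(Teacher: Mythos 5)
Your proposal matches the paper's approach: locate the two primary $8$-cliques of Lemma~\ref{lemma:regular-cliques-convex-vertices}, consider the twelve convex--convex edges among their corner vertices, and discard the four axis-parallel long sides (which lie in more than one maximal clique), leaving eight edges that include the four tabs. Your visibility-cone argument for why the tabs and rectangle diagonals are $1$-simplicial while the long sides are not is sound and in fact supplies the justification the paper leaves implicit here and only invokes in the subsequent lemma.
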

\begin{proof}
In the proof of Lemma~\ref{lemma:regular-cliques-convex-vertices} we showed that there are two maximal 8-cliques containing the tabs (the primary cliques). If we take all diagonal edges between convex vertices in these cliques, there are eight such edges.
\end{proof}

\ifFull
\begin{figure}[!tbh]
\begin{center}
\includegraphics[width=7cm]{figures/regular-convex-convex-edges-smaller}
\caption{After computing the eight convex-convex edges containing the four tabs. There are four remaining convex-convex edges between these vertices.}
\label{figure:regular-convex-convex-edges}
\end{center}
\end{figure}
\else
\begin{figure}[!tbh]
\begin{center}
\includegraphics[width=\linewidth]{figures/regular-convex-convex-edges-smaller}
\caption{After computing the eight convex-convex edges containing the four tabs. There are four remaining convex-convex edges between these vertices.}
\label{figure:regular-convex-convex-edges}
\end{center}
\end{figure}
\fi{}

\begin{lemma}
In a regular unit-orthogonal polygon, we can find the four tabs in polynomial time.
\end{lemma}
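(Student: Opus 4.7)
The plan is to use 1-simpliciality to isolate the four tabs from among the eight candidate edges supplied by Lemma~\ref{lemma:regular-8-edges}. These eight candidates consist of the four tabs together with four non-tab ``side'' edges of the primary rectangles that connect tab vertices of antipodal tabs---for example, the edge from the left vertex of the north tab to the left vertex of the south tab, which is the left side of the N-S primary rectangle.

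First I would invoke Lemma~\ref{lemma:regular-8-edges} to obtain the eight candidates, and then for each candidate test whether it is 1-simplicial using Observation~\ref{claim:simplicial-k-clique-time} in $O(n)$ time (since the relevant maximal clique has size $k = 8$). The algorithm outputs the four 1-simplicial candidates as the tabs. Correctness reduces to two claims. The easy direction---that every tab is 1-simplicial---is immediate from Lemma~\ref{lemma:convex-neighbor-one-clique}, because a tab's primary 8-clique is its unique maximal clique.

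The hard part will be the converse: each non-tab side must lie in at least two maximal cliques. For a representative non-tab side $\{u, v\}$ (say the left side of the N-S primary rectangle, with $u$ on the north tab and $v$ on the south tab), I would exhibit a reflex vertex $r$ on an adjacent staircase (NE or SE) that is a common neighbor of $u$ and $v$ yet lies outside the N-S primary clique. The intuition is that $r$ sits far enough east to fall outside the narrow column defining the N-S primary clique, while orthogonal convexity together with the uniform step length keep the segments $ur$ and $vr$ inside $P$. The hypothesis $n > 12$ guarantees that each staircase contains enough reflex vertices for such an $r$ to exist, and the remaining non-tab sides follow by symmetry. Granting this claim, the total time after obtaining the candidates is $O(n)$ for the eight 1-simpliciality checks, so the whole procedure is polynomial.
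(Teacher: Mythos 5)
There is a genuine gap, and it sits exactly where the real work of this lemma lies. Your algorithm hinges on the claim that the four non-tab candidates among the eight are the axis-parallel ``side'' edges of the primary rectangles and hence fail the $1$-simpliciality test. But the eight candidates produced by Lemma~\ref{lemma:regular-8-edges} are the four tabs together with the four \emph{diagonals} of the two primary rectangles (e.g., the edge from the left vertex of the north tab to the \emph{right} vertex of the south tab); the four axis-parallel sides are precisely the ``four remaining edges $R$'' that the paper's proof sets aside and exploits. The trouble is that each diagonal \emph{is} $1$-simplicial: if the north--south primary clique is the column $[0,1]\times[0,k]$, the endpoints $(0,0)$ and $(1,k)$ of a diagonal see, outside that clique, only vertices on two disjoint pairs of staircases (roughly, $(1,k)$ sees the northwest reflex chain and part of the southwest staircase, while $(0,0)$ sees the southeast reflex chain and part of the northeast staircase), so their common neighborhood is exactly the primary $8$-clique. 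Your filter therefore keeps all eight candidates and never separates tabs from diagonals. Your argument that a side edge such as $(0,0)$--$(0,k)$ lies in two maximal cliques (via a common reflex neighbor like the second reflex vertex of the northeast staircase) is correct, but it addresses the easy candidates, not the ones that survive.

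The paper's proof goes the other way around: it first identifies the four axis-parallel side edges as exactly the convex--convex edges lying in two or more maximal cliques, observes that a side of the north--south rectangle and a side of the east--west rectangle cross inside the polygon, and then recovers each tab as the middle edge of the short path joining the endpoints of two crossing sides. If you want to salvage a filter-based approach, you need an additional test that distinguishes a tab from a rectangle diagonal --- for instance the degree-counting comparison used in Lemma~\ref{lemma:iup-tabs} for the irregular case (tab vertices see strictly fewer vertices than the diagonal's ``wrong'' endpoints do) --- but $1$-simpliciality alone cannot do it.
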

\begin{proof}
First find the eight candidates as in Lemma~\ref{lemma:regular-8-edges}. Then we determine which edges are the 2-convex boundary edges as follows.

There are four remaining edges $R$ between all pairs of these convex vertices. See \fig~\ref{figure:regular-convex-convex-edges}. Consider any two of these edges. If two vertices $u$ and $v$ can see each other by a vertical or horizontal line of sight, then the edges cross in the polygon. Furthermore, there is a unique path of three vertices from $u$ to $v$ beginning and ending with an edge in $R$ the middle edge on this path is a tab.

If the edge $(u,v)$ is in two or more maximal cliques, then it is a vertical edge. Hence we can iterate over all pairs of edges in $R$, and find the four that overlap, and find all four tabs.
\end{proof}

We place the tabs as follows: We pick one primary clique, and choose its tabs to be the bottommost and topmost edges of the polygon. Since we know the vertical edges connecting vertices on different tabs, we can orient these edges in a clockwise direction. The leftmost and rightmost edges are fully specified by their visibility with the already placed edges: the tail of the previous edge sees the head of the next.

In the remainder of this section, we show to place the remaining vertices in order along the staircases connecting the tabs. Let us call each maximal clique containing four convex vertices an \emph{elementary} clique. We first show how to compute the remaining elementary cliques and compute its order along the staircase. Then we show how assign each convex vertex to a staircase, . With this ordering, we show how to place the reflex vertices as well.

We use the following properties of the elementary cliques:

\begin{observation}
In a regular unit-orthogonal polygon, each convex vertex is in exactly one elementary clique.
\end{observation}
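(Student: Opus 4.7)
The plan is to exhibit a bijection between elementary cliques and ``levels'' in the symmetric staircase structure of a regular UP, pinning each convex vertex to exactly one elementary clique.

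First, I would set up axis-aligned coordinates so that the top tab is horizontal, and parameterize each convex vertex by its \emph{level} $\ell$, its distance along the staircase from a chosen adjacent tab (the top tab for the NW/NE staircases, the bottom tab for the SW/SE staircases). By regularity, every staircase has the same number $k+1$ of convex vertices, so each level $\ell \in \{0, 1, \ldots, k\}$ is populated on every staircase. By the uniform unit step length, consecutive convex vertices on any staircase differ by exactly one unit in both coordinates, so the four level-$\ell$ convex vertices on the four staircases have aligned $x$- and $y$-coordinates and bound an axis-aligned rectangle $R_\ell$.

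For existence, I would observe that $R_\ell$ sits inside $P$ by orthogonal convexity, and that each of its four edges is bounded immediately outside by a reflex vertex on the adjacent staircase, so $R_\ell$ cannot be extended in any direction and is a maximal convex region. By the correspondence between maximal cliques of $G_P$ and maximal convex regions of $P$, $R_\ell$ yields a maximal clique containing $v$ whose four corners are convex polygon vertices, hence an elementary clique.

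For uniqueness, let $C$ be any elementary clique containing $v$, with associated maximal rectangle $R$. Since $v$ is convex, $v$ must be a corner of $R$, and the polygon's orientation at $v$ pins the quadrant into which $R$ extends. Let $w_y$ be the corner of $R$ adjacent to $v$ sharing $v$'s $y$-coordinate; it must be a convex polygon vertex. Since convex vertices on any single staircase have pairwise distinct $y$-coordinates (stepping by one unit per pair), there is at most one candidate for $w_y$ on each staircase, and orthogonal convexity together with the orientation at $v$ forces $w_y$ to lie on the staircase horizontally opposite $v$'s, uniquely determining it. A symmetric argument fixes $w_x$, the corner sharing $v$'s $x$-coordinate. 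The fourth corner is then determined as the intersection of the perpendicular lines through $w_x$ and $w_y$, and by regularity it coincides with the level-$\ell$ convex vertex on the diagonally opposite staircase, so $R$ (and therefore $C$) is unique. The main obstacle is precisely this uniqueness step: one must carefully verify that $w_x$ and $w_y$ lie on the ``correct'' opposite staircases and that the fourth corner lands on a true convex polygon vertex of $P$. Both properties rely crucially on regularity, since without equal-length staircases the diagonally opposite position need not correspond to a convex polygon vertex.
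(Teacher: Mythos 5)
Your proposal is correct, and in fact it supplies an argument where the paper gives none: this statement appears only as a bare, unproven observation following the definition of elementary cliques (maximal cliques containing four convex vertices) in the regular case, so there is no ``paper proof'' to compare against. Your level-based bijection is the right idea, and both halves go through. Two places deserve slightly more care than you give them. First, in the existence step, the claim that every edge of $R_\ell$ is ``bounded immediately outside by a reflex vertex on the adjacent staircase'' is not quite right for the two extreme levels ($\ell=0$ and $\ell=k$), where two of the four edges of $R_\ell$ are the tabs themselves; maximality there comes instead from the fact that extending past a tab, or past the reflex vertices lying on the long edges, immediately forces the convex hull outside $P$. Second, the uniqueness step quietly assumes that the maximal convex region of an elementary clique is an axis-aligned rectangle with the four convex vertices as its corners. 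This is true, but the cleanest justification is the quadrant argument you only gesture at: since each convex vertex $c$ of $P$ has $P$ locally contained in a single quadrant $Q_c$, any convex region containing all four convex vertices of the clique lies in $\bigcap Q_c$, which is an axis-aligned box; membership of all four vertices in that box forces the pairwise coordinate alignments (equivalently, that convex vertices on adjacent staircases see each other only when they share a coordinate, i.e., only at equal levels), so the four vertices are the corners of the box and the clique is the one determined by $v$'s level. With that observation made explicit, your $w_x$/$w_y$ bookkeeping is unnecessary and the uniqueness is immediate. Neither issue is a genuine gap; both are patchable in a line or two.
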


\begin{lemma}
\label{lemma:elementary-cliques}
Each primary clique shares a 4-clique with exactly one other elementary clique and each non-primary elementary clique shares a 4-clique with exactly two other elementary cliques.
\end{lemma}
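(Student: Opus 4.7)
The plan is to parametrize the elementary cliques by a single index. Because the polygon is regular with uniform unit step length, each of the four staircases has the same number $p$ of convex vertices, spaced along a diagonal line at unit $(x,y)$-offsets. A short computation then shows that every elementary clique corresponds to a maximal rectangle inscribed in the polygon with four convex corners, one from each staircase, and that such a rectangle is uniquely determined by the index of its NW corner: if $a_i$ denotes the $i$-th NW convex vertex from the north, then the other corners must be $b_i$, $c_{p+1-i}$, $d_{p+1-i}$. This yields exactly $p$ concentric rectangles $R_1,\ldots,R_p$ with common center $(X_0,Y_0)$; the primary cliques are $R_1$ (containing the N/S tab pair) and $R_p$ (containing the E/W tab pair), and $R_{i+1}$ is always strictly wider and strictly shorter than $R_i$.

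Next I will characterize the polygon vertices on $\partial R_i$ by their horizontal displacement from the center. A direct coordinate check shows that these vertices occur only at $|x-X_0|\in\{i-0.5,\,i-1.5\}$: the value $i-0.5$ accounts for the four convex corners and (when present) the side-edge reflex vertices, while the value $i-1.5$ accounts for the top/bottom-edge reflex vertices (present only for $i\ge 2$, when $R_i$ has no tab on those edges). Consequently, for $R_i$ and $R_j$ to share any polygon vertex the sets $\{i-0.5,i-1.5\}$ and $\{j-0.5,j-1.5\}$ must overlap, which happens iff $|i-j|\le 1$. Thus whenever $|i-j|\ge 2$ the cliques share no polygon vertex at all, and certainly no 4-clique.

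For the adjacent case $j=i+1$ I then verify that the shared horizontal displacement $|x-X_0|=i-0.5=(i+1)-1.5$ contributes exactly four shared polygon vertices, two on each side of the center. At each such $x$-coordinate the same reflex polygon vertex serves simultaneously as a side-edge reflex of $R_i$ and as a top/bottom-edge reflex of $R_{i+1}$, with matching $y$-coordinate forced by the uniform step length. These four reflex vertices are precisely the corners of the intersection rectangle $R_i\cap R_{i+1}$, which lies inside the polygon, so they are mutually visible and form the required shared 4-clique.

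Combining the two cases, the primary cliques $R_1$ and $R_p$ each share a 4-clique with exactly one other elementary clique (namely $R_2$ and $R_{p-1}$), while each non-primary clique $R_i$ with $2\le i\le p-1$ shares a 4-clique with exactly its two neighbors $R_{i-1}$ and $R_{i+1}$, proving the lemma. The main obstacle is choosing a parametrization of the four staircases uniform enough that both the ``hit'' ($|i-j|=1$) and the ``miss'' ($|i-j|\ge 2$) cases reduce to comparing the same two-element sets of indices; once the concentric diagonal pattern is fixed, both cases collapse into a short arithmetic check on the values $i-0.5,\,i-1.5$ versus $j-0.5,\,j-1.5$.
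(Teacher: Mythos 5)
Your argument is correct, but it takes a genuinely different route from ours. We reason locally and inductively along the boundary: a primary clique has four convex and four reflex vertices, each reflex vertex has a second convex boundary neighbor, and those four convex vertices span the next elementary clique, which shares exactly the four reflex vertices (a $4$-clique) with its predecessor; iterating walks the chain of elementary cliques from one primary clique to the other. You instead classify all elementary cliques globally as the $p$ concentric inscribed rectangles $R_1,\dots,R_p$ and read the entire intersection pattern off the invariant that the vertices of $C_i$ lie at horizontal displacement $i-0.5$ or $i-1.5$ from the common center. Your route buys something real: the arithmetic disjointness of $\{i-0.5,i-1.5\}$ and $\{j-0.5,j-1.5\}$ for $|i-j|\ge 2$ actually proves the ``exactly one / exactly two'' upper bounds, whereas our inductive sketch only constructs the neighboring cliques and leaves implicit why no $4$-clique is shared between non-consecutive elementary cliques. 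The one step you should make explicit is the ``short computation'' that the four convex vertices of an elementary clique must be the corners of an inscribed rectangle $R_i$: this rests on the fact that convex vertices on adjacent chains are mutually visible only when axis-aligned (equivalently, that pairwise-visible vertices of a simple polygon have their convex hull inside the polygon, so a non-aligned quadruple such as $\{a_1,b_1,c_2,d_2\}$ is excluded because its hull escapes $P$ near the narrow tab column); as written this is asserted rather than proved, though it is true. Conversely, our boundary-walking formulation is the one that transfers to the irregular case and directly drives the clique-peeling algorithm, which a coordinate-based classification tied to the regular polygon's fourfold symmetry does not.
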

\begin{proof}
A primary clique has four convex vertices and four reflex vertices. Each one of the reflex vertices have another convex neighbor along the boundary. Together these four convex vertices form a maximal clique, and hence are an elementary clique.

We sketch a proof for the second claim by induction.

Base case: Let $C_0$ be a primary clique and let $C_1$ be the elementary clique that shares a 4-clique with $C_0$. Elementary clique $C_1$ contains four convex vertices and eight reflex vertices. Four of the reflex vertices are shared with $C_0$, and the remaining reflex vertices each have another convex neighbor along the boundary which form another elementary clique $C_2$.
\end{proof}

Furthermore, Lemma~\ref{lemma:elementary-cliques} induces an ordering of the cliques along the staircases. Refer to these cliques in order $C_0,\ldots,C_{k-1}$, where $C_0$ and $C_{k-1}$ are primary cliques.

We can therefore iteratively peel away the elementary cliques, beginning at primary clique $C_0$. We first peel off all convex vertices of the clique, leaving a 4-clique between remaining reflex vertices. This 4-clique is then shared with exactly one elementary clique $C_1$. We can then compute the remaining four convex vertices that share a clique with the 4-clique, and remove them, the reflex vertices from $C_0$, and a new 4-clique between reflex vertices remains. We continue this peeling until we reach clique $C_{k-1}$.

We now have the elementary cliques, and in particular, the four convex vertices from each elementary clique. We further have an order in which the vertices appear on each staircase. What remains is to assign each convex vertex to a staircase.

\begin{lemma}
\label{lemma:convex-staircase-order}
We can in polynomial time, assign each convex vertex to its staircase.
\end{lemma}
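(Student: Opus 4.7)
The plan is to label the four convex vertices of the primary clique $C_0$ directly from the already-placed north tab, then propagate the staircase labels through $C_1, C_2, \ldots, C_{k-1}$ using the fact that two convex vertices on the same staircase are never mutually visible. An analogous labeling from $C_{k-1}$ using the south tab can serve as a consistency check.

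First I label $C_0$. Its two tab endpoints are immediately assigned to the northwest (NW) and northeast (NE) staircases according to the clockwise orientation determined when the north tab was placed. Since no two convex vertices on the same staircase are mutually visible, the remaining two convex vertices of $C_0$ must lie on the SW and SE staircases, one on each. I distinguish them using visibility to the already-placed east and west tabs: the remaining vertex that sees a vertex of the west tab is labeled SW, and the one that sees a vertex of the east tab is labeled SE.

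For propagation from $C_i$ to $C_{i+1}$, I use the following criterion: two convex vertices $c \in C_i$ and $c' \in C_{i+1}$ lie on the same staircase $S$ if and only if $(c, c') \notin E_P$. The forward direction holds because any two distinct convex vertices on the monotone staircase $S$ are separated along $S$ by at least one reflex vertex that blocks visibility between them. The reverse direction follows from orthogonal convexity together with the uniform step length: if $c \in C_i$ and $c' \in C_{i+1}$ lie on different staircases, the axis-aligned bounding rectangle of $c$ and $c'$ lies entirely within $P$, so $c$ and $c'$ see each other. Thus, for each already-labeled $c \in C_i$ on staircase $S$, the unique convex vertex of $C_{i+1}$ that is non-adjacent to $c$ in $G_P$ receives the label $S$, assigning all four convex vertices of $C_{i+1}$ in constant time per labeled vertex.

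Since there are $O(n)$ elementary cliques and each propagation step takes constant time given $O(1)$-time adjacency queries, the entire staircase-assignment procedure runs in polynomial time. The main obstacle is establishing the reverse direction of the propagation criterion rigorously; this requires a geometric argument from orthogonal convexity and the structural fact that consecutive elementary cliques correspond to axis-aligned rectangles that overlap along a shared reflex $4$-clique, which guarantees that the relevant bounding rectangles between different-staircase convex vertices of $C_i$ and $C_{i+1}$ remain inside $P$.
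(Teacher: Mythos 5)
There is a genuine gap: the propagation criterion at the heart of your argument is false, and in fact inverted. You claim that for convex vertices $c\in C_i$ and $c'\in C_{i+1}$, lying on the same staircase is equivalent to $(c,c')\notin E_P$. The forward direction is fine, but the reverse direction fails for \emph{adjacent} (as opposed to opposite) staircases. Consider the $20$-vertex regular UP whose north tab is the edge from $(2,5)$ to $(3,5)$ and whose NW staircase contains the convex vertex $(1,4)$: the segment from $(3,5)$ (north tab, NE staircase, in $C_0$) to $(1,4)$ (NW staircase, in $C_1$) leaves the polygon, because at heights just above $y=4$ the polygon is only the strip $2\le x\le 3$. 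The correct structural fact (implicit in Lemma~\ref{lemma:convex-neighbor-one-clique} and Lemma~\ref{lemma:long-staircases}) is that a convex vertex sees convex vertices essentially only on the \emph{opposite} staircase; hence a convex vertex of $C_i$ is non-adjacent to \emph{three} of the four convex vertices of $C_{i+1}$ (the same-staircase one and the two on adjacent staircases), so ``the unique convex vertex of $C_{i+1}$ that is non-adjacent to $c$'' does not exist and the propagation step cannot even be executed. Your justification via the axis-aligned bounding rectangle of $c$ and $c'$ is likewise false for adjacent staircases: the bounding rectangle of $(3,5)$ and $(1,4)$ contains $[1,2]\times[4,5]$, which lies outside the polygon.

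The initial labeling of $C_0$ suffers from the same inversion. In the example above, the south tab's left vertex $(2,0)$ lies on the SW staircase; it sees \emph{neither} vertex of the west tab (it and the west tab's bottom vertex are both convex vertices of the SW staircase, and the west tab's top vertex is blocked), but it \emph{does} see the east tab's top vertex $(5,3)$, so your rule would label it SE. The paper's proof avoids both problems by using a single global test rather than propagation: each convex vertex of an elementary clique sees exactly one ``tail'' among the four tab tails, and that tab identifies the (opposite) staircase the vertex lies on. If you wish to salvage a propagation argument, the criterion must be ``adjacent iff on opposite staircases,'' and you would then need to prove that every convex vertex of $C_i$ actually sees the opposite-staircase convex vertex of $C_{i+1}$ --- a claim you currently neither state nor establish.
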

\begin{proof}
Let $V_i$ be the four convex vertices from elementary clique $C_i$. Let $v$ be a convex vertex in $V_i$. Then $v$ has a visibility edge to exactly one tail of some tab. Call this boundary edge $e_i$, then $v$ is on the staircase $S_{i+1,i+2}$.
\end{proof}

And further, we know the order of these vertices along each staircase.

\begin{figure}[!tbh]
\begin{center}
\includegraphics[scale=1.0]{figures/exchangable-vertices}
\caption{Left: The unique maximal clique formed determines the reflex vertices on one side, plus 2 vertices on adjacent staircases. Right: The vertices not on the staircase have the same neighborhood as the upper- and lowermost reflex vertices on the staircase and therefore can be freely exchanged.}
\label{figure:exchangable-vertices}
\end{center}
\end{figure}

\begin{lemma}
\label{lemma:reflex-staircase-order}
We can in polynomial time, assign each reflex vertex to its staircase in order.
\end{lemma}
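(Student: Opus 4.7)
The plan is to leverage both the ordered sequence of elementary cliques $C_0, C_1, \ldots, C_{k-1}$ from Lemma~\ref{lemma:elementary-cliques} and the convex-vertex-to-staircase assignment from Lemma~\ref{lemma:convex-staircase-order}. The key observation is that a reflex vertex $r$ belongs to the elementary clique $C_i$ if and only if $r$ lies inside the rectangle spanned by the four convex vertices of $C_i$. As the rectangles shrink toward one primary clique and then grow toward the other, each reflex vertex belongs to a contiguous interval $[C_i, C_j]$ of the clique sequence, and this interval can be read off directly from the clique memberships computed earlier.

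First, I would assign each reflex vertex $r$ to one of the four staircases. Without loss of generality, a reflex vertex on the northwest staircase sees convex vertices on the southeast staircase throughout the entire interval where elementary rectangles cover it, but its visibility to convex vertices on the northeast and southwest staircases is truncated precisely at $r$ by the dent in the boundary. Because the convex vertices have already been partitioned by staircase, we can inspect $r$'s neighborhood restricted to each of the four staircases and use the asymmetry between the "opposite" and the two "adjacent" staircases to identify the unique staircase containing $r$.

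Second, I would order the reflex vertices within each staircase. For a vertex $r$ on the northwest staircase, the first clique $C_i$ that contains $r$ marks the moment the shrinking rectangle first reaches $r$, which in turn fixes $r$'s position along that staircase. Sorting the reflex vertices assigned to each staircase by this entry index yields their clockwise order along the boundary. Combined with the reflex vertices already placed inside the two primary cliques (whose positions were fixed when we placed the tabs), this completes the ordering.

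The main obstacle is the exchangeable-vertices phenomenon of \fig~\ref{figure:exchangable-vertices}: the extreme reflex vertices near each tab can share neighborhoods with the extreme reflex vertices of the adjacent staircases, so the staircase assignment is genuinely non-unique for these vertices. Fortunately, the ambiguity is benign, since any such exchange produces a polygon with the same visibility graph, and thus any consistent choice suffices for reconstruction. Each step runs in polynomial time: the staircase assignment takes $O(nm)$ total via adjacency lookups against the already-classified convex vertices, and the subsequent sorting within each staircase is dominated by this cost, giving the claimed polynomial running time.
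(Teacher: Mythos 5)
Your entry-index ordering is sound, but it rests on a staircase-assignment step that you never actually establish, and that is exactly where your route diverges from the paper's. The paper does not classify reflex vertices by inspecting which convex vertices they see on each staircase; instead, for each staircase it takes the diagonal edge from the tail vertex of one tab to the head vertex of the next tab in clockwise order and uses the fact that this edge lies in exactly one maximal clique, consisting of those two tab vertices, \emph{all} reflex vertices of that staircase, and two extra reflex vertices (one from each adjacent staircase). That single clique delivers staircase membership essentially for free; positions along the staircase are then read off from the common reflex neighbor of each consecutive pair of convex vertices, with only the first and last positions left ambiguous as in \fig~\ref{figure:exchangable-vertices}. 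Your observation that each interior reflex vertex lies in exactly two consecutive elementary cliques, with a monotone entry index along its staircase, is a legitimate alternative for the \emph{ordering} step --- but only once the assignment to staircases is known.

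The gap is the assignment itself. You assert an ``asymmetry between the opposite and the two adjacent staircases'' in the neighborhood of $r$, but you never state a checkable criterion, and the obvious candidates are not clearly correct. Since all reflex vertices of a UP are mutually visible, the only information distinguishing the four reflex vertices shared by a consecutive pair of elementary cliques $C_i, C_{i+1}$ (one per staircase) is which convex vertices \emph{outside} $C_i\cup C_{i+1}$ each of them sees. A reflex vertex sees exactly its two convex boundary neighbors on its own staircase, but it can also see exactly two convex vertices on an adjacent staircase (for instance, the reflex vertex of the northwest staircase adjacent to the north tab sees exactly two convex vertices of the northeast staircase), so a rule such as ``my home staircase is the one on which I see only the two convex corners of my elementary cliques'' can tie. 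That particular tie involves an exchangeable extreme vertex, where ambiguity is benign, but you must prove that whatever criterion you intend succeeds for every non-extreme reflex vertex, and no such proof is given. To close the gap, either prove a concrete version of your truncation claim (e.g., that every interior reflex vertex sees strictly more than two convex vertices on each staircase other than its own), or adopt the paper's tab-to-tab diagonal cliques for the assignment and keep your entry-index argument for the ordering.
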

\begin{proof}
For each tab, consider the edge from it's tail vertex to the head of the next tab in clockwise order. This edge is contained in exactly one maximal clique, containing all reflex vertices on the staircase between the convex vertices, plus two additional reflex vertices--one from each adjacent staircase.

We can compute the reflex vertices on the staircase as follows. For each convex vertex on the staircase, we compute which reflex vertices are its neighbors. This uniquely determines the positions of all but four reflex vertices (there are two choices for the first/last reflex vertices on the staircase). However, the first (resp., last) vertex has the same neighborhood as the reflex vertex on the previous (resp., next) staircase. Therefore these vertices may be freely chosen for either position. Choose one for the first/last position. We have now constructed a staircase. perform the operation for the remaining three staircases. 
\end{proof}

\begin{theorem}
\label{theorem:regular-reconstruction}
In polynomial time, we can reconstruct a regular uniform-length orthogonal polygon $P$ from its visibility graph $G_P$.
\end{theorem}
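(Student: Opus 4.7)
The plan is to simply assemble the machinery developed in the preceding lemmas into a reconstruction pipeline, verifying that each stage both produces the data required by the next stage and can be executed in polynomial time. Since all of the structural work has already been done, the main obstacle is not a mathematical one but a bookkeeping one: ensuring that the orientation chosen for the tabs is consistent with the staircase orderings returned by the clique-peeling, so that the Hamiltonian boundary cycle we output is actually realizable as a UP with the input visibility graph.

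First I would invoke the preceding lemma to partition $V_P$ into convex and reflex vertices by checking $1$-simpliciality of every edge. Next I would use Lemma~\ref{lemma:regular-cliques-convex-vertices} and Lemma~\ref{lemma:regular-8-edges} to locate the two primary $8$-cliques and, through the preceding lemma, recover the four tab edges together with the four vertical diagonals between them. At this point I would \emph{embed} the polygon: designate one primary clique as north/south, orient the two vertical diagonals consistently clockwise, and then use the visibility relations between tab endpoints to determine which two tabs must lie to the east and west. Because visibility within a primary clique matches only one orientation of the four tabs (up to the global reflection that the regularity of the polygon admits), this step is forced.

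Second, I would run the clique-peeling argument laid out in Lemma~\ref{lemma:elementary-cliques}. Starting from the primary clique $C_0$, I repeatedly take the current $4$-clique of reflex vertices, find the unique elementary clique $C_{i+1}$ extending it by four convex vertices, record those four convex vertices as ``level $i+1$'', and delete the reflex vertices of $C_i$ from further consideration. This produces the sequence $C_0,C_1,\ldots,C_{k-1}$ together with, for each level, the four convex vertices that appear at that level across the four staircases. Lemma~\ref{lemma:convex-staircase-order} then assigns each such convex vertex to a specific staircase by looking at which tab endpoint it sees, and combining this with the level index yields the full clockwise order of convex vertices along each of the four staircases.

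Third, I apply Lemma~\ref{lemma:reflex-staircase-order} to each staircase to insert the reflex vertices in the correct order between consecutive convex vertices, using the unique maximal clique spanned by a tab-to-tab visibility edge to read off which reflex vertices lie on that staircase. Stitching the four staircases to the four tabs in the orientation fixed in the first step gives a Hamiltonian cycle on $V_P$; realizing it geometrically with unit edges and alternating convex/reflex turns gives a UP, and by construction every clique structure identified above coincides with the maximal convex regions of that UP, so its visibility graph is $G_P$. Each of the stages runs in polynomial time (dominated by the $O(n^2m)$ convex/reflex classification), yielding the claimed bound.
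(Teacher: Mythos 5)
Your proposal is correct and follows essentially the same route as the paper: the paper's own proof of this theorem is simply a two-sentence assembly of the section's lemmas (convex/reflex classification, primary cliques and tab identification, elementary-clique peeling, convex- and reflex-vertex staircase assignment) into a clockwise boundary ordering, which is exactly the pipeline you describe in more detail. Your added care about fixing the tab orientation consistently is a reasonable elaboration of the paper's brief ``we pick one primary clique and orient the vertical edges clockwise'' step, not a departure from it.
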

\begin{proof}
Following the procedures from this section, we compute all the boundary edges in clockwise order. Then by Observation~\ref{observation:reconstruction} we can uniquely place each vertex.
\end{proof}
\else
We restrict our attention to irregular uniform-length orthogonally convex polygons (IUPs); however, similar methods work for their regular counterparts.
\fi

\subsection{Irregular Uniform-Length Orthogonally Convex Polygons}

\ifFull
From now on, we assume that $P$ is an IUP, and that $G_P$ is its visibility graph.

First note that for the irregular case, no clique contains four convex vertices. Therefore, finding the tabs requires a slightly different strategy.
\else
Let $G_P$ be the visibility graph of IUP $P$. 
Our reconstruction algorithm first computes the four tabs, then assigns the convex and reflex vertices to each staircase. The following structural lemma helps us find the tabs. We assume that we have already computed the convex and reflex vertices in $O(n^2m)$ time.
\fi

\begin{lemma}
\label{lemma:cliques-convex-vertices}
In every IUP there are exactly four $7$-vertex maximal cliques, each containing exactly three convex vertices. Each such clique contains exactly one tab, and each tab is contained in exactly one of these cliques.
\end{lemma}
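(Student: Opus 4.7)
The plan is to construct, for each tab $t$, a canonical maximal rectangle $R_t\subseteq P$ whose boundary carries exactly seven polygon vertices (three convex), and then argue that these four rectangles account for every $7$-vertex maximal clique with three convex vertices.

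Fix the north tab $t_N$ with endpoints at $x$-coordinates $x_N$ and $x_N+1$, at height $y_N$. In an IUP the closure conditions on horizontal and vertical displacement around the polygon boundary force opposite staircases to have equal step counts, so let $S$ and $L$ denote the short and long step counts with $S<L$. The unit-width vertical strip directly below $t_N$ has the topmost step of NE on its right and of NW on its left; each of these steps moves away from the strip after one unit, so the strip can be extended downward, and it is blocked exactly by the unique horizontal edge of the long SE staircase whose $x$-range coincides with $[x_N,x_N+1]$. A direct count places this edge at $y=y_N-2S-1$, giving $R_{t_N}$ as a $1\times(2S+1)$ rectangle. Listing the polygon vertices on $\partial R_{t_N}$ yields: the two endpoints of $t_N$ (convex); the bottom-right corner, which is the convex vertex of SE at step $S$ from the east tab; its reflex SE neighbour at the bottom-left corner; the NW reflex vertex one unit below $t_N$ on the left side; and, on the right side, an NE reflex vertex one unit below $t_N$ and an SE reflex vertex one unit above the bottom-right corner. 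This gives $4+1+2=7$ vertices on $\partial R_{t_N}$, exactly three of which are convex. By the clique--convex region correspondence from the preliminaries these seven vertices form a maximal clique in $G_P$; applying the same construction to $t_S$, $t_E$, and $t_W$ gives four such cliques, pairwise distinct since each contains a different tab as a sub-edge.

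For the converse, I would classify triples of pairwise visible convex polygon vertices. Two convex vertices on the same staircase are never mutually visible, because any line of sight between them exits through the concave side of the staircase. Two tab convex vertices lying on different tabs are likewise never mutually visible, because any such segment must cross the line containing an adjacent tab beyond its endpoints and so leaves $P$ through the exterior pocket of a short staircase. Hence every clique with three convex vertices consists of the two endpoints of a single tab $t$ plus one non-tab convex vertex. For a fixed $t$, I would finally show that the set of non-tab convex vertices visible from both endpoints of $t$ is a singleton, namely the convex corner of $R_t$ opposite to $t$: any non-tab convex vertex on the long staircase opposite to $t$ but elsewhere along it has its coordinate transverse to $t$ outside the projection of $t$, so the line of sight from it to the ``far'' endpoint of $t$ must leave $R_t$ through one of its vertical sides and enter the exterior pocket created by a short staircase; and any non-tab convex vertex on a long staircase adjacent to $t$ has its $90^\circ$ interior wedge pointing away from $t$, so it sees neither endpoint of $t$.

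The main obstacle I anticipate is this last geometric exclusion, since in a general IUP each long staircase contributes $L-1$ non-tab convex vertices. I would handle it uniformly by noting that at each convex polygon vertex the interior wedge is a $90^\circ$ quadrant oriented by its two perpendicular unit boundary edges, and by showing that the segment from any non-candidate non-tab convex vertex toward an endpoint of $t$ either leaves that vertex's interior wedge immediately or crosses a vertical side of $R_t$ at a $y$-coordinate outside $R_t$ and hence outside $P$. Combined with the enumeration in the first part, this identifies the four tab rectangles as exactly the $7$-vertex maximal cliques with three convex vertices, completing the proof.
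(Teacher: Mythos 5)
Your construction of the four tab cliques is correct and considerably more explicit than what the paper records (the paper simply asserts that each tab lies in exactly one maximal $7$-clique): the strip argument correctly pins each tab's maximal clique to a $1\times(2S+1)$ rectangle carrying seven polygon vertices, three of them convex, and your argument that the common neighborhood of the two tab endpoints is confined to that strip also gives the uniqueness of the third convex vertex. The problem is the converse direction. From the two exclusions you establish (no two mutually visible convex vertices on one staircase; no two mutually visible tab vertices on different tabs) you conclude that ``every clique with three convex vertices consists of the two endpoints of a single tab plus one non-tab convex vertex.'' That inference is a non sequitur, and the conclusion is false: three pairwise visible convex vertices can lie on three \emph{different} staircases with no two of them forming a tab and at most one of them a tab vertex, and neither of your exclusions touches this configuration. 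Such triples genuinely occur in every IUP --- they are exactly the convex triples underlying what the paper later calls \emph{elementary cliques} (one convex vertex from a short staircase and one from each long staircase), and there is one for every convex vertex of a short staircase. Your proof therefore never rules these out as potential additional $7$-vertex maximal cliques with three convex vertices, so the ``exactly four'' part of the statement is not established.

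The missing ingredient is a counting argument rather than a further visibility exclusion: a maximal clique containing a convex vertex $v$ of $P$ also contains both of $v$'s boundary neighbors, since the corresponding maximal convex region lies in the $90^\circ$ wedge at $v$ and can always be extended to absorb the two unit edges incident to $v$. When $v$ is not a tab vertex both neighbors are reflex, and two convex vertices of a clique cannot share a reflex boundary neighbor (they would then be consecutive convex vertices of one staircase, hence mutually invisible). Consequently a maximal clique whose three convex vertices are pairwise non-adjacent along the boundary contains at least $3+6=9$ vertices and cannot be a $7$-clique; only the four tab cliques survive. You need either this argument or some substitute for it to close the converse; as written, the second half of your proof does not go through.
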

\begin{proof}
First note that each of the four tabs are in exactly one such maximal $7$-clique. Further, any other clique that contains three convex vertices has at least nine vertices: each convex vertex and its two reflex boundary neighbors.
\end{proof}

\ifFull
\begin{lemma}
\label{lemma:8-edges}
In an IUP, we can find a set of at most eight edges that contains the four tabs edges in $O(n^2m)$ time.
\end{lemma}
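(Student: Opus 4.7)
The plan is to invoke Lemma~\ref{lemma:cliques-convex-vertices}, which guarantees that each of the four tabs lies in a unique maximal $7$-clique with exactly three convex vertices, and that there are exactly four such cliques. The algorithm will enumerate these cliques and output the union of their convex-convex edges; the claim of ``at most eight'' will follow from a sharing argument that shows adjacent $7$-cliques overlap in a predictable way.

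Algorithmically, I first identify all convex and reflex vertices as in the previous lemma, at a cost of $O(n^2m)$. Then, for each edge $(u,v)$ whose endpoints are both convex, I compute $N(u)\cap N(v)$ in $O(n)$ time and test whether $\{u,v\}\cup(N(u)\cap N(v))$ contains a maximal $7$-clique with exactly three convex vertices. Using Observation~\ref{claim:simplicial-k-clique-time} to verify a clique of fixed size $k=7$, this test costs $O(n^2)$ per edge, giving the desired $O(n^2m)$ total runtime. I then output all convex-convex edges appearing in any such clique.

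The set produced this way is the union, over the four $7$-cliques of Lemma~\ref{lemma:cliques-convex-vertices}, of their convex-convex edges. Each clique contributes $\binom{3}{2}=3$ such edges---the tab itself plus two diagonals incident to the ``third'' convex vertex---for a naive total of $12$. To bring this down to $8$, I plan to argue that in an IUP the third convex vertex of each $7$-clique coincides with an endpoint of an adjacent tab. Consequently, two $7$-cliques belonging to adjacent tabs share two convex vertices and therefore one diagonal; the eight diagonals (two per clique) collapse to four, yielding $4\text{ tabs}+4\text{ diagonals}=8$ distinct edges, all of which lie in the union and include every tab.

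The hard part is the structural claim that the third convex vertex in each $7$-clique is necessarily an endpoint of an adjacent tab, rather than an arbitrary staircase convex vertex. I would establish this by examining the maximal axis-aligned rectangle extending inward from each tab: because the edges have uniform length and the staircases are irregular, this rectangle terminates precisely at an adjacent tab's endvertex, forcing that endvertex into the clique as its unique extra convex vertex. Once this geometric characterization of the $7$-cliques' vertex sets is in hand, the pairwise sharing of diagonals between adjacent cliques---and hence the bound of eight---is immediate.
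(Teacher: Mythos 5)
Your algorithmic setup---identify convex vertices, then locate the four maximal $7$-cliques of Lemma~\ref{lemma:cliques-convex-vertices} by testing each convex--convex edge with Observation~\ref{claim:simplicial-k-clique-time}---matches the paper and gives the right running time. The gap is in how you get from the $4\cdot\binom{3}{2}=12$ candidate convex--convex edges down to $8$. Your reduction rests on the claim that the ``third'' convex vertex of each $7$-clique is an endpoint of an adjacent tab, so that neighboring cliques share two convex vertices and a diagonal. This is false for an IUP. The north tab's $7$-clique is the elementary clique $C_0$: its three convex vertices are the two tab endpoints (one on a short, one on a long staircase) and one convex vertex on the \emph{opposite long} staircase. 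Because the polygon is irregular, opposite tabs are horizontally offset by the (nonzero) difference between the long and short staircase lengths, so the maximal convex region growing inward from a tab terminates at an ordinary staircase step, not at another tab. Consequently the four $7$-cliques are pairwise disjoint on their convex vertices ($12$ distinct vertices in all), and the paper explicitly uses this later: the eight surviving edges form \emph{four disjoint} two-edge paths whose middle vertices are tab vertices. Under your sharing argument those paths could not be disjoint, and without sharing your output has $12$ edges, not $8$.

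The paper's actual reduction is local to each clique: of the two non-tab convex--convex edges in a $7$-clique, exactly one is a vertical or horizontal \emph{non-boundary} edge (the third convex vertex lies axis-aligned with one tab endpoint). Such an edge is not $1$-simplicial---by an argument analogous to Lemma~\ref{lemma:convex-neighbor-two-cliques}, it lies in two distinct maximal cliques---so it can be detected by the same $1$-simpliciality test you already use and discarded. Removing one edge per clique leaves exactly $4+4=8$ edges containing all four tabs. To repair your proof, replace the sharing argument with this per-clique elimination; everything else you wrote can stand.
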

\begin{proof}
Compute the four maximal cliques from Lemma~\ref{lemma:cliques-convex-vertices}. These cliques have exactly three convex vertices each, and tabs are incident to two convex vertices, narrowing our choice of tab down to $4\cdot\binom{3}{2} = 12$ edges. Four of these are vertical or horizontal (non-boundary) edges. Using an argument similar to Lemma~\ref{lemma:convex-neighbor-two-cliques} (Case 3), it is easy to show that they are not $1$-simplicial, and therefore, can be identified and eliminated. Thus, we have eight edges to consider.
\end{proof}

\ifFull
Note that each clique must contain a tab.

Furthermore, the edges in the irregular case form 2-paths. The middle vertex on each 2-path is a tab vertex.
\else
Furthermore these eight edges form four disjoint paths on two edges, and the middle vertex on each path is a tab vertex. Lastly, we note that these known tab vertices are on the long staircases.
\fi
\fi

\ifFull
\else
We note that it is not necessary to identify the four tabs explicitly to continue with the reconstruction. There are only $7^4=O(1)$ choices of tabs (one from each $7$-clique of Lemma~\ref{lemma:cliques-convex-vertices}), thus we can try all possible tab assignments, continue with the reconstruction and verify that our reconstruction produces a valid IUP $P$ with the same visibility graph.
However, we can explicitly find the four tabs, \ifAppendix which we show in Appendix~\ref{subsection:find-4-tabs},\else \fi{} giving us the following lemma.
\fi

\ifFull
We now show how to eliminate the remaining four non-tabs.
\fi

\begin{lemma}
\label{lemma:iup-tabs}
We can identify the four tabs of an IUP in $O(nm)$ time.
\end{lemma}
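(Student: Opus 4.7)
The plan is to execute the strategy sketched in the paragraphs immediately preceding the lemma. First I would locate the four maximal $7$-cliques of Lemma~\ref{lemma:cliques-convex-vertices} and harvest the candidate tab edges they contain; then, for each of the resulting four 2-paths, I would pick the tab out of the two candidates.

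\textbf{Identifying the $7$-cliques and candidate edges.} Using the convex/reflex labelling already computed, I would iterate over every edge $(u,v)\in E_P$ whose endpoints are both convex and apply Observation~\ref{claim:simplicial-k-clique-time} with $k=7$ to test in $O(n)$ time whether $(u,v)$ is $1$-simplicial and contained in a maximal $7$-clique; retain $(u,v)$ only when that clique has exactly three convex vertices. By Lemma~\ref{lemma:cliques-convex-vertices}, exactly four such cliques exist, and their three inner convex-convex edges decompose into four tabs, four slanted diagonals (both $1$-simplicial), and four non-boundary horizontal/vertical diagonals. The last four are automatically rejected, since---by an argument analogous to Case~3 of Lemma~\ref{lemma:convex-neighbor-two-cliques}---they belong to two distinct maximal cliques and hence fail the $1$-simplicial test. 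This leaves eight candidate edges that group, by common $7$-clique, into four disjoint 2-paths $A_i-M_i-B_i$ with $M_i$ a tab vertex shared by the tab and the slanted diagonal of its clique. Total cost: $O(nm)$.

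\textbf{Picking the tab in each 2-path.} Within each 2-path, I must decide whether the tab is $(M_i,A_i)$ or $(M_i,B_i)$, i.e., which of $A_i,B_i$ is a tab vertex $T_i$ (boundary-adjacent to $M_i$) and which is an interior staircase convex vertex $C_i$ (separated from $M_i$ by a staircase). My plan is to separate them by a local combinatorial invariant in $G_P$: candidate tests include the pattern of incidences of $A_i$ and $B_i$ with the other seven surviving candidate edges, the size or multiplicity of maximal cliques containing $A_i$ or $B_i$ but \emph{distinct} from the current $7$-clique, and the way each endpoint sits inside the single reflex-only maximal clique of $G_P$. As a robust fallback, since there are only $2^4=16$ joint tab selections across the four 2-paths, I can try each selection and verify by attempting the remaining reconstruction and comparing the resulting visibility graph against $G_P$ in $O(nm)$ time per candidate; this still yields $O(nm)$ overall.

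\textbf{Main obstacle.} The hard part is formulating a clean local criterion that unambiguously distinguishes $T_i$ from $C_i$: both edges of the 2-path are $1$-simplicial and share the same $7$-clique, so all tests used to identify the $7$-cliques themselves fail to separate the two endpoints. The distinction must hinge on how $T_i$ and $C_i$ sit as corners of other maximal rectangles elsewhere in $P$. Should a direct local criterion prove subtle, the constant-size brute-force fallback still meets the claimed $O(nm)$ bound, with the phase that identifies the four $7$-cliques being the dominant cost.
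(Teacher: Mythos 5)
Your first phase is exactly the paper's: compute the four $7$-cliques of Lemma~\ref{lemma:cliques-convex-vertices} via Observation~\ref{claim:simplicial-k-clique-time}, discard the four axis-aligned diagonals because they fail the $1$-simplicial test, and observe that the surviving eight edges form four $2$-paths whose middle vertices are tab vertices. Where you diverge is the step you correctly flag as the crux. The paper resolves it with a direct degree-counting criterion that is absent from your list of candidate tests: for the two candidates $v,w$ adjacent to the middle vertex $u$, the tab partner is the one with \emph{more reflex neighbors} (a tab vertex sees all reflex vertices on its two incident staircases, while the interior convex candidate sees only a subset); in case of a tie, the tab partner is the one with \emph{fewer convex neighbors} (the tab vertex on the short staircase sees only the convex vertices of the opposite short staircase plus $u$ and $w$, whereas the non-tab candidate on the long staircase additionally sees every convex vertex of the opposite long staircase). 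Both counts are obtainable in $O(n)$ per candidate once convexity labels are known. Your fallback---brute-forcing the $2^4$ joint selections and verifying each by completing the reconstruction and comparing visibility graphs---is legitimate and is in fact sanctioned by the paper itself, which notes in the main text that one may try all $O(1)$ tab assignments and verify; since the downstream steps cost $O(nm+n^2)=O(nm)$ and verification costs $O(n\log n+m)$, the bound survives. What the paper's criterion buys over your fallback is locality and self-containedness: the lemma is proved without appealing to the correctness and failure-detection behavior of every later stage of the algorithm, whereas your fallback tacitly requires that the remaining reconstruction, when fed a wrong tab guess, terminates cleanly and either fails detectably or yields a polygon whose visibility graph can be compared---a property you would still need to argue.
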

\ifFull
\begin{proof}
We identify the four maximal cliques from Lemma~\ref{lemma:cliques-convex-vertices}. These cliques have exactly three convex vertices each, and tabs are incident to two convex vertices, narrowing our choice of tab down to $4\cdot\binom{3}{2} = 12$ edges. Four of these are vertical or horizontal (non-boundary) edges. Using argument similar to Case 3 of Lemma~\ref{lemma:convex-neighbor-two-cliques}, we can show that they are not $1$-simplicial. Thus, we can detect and eliminate them. 

The remaining eight edges form four disjoint paths on two edges, and the middle vertex on each path is a tab vertex. Note that these middle tab vertices are on the long staircases. Let one of them be called $u$. Now it remains to find $u$'s neighbor on its tab. Vertex $u$ has two candidate neighbors; let's call them $v$ and $w$. Just for concreteness, let's say $u$ is the vertex of the north tab on the northeast staircase.

Suppose, without loss of generality, that $v$ has more reflex neighbors than $w$, then $v$ is $u$'s neighbor on a tab, because it sees reflex vertices on the whole northeast and southeast staircases, while $w$ sees only a subset of those.
Otherwise $v$ and $w$ have the same number of reflex neighbors. Then either $v$ or $w$ has more convex neighbors. Suppose, without loss of generality, that $v$ is a tab vertex, then $v$ has fewer convex neighbors than $w$. To see why, note that since $u$ is on the northeast (long) staircase, $v$ is on the northwest (short) staircase. Vertex $v$ has convex neighbors $u$, $w$, and every convex vertex on the southeast (short) staircase. Likewise, $w$ has convex neighbors $v$, $u$, every convex vertex on the northeast (long) staircase (including $u$) and one convex vertex on the southeast (short)

We can do these checks for all such pairs $v$ and $w$, giving us all tabs.
\end{proof}
\fi

\ifFull
\begin{figure}[!tbh]
\begin{center}
\includegraphics[scale=0.8]{figures/elementary-cliques-smaller}
\caption{Elementary cliques of an IUP. Only half of the elementary cliques are shown.}
\label{figure:elementary-cliques}
\end{center}
\end{figure}
\else
\begin{figure}[!tb]
\begin{center}
\subfloat[]{\includegraphics[scale=0.55]{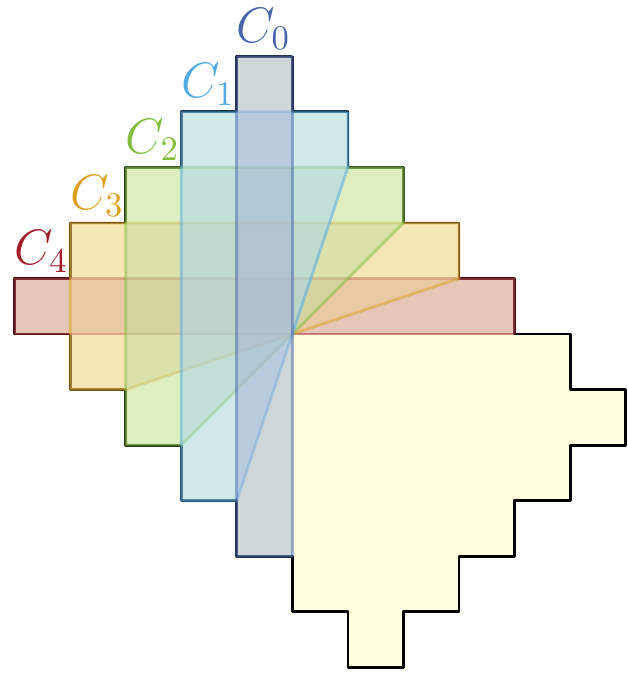}\label{cliques}}\hspace{8pt}
\subfloat[]{\includegraphics[scale=0.55]{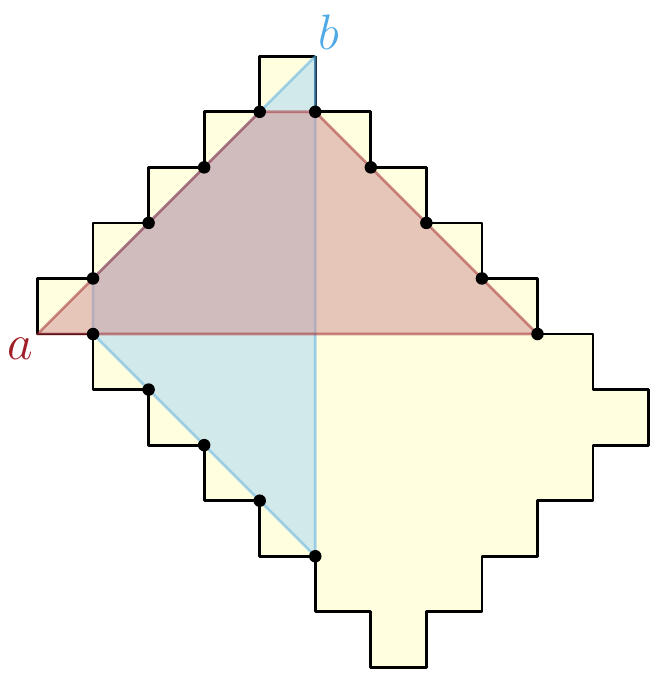}\label{ab-visibility}}\hspace{8pt}
\subfloat[]{\includegraphics[scale=0.55]{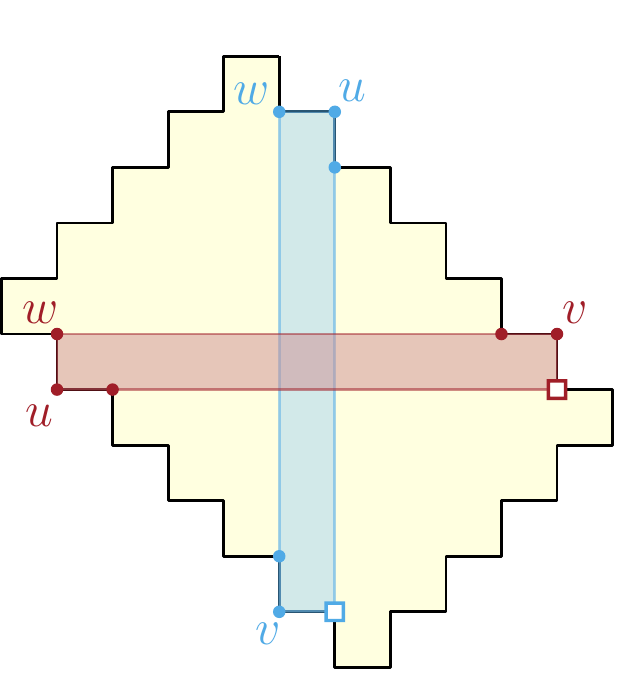}\label{rectangles}}
\caption{Elements of our reconstruction. (a) Elementary cliques $C_0, \dots, C_4$ interlock along a short staircase. (b) Tab vertices $a$ and $b$ see unique reflex vertices on long staircases. (c) Reflex vertices (square) are discovered by forming rectangles with known vertices $u$, $v$ and $w$.}
\label{figure:algorithm-stages}
\end{center}
\end{figure}
\fi

We pick one tab arbitrarily to be the north tab. 
%
We conceptually orient the polygon so that the northwest staircase is short and the northeast staircase is long. We do this by computing \emph{elementary cliques}, which identify the convex vertices on the short staircase.

\begin{definition}[elementary clique] An \emph{elementary clique} in an IUP is a maximal clique that contains exactly three convex vertices: one from a short staircase, and one from each of the long staircases. \ifFull (See \fig~\ref{figure:elementary-cliques}.)\else (See \fig~\ref{figure:algorithm-stages}\subref{cliques}.)\fi
\end{definition}

\begin{lemma}
\label{lemma:elementary-cliques}
We can identify the elementary cliques containing vertices on the northwest staircase in $O(nm)$ time.
\end{lemma}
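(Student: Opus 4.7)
The plan is to walk the northwest short staircase, identifying one elementary clique per short-staircase convex vertex, starting from the known western endpoint of the north tab. The key structural observation I would first establish is that every convex vertex $p$ on this staircase participates in exactly one elementary clique: $p$ has a unique opposite convex ``partner'' on each of the two long staircases (the first convex vertex along each long staircase whose line of sight to $p$ is not blocked by intervening reflex vertices), and together with the reflex vertices on the chains between $p$ and these partners they form the unique maximal convex region, hence a maximal clique in $G_P$, containing $p$ with exactly three convex vertices.

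Given a known short-staircase convex vertex $p$, I would locate its elementary clique by enumerating each edge $(p,q) \in E_P$ with $q$ convex, and for each such $q$ testing whether $\{p,q\} \cup (N(p) \cap N(q))$ is a maximal $k$-clique via Observation~\ref{claim:simplicial-k-clique-time} in $O(kn)$ time. The unique such clique with exactly three convex vertices is the elementary clique $C$ containing $p$; the two other convex vertices of $C$ are its partners on the long staircases. This step costs $O(n \cdot |N(p)|)$ time per starting vertex $p$.

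I would then traverse the staircase iteratively as suggested by \fig~\ref{figure:algorithm-stages}. Starting from $p_0 = a$, the western endpoint of the north tab (known from our earlier orientation choice), I compute $C_0$. To step to $p_{i+1}$, I identify the unique reflex corner $r$ of $C_i$ on the short-staircase side of the enclosed rectangle; that is, the reflex vertex in $C_i \cap N(p_i)$ that is not adjacent in $G_P$ to the northeast-long-staircase partner of $p_i$. Then $p_{i+1}$ is the convex neighbor of $r$ lying outside $C_i$. The process terminates when $p_{i+1}$ is the northern endpoint of the west tab. Because consecutive elementary cliques share exactly one long-staircase convex vertex (interlocking in the sense of \fig~\ref{figure:algorithm-stages}), no elementary clique on the northwest staircase is missed. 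The total running time telescopes to $O\bigl(n \sum_i |N(p_i)|\bigr) = O(nm)$.

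The main obstacle will be justifying the transition rule from $C_i$ to $p_{i+1}$ purely in graph-theoretic terms: I need to argue that the ``short-side'' reflex corner $r$ of each enclosing rectangle is uniquely identifiable from the adjacency structure of $C_i$ and its known convex vertices, and that $r$ indeed has exactly one convex boundary neighbor outside $C_i$ which must be the next short-staircase vertex. This boils down to a local case analysis on the four corners of the enclosed rectangle, analogous to the reasoning used in Lemma~\ref{lemma:convex-neighbor-two-cliques}, combined with the fact that in a UP no two convex vertices on the same staircase are mutually visible.
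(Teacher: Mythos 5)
Your high-level plan---discover each elementary clique by testing, for a known convex vertex $p$, which edges $(p,q)$ lie in a small maximal clique, and then chain the cliques along the staircase starting from the north tab's clique---is the same as the paper's. The gap is in your transition rule, which is exactly the step you yourself flag as the ``main obstacle.'' You select ``the reflex vertex in $C_i \cap N(p_i)$ that is not adjacent in $G_P$ to the northeast-long-staircase partner of $p_i$.'' But $C_i$ is a clique containing both $p_i$ and that partner, so $C_i \cap N(p_i) = C_i\setminus\{p_i\}$ and \emph{every} vertex in it is adjacent to the partner: the criterion selects nothing. In particular it cannot distinguish $p_i$'s two reflex boundary neighbors, both of which lie in $C_i$, one leading forward to $p_{i+1}$ and one back to $p_{i-1}$. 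Your fallback justification is also incorrect: consecutive elementary cliques do \emph{not} share a long-staircase convex vertex. The rectangle of $C_i$ has $p_i$ at one corner and the two long-staircase convex vertices at the levels determined by that rectangle; since $p_{i+1}$ sits one level lower, $C_{i+1}$ uses \emph{different} convex vertices on both long staircases. What consecutive cliques actually share is three \emph{reflex} vertices ($|R_i\cap C_{i+1}|=3$ while $R_i\cap C_j=\emptyset$ for $j\notin\{i-1,i,i+1\}$).

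That shared reflex triple is precisely how the paper chains: from $C_i$, the successor is the unique other elementary clique containing three of $C_i$'s reflex vertices, terminating when a clique containing a tab (the west tab) is reached. This needs no within-clique disambiguation and no prior knowledge of which north-tab endpoint lies on the northwest staircase---knowledge you assume but which, in the paper's development, is only a \emph{product} of computing the elementary cliques (cf.\ Lemma~\ref{lemma:cliques}). If you want to keep your vertex-walk formulation, you must replace the selection rule, e.g.\ by excluding the reflex boundary neighbor of $p_i$ that already appears in the previously visited clique $C_{i-1}$; as written, the inductive step does not go through.
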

\ifFull
\begin{proof}
Each elementary clique $C$ has constant size and contains a $1$-simplicial edge, as $C$ is maximal and two convex vertices in $C$ must be on opposite staircases. We therefore compute $1$-simplicial edges and their maximal cliques, and keep the cliques that are elementary cliques.

Let $k_{NW}$ be the number of convex vertices on the northwest staircase, and number these convex vertices from $v_0$ to $v_{k_{NW}-1}$ in order along the northwest staircase from the north tab to the west tab. We denote by $C_i$ the unique elementary clique containing $v_i$. Note that $C_0$ is the unique maximal (elementary) clique containing the north tab. Furthermore, each clique $C_i$ contains a set of reflex vertices $R_i$ such that $|R_i\cap C_{i+1}| = 3$, and for $j \not\in\{i-1,i,i+1\}$, $R_i\cap C_j = \emptyset$.
Therefore, from elementary clique $C_i$, we can compute elementary clique $C_{i+1}$ by searching for the only other elementary clique containing reflex vertices $R_i$. Once we reach an elementary clique containing a tab, then we have computed all elementary cliques on the northwest staircase. This tab is the west tab and we are finished.
\end{proof}
\else
\begin{proof}[\sketch]
Each elementary clique is constant size and contains a $1$-simplicial edge, and can therefore be discovered in $O(nm)$ time. Further, elementary cliques ``interlock'' along a staircase: each elementary clique shares exactly three reflex vertices with its at most two neighboring elementary cliques. Thus they can be computed starting from the elementary clique containing the north tab. \ifAppendix See Appendix~\ref{subsection:elementary-cliques} for the full proof.\fi{}
\end{proof}
\fi

Note that, if our sole purpose is to reconstruct the IUP $P$, we have sufficient information. The number of elementary cliques gives us the number of vertices on a short staircase of the polygon, from which we can build a polygon. However, in what follows, we can actually map all vertices to their positions in the IUP, which we later use to build a recognition algorithm for IUPs.

\ifFull
From the elementary cliques, we can determine all the convex vertices on the northwest side, and further divide the convex vertices from the elementary cliques to the southwest and northeast sides.
\else
First, we show how to assign all convex vertices from the elementary cliques to each of the three staircases, using visibility of the north and west tab vertices. Note, constructing the elementary cliques with Lemma~\ref{lemma:elementary-cliques} also gives us the west tab, since it is contained in the last elementary clique on the northwest staircase.
\fi

\begin{lemma}
\label{lemma:cliques}
We can identify the convex vertices on the northwest staircase in $O(n)$ time.
\end{lemma}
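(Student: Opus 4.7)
The plan is to traverse the chain of NW elementary cliques $C_0, C_1, \ldots, C_{k_{NW}-1}$ from Lemma~\ref{lemma:elementary-cliques}, using the known north and west tabs as anchors. Since both convex endpoints of the north tab lie in $C_0$, I would first arbitrarily designate one of them as the NW vertex $v_0$; the other is then the NE endpoint of the north tab. This choice fixes the NW/NE orientation (up to the reflection symmetry of the polygon that swaps NW with NE and SW with SE) and provides the starting point for a walk along the chain.

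For $i = 1, 2, \ldots, k_{NW}-1$ in order, I would then identify the NW vertex $v_i$ of $C_i$ as the unique convex vertex of $C_i$ that is visible to $v_{i-1}$. The geometric reason I expect this to work is that $v_{i-1}$ and $v_i$ are consecutive convex corners on the monotone NW staircase, separated by a single unit-length reflex corner, so the segment $v_{i-1}v_i$ is an interior diagonal that cuts across this reflex corner and stays inside the polygon. In contrast, the other two convex vertices of $C_i$ lie on the long NE and SW staircases, and under orthogonal convexity combined with the unit step length, the boundary edge incident to $v_{i-1}$ on the NW staircase blocks the line of sight from $v_{i-1}$ to those off-staircase vertices. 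As a consistency check at the end of the walk, $v_{k_{NW}-1}$ must coincide with one of the two endpoints of the already-identified west tab.

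Each elementary clique has constant size, so testing the visibility of $v_{i-1}$ against the three convex vertices of $C_i$ takes $O(1)$ time per clique, giving a total of $O(n)$ time over the $k_{NW} = O(n)$ cliques in the chain. I expect the main obstacle to be rigorously establishing the visibility claim above --- namely, that $v_{i-1}$ sees $v_i$ but neither of the other two convex vertices of $C_i$. This reduces to a careful geometric case analysis around $v_{i-1}$, exploiting both the unit step length and orthogonal convexity to locate the polygon edge that blocks visibility from $v_{i-1}$ to the NE and SW convex vertices of $C_i$.
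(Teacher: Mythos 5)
Your algorithm is built on a visibility claim that is false: in a uniform-length orthogonally convex polygon, two consecutive convex vertices $v_{i-1}$ and $v_i$ on the same staircase do \emph{not} see each other. The reflex vertex separating them is the \emph{inner} corner of the step, so the segment $v_{i-1}v_i$ passes through the exterior notch outside the step rather than ``cutting across'' the interior. (Concretely, if the northwest staircase has convex vertices at $(-1,9)$ and $(0,10)$ with the reflex vertex at $(0,9)$ between them, the midpoint $(-0.5,9.5)$ lies strictly above the boundary edge at height $y=9$ and hence outside the polygon.) This non-visibility of same-staircase convex vertices is precisely why an elementary clique must draw its three convex vertices from three different staircases. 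Consequently your selection rule ``the unique convex vertex of $C_i$ visible to $v_{i-1}$'' can never return $v_i$; it would return an off-staircase convex vertex of $C_i$ or fail to be unique, and the walk collapses at the first step. Your secondary claim --- that the boundary edge at $v_{i-1}$ blocks its view of the NE and SW convex vertices of $C_i$ --- is also unreliable: by orthogonal convexity, a convex vertex on the northwest staircase and a convex vertex on an adjacent staircase at the same level are joined by a horizontal chord of the polygon and therefore do see each other.

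The paper's proof uses the non-visibility in the opposite direction: the convex vertex of each elementary clique that lies on the northwest staircase is characterized as the one seen by \emph{none} of the four north and west tab vertices (the two tab vertices lying on the northwest staircase cannot see it because they share its staircase, and the other two are blocked as well), whereas the convex vertices of $C_i$ on the long staircases are visible from some north or west tab vertex. The staircase's own endpoints are then the tab vertices that see neither endpoint of the other tab. If you want to salvage a traversal-style argument, each step must be anchored on a criterion that does not require $v_{i-1}$ to see $v_i$ --- for instance, non-visibility from fixed tab vertices as above, or membership in the shared reflex-vertex sets that Lemma~\ref{lemma:elementary-cliques} already orders for you.
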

\begin{proof}
The northwest staircase contains the convex vertices of the elementary cliques from Lemma~\ref{lemma:elementary-cliques} that cannot be seen by any of the north or west tab vertices. The staircase further contains the left vertex of the north tab and the top vertex of the west tab (which can be identified by the fact that they are tab vertices that do not see either vertex of the other tab). 
\end{proof}


We can repeat the above process to identify the convex vertices of the southeast staircase. However, we might not yet be able to identify tabs as south or east. Thus, we will obtain two possible orderings of the convex vertices on the southeast staircase.
Next, we show how to assign convex vertices to the long staircases. In the process we determine south and east tabs, and consequently, identify the correct ordering of convex vertices on the southeast staircase.

\begin{lemma}
\label{lemma:long-staircases}
We can assign the remaining convex vertices 
in $O(n^2)$ time.
\end{lemma}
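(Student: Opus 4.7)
The plan is to determine, for each still-unassigned convex vertex, whether it lies on the NE or SW staircase, and then to distinguish the remaining two tabs as south vs.\ east. Going into this step we already have: all convex vertices identified (Lemma~\ref{lemma:convex-neighbor-two-cliques}), the four tabs (Lemma~\ref{lemma:iup-tabs}), the ordered NW convex vertices (Lemma~\ref{lemma:cliques}), and the SE convex vertices in one of two possible orders. Two of the tabs (north and west) have one endpoint each on NW; call their other endpoints $a_{NE}$ (on NE, from the north tab) and $a_{WS}$ (on SW, from the west tab). These two known vertices serve as anchors for the tops of the NE and SW staircases.

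Next, I would process the NW-elementary cliques from Lemma~\ref{lemma:elementary-cliques} in order along NW. Each such clique $C$ contains exactly three convex vertices: the NW vertex (known) and two others, which must lie one on NE and one on SW. For the anchor clique containing the north tab the labeling is immediate, since one of its non-NW convex vertices is $a_{NE}$ and the other must therefore be on SW. I would then propagate the labeling to each successive clique: consecutive NW-elementary cliques $C_i, C_{i+1}$ share three reflex vertices (from the interlocking argument used to compute them), and these shared reflex vertices split between NE and SW in a predictable way. Checking which unknown convex vertex of $C_{i+1}$ is visible to the reflex vertices already identified as NE (resp.\ SW) in $C_i$ uniquely extends the labeling to $C_{i+1}$. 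Iterating, every NW-elementary clique is labeled, assigning every convex vertex it contains to NE or SW.

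Running the symmetric construction on the SE side produces SE-elementary cliques that contain the remaining NE and SW convex vertices (if any) and provides an independent labeling that cross-verifies the first. Finally, once every convex vertex is placed on NE or SW, the two unresolved tabs are distinguished by which of their endpoints lies on NE (the east tab) vs.\ SW (the south tab), which in turn fixes the correct ordering of SE convex vertices.

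The main obstacle will be showing that the propagation is well-defined and consistent: specifically, that for every pair of consecutive NW-elementary cliques the shared reflex vertices really do partition cleanly between the NE and SW sides, so that checking visibility from the previously labeled reflex vertices always breaks the NE/SW symmetry in exactly one way; and that the independently obtained NW-side and SE-side labelings never disagree in an IUP. Both facts follow from the unit-step staircase geometry once the elementary-clique rectangles are made explicit. The running time is $O(n^2)$: there are $O(n)$ elementary cliques, and each labeling or propagation step performs $O(n)$ visibility lookups against already-identified vertices.
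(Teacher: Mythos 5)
You are missing the step that reaches most of the ``remaining'' convex vertices. What is left to assign at this point is exactly the set of convex vertices on the two long staircases (NE and SW), and a long staircase can have far more convex vertices than a short one: if the short staircases have $s$ steps and the long ones $t$ steps, each long staircase carries $t+1$ convex vertices, while there are only $s+1$ elementary cliques anchored on each short staircase, each contributing exactly one NE and one SW convex vertex by definition. So the NW- and SE-elementary cliques together touch at most $2(s+1)$ convex vertices per long staircase, and whenever $t>2s+1$ (e.g.\ $s=1$, $t=10$) the majority of the long-staircase convex vertices lie in no elementary clique at all. Your plan only ever labels convex vertices that occur in some elementary clique, so it never assigns the ones in the middle of the long staircases; the parenthetical ``(if any)'' conceals precisely the vertices this lemma is about.

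The paper closes this gap with an iterative propagation \emph{between the two long staircases} that your proposal lacks: seed $\mathcal{W}_0$ and $\mathcal{E}_0$ with the SW- and NE-vertices known from the elementary cliques, then repeatedly collect the convex neighbors on the opposite long staircase of the vertices found in the previous round, $\mathcal{E}_i = \bigl(\bigcup_{w\in\mathcal{W}_{i-1}} N_c(w)\bigr)\setminus\mathcal{E}_{i-1}$ and symmetrically for $\mathcal{W}_i$, until both staircases are exhausted; the layers $\mathcal{W}_i$ then appear in order along the staircase, and vertices within a layer are ordered by how many vertices of $\mathcal{E}_{i-1}$ they see. Your east-vs.-south tab disambiguation and the resulting fix of the SE ordering match what the paper does at the end, but without the propagation the assignment is incomplete. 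A secondary concern: your clique-to-clique propagation keys on visibility from shared \emph{reflex} vertices, yet in a UP reflex vertices see very broadly (all reflex vertices are pairwise visible), so it is unclear that this test separates NE from SW; the paper instead exploits convex--convex visibility across the long staircases, where two convex vertices on the same staircase are never mutually visible and the distinction is unambiguous.
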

\ifFull
\newcommand{\W}{\mathcal{W}}
\newcommand{\E}{\mathcal{E}}
\begin{proof}
Let $\W$ and $\E$ be all the convex vertices on the southwest and northeast staircases and let $\W_0$ and $\E_0$ be the convex vertices  on the southwest and northeast staircases that are already known from the elementary cliques from Lemma~\ref{lemma:cliques}. Let $N_c(v)$ denote the set of convex neighbors on the opposite staircase of some vertex $v$. Then, for each vertex $w_0\in \W_0$, $N_c(w_0) \subseteq \E$. 
Similarly, for each vertex $e_0\in \E_0$, $N_c(e_0)\subseteq \W$. Then we can iteratively define sets $\E_i = (\cup_{w\in \W_{i-1}} N_c(w)) \setminus \E_{i-1}$ and $\W_i =(\cup_{e\in \E_{i-1}} N_c(e)) \setminus \W_{i-1}$ and identify all convex vertices of the southwest and northeast staircases as $\W = \cup_i \W_i$ and $\E = \cup_i \E_i$. 

To order the vertices along the southwest staircase, note that the sets $\W_i$ should appear in order of increasing $i$ from top to bottom. Also note that if one were to assign the vertices of $W_i$ the staircase from top to bottom, each vertex $w_i$ in this order would see fewer vertices of $\E_{i-1}$. Thus, we can order the vertices within each $\W_i$. The argument for ordering vertices of $\E_i$ is symmetric.

Finally, the east (resp., south) tab is identified by the fact that its top (resp., left) vertex is visible from at least one vertex in $\W$ (resp, $\E$) that is not a tab vertex.
\end{proof}
\else
\begin{proof}[\sketch]
Let $v_i,v_{i+1}$ be convex vertices on the same staircase, separated by a single reflex vertex. Let $u$ be the unique vertex on the opposite staircase, such that the angular bisector of $v_i$ goes through $u$. Then $u$ sees $v_{i+1}$. This likewise holds for the opposite staircase. 
Therefore, starting from one convex vertex on each staircase (such as a tab vertex), we can compute all convex vertices on each staircase.
\end{proof}
\fi

\ifFull
We can now choose the south and east edges: a vertex on the east (south) edge can see convex vertices on the southwest (northeast) staircase. 
\fi

\ifFull
\begin{figure}[!tbh]
\begin{center}
\includegraphics{figures/side-visibility-small}
\caption{Left: Tab vertices $a$ and $b$ see unique reflex vertices on long staircases. Right: We assign the remaining reflex vertices with rectangles between long staircases.}
\label{figure:side-visibility}
\end{center}
\end{figure}
\fi

\begin{lemma}
\label{lemma:remaining-reflex}
We can assign the reflex vertices to each staircase in $O(n^2)$ time.
\end{lemma}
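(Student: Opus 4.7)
The plan is to identify each reflex vertex by the unique axis-aligned rectangle it completes with three already-known vertices of the polygon. Since all convex vertices have been placed in order along their staircases by Lemmas~\ref{lemma:cliques} and~\ref{lemma:long-staircases}, and since a uniform-length orthogonally convex polygon alternates convex and reflex vertices along each staircase, each reflex vertex sits between a known pair of consecutive convex vertices; my job is to decide which pair.

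First, I would handle the reflex vertices on the long (NE and SW) staircases. Guided by the visibility picture sketched in \fig~\ref{figure:algorithm-stages}, each endpoint of a short tab sees exactly one reflex vertex on the long staircase across the polygon from it, which gives an anchor reflex vertex on each long staircase. From such an anchor I would propagate along the staircase: for two consecutive convex vertices $v_i$ and $v_{i+1}$ on a long staircase and an already-identified vertex $u$ on the opposite long staircase that forms three corners of an axis-aligned rectangle with the as-yet-unknown reflex vertex $r_i$ between $v_i$ and $v_{i+1}$, the vertex $r_i$ is the unique element of $N(u)\cap N(v_i)\cap N(v_{i+1})$ consistent with completing that rectangle. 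I would then repeat a symmetric rectangle-completion argument for the short (NW and SE) staircases, now using the freshly-identified long-staircase reflex vertices as the third corner.

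For the running time, each reflex vertex is identified in $O(n)$ time by intersecting a constant number of neighborhoods and checking one candidate, and there are $O(n)$ reflex vertices to place, so the procedure runs in $O(n^2)$ time overall.

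The main obstacle is establishing uniqueness of the rectangle-completing vertex: for each chosen triple of known corners I must show that exactly one polygon vertex satisfies the required visibility relations and hence occupies the fourth corner. I expect to prove this by a short case analysis on the possible staircase a hypothetical alternative candidate could live on, ruling each case out using orthogonal convexity, uniform step length, and the portion of the boundary already reconstructed. Once this uniqueness is secured, correctness of the propagation and the claimed $O(n^2)$ bound follow immediately from the iterative procedure above.
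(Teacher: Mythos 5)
Your high-level plan --- anchor a few reflex vertices via tab visibility and then propagate by completing axis-aligned rectangles --- is close in spirit to the second phase of the paper's argument, but the step you yourself flag as ``the main obstacle'' is a genuine gap, and the specific filter you propose cannot close it. In a UP all reflex vertices are pairwise visible (they all lie in a single maximal convex region), so for a reflex vertex $u$ the neighborhood $N(u)$ contains every other reflex vertex; hence $N(u)\cap N(v_i)\cap N(v_{i+1})$ restricted to reflex vertices is just $N(v_i)\cap N(v_{i+1})\cap\mathcal{R}$, which contains not only the reflex vertex $r_i$ between $v_i$ and $v_{i+1}$ but also all reflex vertices of the opposite staircase lying in the common visibility cone of $v_i$ and $v_{i+1}$ --- far from a singleton. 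Since your anchors are themselves reflex, every propagation step uses such a $u$. The qualifier ``consistent with completing that rectangle'' carries no graph-theoretic content at this stage (no coordinates have been assigned yet), and geometrically the configuration you describe is not a rectangle anyway: consecutive convex vertices $v_i,v_{i+1}$ on a staircase are diagonal to one another, so the fourth corner of the axis-aligned rectangle through $v_i$, $r_i$, $v_{i+1}$ lies outside the polygon and cannot be a vertex $u$ of the opposite staircase.

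The paper avoids this by leaning on convex vertices, whose visibility is restrictive. It first bulk-assigns: for tab vertices $a,b$ joined along a short staircase, $\mathcal{R}_0=N(a)\cap N(b)\cap\mathcal{R}$ is exactly the reflex set of that short staircase plus two extras, and $N(a)\setminus\mathcal{R}_0$ and $N(b)\setminus\mathcal{R}_0$ land on the two long staircases (\fig~\ref{figure:algorithm-stages}\subref{ab-visibility}). The leftover reflex vertices are then found by rectangles spanned by a known \emph{boundary} edge $(u,w)$, with $u$ convex and $w$ an already-assigned reflex vertex, together with a convex vertex $v$ on the opposite staircase (\fig~\ref{figure:algorithm-stages}\subref{rectangles}); the maximal clique of that configuration is a genuine axis-aligned rectangle (a $6$-clique), and the one vertex of it opposite $w$ is the newly discovered reflex vertex. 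To salvage your propagation you would need to pin each rectangle by convex vertices or known boundary edges rather than by reflex anchors, and actually prove the uniqueness of the resulting maximal clique --- which is precisely the argument you deferred.
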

\ifFull
\begin{proof}
Once the convex vertices are ordered on the staircases, we can compare the reflex vertices that are seen from the tab vertices. Let $a$ and $b$ be vertices on different tabs visible along a short staircase \ifFull(see \fig~\ref{figure:side-visibility}(left))\else(see \fig~\ref{figure:algorithm-stages}\subref{ab-visibility})\fi. Let $\R$ be the set of all reflex vertices of the IUP, and $N(v)$ be a set of all neighbors of vertex $v$ in the visibility graph of IUP. Then $\R_0 = N(a)\cap N(b)\cap \R$ contains all reflex vertices from the short staircase, plus two extra reflex vertices from the neighboring long staircases. The remaining vertices $N(a)\setminus \R_0$ are on one long staircase (and $N(b)\setminus R_0$ are on the other long staircase).

Thus, we can find many reflex vertices on the long staircases, except the endvertices and potentially those in the middle of the staircases. To find the remaining ones, we build rectangles (maximal cliques) consisting of two convex vertices $u$ and $v$ on the opposite staircases and a known reflex vertex $w$, such that $(u,w)$ forms a boundary edge of the IUP \ifFull(see \fig~\ref{figure:side-visibility}(right))\else(see \fig~\ref{figure:algorithm-stages}\subref{rectangles})\fi. These rectangles define new reflex vertices on the opposite staircase from $w$. Thus, we iteratively discover all new reflex vertices.
\end{proof}
\else
\begin{proof}[\sketch] 
First we compare the reflex vertices seen by tab vertices, which gives us many vertices on the long staircases (see \fig~\ref{figure:algorithm-stages}\subref{ab-visibility}). The remaining reflex vertices are discovered by building vertical and horizontal rectangles that contain unassigned reflex vertices (see \fig~\ref{figure:algorithm-stages}\subref{rectangles}). \ifAppendix See Appendix~\ref{subsection:remaining-reflex} for the full proof.\fi{}
\end{proof}
\fi{}

\ifFull
\begin{lemma}
We can order the reflex vertices on each staircase in $O(n^2m)$ time.
\end{lemma}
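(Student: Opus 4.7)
The plan is to interleave each reflex vertex into the already-known clockwise order of convex vertices on its assigned staircase using only a single neighborhood lookup in $G_P$. From Lemmas~\ref{lemma:cliques} and~\ref{lemma:long-staircases} the convex vertices on each of the four staircases are already ordered, and from Lemma~\ref{lemma:remaining-reflex} the reflex vertices are partitioned by staircase. What remains is to decide, for each reflex vertex, the pair of consecutive convex vertices between which it should be inserted.

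The key structural fact I would establish is the following: if $r$ is a reflex vertex assigned to staircase $S$ and $v$ is a convex vertex of $S$, then $(r,v)\in E_P$ if and only if $v$ is one of the two boundary neighbors of $r$ in $P$. The nontrivial direction is that non-boundary-adjacent pairs on the same staircase are mutually invisible. The argument is geometric: because the no-three-collinear condition forces $S$ to consist of alternating unit horizontal and vertical edges, the segment joining $r$ to any convex vertex $v_j$ of $S$ more than one step away along the staircase must strictly cross one of the intervening axis-aligned boundary edges in its interior, placing part of the segment outside $P$. The case analysis splits only on whether $v_j$ lies above or below $r$ along $S$ and whether $v_j$ is a tab endpoint of $S$.

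With this fact in hand the algorithm is direct: for each staircase $S$ and each reflex vertex $r$ assigned to $S$, compute $N(r)\cap C_S$, where $C_S$ is the ordered set of convex vertices of $S$; the intersection contains exactly two vertices, call them $v_i$ and $v_{i+1}$, and $r$ is inserted between them in the clockwise order. Each intersection is $O(n)$ and there are $O(n)$ reflex vertices overall, so this step runs in $O(n^2)$ by itself. Accumulated with the preceding Lemmas~\ref{lemma:convex-neighbor-one-clique}--\ref{lemma:remaining-reflex} (whose bottleneck is the $O(n^2m)$ $1$-simpliciality test via Observation~\ref{claim:simplicial-k-clique-time}), the full reconstruction fits within the stated $O(n^2m)$ bound.

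The main obstacle is cleanly handling the endpoints of $S$, where $r$'s boundary neighbor on one side is a tab vertex whose incident tab edge is collinear with the alternating direction of $S$. I expect the blocking argument to still go through, as the tab edge itself is an axis-aligned unit segment that obstructs any diagonal sight line from $r$ to a convex vertex on the far side of the tab along $S$. Should the case analysis become unwieldy, a backup plan is an inductive peeling: start from the tab endpoints of $S$ (which are already known) and repeatedly remove the topmost (respectively leftmost) unit step of the staircase, analogously to the elementary-clique peeling of Lemma~\ref{lemma:elementary-cliques}, each peel identifying exactly one new reflex--convex boundary pair and yielding the same ordering in $O(n^2)$ time.
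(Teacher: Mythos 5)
Your proposal is correct and matches the paper's argument: the paper likewise relies on the fact that within a staircase the only reflex--convex visibility edges are boundary edges, and recovers the ordering by computing $N(c_i)\cap N(c_{i+1})\cap R$ for consecutive convex vertices, which is just the dual bookkeeping of your $N(r)\cap C_S$ lookup. The endpoint/tab cases you worry about cause no trouble, since tab vertices are convex vertices of their staircases and the same blocking argument applies.
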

\begin{proof}
\ifFull We already have ordered the convex vertices on each staircase.\fi{} Let $c_0,\ldots,c_k$ be the convex vertices in order on some staircase $S$ containing reflex vertices $R$. Then $N(c_i)\cap N(c_{i+1})\cap R=\{r_i\}$ where $r_i$ is the reflex vertex between $c_i$ and $c_{i+1}$ on staircase $S$. Thus, we know the order of the reflex vertices along each staircase.
\end{proof}

Therefore, we have ordered the vertices on all boundary staircases, constructing the Hamiltonian cycle of boundary edges in $G_P$, arriving at the following theorem:
\else
Observe that within each staircase, boundary edges are formed only between convex vertices and their reflex neighbors. Thus, we can order reflex vertices on each staircase by iterating over the staircase's convex vertices (order of which is determined in Lemmas~\ref{lemma:elementary-cliques}-\ref{lemma:long-staircases}) and we are done. This gives us the following result:

\fi

\begin{theorem}
In $O(n^2m)$ time, we can reconstruct an IUP from its visibility graph.
\end{theorem}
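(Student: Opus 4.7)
The plan is to assemble the sequence of structural lemmas built up in this section into a single reconstruction pipeline, bound each stage's running time, and show that the dominant cost is $O(n^2m)$. I would begin by applying Lemma~\ref{lemma:convex-neighbor-one-clique} and Lemma~\ref{lemma:convex-neighbor-two-cliques}: for every edge $(u,v)\in E_P$, check in $O(n)$ time whether $N(u)\cap N(v)$ is a clique, which requires $O(n^2)$ per edge and $O(n^2m)$ total. This labels every vertex as convex or reflex. This single preprocessing step will be the bottleneck that governs the final running time.

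Next I would locate the four tabs via Lemma~\ref{lemma:iup-tabs} in $O(nm)$ time and pick one arbitrarily to serve as the north tab, orienting the polygon so that the northwest staircase is short. Applying Lemma~\ref{lemma:elementary-cliques} gives the elementary cliques along the northwest staircase in $O(nm)$ time (simultaneously revealing the west tab), after which Lemma~\ref{lemma:cliques} assigns convex vertices to the northwest staircase in $O(n)$ time. Repeating the elementary-clique procedure starting from the south or east tab (whichever is opposite the already-fixed north tab) determines the convex vertices on the southeast staircase up to a two-way ambiguity of which tab is south versus east.

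Lemma~\ref{lemma:long-staircases} then propagates convex-vertex assignments to the two long staircases by repeatedly taking convex neighbors on the opposite side and ordering them by visibility count; this step takes $O(n^2)$ and resolves the south/east ambiguity, since exactly one of the two candidate orientations is consistent with the visibility between long-staircase convex vertices and the unidentified tab vertices. With all convex vertices assigned and ordered, Lemma~\ref{lemma:remaining-reflex} distributes the reflex vertices to their staircases in $O(n^2)$ time by intersecting tab-vertex neighborhoods and then sweeping out axis-aligned rectangles to recover any unassigned reflex vertices.

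Finally, within each staircase the ordered convex vertices $c_0,\ldots,c_k$ are separated by exactly one reflex vertex apiece, and $N(c_i)\cap N(c_{i+1})$ restricted to the reflex set of that staircase is a singleton, identifying the intermediate reflex vertex in $O(n)$ per convex pair, for $O(n^2)$ overall. This yields the full clockwise Hamiltonian cycle of boundary edges, from which the IUP is immediately reconstructible (each edge has unit length, convex/reflex labels fix the turn directions, and the polygon is closed and orthogonally convex). Summing the costs, the $O(n^2m)$ labeling step dominates, so the total time is $O(n^2m)$. The main obstacle I anticipate is the bookkeeping needed to verify that the orientation chosen for the north tab and the induced resolution of the south/east ambiguity are consistent globally; if not, we reflect the partial assignment---this is a constant-factor overhead and does not affect the asymptotic bound.
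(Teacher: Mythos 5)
Your proposal matches the paper's proof: it assembles the same pipeline of lemmas in the same order (convex/reflex labeling via $1$-simplicial edge tests in $O(n^2m)$, tab identification, elementary cliques on the short staircases, propagation to the long staircases resolving the south/east ambiguity, reflex-vertex assignment, and finally ordering reflex vertices via $N(c_i)\cap N(c_{i+1})$ between consecutive convex vertices), with the identical conclusion that the initial labeling step dominates. No substantive differences.
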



\ifRemove
\else
\section{Orthogonal Polygons with No Edge-Length Restrictions}
The main assumption in this section is that we know the Hamiltonian cycle.

\begin{lemma}
Two consecutive vertices $v_i, v_{i+1}$ are the two convex vertices of a tab if and only if $v_{i-1}, v_i, v_{i+1}, v_{i+2}$ form a 4-clique. (Indices are modulo $n$).
\end{lemma}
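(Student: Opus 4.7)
The plan is to prove the two directions of the biconditional separately, starting with the easier backward implication.

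For the $(\Leftarrow)$ direction, I would take the contrapositive. Since the polygon is orthogonal, the interior angle at every vertex is $90^\circ$ or $270^\circ$. If $v_i$ were reflex, the two boundary edges meeting at $v_i$ would bound a $90^\circ$ \emph{exterior} wedge on the concave side of $v_i$; the straight segment from $v_{i-1}$ to $v_{i+1}$ enters this wedge immediately near $v_i$ and therefore crosses the polygon exterior, so $v_{i-1}$ would not see $v_{i+1}$ and the 4-clique would fail. Symmetrically, a reflex $v_{i+1}$ would block visibility between $v_i$ and $v_{i+2}$. Hence the 4-clique forces both $v_i$ and $v_{i+1}$ to be convex.

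For the $(\Rightarrow)$ direction, suppose $v_i$ and $v_{i+1}$ are both convex. The three boundary edges $v_{i-1}v_i$, $v_iv_{i+1}$, $v_{i+1}v_{i+2}$ are by definition visibility edges. For the two short diagonals $v_{i-1}v_{i+1}$ and $v_iv_{i+2}$, the $90^\circ$ \emph{interior} wedge at the convex vertex $v_i$ (respectively $v_{i+1}$) lies on the diagonal's side, so each such segment enters the interior locally and stays within the small rectangular neighborhood pinned by the two incident tab edges.

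The one remaining and hardest pair is the long diagonal $v_{i-1}v_{i+2}$. I would consider the quadrilateral $Q$ with vertices $v_{i-1},v_i,v_{i+1},v_{i+2}$ in boundary order; three of its sides form the tab's $\cap$-shaped boundary and the fourth is the segment in question. Because the convex turns at $v_i$ and $v_{i+1}$ are of the same rotational sense and together reverse the boundary direction, the interior of $Q$ sits on the polygon-interior side of all three tab edges, so $Q$ is locally inside the polygon. Under the structural assumption implicit in this section (the monotone staircase / orthogonally convex polygon class), no piece of boundary can intrude into $Q$ from elsewhere, so $Q$ lies entirely inside the polygon and $v_{i-1}v_{i+2}$ is a visibility edge, completing the 4-clique.

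The main difficulty lies in this final no-intrusion step for $Q$; without the section's polygon-class hypothesis one can construct non-orthogonally-convex polygons in which a concavity elsewhere blocks the long diagonal, so the argument genuinely relies on the ambient structural assumption of monotonicity/orthogonal convexity.
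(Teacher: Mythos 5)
Your ($\Leftarrow$) direction is fine, and it is in essence a sharpened, local version of the paper's own one-line argument (the paper appeals to the alternation of convex and reflex vertices along a staircase; you argue directly that the exterior $90^\circ$ wedge at a reflex $v_i$ or $v_{i+1}$ blocks the chord $v_{i-1}v_{i+1}$ or $v_iv_{i+2}$). Since ``not a tab'' here means exactly ``at least one of $v_i,v_{i+1}$ is reflex,'' this correctly disposes of that direction.

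The gap is in ($\Rightarrow$), precisely at the step you yourself flag as the hardest. The paper dismisses this direction with ``the four vertices clearly see each other,'' and you rightly observe that the real content is that the quadrilateral $Q=v_{i-1}v_iv_{i+1}v_{i+2}$ must lie inside the polygon. But your resolution --- that orthogonal convexity (or monotonicity) alone forbids the boundary from intruding into $Q$ --- is false once the two edges flanking the tab may have different lengths, which is exactly the regime of this section (``no edge-length restrictions''). Take a north tab $v_i=(0,0)$, $v_{i+1}=(1,0)$ with $v_{i-1}=(0,-10)$ and $v_{i+2}=(1,-1)$, and complete it to the simple orthogonal polygon with clockwise boundary $(0,0)$, $(1,0)$, $(1,-1)$, $(2,-1)$, $(2,-1.5)$, $(3,-1.5)$, $(3,-2.5)$, $(0.4,-2.5)$, $(0.4,-20)$, $(0.2,-20)$, $(0.2,-19)$, $(-1,-19)$, $(-1,-12)$, $(-0.5,-12)$, $(-0.5,-10)$, $(0,-10)$. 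This polygon has exactly four tabs and every horizontal and vertical slice is connected, so it is orthogonally convex; yet at height $y=-5.5$ it occupies only $0\le x\le 0.4$, so the chord from $v_{i-1}=(0,-10)$ to $v_{i+2}=(1,-1)$, which passes through $(0.5,-5.5)$, leaves the polygon, and the four vertices do not form a $4$-clique. (The same example already defeats your treatment of the ``short'' diagonal $v_{i-1}v_{i+1}$: the rectangle ``pinned by the two incident tab edges'' reaches down only as far as the shorter flanking edge, and that diagonal dips below it.) The forward implication does go through under the uniform-edge-length hypothesis used in the rest of the paper, where $Q$ becomes an axis-parallel rectangle whose two vertical sides are polygon edges of equal length, so that horizontal convexity at each intermediate height forces all of $Q$ inside. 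So your instinct that a structural hypothesis is needed is right, but the one you invoke is too weak, and the no-intrusion step (and with it the lemma, at the stated level of generality) fails.
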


\begin{proof}
($\Rightarrow$) Four vertices of the tab clearly see each other, so they will form a 4-clique. 

  ($\Leftarrow$) Vertices of a staircase alternate between reflex and convex vertices. No two consecutive convex vertices of a staircase see each other, unless they are part of the tab.  

\end{proof}

\subsection{Reconstructing the south-east staircase}

We will construct a set of linear constraints, whose feasible solution will provide the construction of the south-east staircase. Let $k$ be the number of nodes in the south-east chain, excluding the convex vertices of the tab. Clearly $k$ is odd. And let $p_0$ and $p_{k+1}$ be the convex vertices of the bottom and right tab, respectively, adjacent to the staircase.

\begin{algorithmic}
\FOR {$i = 1$ to $k$ in step $2$}
  \STATE print ``$x_i = x_{i-1}$"
  \STATE print ``$y_{i-1} = \frac{i-1}{2}$"
  \STATE print ``$y_i = y_{i-1} + 1$"
\ENDFOR
\STATE print ``$x_0 = x_1 = 0$''
\STATE print ``$x_3 > x_1$''
\STATE print ``$x_2 = x_3$''
\FOR {$i = 5$ to $k$ in step $2$}
  \STATE $j = i-2$
  \STATE $lastvisible = j$
  \STATE print ``$x_i >  x_{i-2}$''
  \WHILE {$j \ge 1$}
    \WHILE {$(p_i, p_j) \in E$ and $j \ge 1$}
      \STATE print ``$\frac{x_i - x_j}{y_i-y_j} \ge \frac{x_i - x_{lastvisible}}{y_i - y_{lastvisible}}$''
      \STATE $lastvisible = j$
      \STATE $j = j-2$
    \ENDWHILE
    \WHILE {$(p_i, p_j) \not \in E$ and $j \ge 1$}
      \STATE print ``$\frac{x_i - x_j}{y_i-y_j} < \frac{x_i - x_{lastvisible}}{y_i - y_{lastvisible}}$''
      \STATE $j = j-2$
    \ENDWHILE
  \ENDWHILE
\ENDFOR
\end{algorithmic}

The constraints define the angles between the reflex points on the staircase. These angles is what defines the visibility of the reflex points  points.
Because of setting $y$ coordinates to unit heights in the first {\bf for} loop, all constraints are linear.

\subsection{Reconstructing the Northeast Staircase}

Unfortunately, things get more complicated with the addition of the second staircase.

\fi

\newcommand{\hpolygon}{histogram}
\newcommand{\Hpolygon}{Histogram}

\section{Uniform-Length \Hpolygon~Polygons}\label{sec:monotone}
\ifFull
\begin{definition}[\hpolygon~polygon]
Let $P$ be an orthogonal polygon consisting of $n-1$ length $l$ boundary edges that alternate horizontally and vertically, and one edge of length $l(n-3)/2$. Then we call $P$ a \hpolygon~polygon.
\end{definition}

The visibility graph of $P$ does not depend on the length $l$, so for the ease of discussion we treat $l=1$.
As with the double staircase polygons, we call the long edge the \emph{base} edge. Note that $P$ is monotone with respect to the base edge, as the intersection between a sweep line normal to the base edge and $P$ is always connected. However, note that the sweep line intersects the edges normal to the base edge at some point during the sweep.
\fi

\ifRemove
\begin{figure}[!tb]
\begin{center}
\subfloat[]{\includegraphics[width=0.4\textwidth]{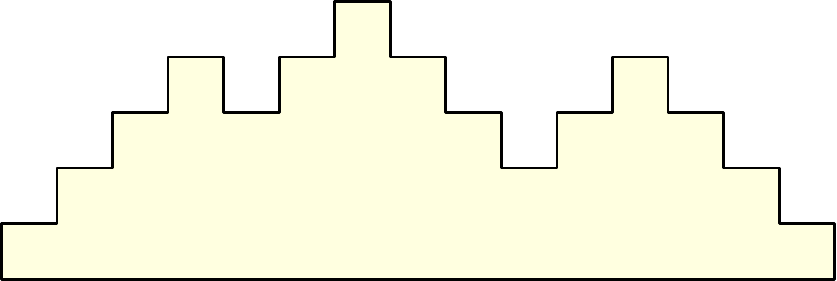}\label{monotone}}\hspace{0.8cm}
\subfloat[]{\includegraphics[width=0.4\textwidth]{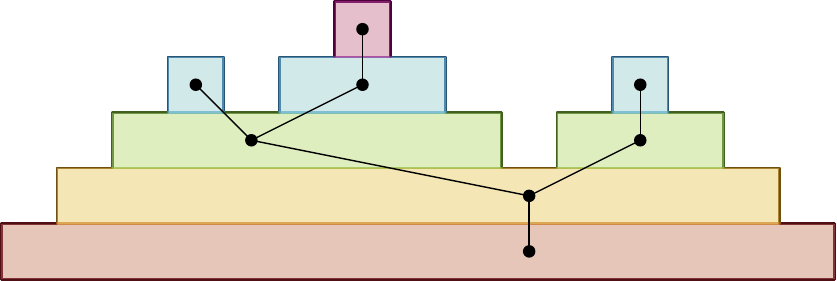}\label{tree}}
\caption{(a) A \hpolygon~$A$ with three tabs. (b) A decomposition of $A$ into touching rectangles with a contact graph that is a tree.}
\label{figure:monotone-staircase-and-contact-tree}
\end{center}
\end{figure}
\else
\begin{figure}[!tb]
\begin{center}
\includegraphics[scale=0.30]{figures/double-staircase}\hspace{0.10cm}
\includegraphics[scale=0.30]{figures/monotone-staircase}
\caption{Left: A \hpolygon~with one tab. Right: a \hpolygon~with four tabs.}
\label{figure:multiple-staircase}
\end{center}
\end{figure}
\fi 

In this section we show how to reconstruct
a more general class of uniform step length polygons: those that consist of a chain of alternating up- and down-staircases with uniform step length, which are monotone with respect to a single (longer) {\em base edge}. Such polygons are uniform-length \emph{\hpolygon~polygons}~\cite{durocher-2012}, but we simply call them \emph{\hpolygon s} for brevity (see \fig~\ref{figure:monotone-staircase-and-contact-tree}\subref{monotone} for an example).
%
%
We refer to the two convex vertices comprising the base edge as \emph{base vertices}. Furthermore, we refer to top horizontal boundary edges incident to two convex vertices as \emph{tab edges} or just \emph{tabs} and their incident vertices as \emph{tab vertices}. 
 


\paragraph{The case of two staircases.}
We first note that in {\emph{double staircase} polygons (consisting of only two staircases) there is a simple linear-time reconstruction algorithm based on the degrees of vertices in the visibility graph. \ifAppendix It is an extension of the algorithm of Abello and E\u{g}ecio\u{g}lu~\cite{abello-uniform-step-length} and can be found in Appendix~\ref{subsection:double-staircase}.\else \fi{}
\ifFull
First note that the number of vertices in the polygon is $4k+4$ for some integer $k$. Then the vertices along the left staircase from left to right are $3k+3$, $5$, $3k+3$, $6$, $3k+2$, $7$, $3k+1$, $8$, $\ldots$, $k+2$, $2k+7$, $k+3$, $2k+6$, $k+4$, $2k+5$, $k+3$. Degrees $k+3$ and $3k+3$ are the only duplicates, however the tab vertex with degree $k+3$ can be distinguished as it has two neighbors with degree $2k+5$ and the bottom most staircase vertex with degree $3k+3$ has two neighbors of degree $5$. The vertices from right to left on the right staircase have the same degrees as the left staircase, and vertices may be assigned to left and right staircases through their visibility of tab and base vertices.
\fi
However, the construction relies on the symmetry of the two staircases 
and it is not clear whether any counting strategy works for arbitrary \hpolygon s.

\subsection{Overview of the Algorithm}
\ifFull
Each \hpolygon~can be decomposed into axis-aligned rectangles, as illustrated in \fig~\ref{figure:monotone-staircase-and-contact-tree}\subref{tree}.  The contact graph of this decomposition is a tree $T$ with a specific left-to-right order of the leaves, where each of the $k$ leaves corresponds to each of $k$ square regions next to the tabs. Conversely, a tree with left-to-right order of the leaves defines a valid \hpolygon. In \sec~\ref{subsection:decomposition}, we show that we can construct the (unordered) contact tree $T$ from the visibility graph $G_P$ in $O(n^2m)$ time. 
Let $\pordered$ be a monotone polygon corresponding to the $i$-th ordering of leaves of $T$ (out of $k!$ possible). Then for each $\pordered$ we  construct a visibility graph $G_{\pordered}$  and check if $G_P$ and $G_{\pordered}$ are isomorphic. The construction of $G_{\pordered}$ from $\pordered$ takes $O(n\log n + m)$ time using the algorithm of Ghosh and Mount~\cite{ghosh-mount-optimal-visibility} and the best pseudopolynomial graph isomorphism solution due to Babai~\cite{babai-2016} takes $\exp((\log n)^{O(1)})$ time. Thus the overall runtime of this solution is $O(n^2m + k!(n\log n+m + \exp((\log n)^{O(1)})))$.

However, in \sec~\ref{subsection:vertex-staircase-assignment} we show how to find a one-to-one mapping between the vertices of $G_P$ and the vertices of $\pordered$. This, in turn, provides one-to-one correspondence between the vertices of $G_P$ and $G_{\pordered}$, eliminating the expensive graph isomorphism step. 

For each of $k!$ orderings of the $k$ leaves of the contact tree $T$, the mapping algorithm in \sec~\ref{subsection:vertex-staircase-assignment} requires that all $2^k$ possible left-right orientations of the $k$ tab edges be considered. Each of $k!2^k$ possible combinations of leaf ordering and tab orientations results in a different polygon that is a potential candidate for a reconstruction of $G_P$. The mapping algorithm either outputs a valid mapping of the vertices of $G_P$ to the vertices of the candidate polygon or detects that the given candidate does not produce a valid reconstruction of $P$. Since the mapping between the vertices is provided, the final equivalence check between $G_{P}$ and the visibility graph of the candidate polygon takes only $O(n + m)$ time. In \sec~\ref{subsection:reducing-candidates} we further reduce the number of candidates on which to run the equivalence check to $(k-2)!2^{k-2}$, leading to the main result of our paper:
\else
Every \hpolygon~can be decomposed into axis-aligned rectangles, whose contact graph is an ordered tree~\cite{durocher-2012}, as illustrated in \fig~\ref{figure:monotone-staircase-and-contact-tree}\subref{tree}. In \sec~\ref{subsection:decomposition}, we show that we can construct the (unordered) contact tree $T$ from the visibility graph $G_P$ in $O(n^2m)$ time by repeatedly ``peeling'' tabs from the \hpolygon. We then show that each left-to-right ordering of $T$'s $k$ leaves (as well as a left-to-right orientation of the rectangles in the leaves) induces a \hpolygon~$P'$. For each candidate polygon $P'$ (of $k!2^k$ candidates), we then compute its visibility graph $G_{P'}$ in $O(n\log n + m)$ time~\cite{ghosh-mount-optimal-visibility} and check if $G_{P'}$ is isomorphic to $G_P$. Instead of requiring an expensive graph isomorphism check~\cite{babai-2016}, we show how to use the ordering of $T$ to quickly test if $G_P$ and $G_{P'}$ are isomorphic.
In \sec~\ref{subsection:reducing-candidates} we show how to reduce the number of candidate \hpolygon s from $k!2^k$ to $(k-2)!2^{k-2}$, leading to the main result of our paper:
\fi{}

\begin{theorem}
\label{theorem:main}
Given a visibility graph $G_P$ of a \hpolygon~$P$ with $k\ge2$ tabs, we can reconstruct $P$ in $O(n^2m + (k-2)!2^{k-2}(n\log n + m))$ time.
\end{theorem}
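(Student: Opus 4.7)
The plan is to establish Theorem~\ref{theorem:main} by combining three components that correspond to the subsections sketched in the overview: (i) a preprocessing phase that recovers an unordered contact tree $T$ for the unknown \hpolygon; (ii) an enumeration phase over candidate embeddings of $T$; and (iii) a per-candidate verification phase that exploits an explicit vertex correspondence. Since tabs are exactly the $1$-simplicial edges lying in a maximal $4$-clique (the same structural property we used for tab detection in the IUP setting, specialized to axis-aligned rectangles), I would iteratively locate a tab via Observation~\ref{claim:simplicial-k-clique-time} in $O(nm)$ time, add a node to $T$ representing the rectangle carved off by that tab, and ``peel'' its clique from $G_P$. After at most $O(n)$ peels the resulting graph is $K_4$ (the base rectangle), giving the full unordered tree $T$ in $O(n^2m)$ time, which matches the first term in the stated bound.

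Next, I would argue that any \hpolygon~realizing $G_P$ must correspond to $T$ equipped with (a) a left-to-right ordering of the $k$ leaves and (b) a left-right orientation of each tab rectangle; conversely, every such choice produces a concrete uniform-length \hpolygon~$P'$ whose rectangles have determined coordinates (using the unit-edge assumption and the tree structure). The naive enumeration yields $k!\,2^k$ candidates, so the work is in Section~\ref{subsection:reducing-candidates}: I would fix the two outermost tabs (since reflecting the polygon globally produces the same visibility graph, we may pin down the leftmost leaf, and a parallel symmetry argument pins the rightmost leaf's orientation, or some equivalent combination) to cut the count to $(k-2)!\,2^{k-2}$ without loss of generality. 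This is the step I expect to be the main obstacle, since it requires showing that each equivalence class under these fixings contains a representative reachable from the peeling output in a canonical way.

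For each of the $(k-2)!\,2^{k-2}$ candidates $P'$, I would build $P'$ explicitly in $O(n)$ time from $T$ and its chosen ordering/orientations, then compute $G_{P'}$ in $O(n\log n + m)$ time via Ghosh--Mount~\cite{ghosh-mount-optimal-visibility}. The key payoff from Section~\ref{subsection:vertex-staircase-assignment} is that the peeling procedure simultaneously produces, for every candidate, an explicit bijection between the vertices of $G_P$ and the vertices of $P'$ (each peeled tab clique identifies four specific vertices of the rectangle it represents). This bijection turns the isomorphism test into a simple edge-by-edge comparison taking $O(n+m)$ time per candidate, so the total verification cost is $O((k-2)!\,2^{k-2}(n\log n + m))$. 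If any candidate's visibility graph matches $G_P$, that $P'$ is the reconstruction; otherwise the input was not the visibility graph of a \hpolygon. Summing the preprocessing and the enumeration/verification phases gives the claimed running time.
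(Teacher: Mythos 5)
Your overall architecture matches the paper's (peel tabs to build the contact tree, enumerate ordered/oriented trees, verify each candidate via an explicit vertex correspondence), but two of your three components contain concrete errors. First, your premise that ``tabs are exactly the $1$-simplicial edges lying in a maximal $4$-clique'' is false, and the paper devotes Lemmas~\ref{lemma:isolated-vertices} and~\ref{lemma:monotone-tab-id} to exactly this point: crossing edges between staircases of different tabs, as well as certain edges incident to a tab vertex, are also $1$-simplicial and lie in maximal $4$-cliques (see \fig~\ref{figure:monotone-staircase-tab-candidates}), and they must be filtered out using the \iso{} criterion and the overlap pattern of the surviving cliques. Moreover, once you peel a tab clique the truncated polygon no longer has uniform step length, so the initial detection criterion does not apply to it; the paper needs a separate mechanism (Lemma~\ref{lemma:truncated-tab}) that locates a newly created tab as the unique pair of \topVs adjacent to both non-tab vertices of the removed clique and then tests whether their common neighborhood has size four. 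Your peeling loop as described would either misidentify tabs at the first step or fail to find them after the first removal.

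Second, your reduction from $k!\,2^k$ to $(k-2)!\,2^{k-2}$ candidates cannot work as a symmetry argument: a global reflection of the polygon saves only a factor of $2$, whereas the required saving is a factor of $4k(k-1)$. The paper's argument in \sec~\ref{subsection:reducing-candidates} is structural, not up-to-symmetry: a base vertex sees a tab vertex only if that tab is the left-most or right-most one, so the two outermost leaves of $T$ are \emph{identified} directly from the neighborhoods of the base vertices, and the base rectangle is orientation-fixed with every rectangle on the left and right spines of $T$, which also \emph{determines} the orientations of those two tabs. That is what legitimately removes two leaves from the permutation and two bits from the orientation choices. Relatedly, your claim that the peeling ``simultaneously produces'' the vertex bijection understates what remains to be done: peeling associates four vertices and a level with each rectangle (fixing $y$-coordinates), but deciding which of the two \companion{} vertices sits on the left versus the right boundary of each rectangle is the content of \sec~\ref{subsection:vertex-staircase-assignment}, carried out per candidate using which tab vertex each \bottomV{} sees and the sets of vertices visible outside each rectangle; without that step the $O(n+m)$ edge-by-edge comparison has no well-defined correspondence to compare against.
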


Finally, we give a faster reconstruction algorithm when the \hpolygon~has a binary contact tree, solving these instances in $O(n^2m)$ time (\sec~\ref{subsection:reducing-candidates}).

\newif\ifNodariNew
\NodariNewtrue

\ifNodariNew
\else
\subsection{Algorithm for \Hpolygon s~with $k \ge 2$ Tabs}
To avoid running the generic graph isomorphism solution of Babai~\cite{babai-2016} on $G_P$ and $G_{\pordered}$, for a given $\pordered$ we map every vertex of $G_P$ to the vertex of $\pordered$. This provides one-to-one correspondence between the vertices of $G_P$ and $G_{\pordered}$. Then the check if $G_{P}$ is equivalent to $G_{\pordered}$ takes only $O(n\log n + m)$ time using the algorithm of Ghosh and Mount~\cite{ghosh-mount-optimal-visibility}.

Given an ordered contact tree $T$ that corresponds to the \hpolygon~$\pordered$ and a left-right orientation of each tab edge of $\pordered$, the algorithm in \sec~\ref{subsection:vertex-staircase-assignment} either performs the mapping of vertices of $G_P$ to $\pordered$ or determines that $\pordered$ is not a valid reconstruction of $P$. Thus, there are $k!2^k$ candidates that we need to consider. In \sec~\ref{subsection:reducing-candidates} we further reduce the number of candidates to $(k-1)!2^{k-1}$, leading to the main result of our paper:

\begin{theorem}
\label{theorem:main}
Given a visibility graph $G_P$ of a \hpolygon~$P$ with $k$ tabs, we can reconstruct $P$ in $O(n^2m + (k-2)!2^{k-2}(n\log n + m))$ time.
\end{theorem}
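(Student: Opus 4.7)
My plan is to combine the unordered tree extraction from Section~\ref{subsection:decomposition} with an enumeration over candidate \hpolygon s, and then prune the enumeration using symmetry. First, I would run the tab-peeling procedure of Section~\ref{subsection:decomposition} to obtain the (unordered) contact tree $T$ from $G_P$ in $O(n^2m)$ time. This is a one-shot cost outside the enumeration loop, so it contributes the additive $O(n^2m)$ term in the claimed bound. At this point every vertex of $G_P$ has been assigned to a rectangle (a node of $T$), and what remains is purely combinatorial: I need to choose a left-to-right order of the children at every internal node together with a left-right orientation of each of the $k$ tab rectangles. Every such choice yields a geometric \hpolygon~$P'$, and conversely every \hpolygon~with $G_P$ as its visibility graph arises from some choice. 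So reconstruction reduces to searching this space of $k!2^k$ candidates and testing each one.

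For the test itself, I would use the vertex-to-staircase mapping machinery of Section~\ref{subsection:vertex-staircase-assignment}: given the candidate $P'$, it either produces a canonical mapping from $V(G_P)$ to the vertices of $P'$ or rejects. When a mapping is produced, I would construct the visibility graph $G_{P'}$ in $O(n\log n + m)$ time with the algorithm of Ghosh and Mount~\cite{ghosh-mount-optimal-visibility}, and then, because the mapping is explicit, I can check $G_P \cong G_{P'}$ simply by comparing adjacency lists in $O(n + m)$ time. No black-box graph isomorphism is needed, so each candidate costs $O(n\log n + m)$. Multiplying by the number of surviving candidates yields the second term in the theorem.

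The non-trivial step is shaving the enumeration from $k!2^k$ down to $(k-2)!2^{k-2}$. The plan is to exploit two independent symmetries of the output. First, $P$ and its mirror image across the perpendicular bisector of the base edge have the same visibility graph, so I can freely fix the relative left-right position of any one pair of siblings in $T$; concretely I would fix the leaf containing the leftmost tab to be the leftmost leaf of its parent and rename the second leaf accordingly. This removes one factor of $k(k-1)$ from the leaf orderings and one factor of $2$ from the tab orientations. Second, I expect the orientation of the leaf rectangle containing the (now-fixed) leftmost tab and the leaf containing the rightmost tab to be forced by their visibility to the base vertices---a base vertex sees exactly the convex vertices of the nearest staircase on its side, so the ``outside'' corner of each extreme tab is identified unambiguously. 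This removes the remaining factor of $2^2$, leaving exactly $(k-2)!2^{k-2}$ candidates to test.

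The main obstacle I anticipate is the second pruning step: I need to argue carefully that the visibility pattern between the base vertices and the candidate tab rectangles really does pin down the orientations of the two extreme tabs, and that the mirror-symmetry argument composes correctly with it (i.e.\ fixing the leftmost leaf does not accidentally forbid a legitimate reconstruction). If that turns out to force a weaker savings, the bound would degrade only by a constant factor, which does not affect the stated asymptotic. With the pruning in place, summing $O(n^2m)$ for the decomposition and $(k-2)!2^{k-2}\cdot O(n\log n + m)$ for the enumeration yields the claimed running time.
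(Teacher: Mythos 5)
Your overall architecture matches the paper's: build the unordered contact tree in $O(n^2m)$ time, enumerate ordered/oriented candidate \hpolygon s, and use the explicit vertex-to-polygon mapping to replace a black-box graph isomorphism test with an $O(n+m)$ adjacency comparison after running Ghosh--Mount. The gap is in the pruning step. You attribute the reduction from $k!$ to $(k-2)!$ leaf orderings to mirror symmetry, but reflecting $P$ across the perpendicular bisector of the base edge pairs up candidates, so fixing the relative order of one pair of siblings buys only a single factor of $2$ --- it does not tell you \emph{which} of the $k$ tabs occupies the leftmost (or rightmost) position, which is where the factor $k(k-1)$ must come from. With your accounting you are left with roughly $k!\,2^{k-3}$ candidates rather than $(k-2)!\,2^{k-2}$, and contrary to your fallback remark this is not a constant-factor degradation but a factor of $\Theta(k^2)$.

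The paper closes this gap using exactly the mechanism you half-identify in your second pruning step: the only tab vertices visible from a base vertex are those of the leftmost and rightmost tabs (the left base vertex sees the \topVs on the leftmost tab's down-staircase, and symmetrically for the right base vertex). Hence the two extreme tabs can be read off directly from the neighborhoods of the base vertices in $G_P$ --- no enumeration over which tabs are extreme is needed --- and the ``orientation-fixed'' relation (a \bottomV of one rectangle seeing a \topV of another forces them onto opposite staircases) then pins down the orientations of those two tab rectangles, and indeed of every rectangle on the left and right spines of $T$. That accounts for the full $k(k-1)\cdot 2^2$ savings. Your base-vertex observation is the right tool; you need to use it to identify the extreme tabs themselves, not only their orientations, and drop the mirror-symmetry argument, which contributes at most an additional factor of $2$ and cannot substitute for identifying the extreme tabs.
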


  associates four vertices of $G_P$ with every rectangle $R_v$ in the decomposition (corresponds to a node $v$ in the contact tree $T$). Moreover, it classifies vertices as \topVs and \bottomVs of the rectangle. Given a \topV (resp., \bottomV) $p$, we say the other \topV (resp., \bottomV) in the rectangle, denoted by $\bar{p}$, is its {\em \companion} vertex. Since left and right vertices of each rectangle belong to a left (ascending) and down (descending) staircases, respectively, each pair of companion vertices $p$ and $\bar{p}$ needs and right vertices 

In \sec~\ref{subsection:decomposition} we show how to construct the contact tree $T$ from $G_P$ in $O(n^2m)$ time. (Note that such a decomposition defines the structure (length) of each staircase and needs to be performed only once.) Moreover, it associates each node $v$ of $T$ with with four vertices of $P$: two {\em \topVs} that are convex and two {\em \bottomVs} that are either both reflex or are both convex {\em base} vertices. Given a \topV (resp., \bottomV) $p$, we say the other \topV (resp., \bottomV) in the rectangle, denoted by $\bar{p}$, is its {\em \companion} vertex. Our decomposition also successfully classifies vertices to \topVs and \bottomVs. Thus, our decomposition identifies a pair of \companion vertices $p$ and $\bar{p}$ as candidates for each vertex of a staircase, with an additional constraint that the assignment of a vertex to the left (resp., right) staircase implies the assignment of its \companion vertex to the right (resp., left) staircase defining the other side of the rectangular region. Section~\ref{subsection:vertex-staircase-assignment} describes how to determine the actual assignment of vertices of $G_P$ to staircases in $O(n\log n +m)$ time. Thus, we get the following result:

We associate each ascending or descending staircase with a tab vertex at its top. Using $G_P$, we determine the lengths of each such staircase. Each ordering of these staircases, combined with alternating the ascent/descent between neighbor in the ordering defines a polygonal chain $P^\prime$. Note, that not all polygonal chains define a valid \hpolygon~(see \fig~\ref{figure:monotone-orderings} for some examples). To determine if $P^\prime$  is a valid \hpolygon, it is sufficient to check that the first and the last vertices of $P^\prime$ have the same $y$ coordinate (the base edge is horizontal) and the rest of the vertices are above the base edge. Thus, in $O(n)$ time we can verify if $P^\prime$ forms a \hpolygon. 
\begin{figure}
\begin{center}
\includegraphics[width=\textwidth]{figures/monotone-orderings}
\caption{Possible orderings of tabs $a$,$b$, and $c$ and assignments of their (red and blue) staircases. (a): An ordering that forms a \hpolygon. (b): Reordering tabs $b$ and $c$ from (a) creates a polygonal chain that intersects the base edge. (c): Beginning from the ordering in (b), reassigning $c$'s staircases creates a polygonal chain such that first and last \bottomV have different $y$-coordinates.}
\label{figure:monotone-orderings}
\end{center}
\end{figure}

Given a candidate \hpolygon~$P^\prime$, one strategy to verify if $G_P$ is a valid visibility graph  of $P^\prime$ is to construct the visibility graph $G_{P^\prime}$ of $P^\prime$ in time $O(n\log n +m)$ and check if it is isomorphic to $G_P$ in $\exp((\log n)^{O(1)})$ time using the quasipolynomial time graph isomorphism algorithm due to Babai~\cite{babai-2016}.  The total runtime is $O(n^2m + k!(n\log n+m + \exp((\log n)^{O(1)})))$.

Instead,  we develop a fixed-parameter tractable algorithm, parameterized by $k$ -- the number of tabs in $P$, obtaining time $O(n^2m + k!2^k(n\log n + m))$, which is faster for small values of $k$.

Finally, for each valid \hpolygon~$P^\prime$, we check if the visibility graph $G_{P^\prime}$ is equal to $G_P$. Our construction of the staircases from $G_P$ defines one-to-one correspondence between vertices of $G_P$ and the vertices of $P'$. Therefore, the equivalence check of $G_P$ with $G_{P^\prime}$ requires $O(n\log n +m)$ time using the algorithm of Ghosh and Mount~\cite{ghosh-mount-optimal-visibility}.

The number of candidates for $P^\prime$ that we consider will be parameterized by $k$ -- the number of tabs in $P$. Observe that for every vertical boundary edge $e$ in $P$ that belongs to some staircase, there is a {\em \companion} boundary edge $\bar{e}$ on another staircase. The four vertices of $e$ and $\bar{e}$ define axis-aligned rectangular region $R$.  The contact graph of the decomposition of a polygon into such rectangular regions is a tree $T$ (see \fig~\ref{figure:monotone-staircase-and-contact-tree}(right) for an example). We root $T$ at the rectangle containing the base edge (the base rectangle); consequently,  each leaf is a rectangle containing a tab (a tab rectangle). 
The left-to-right ordering of the leaves of $T$ defines the left-to-right ordering of the tabs of the polygon, while the left-to-right orientation of the tab edges defines the left-right assignment of the two staircases associated with each tab. Thus, there are $k!2^k$ potential candidates for $P^\prime$.

In \sec~\ref{subsection:decomposition} we show how to construct the contact tree $T$ from $G_P$ in $O(n^2m)$ time. (Note that such a decomposition defines the structure (length) of each staircase and needs to be performed only once.) Moreover, it associates each node $v$ of $T$ with with four vertices of $P$: two {\em \topVs} that are convex and two {\em \bottomVs} that are either both reflex or are both convex {\em base} vertices. Given a \topV (resp., \bottomV) $p$, we say the other \topV (resp., \bottomV) in the rectangle, denoted by $\bar{p}$, is its {\em \companion} vertex. Our decomposition also successfully classifies vertices to \topVs and \bottomVs. Thus, our decomposition identifies a pair of \companion vertices $p$ and $\bar{p}$ as candidates for each vertex of a staircase, with an additional constraint that the assignment of a vertex to the left (resp., right) staircase implies the assignment of its \companion vertex to the right (resp., left) staircase defining the other side of the rectangular region. Section~\ref{subsection:vertex-staircase-assignment} describes how to determine the actual assignment of vertices of $G_P$ to staircases in $O(n\log n +m)$ time. Thus, we get the following result:

\begin{lemma}
Reconstructing \hpolygon s with $k$ tabs is fixed-parameter tractable, with running time $O(n^2m + k!2^{k}(n\log n + m))$.
\end{lemma}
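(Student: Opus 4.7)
The plan is to combine two building blocks already outlined in the overview of \sec~\ref{sec:monotone}: the contact-tree extraction procedure and the vertex-to-staircase assignment routine. Together these reduce reconstruction of a \hpolygon~with $k$ tabs to enumeration over a fixed-parameter number of candidate polygons, with a single super-linear preprocessing step that is reused across all candidates.

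First I would extract, from $G_P$ alone, the unordered contact tree $T$ with $k$ leaves in $O(n^2m)$ time, together with the quadruplet of \topVs and \bottomVs living in each node's rectangle and the convex/reflex/base classification of every vertex. This is the only super-linear preprocessing the algorithm performs. Since every \hpolygon~is uniquely determined by the left-to-right order of $T$'s leaves together with the left-right orientation of each of its $k$ tab rectangles, there are exactly $k!\,2^{k}$ syntactic candidates to try. For each candidate I would (i)~invoke the assignment routine in $O(n\log n + m)$ time to attempt to match the vertices of $G_P$ to their positions in the candidate polygon, (ii)~if the assignment succeeds, explicitly realize the polygon $P'$ in $O(n)$ time from the $y$-coordinates forced by $T$ together with the \topV/\bottomV classification and the $x$-coordinates forced by the chosen ordering and orientations combined with the unit step length, (iii)~construct $G_{P'}$ in $O(n\log n + m)$ time using the algorithm of Ghosh and Mount~\cite{ghosh-mount-optimal-visibility}, and (iv)~compare $G_{P'}$ with $G_P$ edge-by-edge in $O(n+m)$ time under the explicit bijection produced in step~(i). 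If the comparison succeeds, $P'$ is output; otherwise the next candidate is tried.

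The principal obstacle is justifying that step~(iv) is a plain edge-set comparison rather than a full graph-isomorphism test. This relies on the decomposition plus assignment routine producing a canonical bijection from $V(G_P)$ to polygon vertices, so that $G_{P'}=G_P$ under this bijection if and only if the candidate is correct. The $y$-coordinates are pinned down by $T$ together with the \topV/\bottomV classification, and the $x$-coordinates are pinned down by the chosen leaf ordering, the within-rectangle orientations, and the unit step length; hence whenever the assignment routine does not reject, the bijection is well defined. Correctness then follows because the true polygon $P$ must induce some enumerated ordering/orientation pair, so at least one candidate succeeds, while the assignment routine rejects every inconsistent combination by construction. Summing the shared preprocessing $O(n^2m)$ with $k!\,2^{k}$ iterations of cost $O(n\log n + m)$ yields the claimed $O(n^2m + k!\,2^{k}(n\log n + m))$ bound, which is polynomial in $n$ for every fixed $k$ and thus fixed-parameter tractable in $k$.
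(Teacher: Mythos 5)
Your proposal is correct and follows essentially the same route as the paper: build the contact tree once in $O(n^2m)$ time, enumerate the $k!\,2^k$ leaf-ordering/tab-orientation candidates, use the visibility-based assignment to obtain an explicit vertex bijection for each candidate, and then verify with Ghosh--Mount plus a labeled edge-set comparison in $O(n\log n+m)$ per candidate, avoiding graph isomorphism. The only difference is cosmetic: the paper realizes the candidate polygonal chain and checks its validity before assigning labels, whereas you interleave these steps, which does not affect correctness or the running time.
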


TODO: Mention improvements to special cases of polygons in sections 4.3 and 4.4.
\fi 

\subsection{Rectangular Decomposition and Contact Tree Construction}
\label{subsection:decomposition}
We construct the contact tree $T$ from $G_P$ by computing a set $\mathcal{T}$ of the $k$ tab edges of $G_P$ (Lemma~\ref{lemma:monotone-tab-id}). 
%
Each tab $(u,v)$ is $1$-simplicial and in a maximal $4$-clique, since $N(u)\cap N(v)$ is a $4$-clique representing a unit square at the top of the \hpolygon. Given the set $\mathcal{T}$ of tab edges, our reconstruction algorithm picks an edge $t$ from $\mathcal{T}$ and removes the maximal $4$-clique containing $t$. This is equivalent to removing an axis-aligned rectangle in $P$, and, equivalently, removing a leaf node from $T$. Moreover, it associates that node of $T$ with four vertices of $P$: two {\em \topVs} that are convex and two {\em \bottomVs} that are either both reflex or are both convex {\em base} vertices. This process might result in a new tab edge, which we identify and add to $\mathcal{T}$. 

\ifLipics
\subsubsection{Finding initial tabs}
\else
\subsubsection{Finding initial tabs.}
\fi 
We start by finding the $k$ tabs. Recall that every tab edge is $1$-simplicial and in a maximal $4$-clique. The converse is not necessarily true. Therefore, we begin by finding all $1$-simplicial edges that are in maximal $4$-cliques as a set of candidate edges, and later exclude non-tabs from the candidates. 

\ifFull
\begin{figure}[!tb]
\begin{center}
\includegraphics[width=0.9\textwidth]{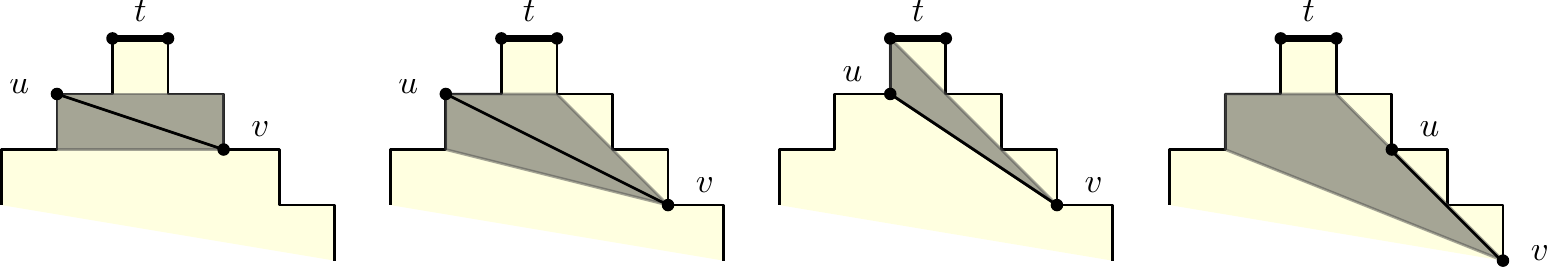}
\caption{Edges between $u,v$ on a tab $t$'s staircases, but disjoint from $t$, are in a clique of size at least five.}
\label{figure:cliques-in-tab-staircases}
\end{center}
\end{figure}
\fi



\newcommand{\iso}{isolated vertex\xspace}
\newcommand{\isos}{isolated vertices\xspace}
Given a visibility graph $G_P=(V_P, E_P)$ of a \hpolygon~$P$ and a maximal clique $C\subseteq V_P$, we call a vertex $w \in C$ an {\em \iso{} with respect to $P$} if there exists a tab edge $(u,v) \in E_P$, such that $(N(u) \cup N(v)) \cap C = \{w\}$, i.e., of all vertices of $C$, only $w$ is visible to some tab of $P$.

\begin{figure}[!tb]
\begin{center}
\includegraphics[width=.6\textwidth]{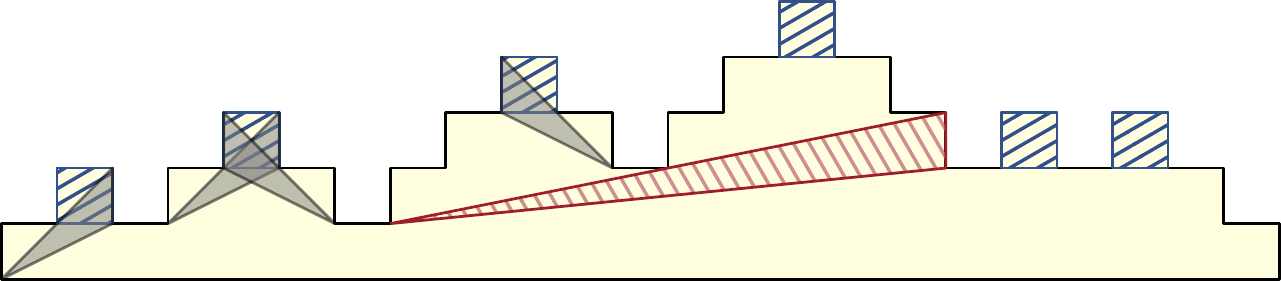}
\caption{Illustrating all maximal $4$-cliques that contain $1$-simplicial edges. These include tab cliques (square regions) and non-tab cliques (triangular regions).}
\label{figure:monotone-staircase-tab-candidates}
\end{center}
\end{figure}

\begin{lemma}
\label{lemma:isolated-vertices}
In a \hpolygon, every $1$-simplicial edge in a maximal $4$-clique contains either a tab vertex or an \iso. 
\end{lemma}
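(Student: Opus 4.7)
The plan is to analyze the geometric configuration of a non-tab maximal $4$-clique containing a $1$-simplicial edge and to pinpoint an isolated vertex from the structure. The easy case is when $C$ already contains a tab vertex: here the conclusion is immediate. In particular, if the $1$-simplicial edge itself is a tab edge, then its endpoints are tab vertices of $P$ and belong to $C$, so we are done.

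For the remaining case, suppose $C=\{a,b,c,d\}$ is a maximal $4$-clique containing a $1$-simplicial edge and no tab vertex. I would first characterize the maximal convex region $R\subseteq P$ whose vertices on the boundary are precisely those of $C$ (the correspondence between maximal cliques and maximal convex regions stated in \sec~2). Because $R$ is not the unit square below a tab and has only four polygon vertices on its boundary, I claim that the axis-aligned, uniform unit-step structure of the histogram forces $R$ to be a right-isosceles triangular region: two legs along polygon boundary portions (a vertical segment and an adjacent horizontal step on a staircase), and a hypotenuse of slope $\pm 1$ along a visibility diagonal that passes through collinear reflex vertices of a single staircase. The four vertices of $C$ are then the three corners of the triangle plus one polygon vertex lying on the hypotenuse, matching the triangular configuration depicted in \fig~\ref{figure:monotone-staircase-tab-candidates}.

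Given this characterization, the isolated vertex is naturally the unique vertex of $C$ at the ``apex'' of the triangle, meaning the vertex closest to the top of the pyramid on whose staircase the hypotenuse lies. I would identify the tab $t=(u_t,v_t)$ at the top of that pyramid and argue two things: (i) the apex vertex $w$ lies within the line of sight from $u_t$ or $v_t$, so $w\in(N(u_t)\cup N(v_t))\cap C$; (ii) the other three vertices of $C$ (the two remaining triangle corners and the hypotenuse vertex) are separated from $t$ by at least one full intervening staircase step, which suffices to block visibility to $u_t$ and $v_t$ in the uniform-step setting. Hence $(N(u_t)\cup N(v_t))\cap C=\{w\}$, making $w$ an isolated vertex of $C$ with respect to $t$.

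The main obstacle will be the characterization step, namely, ruling out every non-triangular configuration of a maximal $4$-clique with a $1$-simplicial edge that avoids tab vertices. The argument needs to exploit both the unit-step length (which restricts the possible shapes of convex regions with only four polygon vertices) and the histogram's monotone-with-respect-to-the-base structure (which prevents additional vertices from slipping into the common neighborhood and violating maximality). A clean way to do this is probably to consider an edge $e=(u,v)$ of $C$ and use the polygon's orthogonality to enumerate how $N(u)\cap N(v)$ can be a clique of exactly two vertices forming a maximal clique --- any configuration other than the two described (tab unit square and triangular region) would force either a $5$th common neighbor or contradict $1$-simpliciality.
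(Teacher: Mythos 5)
Your proposal identifies the right picture (the non-tab maximal $4$-cliques are the ``triangular'' regions of \fig~\ref{figure:monotone-staircase-tab-candidates}) but defers exactly the step that constitutes the lemma: proving that every $1$-simplicial edge in a maximal $4$-clique that avoids tab vertices must sit in such a configuration, and that the configuration yields an \iso{} \emph{on the edge itself}. You flag this characterization as ``the main obstacle'' and only sketch how one might attack it (``consider an edge $e=(u,v)$ of $C$ and \ldots enumerate''), which is precisely what the paper's proof actually carries out: it shows $|\ell_u-\ell_v|\ge 2$ (else the edge lies in an axis-aligned rectangle with at least six polygon vertices, so $|C|\ge 6$), deduces that the lower endpoint $v$ is a \bottomV{} and hence sees a vertex of some tab $t$, and then runs a case analysis on whether $u$ and $v$ lie on staircases of the same tab (always forcing $|C|\ge 5$) or of different tabs, and on the local boundary structure at $u$, concluding in each surviving case that $t$ sees no vertex of $C$ other than $v$. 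Without that analysis the proposal is essentially a restatement of the claim.

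Two further points would need repair even if the characterization were granted. First, your ``easy case'' discharges the situation where the \emph{clique} contains a tab vertex, but the lemma asserts that the \emph{edge} contains a tab vertex or an \iso; a tab vertex elsewhere in $C$ does not immediately give either (the paper instead assumes only that neither endpoint of the edge is a tab vertex, and you would likewise need the \iso{} you produce to be an endpoint of the edge, which your apex argument does not check). Second, the characterization as stated is not quite right: a hypotenuse ``passing through collinear reflex vertices of a single staircase'' would place all of those reflex vertices on the boundary of the convex region and hence into $C$, contradicting $|C|=4$; and in the surviving configurations (e.g.\ $C=\{v,u,u',u''\}$ with $u$ reflex, $u'$ its convex horizontal neighbor and $u''$ below $u'$) the isolated vertex is the \emph{lowest} vertex $v$ --- the endpoint of the $1$-simplicial edge lying on a different pyramid from the other three, witnessed by the tab that $v$ alone sees --- not a vertex ``closest to the top'' of a pyramid.
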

\ifSketch
\begin{proof}[\sketch]
Figure~\ref{figure:monotone-staircase-tab-candidates} shows the only types of maximal $4$-cliques. \ifAppendix Formal proof is in Appendix~\ref{subsection:islated-vertices}.\fi{}
\end{proof}
\else
\begin{proof}
Let $\ell_u$ denote the \emph{level} of vertex $u$ in $P$, which is its $y$-coordinate assuming all uniform edges have unit length, where the base vertices are at level 0. 
Consider an arbitrary $1$-simplicial edge $(u,v)$ that is part of a maximal $4$-clique $C$. We assume that neither $u$ nor $v$ is a tab vertex, otherwise we are done. Note that if $|\ell_u - \ell_v| \le 1$ then $u$ and $v$ are in an axis-aligned rectangle in $P$ defined by at least six vertices of $G_P$ and thus $|C| \ge 6$. Thus, without loss of generality, we assume that $\ell_u - \ell_v \ge 2$ and $u$ lies to the right of $v$ (the case of $u$ lying to the left of $v$ can be proven symmetrically). Since a \topV does not see any vertices above it, $v$ must be a \bottomV. Thus, $v$ sees a vertex of some tab $t$. We will show that $t$ cannot see any other vertex of $C$. Let $R$ be a set of reflex vertices on $v$'s staircase on levels $\ell_{v}+1$ up to and including $\ell_u$. Observe, that $v$ cannot be part of the maximal $4$-clique that contains both vertices of $t$, hence, $1 \le |R| \le \ell_u - \ell_v$ ($|R|< \ell_u-\ell_v$ when the vertices of $t$ are below $u$).

\emph{Case 1:} $u$ and $v$ are vertices of an up- and down-staircase of $t$. (See \fig~\ref{figure:cliques-in-tab-staircases}.)
\begin{enumerate}
\item {\bf $u$ is a \topV:} $u$ and $v$ must lie on different staircases, and $(u,v)$ is in a clique consisting of $u$, $(\ell_u-\ell_v) $ reflex vertices on $v$'s staircase, $v$, and $u$'s two (reflex) boundary neighbors and thus $|C|\geq 6$.

\item {\bf $u$ is a \bottomV:} Then $(u,v)$ is in a convex quadrilateral (clique) consisting of at least five vertices: $(\ell_u-\ell_v+1) \geq 3$ \bottomVs on $v$'s (up-)staircase from levels $\ell_v$ through $\ell_u$, and on the opposite (down-)staircase: a \bottomV on level $\ell_u$ and \topV on level $\ell_u+1$.
\end{enumerate}

\emph{Case 2:} Vertices $u$ and $v$ belong to up- and down-staircases of different tabs. Then we call $(u,v)$ a \emph{crossing edge}. Consider the following cases: 
\begin{enumerate}
  \item {\bf $u$ is reflex:} Let $(u, u')$ be a horizontal boundary edge. 
    \begin{enumerate}
      \item {\bf $u'$ is convex:} Let $(u', u'')$ be a vertical boundary edge. Then $v$ sees $u''$, $C = \{v, u, u', u''\}$ and $t$ does not see $u, u',$ or $u''$.
      \item {\bf $u'$ is reflex:}   Either some vertex in $R$ sees $u$ and $u'$ (and, consequently, is in $C$) or there is a vertex $w \in C$, such that line segments $\overline{vw}$ and $\overline{wu}$ define the boundary of the convex region of $C$ which exclude the vertices of $R$.  At the same time, there must be at least one vertex $w' \in C$, bounding the convex region of $C$ on the right (e.g. by line segments $\overline{u'w'}$ and $\overline{w'v}$. Either way, $|C|  > 4$.
    \end{enumerate}
  \item {\bf $u$ is convex:}  Let $(u, u')$ and $(u'', u)$ be the vertical and horizontal boundary edges, respectively. Since $v$ sees $u$ and $u'$ is below $u$, $v$ must see $u'$, i.e. $u' \in C$, but $t$ does not see $u'$. 
    \begin{enumerate}
      \item  {\bf $v$ does not see $u''$:} There must be a reflex vertex $w \in C$, that blocks $v$ from seeing $u''$. Note that $u$ sees both $v$ and $u''$ and consequently cannot belong to $v$'s double staircase, i.e. $t$ does not see $w$. Thus $C = \{v, u, u', w\}$ and $t$ sees $v$, but not $u$, $u'$, or $w$.

      \item {\bf $v$ sees $u''$:} In this case, either some vertices of $R$ are in $C$ or there is some other vertex $w \in C$ blocking them from $u$, $u'$ or $u''$. In either case, since $\{v, u, u', u''\} \subsetneq C$, $|C| > 4$.  
    \end{enumerate}%
\end{enumerate} %
\end{proof}

\fi 

\ifFull

We now show how to compute the tabs from the $1$-simplicial edges.
\fi

\newcommand{\test}{\ensuremath{V_{12}}}   
\newcommand{\csim}{\C_{\mathrm{sim}}}

\ifFull
\begin{figure}[!tb]
\begin{center}
\includegraphics[width=.8\textwidth]{figures/exclude-support-cliques}
\caption{The possible configurations of tab cliques (red) and staircase support cliques (gray) of size $4$.} 
\label{figure:exclude-support-cliques}
\end{center}
\end{figure}
\fi

\newcommand{\esim}{E_{\mathrm{sim}}}
\begin{lemma}\label{lemma:monotone-tab-id}
In a visibility graph of a \hpolygon, tabs can be computed in time $O(n^2m)$.
\end{lemma}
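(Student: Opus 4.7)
The plan is a two-stage filter. First, I would generate a superset $\csim$ of candidate tabs using Observation~\ref{claim:simplicial-k-clique-time}. Second, I would prune the non-tab candidates (the staircase support cliques of Figure~\ref{figure:exclude-support-cliques}) using the characterization of Lemma~\ref{lemma:isolated-vertices}.

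For the first stage, I iterate over each edge $e = (u,v) \in E_P$ and apply Observation~\ref{claim:simplicial-k-clique-time} with $k = 4$, which in $O(n)$ time decides whether $e$ is $1$-simplicial and lies in a maximal $4$-clique, and if so returns that clique $C_e$. Storing each resulting pair $(e, C_e)$ in $\csim$ costs $O(nm)$ time overall. Every genuine tab appears in $\csim$ because the unit square beneath a tab edge is a maximal $4$-clique and contains the tab as a $1$-simplicial edge, so $\csim$ is indeed a superset of the tabs.

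For the second stage, I detect non-tab candidates via Lemma~\ref{lemma:isolated-vertices}: if $e \in \csim$ is not a tab, then there exists a real tab $t' = (u', v')$ of $P$ such that $(N(u') \cup N(v')) \cap C_e$ is a singleton. Because every tab is already in $\csim$, it suffices to range $t'$ over $\csim$. I iterate over all ordered pairs $(e, t') \in \csim \times \csim$ and, in $O(n)$ time per pair (via a single pass over $N(u')$ and $N(v')$ restricted to the four vertices of $C_e$), test whether the intersection has exactly one element; if so, $e$ is flagged and discarded, and the survivors are output as the tabs.

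The correctness argument, which I expect to be the main obstacle, is showing that no genuine tab is ever flagged. For any two tabs $t, t'$ of $P$ I would argue geometrically that the endpoints $u', v'$ of $t'$ either fail to see any vertex of the unit square $C_t$ (when $t'$ is separated from $t$ by an intervening higher region of the \hpolygon) or see both tab vertices of $t$ simultaneously, since those two vertices form a single unit-length horizontal segment at the top of a rectangle in the decomposition and any line of sight into $C_t$ that reaches one of them from outside the square reaches the other as well. Hence $|(N(u') \cup N(v')) \cap C_t| \neq 1$ and true tabs survive; I would formalise this via the same style of case analysis used to establish Lemma~\ref{lemma:isolated-vertices}. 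For the running time, stage~(i) is $O(nm)$, and $|\csim| = O(n)$ since each of the at most $n$ tab vertices is incident to only $O(1)$ candidates in $\csim$ (the tab itself together with the constant number of adjacent support cliques from Figure~\ref{figure:exclude-support-cliques}); hence stage~(ii) performs $O(n^2)$ intersection tests of $O(n)$ time each, for $O(n^3)$, and the total is $O(nm + n^3) = O(n^2 m)$ as $m \geq n-1$.
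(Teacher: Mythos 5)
Your first stage and the overall two-stage plan match the paper's, but your second stage rests on a misreading of Lemma~\ref{lemma:isolated-vertices}, and this leaves a real gap. That lemma is a disjunction: every $1$-simplicial edge in a maximal $4$-clique contains \emph{either} a tab vertex \emph{or} an \iso. Your pruning rule (``if $e$ is not a tab, then some real tab $t'$ has $|(N(u')\cup N(v'))\cap C_e|=1$'') only catches candidates of the second kind. But the candidate set also contains non-tab edges that satisfy the lemma by being \emph{incident to a tab vertex} --- the ``triangular'' cliques of \fig~\ref{figure:monotone-staircase-tab-candidates}, each formed by one tab vertex, the two \bottomVs of its tab rectangle, and one further vertex of the opposite staircase one level below. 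These edges are $1$-simplicial, lie in maximal $4$-cliques, and are not tabs, yet nothing guarantees their cliques contain an \iso; your filter lets them through, and your algorithm would output up to two spurious ``tabs'' per genuine tab. The paper's proof devotes an entire second phase to exactly these survivors: after the {\iso} test has removed the crossing-edge cliques, the remaining cliques fall into $k$ groups of at most three mutually intersecting cliques, and a separate case analysis on the group size ($|S|\in\{1,2,3\}$, comparing neighborhood sizes of the competing endpoints) selects the one true tab in each group. Some version of this step is indispensable and is missing from your proposal.

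Two smaller problems. First, your soundness argument (no genuine tab is ever flagged) only considers witnesses $t'$ that are genuine tabs, whereas your loop ranges $t'$ over all of $\csim$, which includes crossing edges and the support cliques above; you would also need to show that none of \emph{those} sees exactly one vertex of a true tab clique. (Moreover, the specific claim that the endpoints of another tab ``see both tab vertices of $t$'' cannot be right: tab vertices are \topVs, and only \bottomVs see tab vertices, so another tab's endpoints see none of the top vertices of $C_t$; the relevant question is how many of the two \bottomVs of $C_t$ a witness sees.) Second, your bound $|\csim|=O(n)$ is justified by asserting that every candidate is incident to a tab vertex, which is false precisely for the crossing edges --- by Lemma~\ref{lemma:isolated-vertices} those are the candidates containing an \iso{} rather than a tab vertex --- so your $O(n^3)$ count for the pairwise tests is unsupported as stated; the paper instead charges $O(n^2)$ per edge over $|\esim|=O(m)$ candidates to reach the claimed $O(n^2m)$.
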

\ifSketch
\begin{proof}[\sketch]
We find all maximal 4-cliques in $O(nm)$ by Observation~\ref{claim:simplicial-k-clique-time} and detect and eliminate those containing \isos in $O(n^2m)$ time. \ifAppendix For details see Appendix~\ref{subsection:monotone-tab-id}.\fi{}
\end{proof}
\else
\begin{proof}

See \fig~\ref{figure:monotone-staircase-tab-candidates}.
We begin by computing all $1$-simplicial edges in maximal $4$-cliques, which takes time $O(nm)$ by Observation~\ref{claim:simplicial-k-clique-time}. Call this set of edges $\esim$, and the set of their maximal cliques $\csim$. Then $\esim$ contains the tabs, some edges that share a vertex with the tabs, and edges between staircases of different tabs (crossing edges) (which contain {\isos} by Lemma~\ref{lemma:isolated-vertices}). For all (non-incident) pairs of $1$-simplicial edges $e_1$ and $e_2$ in maximal $4$-cliques $C_1$ and $C_2$, respectively, we check if exactly one vertex of $C_2$ can be seen by an endvertex of $e_1$. That is, we compute the set $\test =\{v\in C_2\mid (u,v)\in E \text{ and } u\in e_1\}$ and verify that $|\test|=1$. If $e_1$ is a tab, then $C_2$ contains an \iso, and is detected as a non-tab clique. Thus, if we compare all pairs of edges and cliques, all $4$-cliques containing crossing edges will be eliminated. We can do this check in $O(nm)$ time by first storing, for each vertex $u$, the edges $\{(u,v)\in \esim\}$ and cliques $\{C\in\csim\mid u\in C\}$. Then for each edge $(u,v)$ in $G_P$, we run the {\iso} check for each pair of edges and cliques stored at the endvertices $u$ and $v$. Each check of all pairs takes $O(n^2)$, and we do this for $|\esim| = O(m)$ edges.

If only disjoint cliques remain after the previous step, then we have computed exactly all $k$ tabs. Otherwise, we need to eliminate non-tab edges that share a tab vertex. Note that non-tab edges form cliques along the staircase incident to the tab. Since staircases in a \hpolygon~are disjoint, 
non-tab cliques only intersect where they intersect a tab clique. Therefore, the remaining set of cliques can be split into $k$ mutually disjoint sets, where $k$ is the number of top tabs, and each set has at most three cliques that intersect (see \fig~\ref{figure:monotone-staircase-tab-candidates}), of which exactly one is a tab clique, and the other at most two cliques contain a tab vertex, and its non-tab neighbors on the opposite staircase.
Let $S$ be one of these $k$ sets. We can compute the tab in $S$ as follows (3 cases):
\begin{enumerate}
\item ($|S|=1$) Then the tab vertices see fewer vertices than non-tab (reflex) vertices. We can see this as follows: tab vertices see the vertices in their tab clique, plus the \bottomVs on the tab's staircases. The non-tab reflex vertices see the same vertices, plus at least two other vertices at their same level, which tab vertices cannot see.
\item ($|S|=2$) The two cliques in $S$ share three vertices, and two vertices $u$ and $v$ that are in exactly one unique clique each. One of these vertices (say $u$) is on the tab, and sees fewer vertices than the other non-tab vertex ($v$) does. We can see this as follows: assume without loss of generality that $u$ is on the left of its double staircase. Then $u$ sees the three vertices of its tab clique, and all reflex vertices on the left staircase. Meanwhile $v$ is on the same side (left) as $u$, and sees the same vertices as $u$, plus a (convex) boundary neighbor, which is on the level between $u$ and $v$, which $u$ cannot see. The remaining tab vertex is adjacent to $v$ along its $1$-simplicial edge forming the clique in $S$.
\item ($|S|=3$) The cliques intersect in a symmetric pattern. The tab edge is formed between the two vertices that are in exactly two of these maximal cliques.
\end{enumerate}
There are at most $3k = O(n)$ of these overlapping cliques in total, and they can be separated into their respective $k$ disjoint sets in time $O(k)$ by marking the vertices of each set, and collecting the intersecting sets. Then within each set, it takes $O(1)$ time to find the tab. Thus the running time is dominated by the time to detect crossing edges in $\esim$: $O(n^2m)$.
\end{proof}
\fi{}

Note that \topVs cannot see the vertices above them. Therefore, only \bottomVs see tab vertices. Moreover, every \bottomV sees at least one tab vertex. Thus, identifying all tabs immediately classifies vertices of $G_P$ into \topVs and \bottomVs.

\ifFull
\begin{figure}[!tb]
\begin{center}
\includegraphics[width=0.9\textwidth]{figures/truncated-monotone-staircase}
\caption{Left: A \hpolygon~$A$. Right: A truncated \hpolygon, formed by starting from $A$ and iteratively removing six tab rectangles (dashed).}
\label{figure:truncated-monotone-staircase}
\end{center}
\end{figure}
\fi 

\ifLipics
\subsubsection{Peeling tabs}
\else
\subsubsection{Peeling tabs.}
\fi 
Let $P^\prime$ be a polygon resulting from peeling tab cliques (rectangles) from a \hpolygon~$P$. We call $P^\prime$ a \emph{truncated} \hpolygon. See \fig~\ref{figure:truncated-monotone-staircase-abbv}\subref{truncated} for an example.  After peeling a tab clique, the resulting polygon does not have uniform step length and the visibility graph may no longer have the properties on which Lemma~\ref{lemma:monotone-tab-id} relied to detect initial tabs. Instead, we use the following lemma to detect newly created tabs during tab peeling.  

\ifFull
\begin{figure}[!tb]
\begin{center}
\includegraphics[scale=0.6]{figures/truncated-monotone-staircase-crossing-edges}
\caption{The evolution of a crossing edge in a monotone (then truncated monotone) staircase. The crossing edge changes once it is incident to a tab vertex. As the \hpolygon~continues to be truncated, newly introduced crossing edges share the same lower vertex as disappearing crossing edges.}
\label{figure:truncated-monotone-staircase-crossing-edges}
\end{center}
\end{figure}
\fi

\ifRemove
\else
\begin{lemma}\label{lemma:truncated-isolated}
Let $P$ be a \hpolygon~and $P^\prime$ be a truncated \hpolygon~formed by repeatedly removing tabs from $P$. Then every maximal $4$-clique in $G_{P^\prime}$, which contains a $1$-simplicial edge, contains either a tab vertex in $P^\prime$ or an \iso{} with respect to the original polygon $P$.
\end{lemma}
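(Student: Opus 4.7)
The plan is to closely parallel the proof of Lemma~\ref{lemma:isolated-vertices}, leveraging the key observation that peeling preserves the geometry of the unpeeled vertices. First I would note that visibility in $P'$ implies visibility in $P$: for $u,v \in V_{P'} \subseteq V_P$, a line segment $uv$ that lies in $P'$ also lies in $P \supseteq P'$, so $C$ remains a clique in $G_P$, and every vertex of $V_{P'}$ retains its original position, its convex/reflex status in $P$, and its assignment to a double staircase of $P$. We may then assume $C$ contains no tab vertex of $P'$ and aim to produce an \iso{} of $P$ in $C$.

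The case analysis splits on whether $u$ and $v$ lie on the two sides of a common double staircase of $P$ or on staircases of two distinct double staircases of $P$ (a crossing edge in $P$). In the same-double-staircase case, I would reuse the geometric witnesses from Case~1 of the proof of Lemma~\ref{lemma:isolated-vertices}: reflex and convex vertices of $P$ at levels at most $\max(\ell_u,\ell_v)$, together with a few boundary neighbors of $u$ in $P$. Since $u,v \in V_{P'}$ are unpeeled, the peel front above their common double staircase lies at a level at least $\max(\ell_u,\ell_v)$, so no such witness has been removed. Moreover, mutual lines of sight among these witnesses stay at levels at most $\max(\ell_u,\ell_v)$ and therefore avoid the peeled region, so they still form a clique with $(u,v)$ in $G_{P'}$. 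This forces $|C|\ge 5$, contradicting $|C|=4$, unless $(u,v)$ is the top edge of a decomposition rectangle. In that situation the edge is either an unpeeled original tab of $P$ (still a tab of $P'$) or a newly exposed tab of $P'$, so $C$ contains a tab vertex of $P'$, again contradicting our assumption.

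In the crossing-edge case, I would invoke the subcase analysis of Case~2 of the proof of Lemma~\ref{lemma:isolated-vertices}, which, for each combination of convex/reflex status of $u$ and orientation of its incident boundary edges of $P$, exhibits an explicit original tab $t$ of $P$ and a distinguished vertex $w \in C$ such that in $P$, $w$ sees an endpoint of $t$ while no other vertex of $C$ does. Because this analysis depends only on positions in $P$ and visibility in $P$, both of which are preserved under truncation, it transfers verbatim and $w$ is the desired \iso{} of $C$ with respect to $P$. The main obstacle I anticipate is rigorously verifying the same-double-staircase case when $(u,v)$ is a newly exposed tab of $P'$: such tabs may exceed unit length and the surrounding geometry differs from the original unit-step-length setting, so I must confirm that the corresponding tab clique in $G_{P'}$ is precisely a $4$-clique consisting of the two tab endpoints together with the two bottom corners of the revealed decomposition rectangle.
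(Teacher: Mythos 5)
Your opening reduction (visibility in $P'$ implies visibility in $P$, so $C$ stays a clique of $G_P$) is sound, and your treatment of the same-double-staircase case is essentially workable, but the crossing-edge case contains the real gap. You assert that the Case~2 analysis of Lemma~\ref{lemma:isolated-vertices} ``transfers verbatim'' because positions and visibility in $P$ are preserved under truncation. Visibility is \emph{not} preserved under truncation: peeling deletes vertices and can only cut lines of sight. The subcases of Case~2 that conclude $|C|>4$ (where $u'$ is reflex, or where $v$ sees $u''$) certify this with witnesses --- the reflex vertices of $R$ lying above $v$ on $v$'s staircase, or the bounding vertices $w,w'$ --- all of which can be peeled away while $u$ and $v$ both survive (e.g., peel $v$'s double staircase down to the rectangle containing $v$). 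When that happens, the maximal clique of the crossing edge in $G_{P'}$ can shrink to exactly four vertices, and this is a \emph{new} maximal $4$-clique that the verbatim case analysis never examined; you have exhibited no \iso{} for it. These newly created $4$-cliques are precisely what the lemma exists to control, and they are where the paper's proof does its work: it argues globally that peeling creates no new $1$-simplicial edges (no new convex vertices appear) and that an existing $1$-simplicial edge whose maximal clique has size greater than four keeps a clique of size greater than four \emph{unless} one of its endpoints becomes a new tab vertex, which reduces everything else to Lemma~\ref{lemma:isolated-vertices} applied to $P$. Your proposal needs an analogue of that monotonicity step.

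A secondary, fixable imprecision: in the same-double-staircase case your claim that every witness lies at level at most $\max(\ell_u,\ell_v)$ is false (the subcase where $u$ is a \bottomV{} uses a \topV{} at level $\ell_u+1$), and a witness that is a boundary neighbor of a \topV{} $u$ lives in a \emph{child} rectangle of $u$'s, which can be peeled at a branching node of the contact tree even though $u$ survives. In each such instance the escape is that $u$ then becomes (or borders) a new tab of $P'$, but that needs to be said rather than excluded by the level bound. Your closing worry about newly exposed tabs of non-unit length is legitimate and is exactly what Lemma~\ref{lemma:truncated-tab} addresses, so that part of your plan is on target.
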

\begin{proof}[Sketch] 
Observe that the tab-clique removal process does not introduce new $1$-simplicial edges, because it does not introduce new convex vertices. Consider an existing $1$-simplicial edge $(u,v)$ and the maximal clique $C$, which contains it and is of size greater than $4$. Either $u$ or $v$ must be a convex non-base vertex; w.l.o.g. let it be $v$. Observe, that the tab-clique removal process reduces $v$'s degree only if it removes vertices from level $\ell_v$. It is easy to show that the size of $C$ remains greater than four unless $v$ becomes a new tab vertex.  Thus, the tab-clique removal process does not introduce new maximal $4$-cliques containing $1$-simplicial edges, unless the clique contains a newly created tab vertex due to the tab-clique removal process.

Now consider a maximal $4$-clique $C \in G_{P^\prime}$ that contains a $1$-simplicial edge. If $C$ does not contain a tab vertex in $P^\prime$, then it must have also been a maximal $4$-clique containing a $1$-simplicial edge in $P$. By Lemma~\ref{lemma:isolated-vertices} it must contain an \iso{} with respect to $P$.
\ifFull


\fi
\end{proof}
We now show that when we remove a tab $t$ and its common neighborhood from $G$, we can detect if a new tab $t^\prime$ is introduced in time $O(n)$. 
\fi 

\begin{figure}[!tb]
\begin{center}
\subfloat[]{\includegraphics[scale=0.7]{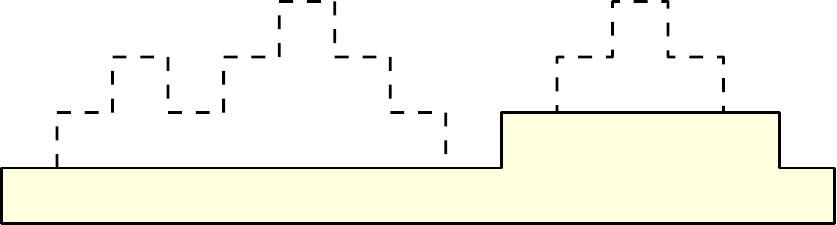}\label{truncated}}\hspace{0.8cm}
\subfloat[]{\includegraphics[scale=0.7]{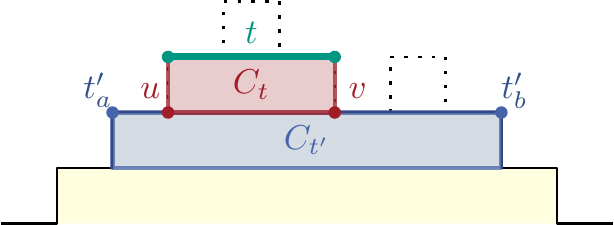}\label{tab-discovery}}
\caption{(a) A truncated \hpolygon, created by iteratively removing six tabs (dashed) from a \hpolygon. (b) When removing $C_t$: $t^{\prime}_a,t^{\prime}_b$ form a tab iff $t^{\prime}_a,t^{\prime}_b\in T\setminus C_t$, they see $u$ and $v$, and $|C_{t^\prime}|=4$.}
\label{figure:truncated-monotone-staircase-abbv}
\end{center}
\end{figure}

\begin{lemma}
\label{lemma:truncated-tab}
When removing a tab clique from the visibility graph of a truncated \hpolygon, any newly introduced tab can be computed in time $O(n)$.
\end{lemma}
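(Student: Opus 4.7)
My plan uses the iff characterization depicted in \fig~\ref{figure:truncated-monotone-staircase-abbv}\subref{tab-discovery}: writing $C_t = \{t_a, t_b, u, v\}$, where $u, v$ are the two non-tab vertices of the removed $4$-clique, a pair $(t^\prime_a, t^\prime_b)$ is a newly introduced tab iff both vertices remain in the truncated graph, both see $u$ and $v$, and the maximal clique containing $(t^\prime_a, t^\prime_b)$ in the truncated graph has size exactly $4$. First, I would compute the candidate set $W = (N(u) \cap N(v)) \setminus C_t$ in $O(n)$ time by a linear intersection of the adjacency lists of $u$ and $v$; by the characterization, every new tab endpoint must lie in $W$.

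Next, I would argue that only $O(1)$ pairs in $W$ need to be verified. Structurally, removing $C_t$ can produce at most one new tab: it corresponds to the top edge of the parent rectangle in the rectangular decomposition of the \hpolygon, and this parent becomes a leaf precisely when $C_t$'s leaf was its sole remaining child. The two new tab endpoints are then the convex corners at the top of this parent rectangle, and they are distinguished within $W$ by the fact that their common neighborhood in $G_P$ equals exactly $C_t \cup \{u^\prime, v^\prime\}$, where $u^\prime, v^\prime$ are the parent's two bottom corners. Every other pair in $W$ (e.g., reflex bases of sibling leaves sharing the same parent, or reflex vertices horizontally collinear with $u,v$) sees strictly more vertices in common and therefore fails the size-$4$ clique condition once $C_t$ is removed. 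For each of the $O(1)$ candidate pairs, I would verify in $O(n)$ time by computing $|N(t^\prime_a) \cap N(t^\prime_b) \setminus C_t|$ and confirming it equals $2$, which together with $(t^\prime_a, t^\prime_b) \in E$ certifies the $4$-clique. Combined with Lemma~\ref{lemma:truncated-isolated} (guaranteeing that newly created tabs remain detectable as $1$-simplicial edges in maximal $4$-cliques), this yields overall $O(n)$ detection time.

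The hard part will be formally establishing the $O(1)$ bound on candidate pairs without enumerating all of $W$, which can itself have size $\Theta(n)$ in the worst case (e.g., when many sibling columns exist). I would prove this by a case analysis exploiting the monotonicity and columnar structure of the \hpolygon: any pair $(a,b) \in W \times W$ other than the parent's top corners inherits additional common neighbors (reflex vertices at the same level as $u,v$, tab vertices of sibling leaves, or interior vertices of the parent visible through horizontal or vertical corridors) that survive the deletion of $C_t$ and push $|N(a) \cap N(b) \setminus C_t|$ strictly above $2$.
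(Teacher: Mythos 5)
Your high-level characterization is the same as the paper's (the candidate tab consists of the two top corners of the parent rectangle, and it becomes a tab exactly when its maximal clique shrinks to size $4$), but your proof has a genuine gap exactly where you flag it: you never actually establish how to isolate $O(1)$ candidate pairs from $W = (N(u)\cap N(v))\setminus C_t$ in $O(n)$ time. The set $W$ can have $\Theta(n)$ vertices and hence $\Theta(n^2)$ pairs, and the distinguishing test you propose (comparing common neighborhoods of a pair against a target set) is itself a pairwise condition costing $O(n)$ per pair, so applying it to narrow the candidates is circular --- it presupposes the very $O(1)$ bound you are trying to prove. There is also a small factual slip: the common neighborhood of the new tab endpoints $t^\prime_a, t^\prime_b$ cannot equal $C_t\cup\{u^\prime,v^\prime\}$, since $t^\prime_a$ and $t^\prime_b$ are \topVs at the level of $u$ and $v$ and therefore do not see the old tab vertices $t_a, t_b$, which lie strictly above them.

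The paper closes this gap with one observation you are missing: by Lemma~\ref{lemma:decomposition}, every vertex is already classified as a \topV or a \bottomV, and only \topVs can be tab endpoints. Since a \topV sees nothing above its own level, the only \topVs outside $t$ that see both $u$ and $v$ are the two \topVs at the same level as $u$ and $v$, namely the top corners of the parent rectangle. So a single linear scan of $N(u)\cap N(v)$, filtered to \topVs, yields the \emph{unique} candidate pair $(t^\prime_a,t^\prime_b)$ directly --- no pairwise elimination over $W$ is needed. That edge is automatically $1$-simplicial (its endpoints are the two \topVs of one rectangle), and one further $O(n)$ computation of $N(t^\prime_a)\cap N(t^\prime_b)$ checks whether the clique has size $4$ after the removal. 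If you add the top/bottom filtering step, your argument collapses to the paper's and the $O(n)$ bound follows; without it, the running-time claim is unsupported.
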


\begin{proof}
Denote the removed (tab) clique by $C_t$ and let $t$ be its tab. Let $u,v \not \in t$ be the non-tab vertices of $C_t$. Since $u$ sees $v$, $(u,v)$ is an edge in $G_P$.

Since \topVs can only see vertices at and below their own level, besides the vertices of $t$, there are exactly two other \topVs in (remaining) $G_P$ that see $u$ and $v$, namely, the \topVs $t^\prime_a$ and $t^\prime_b$ of $P$ on the same level as $u$ and $v$ (see \fig~\ref{figure:truncated-monotone-staircase-abbv}\subref{tab-discovery}). 
Since $t^\prime_a, t^\prime_b$ are adjacent in $G_P$, let $t^\prime = (t^\prime_a, t^\prime_b)$.

When removing $C_t$ from $G_P$, we can compute $t^\prime_a$ and $t^\prime_b$ in time $O(n)$ by selecting the only two \topVs adjacent to both $u$ and $v$. Since $t^\prime_a$ and $t^\prime_b$ are the \topVs of a same rectangle $R_{t'}$, edge $t^\prime$ is $1$-simplicial and is in exactly one maximal clique  $C_{t^{\prime}} = N(t^\prime_a)\cap N(t^\prime_b)$, which corresponds to the convex region $R_{t'}$. Finally, after $C_t$ is removed, $t^\prime$ is a newly created tab if and only if $|C_{t^{\prime}}| = 4$, which can again be tested in time $O(n)$ by computing $N(t^\prime_a)\cap N(t^\prime_b)$. 
\end{proof}


With each tab clique (rectangle) removal, we iteratively build the parent-child relationship between the rectangles in the contact tree $T$ as follows. Using an array $A$, we maintain references to cliques being removed whose parents in $T$ have not been identified yet. 
When a tab clique $C_t$ is removed from $G_P$, the reference to $C_t$ is inserted into $A[u]$, where $u$ is one of the rectangle's bottom vertices. If the removal of $C_t$ creates a new tab $t' = (t'_a, t'_b)$, we identify $C_{t'}$  in $O(n)$ time using Lemma~\ref{lemma:truncated-tab}. Recall that $t'$ sees all bottom vertices on the same level. Thus, for every bottom vertex $u \in N(t'_a)$ (in the original graph $G_P$), if $A[u]$ is non-empty,  we set $C_{t'}$ as the parent of the clique stored in $A[u]$ and clear $A[u]$. This takes at most $O(n)$ time for each peeling of a clique. We get the following lemma, where the time is dominated by the computation of the initial tabs:


\begin{lemma}\label{lemma:decomposition}
In $O(n^2m)$ time we can construct the contact tree $T$ of $P$, associate with each $v\in T$ the four vertices that define the rectangular region of $v$, and classify vertices of $G_P$ as \topVs and \bottomVs. 
\end{lemma}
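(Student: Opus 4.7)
The plan is to combine the three machinery pieces already established: Lemma~\ref{lemma:monotone-tab-id} for identifying the initial set of tabs, Lemma~\ref{lemma:truncated-tab} for detecting newly created tabs during peeling, and the array-based bookkeeping described in the preceding paragraph for threading parent-child pointers as the peeling proceeds. The tree $T$ is built bottom-up: its leaves correspond to the rectangles removed in the first round, and internal nodes arise from rectangles that become tab rectangles only after some of their descendants have been peeled away.

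First I would invoke Lemma~\ref{lemma:monotone-tab-id} to compute the set $\mathcal{T}$ of all initial tabs of $P$ in $O(n^2m)$ time; this also yields the maximal $4$-clique $C_t$ associated with each $t \in \mathcal{T}$, and hence the four vertices (two \topVs, two \bottomVs) labeling each of the corresponding leaves of $T$. Next, since a vertex is a \bottomV iff it sees at least one tab vertex, I can classify every vertex of $G_P$ as a \topV{} or a \bottomV{} by one linear scan of each vertex's adjacency against the (constant-per-tab) set of tab endpoints---total cost $O(n(n+k)) = O(n^2)$. This classification is fixed once and for all; it does not change as the peeling proceeds, because peeling removes vertices but never reassigns them.

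I then carry out the iterative peeling. At each step I pick an unprocessed tab $t$, record $C_t$ as a node of $T$, and remove $C_t$ from $G_P$. Using Lemma~\ref{lemma:truncated-tab}, I check in $O(n)$ time whether this removal creates a new tab $t'$; if so, I compute $C_{t'}$ and add it to the queue of tabs to process. To maintain the tree, I keep the array $A$ indexed by vertices of $P$: when $C_t$ is removed I store a pointer to it in $A[u]$ for one of its \bottomVs $u$, and when a new tab clique $C_{t'}$ is created I scan the \bottomVs of level just below $t'$ (all adjacent to $t'_a$ in the original $G_P$) and attach every non-empty $A[u]$ entry as a child of $C_{t'}$, clearing those entries. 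Each peeling step therefore costs $O(n)$, and there are at most $O(n)$ such steps since each one removes a distinct convex region of the decomposition; this contributes a further $O(n^2)$ overall.

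The main obstacle---and the reason I expect the argument to need care---is showing that the peeling process encounters every rectangle of the decomposition as a tab rectangle exactly once, so that $T$ is correctly assembled, and that the \bottomV{} used to key the array $A$ is always a \bottomV{} of the correct (unique) parent rectangle. Both facts follow from the planar contact structure of the decomposition: when a leaf rectangle is removed from $P'$, exactly one horizontal edge is exposed at the top of its parent, and that edge becomes a tab in the truncated polygon precisely when the parent's remaining leaf-descendants on that side have all been peeled, which is exactly what Lemma~\ref{lemma:truncated-tab} detects via the $|C_{t'}|=4$ condition. Summing the costs, the initial tab computation at $O(n^2m)$ dominates the $O(n^2)$ classification and $O(n^2)$ peeling, giving the claimed $O(n^2m)$ bound.
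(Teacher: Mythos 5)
Your proposal is correct and follows essentially the same route as the paper: initial tabs via Lemma~\ref{lemma:monotone-tab-id} in $O(n^2m)$ time, top/bottom classification by visibility to tab vertices, then iterative peeling with Lemma~\ref{lemma:truncated-tab} at $O(n)$ per step together with the array-$A$ parent-child bookkeeping. The paper presents exactly this argument in the prose immediately preceding the lemma (with the $O(n^2m)$ initial-tab computation dominating), so no further comparison is needed.
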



\subsection{Mapping Candidate Polygon Vertices to the Visibility Graph} 
\label{subsection:vertex-staircase-assignment}

\ifNodariNew

\newcommand{\tordered}{\ensuremath{\hat T}}
\newcommand{\pordered}{\ensuremath{\hat P}}

Let $\tordered$ correspond to $T$ with some left-to-right ordering of its leaves and let $\pordered$ be the polygon corresponding to $\tordered$. We will map the vertices of $G_P$ to the vertices of $\pordered$ by providing for each vertex of $G_P$ the $x$- and $y$-coordinates of a corresponding vertex of $\pordered$. 
Let $t_1, t_2, \dots, t_k$ be the order of the tabs in $\pordered$. Since $\tordered$ unambiguously defines the polygon $\pordered$, each node $v$ of $\tordered$ is associated with a rectangular region on the plane, and the four vertices of $G_P$ are associated with the four corners of the rectangular region. Since by Lemma~\ref{lemma:decomposition} every vertex of $G_P$ is classified as a \topV or a \bottomV, the $y$-coordinate can be assigned to all vertices unambiguously, because there are two \topVs and two \bottomVs associated with each node $v$ of $\tordered$. For every pair $p, \bar{p}$ of \topVs or \bottomVs associated with a node in $\tordered$ (we call them {\em \companion} vertices) there is a choice of two $x$-coordinates: one associated with the left boundary and one associated with the right boundary of the rectangular region.  Thus, determining the assignment of each \topV and \bottomV in $G_P$ to the left or the right boundary is equivalent to defining $x$-coordinates for all vertices in $G_P$. Although there appears to be $2^{n/2}$ possible such assignments, there are many dependencies between the assignments due to the visibility edges in $G_P$. In fact, we will show that by choosing the $x$-coordinates of the tab vertices, we can assign all the other vertices. Thus, in what follows we consider each of the $2^k$ possible assignments of $x$-coordinates to the $2k$ tab vertices.

At times we must reason about the assignment of a vertex to the left (right) staircases associated with some tab $t_j$. Given $\tordered$, the $x$-coordinates of each vertex in the left and right staircase associated with every tab $t_j$ is well-defined. 
Therefore, assigning a vertex $p$ to a left (right) staircase of some tab $t_j$ defines the $x$-coordinate of $p$. 

\newif\ifNodariRemove
\NodariRemovetrue

\ifNodariRemove
\else

In this section we assume that we have constructed the contact tree $T$ based on $G_P$ with a specific left to right ordering of the leaf nodes. We also assume that we are given a specific left-right orientation of the tab edges. We say the left tab vertex identifies the left (ascending) staircase, while the right tab vertex identifies the right (descending) staircase. Recall that the contact tree $T$ associates two \topVs and two \bottomVs with each node $u$ of $T$. Let $v$ and $w$ be the leftmost and the rightmost leaves of the subtree rooted at $u$, and let $t_v$ and $t_w$ be the tab edges in the clique associated with nodes $v$ and $w$, respectively. That means that each pair of companion vertices $p, \bar{p} \in G_P$ has a choice of being assigned to one of the two staircases: the left staircase identified by the left vertex of $t_v$ and the right staircase identified by the right vertex of $t_w$. Therefore, stating whether a vertex is assigned to a left or right staircase is sufficient to determine the $x$- and $y$-coordinates of the point on the polygon.

We associate each ascending or descending staircase with a tab vertex at its top. Using $G_P$, we determine the lengths of each such staircase. Each ordering of these staircases, combined with alternating the ascent/descent between neighbor in the ordering defines a polygonal chain $P^\prime$. Note, that not all polygonal chains define a valid \hpolygon~(see \fig~\ref{figure:monotone-orderings} for some examples). To determine if $P^\prime$  is a valid \hpolygon, it is sufficient to check that the first and the last vertices of $P^\prime$ have the same $y$ coordinate (the base edge is horizontal) and the rest of the vertices are above the base edge. Thus, in $O(n)$ time we can verify if $P^\prime$ forms a \hpolygon. 
\begin{figure}
\begin{center}
\includegraphics[width=\textwidth]{figures/monotone-orderings}
\caption{Possible orderings of tabs $a$,$b$, and $c$ and assignments of their (red and blue) staircases. (a): An ordering that forms a \hpolygon. (b): Reordering tabs $b$ and $c$ from (a) creates a polygonal chain that intersects the base edge. (c): Beginning from the ordering in (b), reassigning $c$'s staircases creates a polygonal chain such that first and last \bottomV have different $y$-coordinates.}
\label{figure:monotone-orderings}
\end{center}
\end{figure}
\fi 

In a valid \hpolygon, companion vertices $p$ and $\bar{p}$ must be assigned distinct $x$-coordinates. Therefore, after each assignment below, we check the companion vertex and if they are both assigned the same $x$-coordinate, we exclude the current polygon candidate $\pordered$ 
from further consideration. 

We further observe that in a valid \hpolygon, if a bottom vertex $p$ is not in the tab clique, then it sees exactly one tab vertex, which lies on the opposite staircase associated with that tab. Thus, we assign every such bottom vertex the left (right) $x$-coordinate if it sees the right (left) tab vertex.

Next, consider any node $v$ of the contact tree $\tordered$ and let $R_v$ define the rectangle associated with $v$ in the rectangular decomposition of a valid \hpolygon. Let $p$ be a \topV in $R_v$ and let $S(p)$  be the set of vertices visible from $p$ that are not in $R_v$ ($S(p)$ can be determined from the neighborhood of $p$ in $G_P$). Observe that if $p$ is assigned the left (right) $x$-coordinate, then every vertex in $S(p)$ is a \bottomV to the right (left) of the rectangle $R_v$, none of them belongs to a tab clique (i.e., all of them are already assigned $x$-coordinates), and all of them are assigned a right (left) $x$-coordinate. Since the $x$- and $y$-coordinates of the boundaries of $R_v$ are well-defined by $\tordered$ (regardless of vertex assignment), if $S(p)$ is non-empty, we check all of the above conditions and assign $p$ an appropriate $x$-coordinate. If a condition is violated, then the current polygon candidate is invalid and we exclude it from further consideration.

Let $p$ be one of the remaining \topVs without an assigned $x$-coordinate. If the companion $\bar{p}$ is assigned an $x$-coordinate, we assign $p$ the other choice of the $x$-coordinate. Otherwise, both $p$ and $\bar{p}$ see only the vertices inside their rectangle. In this case, the neighborhoods $N(p)$ and $N(\bar{p})$ are the same and we can assign $p$ and $\bar{p}$ to the opposite staircases arbitrarily.

Thus, the only remaining vertices without assigned $x$-coordinates are \bottomVs in tab cliques. Let $R$ be the rectangle defined by the tab and $S_{right}(p)$ (resp., $S_{left}(p)$) denote the set of vertices that $p$ sees among the vertices to the right (resp., left) of $R$. Consider a companion pair $p$ and $\bar{p}$ of \bottomVs that are in a tab clique. Observe that if $p$ is on the left boundary, then $S_{right}(\bar{p}) \subseteq S_{right}(p)$ or $S_{left}(p) \subseteq S_{left}(\bar{p})$. Symmetrically, if $\bar{p}$ is on the left boundary then $S_{right}(p) \subseteq S_{right}(\bar{p})$ or $S_{left}(\bar{p}) \subseteq S_{left}(p)$. Thus, if $|S_{right}(\bar{p})| \not = |S_{right}(p)|$ and $|S_{left}(p)| \not = |S_{left}(\bar{p})|$, we can assign $p$ and $\bar{p}$ appropriate $x$-coordinates. Otherwise, the neighborhoods $N(p)$ and $N(\bar{p})$ are the same, and we can assign $p$ and $\bar{p}$ to the opposite boundaries arbitrarily.

\else 

Each ordering of $T$ (created by ordering children at each node, or by ordering the leaves) induces a \hpolygon, of which at least one will have a visibility graph $G_{P^\prime}$ isomorphic to $G_P$. One strategy, then, is to construct $P^\prime$ for each possible ordering of $T$, check if $G_{P^\prime}$ is isomorphic to $G_P$, keeping the $P^\prime$. This would construct a polygon in time $O(n^2m + k!(n\log n+m + \exp((\log n)^{O(1)})))$ time using the quasipolynomial time graph isomorphism algorithm due to Babai~\cite{babai-2016}. However, without knowing the assignment of convex vertices to staircases, this procedure does not give us a labeled polygon.

As we now show, we can reconstruct a labeled polygon in time $O(n^2m + k!2^k(n\log n + m))$, which is faster for small values of $k$, and fixed-parameter tractable in $k$.

\begin{theorem}
Reconstructing \hpolygon s with $k$ tabs is fixed-parameter tractable, with running time $O(n^2m + k!2^{k}(n\log n + m))$.
\end{theorem}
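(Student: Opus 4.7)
The plan is to combine the structural decomposition of Lemma~\ref{lemma:decomposition} with a guess-and-verify strategy over the $k!\,2^k$ candidate reconstructions induced by orderings of the leaves of the contact tree $T$ together with a left-to-right orientation of each tab. First I would run the algorithm of Lemma~\ref{lemma:decomposition} to obtain $T$, the classification of all vertices of $G_P$ as \topVs or \bottomVs, and the association of a quadruple of vertices to each node of $T$; this costs $O(n^2m)$ time and is paid once.

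Next, I would enumerate all $k!\,2^k$ pairs $(\tordered,\sigma)$, where $\tordered$ is a left-to-right ordering of the $k$ leaves of $T$ and $\sigma$ fixes the left-to-right orientation of each of the $k$ tab edges. Each such pair unambiguously induces a candidate \hpolygon~$\pordered$ whose rectangular decomposition matches $T$, so the $y$-coordinates of all vertices of $G_P$ are already fixed by the \topV/\bottomV classification, and the only remaining degrees of freedom are the $x$-coordinates of the \companion pairs. I would then invoke the mapping procedure of Section~\ref{subsection:vertex-staircase-assignment}: starting from the tab $x$-coordinates dictated by $\sigma$, propagate the $x$-coordinate of every non-tab \bottomV by the unique tab-vertex it sees, then propagate to every \topV whose visibility set $S(p)$ outside its rectangle is nonempty, then to the remaining companion pairs. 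The procedure either outputs a total assignment of $G_P$'s vertices to points of $\pordered$ or detects an inconsistency and rejects $(\tordered,\sigma)$; in either case the work per candidate is $O(n+m)$.

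For each surviving candidate I would compute the visibility graph $G_{\pordered}$ of $\pordered$ in $O(n\log n + m)$ time using Ghosh and Mount~\cite{ghosh-mount-optimal-visibility}, and then, crucially, use the one-to-one labelling supplied by the mapping procedure to compare $G_P$ and $G_{\pordered}$ edge-by-edge in $O(n+m)$ time, bypassing any general graph-isomorphism call. If the two graphs agree, $\pordered$ is a valid reconstruction and we return it; otherwise we discard and continue. Since $P$ itself corresponds to at least one pair $(\tordered,\sigma)$, at least one candidate must succeed, giving correctness. The total running time is $O(n^2m)$ for the decomposition plus $k!\,2^k \cdot O(n\log n + m)$ for candidate enumeration, as claimed.

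The main obstacle is ensuring that the mapping procedure of Section~\ref{subsection:vertex-staircase-assignment} is faithful: we need that whenever the chosen $(\tordered,\sigma)$ is consistent with the true polygon $P$, the forced propagation steps (based solely on visibility with tab vertices and with vertices outside each rectangle) produce the unique correct assignment, and that whenever it is inconsistent, some propagation step detects a companion pair receiving identical $x$-coordinates or a \topV whose external visibility set contradicts $\tordered$. The remaining companion pairs with identical neighborhoods (symmetric within their rectangle) may be assigned either orientation arbitrarily without affecting $G_{\pordered}$, so the final edge-by-edge equivalence check remains well-defined. Once this faithfulness is established, the running-time analysis is immediate.
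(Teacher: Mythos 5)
Your proposal is correct and follows essentially the same route as the paper: a one-time $O(n^2m)$ construction of the contact tree and \topV/\bottomV classification via Lemma~\ref{lemma:decomposition}, enumeration of the $k!\,2^k$ leaf-orderings and tab orientations, the visibility-driven propagation of $x$-coordinates to obtain an explicit vertex labelling, and a labelled $O(n+m)$ equality check against the Ghosh--Mount visibility graph of each candidate, avoiding general graph isomorphism. The only cosmetic difference is that the paper also explicitly validates each candidate polygonal chain (no intersection with the base edge, matching endpoint $y$-coordinates) before computing its visibility graph, which your ``detect an inconsistency and reject'' step subsumes.
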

\begin{proof}
We first construct the contact tree $T$, identify tabs and staircases in time $O(n^2m)$ following Lemma~\ref{lemma:decomposition}.  For each of the $k!$ orderings of $T$ (each of which choose a left-to-right ordering of tabs), we construct $2^k$ polygonal chains, each of which potentially form a \hpolygon. We construct each one of these polygonal chains as follows. Note that each tab $t$ sees two staircases. For each $t$ we choose one to be the up-staircase, which is to the left of $t$, and one to be the down-staircase, to the right of $t$. For each of the $2^k$ orderings of staircases, we compute the polygonal chain induced by a left-to-right ordering of tabs and staircases in $O(n+m)$ time by iterating through the staircases left to right and assigning $x$, $y$ coordinates to each \bottomV (which determines the coordinate of its \topV boundary neighbors). We then check that this polygonal chain is a \hpolygon: that is, it does not self intersect (sufficient to check for intersections with the base edge) in $O(n)$ time, and the first and last \bottomV have the same $y$ coordinate (the base edge is horizontal). If not, we stop evaluating this polygonal chain. Otherwise, call the resulting \hpolygon~$P^{\prime}$.

Let the vertices in $G_P$ be labeled from $1$ to $n$. Note that, each \bottomV (except those in the tab cliques) is assigned a vertex in $P^{\prime}$, therefore these vertices maintain their labels. We now assign the remaining vertices. Note that each \topV $v$ sees all vertices in its rectangle of the decomposition and zero or more \bottomVs outside of its rectangle. We therefore assign each \topV $v$ to be on the left (right) of its rectangle when its neighbors outside of its rectangle are to the right (left) of the left (right) side of the rectangle. This assigns $v$ the $x$ coordinate of the \topV on the left (right) side of its rectangle, respectively. If the \topVs have the same neighborhoods, then we assign them to left and right arbitrarily. If a \topV has neighbors to the left and right, then visibility is not consistent with $G_P$ and we stop evaluating.We assign \bottomVs from each tab clique similarly: the \bottomV is left (right) if it can see a bottom vertex to the right (left) the tab's staircases. This gives us a mapping of all vertices in $G(P)$ to vertices in $P^{\prime}$.

We then compute the visibility graph $G_{P^\prime}$ in time $O(n\log n + m)$ with the algorithm of Ghosh and Mount~\cite{ghosh-mount-optimal-visibility}, which maintains the labeling of the vertices in $P^\prime$. We then check in time $O(n+m)$ if $G_P$ and $G_P'$ are equal, by checking that vertices with the same label have the same labeled neighbors.
\end{proof}

\fi  

\begin{figure}[!tb]
\begin{center}
\includegraphics[scale=0.6]{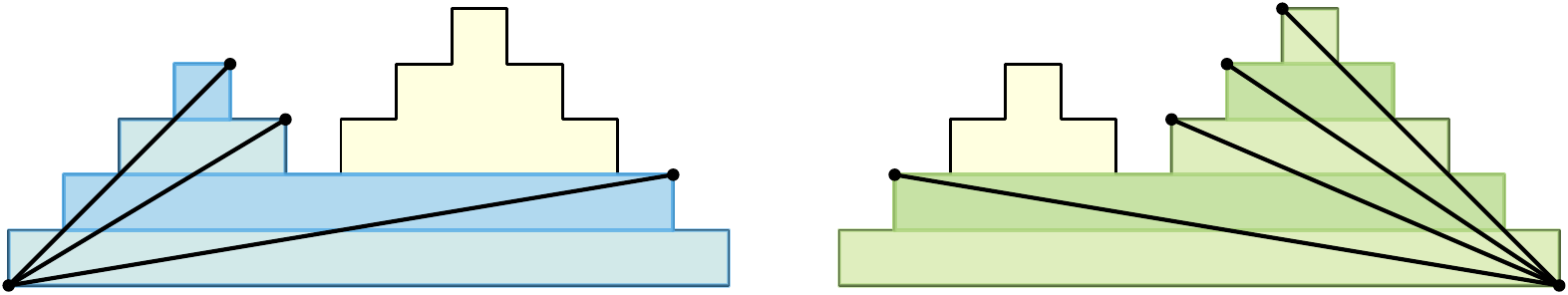}
\caption{Visibility from the left (right) base vertex determines the left- (right-)most tab, and orients all rectangles on the left (right) spine of the contact tree.}
\label{figure:leftmost-and-rightmost-paths}
\end{center}
\end{figure}

\newcommand{\vvbot}{v_{\mathrm{bot}}}
\newcommand{\vvtop}{v_{\mathrm{top}}}

\subsection{Reducing the Number of Candidate \Hpolygon s}
\label{subsection:reducing-candidates}
\iffalse
We can reduce the number of possible orderings by considering
only those that meet certain visibility constraints.
We say that a \bottomV $\vvbot$ in rectangle $R_1$ and \topV $\vvtop$ in $R_2\neq R1$ are \emph{orientation-fixed} if $\vvbot$ can see $\vvtop$.
Then these two vertices must be on opposite staircases. That is, if one vertex is on a up-staircase, the other is on a down-staircase.

Note that every \bottomV (except the \bottomV in a tab clique) is orientation-fixed with some tab vertex, as each such \bottomV sees exactly one tab vertex. And likewise each tab vertex is orientation-fixed with every \bottomV not in its tab clique. Notice then that choosing a left-to-right ordering of all the tabs fixes the staircases of all bottom vertices. 
Likewise, fixing a \bottomV $\vvbot$ fixes the staircases of all \bottomVs that are orientation-fixed with a tab vertex.
\else

We can reduce the number of possible orderings of tabs and staircases by considering only those that meet certain visibility constraints on the vertices that form the corners of each rectangle.
In particular, we say that two rectangles $R_1 \neq R_2$ in the decomposition are \emph{orientation-fixed} if a \bottomV $\vvbot$ from one can see a \topV $\vvtop$ of another. 
Then these rectangles must be oriented so that $\vvbot$ and $\vvtop$ are on opposite staircases (an up-staircase and a down-staircase). Thus, fixing an orientation of one rectangle fixes the orientation of the other.

Note that every rectangle is orientation-fixed with some leaf rectangle (as its \bottomV can see a tab vertex). Therefore, ordering (and orienting) the leaves induces an ordering/orientation of the tree. There are $O(k!2^k)$ such orderings (and orientations) for all leaf rectangles, where $k$ is the number of tabs.

For double staircases, $T$ is a path and the root rectangle is orientation-fixed with every other rectangle (a base vertex is seen by every \topV). Hence, orienting the base rectangle determines the positions of the \topVs on the double staircase.
\fi 
Likewise, for the \hpolygon, the spines of $T$ are fixed:

\begin{lemma}
The base rectangle of a \hpolygon~is orientation-fixed with all rectangles on the left and right spines of $T$.
\end{lemma}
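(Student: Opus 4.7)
The plan is to show that the left base vertex $b_L$ of $R_0$ sees the top-right corner $p_i$ of each rectangle $R_i$ on the left spine; a symmetric argument using the right base vertex $b_R$ and the top-left corner will handle the right spine. Since $b_L$ is a \bottomV of $R_0$ and each $p_i$ is a convex polygon vertex classified as a \topV of $R_i$, this will yield the orientation-fixedness required by the definition.

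Place $b_L$ at the origin. Because each $R_i$ on the left spine is the leftmost rectangle at its $y$-range, its left edge lies on the leftmost up-staircase of $P$; the uniform step length forces this staircase to advance exactly one unit rightward per unit rise, so the left edge of $R_i$ sits at $x$-coordinate $y_{i-1}^t$ throughout its $y$-range $[y_{i-1}^t,\, y_i^t]$. Since $R_i$ has width at least one, its top-right corner satisfies $x_i^r \geq y_i^t$, and the segment from $b_L$ to $p_i = (x_i^r, y_i^t)$ therefore has slope at least $1$.

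I would then argue that the segment remains in $P$ in two parts. For the left boundary: at each height $y \in (k, k+1]$ the leftmost staircase sits at $x = k$, whereas the segment's $x$-coordinate is at least $k$ by the slope bound, so the segment stays weakly to the right of the staircase. For the right side of the leftmost branch of $P$: the spine rectangles are horizontally nested ($x_{i'}^r \leq x_{i'-1}^r$ because $R_{i'}$ rests atop $R_{i'-1}$), so the right boundary $M(y)$ of the leftmost branch is non-increasing while the segment's $x$-coordinate is non-decreasing and equals $M(y_i^t) = x_i^r$ at the top; hence the segment stays within $M(y)$ throughout. Combining, the segment lies within $R_0 \cup R_1 \cup \cdots \cup R_i \subseteq P$, so $b_L$ sees $p_i$.

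The hardest step will be carefully ensuring that the segment does not slip into empty space to the right of the leftmost branch (i.e., into valleys of $P$ formed where the polygon splits into multiple branches above some level), but the monotonic nesting of the spine rectangles together with the slope bound resolves this cleanly.
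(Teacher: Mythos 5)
Your proof is correct and takes essentially the same route as the paper, whose one-line argument simply asserts that the left (resp., right) base vertex sees a top vertex of every rectangle on the left (resp., right) spine; you supply the explicit coordinate, slope, and nesting details that make this rigorous. One small slip: since $x_i^r \ge y_i^t$, the segment from $b_L$ to $p_i$ has slope at most $1$ (not ``at least $1$''), which is the bound your subsequent inequalities ($x(y)\ge y\ge k$ and $x(y)\le x_i^r$) actually rely on.
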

\ifFull
\begin{proof}
The left base vertex sees all \topVs on the left-most tab's down-staircase, and the right base vertex sees all \topVs on the right-most tab's up-staircase.
\end{proof}
\fi 

Moreover, the only tab vertices visible from a base vertex are incident to the left-most or right-most tab. Thus, we can identify the left-most and right-most tabs based on the neighborhood of the base vertices.
Note that removing a base rectangle of the \hpolygon~produces one or more \hpolygon s. Then we can apply this logic recursively, leading to the following algorithm:
\begin{enumerate}
\item Fix the orientation of the base rectangle. This identifies the rectangles on the left and right spines of $T$ and their orientations. (See \fig~\ref{figure:leftmost-and-rightmost-paths}.)
\item The remaining subtrees collectively contain the remaining rectangles, which still must be ordered and oriented. We recursively compute the ordering and orientation of the rectangles in these subtrees.
\end{enumerate}

\newcommand{\children}{C}
Note if we compute the left and right spines of $T$, 
we identify the first and last tabs, and the orientations of their tab edges. Thus, we have $(k-2)!2^{k-2}$ remaining orderings of $T$ and orientations of the tab edges to check, as $k-2$ tabs remain. This results in the overall reconstruction of a \hpolygon~with $k \geq 2$ tabs in  $O(n^2m + (k-2)!2^{k-2}(n\log n + m))$ time, proving Theorem~\ref{theorem:main}.


We now generalize the number of orderings to consider by defining a recurrence on the tree structure. Let $v\in T$, and define $\children(v)$ be $v$'s children in $T$ and $d(v)=|\children(v)|$. Then if we have a fixed orientation of $v$'s corresponding rectangle, fixing the rectangles on the left-most and 
right-most paths from $v$ limits the number of possible orderings/orientations of $v$'s descendants to
%
%
\[ F(v) \leq\begin{cases}
            2^{d(v)-2}\prod_{u\in \children(v)}{F(u)} &\text{if $d(v)>1$},\\
            F(u) &\text{if $|C(v)|=1$, s.t. $C(v)=\{u\}$},\\
            1    &\text{if $v$ is a leaf}.\end{cases}
\]
Note that $F(root)=1$ for a binary tree $T$. That is, the orientation of the base rectangle completely determines the \hpolygon. Furthermore, we can find such an orientation by fixing the orientation of the base edge, determining the left- and right-most paths, ordering and orienting them to match the base edge, and then repeating this for each subtree whose root is oriented and ordered (but its children are not), which acts as a base rectangle for its subtree. This process can be done in time $O(n+m)$ by traversing $T$ and orienting each rectangle exactly once by looking at its vertices neighbors in its base rectangle in $T$.
\darren{Would be nice to say: Once again, the time is dominated by the time to compute the $1$-simplicial edges, $O(n^2m)$.}

\begin{theorem}
\Hpolygon s with a binary contact
tree can be reconstructed in $O(n^2m)$ time.
\end{theorem}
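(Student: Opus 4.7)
The plan is to invoke the machinery developed in Sections~\ref{subsection:decomposition} and~\ref{subsection:vertex-staircase-assignment} and show that for a binary contact tree the number of candidate polygons collapses to one, so the overall cost is dominated by building the tree.

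First, I would apply Lemma~\ref{lemma:decomposition} to compute the (unordered) contact tree $T$, identify the tabs, and classify the vertices of $G_P$ as \topVs and \bottomVs in $O(n^2m)$ time. Testing whether every internal node of $T$ has exactly two children takes $O(n)$ additional time. If $T$ is not binary, we fall back on Theorem~\ref{theorem:main}; otherwise we proceed to the orientation phase.

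Next, I would argue that $F(\mathrm{root}) = 1$ by induction on the height of $T$ using the recurrence stated just above the theorem. For a leaf $F(v)=1$ by definition. For an internal node $v$ with $d(v)=2$, we have $2^{d(v)-2}=1$ and $F(v) \le \prod_{u\in \children(v)} F(u) = 1$ by the inductive hypothesis. Thus there is (up to the trivial reflection that fixes the base edge) a single valid ordering and orientation of the rectangles. I would then describe the algorithm explicitly: fix the orientation of the base rectangle using the left/right base vertices; then traverse $T$ top-down, and at each node $v$ whose rectangle has already been oriented, determine the orientation of its two children by the rule proved in the previous subsection -- each \bottomV of $v$'s rectangle sees exactly one \topV of exactly one child, and that visibility pair must lie on opposite staircases. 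This classifies one child as the left child and the other as the right child, fixing the ordering at $v$ and, together with the ordering, the left-right orientation of each child's rectangle (its left edge lies on $v$'s left spine and its right edge on $v$'s right spine). Each node is touched once and requires only constant visibility lookups against its four designated corner vertices, so this phase runs in $O(n+m)$ time.

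With $T$ fully ordered and every rectangle oriented, the assignment procedure from Section~\ref{subsection:vertex-staircase-assignment} maps every vertex of $G_P$ to a pair of $(x,y)$-coordinates of the unique candidate \hpolygon~$\pordered$, in $O(n+m)$ time. We then construct $G_{\pordered}$ in $O(n\log n + m)$ time by the algorithm of Ghosh and Mount~\cite{ghosh-mount-optimal-visibility} and compare it edge-by-edge to $G_P$ under the established labeling in $O(n+m)$ time, outputting $\pordered$ if they agree. The total cost is therefore $O(n^2m) + O(n+m) + O(n\log n + m) = O(n^2m)$. The only real obstacle is making the orientation step at each internal node robust in the degenerate case where both \bottomVs of a rectangle have identical neighborhoods (so no visibility evidence distinguishes left from right); but in that situation the two children are interchangeable by a local symmetry of $P$, so either assignment is valid and the equality check with $G_P$ will succeed.
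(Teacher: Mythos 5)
Your proposal is correct and follows essentially the same route as the paper: compute the contact tree via Lemma~\ref{lemma:decomposition} in $O(n^2m)$ time, observe that the recurrence gives $F(\mathrm{root})=1$ when every internal node has at most two children (so the orientation of the base rectangle determines the whole polygon), and propagate that orientation down the tree in $O(n+m)$ time using visibility between a rectangle's \bottomVs and the \topVs of its children before running the vertex-assignment and verification steps. The only cosmetic difference is that the paper phrases the propagation as repeatedly orienting the left- and right-most spines of each subtree, whereas you orient the two children of each node directly---for a binary tree these are the same operation, since the two children are exactly the first rectangles on the two spines.
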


\ifFull
Note that a \hpolygon~has a binary contact tree exactly when dents at the same level in $P$ are pairwise invisible or no two dents have the same level.
\fi

\ifRemove
\else
\begin{figure}[!tb]
\begin{center}
\includegraphics[width=.8\textwidth]{figures/double-staircase-cliques}
\caption{Edges disjoint from the tab are in a clique of size at least $6$.}
\label{figure:double-staircase-cliques}
\end{center}
\end{figure}

\begin{lemma}
\label{lemma:double-staircase-4-cliques}
In a visibility graph of a (truncated) double staircase polygon, all $1$-simplicial edges that are in maximal cliques of size four contain a tab vertex. 
\end{lemma}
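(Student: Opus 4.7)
The plan is to derive this lemma as an immediate specialization of Lemma~\ref{lemma:isolated-vertices} to the geometrically simpler double-staircase setting, and then to strengthen its conclusion by ruling out the \iso{} alternative. Since a double staircase has a single tab, the ``crossing edge'' situation of Case~2 in the proof of Lemma~\ref{lemma:isolated-vertices}, in which the endvertices of a $1$-simplicial edge lie on staircases belonging to two different tabs, cannot occur. So I only need to analyse the case where both endvertices $u$ and $v$ of a $1$-simplicial edge in a maximal $4$-clique $C$ lie on the up- and down-staircases of the unique tab and neither is itself a tab vertex, and show $|C|\geq 5$ in every configuration.

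Assuming without loss of generality that $\ell_u \geq \ell_v$, I would split into the two subcases handled inside Case~1 of Lemma~\ref{lemma:isolated-vertices}. When $|\ell_u - \ell_v| \leq 1$, the edge $(u,v)$ sits inside an axis-aligned rectangular convex region bounded by at least six polygon vertices, because both staircases extend at least one step above $\max(\ell_u,\ell_v)$ (neither endvertex is on the tab) and down toward the base, forcing $|C|\geq 6$. When $\ell_u - \ell_v \geq 2$, the vertex $v$ must be a \bottomV, and the maximal convex region containing $(u,v)$ contains, depending on whether $u$ is a \topV or a \bottomV, either $u$ together with the $\ell_u-\ell_v$ reflex vertices on $v$'s staircase between levels $\ell_v+1$ and $\ell_u$ plus $u$'s two reflex boundary neighbors, or a convex quadrilateral of at least five \bottomVs drawn from the two staircases between levels $\ell_v$ and $\ell_u+1$. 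In every subcase $|C|\geq 5$, contradicting $|C|=4$, so $(u,v)$ must be incident to a tab vertex.

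For the truncated case I would invoke the argument behind Lemma~\ref{lemma:truncated-isolated}: peeling a tab clique introduces no new convex vertices and hence no new $1$-simplicial edges, and the only way an existing maximal clique containing a $1$-simplicial edge can shrink to size four is if one of its endvertices becomes a newly exposed tab vertex of the truncated polygon. The main obstacle will be the careful bookkeeping in the $|\ell_u - \ell_v|\leq 1$ subcase: I must check that in the truncated double staircase both staircases still extend one step above $\max(\ell_u,\ell_v)$, so that the bounding rectangle really does contain the six vertices claimed; this holds precisely because neither endvertex is a tab of the current truncated polygon. Once this local geometric check is in place, the rest of the proof is a direct transcription of Case~1 of Lemma~\ref{lemma:isolated-vertices}.
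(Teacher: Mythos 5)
Your proof is correct, and its geometric core coincides with the paper's: both reduce to showing that any edge $(u,v)$ with no tab endpoint lies in a clique of size strictly larger than four, via the same split on $|\ell_u-\ell_v|\le 1$ versus $\ge 2$ and on whether the upper endpoint is a top or a bottom vertex. Where you differ is in organization and in the treatment of truncation. You obtain the lemma as a specialization of Lemma~\ref{lemma:isolated-vertices}, observing that with a single tab the crossing-edge case (the only source of \isos) is vacuous; the paper instead gives a short self-contained argument that exhibits for each configuration an explicit clique of size at least six (e.g.\ the trapezoid of $2(\ell_v-\ell_u+1)\ge 6$ bottom vertices when both endpoints are bottom vertices, versus the size-$5$ quadrilateral your inherited Case~1.2 produces --- both suffice). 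More substantively, for the truncated polygon the paper simply observes that lengthening the top (tab) edge of a double staircase does not change its visibility graph, so the uniform-length argument applies verbatim; you instead route through the peeling argument of Lemma~\ref{lemma:truncated-isolated} (no new convex vertices, hence no new $1$-simplicial edges, and an existing maximal clique can only shrink to size four by acquiring a newly exposed tab vertex). Both are sound, but the paper's observation is lighter: it needs no induction over the peeling process and no appeal to \isos with respect to the original polygon, and it renders unnecessary the bookkeeping you flag as the main obstacle, namely re-verifying in the truncated polygon that both staircases still extend one level above $\max(\ell_u,\ell_v)$ in the $|\ell_u-\ell_v|\le 1$ subcase.
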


\ifSketch
\begin{proof}[Sketch]
Every maximal clique of $1$-simplicial edge that does not contain a tab vertex is of size at least $6$. See \fig~\ref{figure:double-staircase-cliques}\ifAppendix{} and Appendix~\ref{section:double-staircase-appendix} for full proof\else\fi{}.
\end{proof}
\else
\begin{proof}
We show that every edge $(u,v)$, such that neither $u$ nor $v$ are top tab vertices, is in some clique of size at least $6$; that is, $(u,v)$ is not simultaneously $1$-simplicial and in a maximal clique of size $4$.
Without loss of generality, we assume that $\ell_u \leq \ell_v$. Observe that if 
$\ell_v - \ell_u\leq 1$, then line segment $\overline{uv}$ lies within an axis-aligned rectangle containing six vertices on its boundary (i.e., $(u,v)$ belongs to a clique of size $6$).
Therefore, for the rest of the proof we assume that $\ell_v-\ell_u \ge 2$. 

Since every \topV in a double staircase cannot see the vertices above its level, and since we assumed that $u$ lies below $v$, $u$ must be a bottom vertex (either reflex or incident to the base edge). Note, since $v$ lies above $u$, there is no such restriction on $v$.
Thus, we consider the  following cases:
\begin{enumerate}
\item {\bf $v$ is convex:} $u$ and $v$ must lie on different staircases, and $(u,v)$ is in a clique consisting of $u$, $(\ell_v-\ell_u) $ reflex vertices on $u$'s staircase, $v$, and $v$'s two (reflex) boundary neighbors.

\ifAbbv
\item {\bf $v$ is reflex:} note that $u$ must be a bottom vertex, as it can only be either reflex or incident to the base edge. Then $(u,v)$ is in a trapezoid (clique) consisting of the $2(\ell_v-\ell_u+1) \geq 6$ bottom vertices on levels $\ell_u$ through $\ell_v$.
\else
\item {\bf $v$ is reflex:} there are two cases to consider: 
  \begin{enumerate}
    \item {\bf $u$ is a convex vertex of the base edge:} $(u,v)$ is in a clique consisting of the two convex vertices of the base edge, plus $2(\ell_v-\ell_u)$ reflex vertices on levels $\ell_u+1$ through $\ell_v$.

  \item {\bf $u$ is reflex:} $(u,v)$ is in a clique consisting of $2(\ell_v-\ell_u+1)$ reflex vertices on levels $\ell_u$ through $\ell_v$.
  \end{enumerate}
\fi 
\end{enumerate}

Since $\ell_v-\ell_u \ge 2$, in all of the above cases, $(u,v)$ is in a clique of size at least $6$. See \fig~\ref{figure:double-staircase-cliques}.
\end{proof}
Extending the length of the top edge to be larger than the length of other steps does not affect the visibility graph, therefore, the above proof also holds for truncated double staircase polygons

\fi


\begin{figure}[!tb]
\begin{center}
\includegraphics[width=.3\textwidth]{figures/double-staircase-tab-clique}
\caption{Double staircases with three levels have three maximal $4$-cliques. The tab edge (emboldened) is formed between the vertices in exactly two of these cliques.}
\label{figure:double-staircase-tab-clique}
\end{center}
\end{figure}

\begin{lemma}[Tab identification in double staircases]\label{lemma:tab-id}
We can find the tab (and the tab clique) in a double staircase in time $O(nm)$.
\end{lemma}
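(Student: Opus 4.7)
The plan is to first compute, in $O(nm)$ time, all candidate tab edges (the $1$-simplicial edges in a maximal $4$-clique), invoke Lemma~\ref{lemma:double-staircase-4-cliques} to guarantee that every such edge has a tab endpoint, and then identify the tab edge by counting how many maximal $4$-cliques each candidate endpoint belongs to.

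First I would build the set $\esim$ of $1$-simplicial edges that lie in some maximal $4$-clique, along with their associated maximal cliques $\csim$. Observation~\ref{claim:simplicial-k-clique-time} implies that each edge can be tested in $O(n)$ time, so this takes $O(nm)$ overall. Lemma~\ref{lemma:double-staircase-4-cliques} then ensures that every edge of $\esim$ has at least one tab vertex as an endpoint. Because a double staircase has exactly one tab, every clique of $\csim$ must itself contain at least one of the two tab vertices $t_L$ and $t_R$. Consequently, $\csim$ consists of at most three maximal $4$-cliques: a tab clique $C_T$ containing both $t_L$ and $t_R$, and at most one clique $C_L$ (respectively $C_R$) that contains $t_L$ (respectively $t_R$) but not the other, as illustrated in Figure~\ref{figure:double-staircase-tab-clique}.

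Next, I would mark, for each vertex $v$ that occurs as an endpoint of some edge of $\esim$, the number of cliques of $\csim$ that contain $v$. By the structural analysis above, $t_L$ and $t_R$ each lie in exactly two cliques of $\csim$ (namely $C_T$ together with $C_L$ or $C_R$, respectively), whereas every other candidate endpoint lies in exactly one. Hence $t_L$ and $t_R$ are precisely the two vertices with clique count at least two, and the tab edge is the unique edge in $\esim$ joining them---it exists in $\esim$ because it is itself $1$-simplicial and lies in the $4$-clique $C_T$. Since $|\csim| = O(1)$ and only $O(n)$ endpoint occurrences need to be examined, this post-processing step runs in $O(n)$ time, so the whole procedure is $O(nm)$, dominated by the first step.

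The main obstacle is to justify that no non-tab endpoint lies in more than one clique of $\csim$, so that the count-based selection uniquely identifies $t_L$ and $t_R$. This requires a small case analysis of the double-staircase geometry near the top, verifying in particular that the reflex boundary neighbors of the tab vertices (which do lie in $C_T$) cannot simultaneously participate in $C_L$ or $C_R$ as members of a maximal $4$-clique. The argument is analogous to Case $|S| = 3$ of Lemma~\ref{lemma:monotone-tab-id}, but substantially simpler because the double-staircase setting admits no crossing edges between distinct tabs.
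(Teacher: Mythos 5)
Your overall strategy---compute all $1$-simplicial edges lying in maximal $4$-cliques in $O(nm)$ time and then select the tab from the intersection pattern of the resulting cliques---is the same as the paper's, but your selection rule has a genuine gap. You assert that $t_L$ and $t_R$ each lie in \emph{exactly two} cliques of $\csim$ and then pick out the two candidate endpoints with clique count at least two. That count is correct only in the special case of a double staircase with three levels (eight vertices), where $\csim$ really does consist of the tab clique $C_T$ together with two further $4$-cliques, each formed by one tab vertex, the opposite base vertex, and the two reflex vertices of $C_T$. In a double staircase with more than three levels, any non-tab $1$-simplicial edge incident to a tab vertex runs down a staircase, and its maximal clique contains all of that staircase's reflex vertices together with a base vertex and the tab vertex's reflex boundary neighbor; this clique has size at least five, so it never enters $\csim$. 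Consequently $\csim=\{C_T\}$, every candidate endpoint has clique count exactly one, and your rule returns nothing. (Your own phrase ``at most one clique $C_L$'' concedes that $C_L$ and $C_R$ may be absent, but the very next sentence assumes they are present.) The fix is a one-line case split---if $\csim$ is a single clique, its unique $1$-simplicial edge is the tab---and this is exactly the split the paper makes: generically the tab clique is the only surviving clique, and only in the three-level case does one need the ``contained in exactly two of the three cliques'' test.

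Second, the step you defer as ``a small case analysis'' is in fact the substance of the proof. Showing that $\csim$ contains at most the three cliques you list---equivalently, that every non-tab $1$-simplicial edge sharing a vertex with the tab has a maximal clique of size greater than four except when the polygon has three levels---is precisely what the paper's argument establishes, and it is also what forces the degenerate case above. Lemma~\ref{lemma:double-staircase-4-cliques} only tells you that every edge of $\esim$ touches a tab vertex; it gives you neither $|\csim|\le 3$ nor the claim that each non-tab candidate endpoint lies in exactly one clique, and both are needed for your counting rule to be sound. As written, the heart of the correctness argument is asserted rather than proved.
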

\begin{proof}
First, we note that the tab is $1$-simplicial and in a maximal $4$-clique $C$ (since the intersection of the three half-planes formed by the tab and its common neighbors is an axis-aligned rectangle containing no additional vertices).
We also note that no other edge in $C$ is $1$-simplicial, since its incident edges see a common base vertex that is not in $C$. Therefore, the only other $1$-simplicial edges must share a vertex with the tab, and contain another vertex not in $C$, which is either reflex or a base vertex. Note that such an edge extends along one of the staircases, and is therefore $1$-simplicial containing all the reflex vertices and a base vertex. Such a clique can only have size four when the double staircase has three levels. If this is the case, there are exactly three maximal $4$-cliques. These cliques intersect in a symmetric pattern: only the tab vertices are contained in exactly two of these $4$-cliques. Once we have computed the $1$-simplicial $4$-cliques, we can check this criteria in $O(1)$ time. See \fig~\ref{figure:double-staircase-tab-clique}.

Finally, we note that we can compute all $1$-simplex edges in maximal $4$-cliques in time $O(nm)$ by iterating over the edges and checking for each one if the common neighborhood of its endvertices is a clique of size $4$.
\end{proof}
\fi 



\ifRemove
\else
\begin{lemma}[Assignment of vertices to staircases]\label{lemma:left-right-assignment}
Given the visibility graph of a double staircase polygon with vertices classified into reflex, convex, tab and base vertices, we can assign them to the left and right staircase in $O(n+m)$ time.
\end{lemma}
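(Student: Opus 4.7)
The plan is to fix the tab labelling arbitrarily, designating one tab vertex as $t_L$ and the other as $t_R$, and then propagate the staircase assignment to every remaining vertex via simple adjacency tests against $t_L$ and $t_R$.

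First I would handle the base and non-tab-clique reflex vertices. In a uniform-step double staircase, each reflex vertex on a staircase $S$ sees the tab vertex on the opposite staircase (through the polygon interior) but is blocked from the same-side tab by the intervening convex corners of $S$; the only exceptions are the two reflex vertices boundary-adjacent to the tab, which see both tabs and are identifiable as the common reflex neighbours of $t_L$ and $t_R$. The same asymmetric tab visibility holds for base vertices. Hence a non-tab-clique reflex (respectively, base) vertex $v$ is LEFT iff $v$ is adjacent to $t_R$ and not to $t_L$ in $G_P$.

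Next I would assign the non-tab non-base convex vertices. Two such vertices on the same staircase cannot see each other, because the intermediate convex corner on their staircase obstructs the line of sight. So the visibility graph restricted to these vertices is bipartite; a BFS produces a $2$-colouring in $O(n+m)$ time. To align the two colour classes with LEFT and RIGHT, I would use that $t_L$'s visibility cone opens only into the right half of the polygon, so every non-tab non-base convex neighbour of $t_L$ lies on the right staircase; the colour class containing any such neighbour is therefore the RIGHT class. Finally, the two tab-clique reflex vertices $r_L$ and $r_R$ can now be distinguished: $r_L$ is adjacent to exactly one LEFT non-tab non-base convex vertex (its own-side boundary neighbour) together with every RIGHT such vertex, while $r_R$ is the symmetric case.

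The main obstacle I anticipate is the degenerate small or mirror-symmetric instances in which the alignment step is ambiguous, for example when $t_L$ has no non-tab non-base convex neighbour or when the bipartite graph on these vertices is disconnected and $t_L$ and $r_L$ together fail to orient each component. A uniform-step double staircase polygon is mirror-symmetric across its vertical axis, so in such cases every consistent assignment reconstructs a valid polygon and ties can be broken arbitrarily. Because every step consists of a constant number of adjacency tests per vertex or edge of $G_P$, the total running time is $O(n+m)$.
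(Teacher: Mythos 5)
Your treatment of the reflex and base vertices is exactly the paper's: a non-tab-clique reflex (or base) vertex is on the left staircase iff it sees the right tab vertex and not the left one, and the two reflex vertices of the tab clique are adjacent twins that may be placed arbitrarily. The gap is in your assignment of the remaining convex vertices. The visibility graph restricted to the non-tab, non-base convex vertices is not merely bipartite --- it is a perfect matching: a convex vertex at level $l$ sees exactly one other such vertex, namely the one at level $l$ on the opposite staircase (all of its remaining neighbours are bottom vertices). So for a polygon with $k$ levels this subgraph has $k-1$ single-edge components, and a BFS $2$-colouring leaves $2^{k-1}$ choices. Your anchor cannot resolve them: a tab vertex sees only the $k$ bottom vertices of the opposite staircase plus its two boundary neighbours, so $t_L$ has no non-tab, non-base convex neighbour at all, and the tab-clique reflex vertex $r_L$ sees both members of the level-$(k-1)$ pair and neither member of any lower pair. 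Your fallback --- break the remaining ties arbitrarily by appeal to mirror symmetry --- is not sound, because the reflection is a single global automorphism, not a per-component one. For $l\ge 2$ the two convex vertices at level $l$ have different neighbourhoods (one sees the bottom vertices of the right staircase at levels $0,\dots,l$, the other those of the left staircase), so flipping one pair independently of the others yields an assignment inconsistent with $G_P$ and breaks the vertex-labelled reconstruction that the recognition step relies on.

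The fix is the one the paper uses, and it is simpler than a graph colouring: once the base vertices are placed, a convex vertex is on the left staircase iff it sees the right base vertex (for $l\ge 2$ it sees exactly one of the two base vertices, the opposite one); only the level-$1$ pair --- the two top vertices of the base rectangle --- are genuine twins and may be assigned arbitrarily. This is a constant number of adjacency tests per vertex and keeps the whole procedure at $O(n+m)$.
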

\begin{proof}
Note that a double staircase is symmetric along its vertical axis. Therefore, we arbitrarily assign one of the tab vertices to the left staircase and the other one to the right staircase. We also arbitrarily assign the reflex vertices from the tab clique, as they have identical neighborhoods. Next, we assign the base vertices: a base vertex is in the left staircase iff it sees a right tab vertex. The two \topVs in the base rectangle have identical neighborhoods, so we assign them to staircases arbitrarily. All remaining vertices are assigned based on visibility with the tab and base vertices: a convex vertex is on the left staircase iff it sees the right base vertex, and a reflex vertex is on the left staircase iff it sees the right tab vertex. These assignments can be made by iterating over the neighborhood of each vertex, taking $O(n+m)$ time for all vertices.
\end{proof}

Observe that the iterative $4$-clique removal (and, consequently, the whole reconstruction algorithm) can be completed trivially in $O(n^2m)$ time by computing all $1$-simplicial cliques from scratch with each removal. Note, however, that we can avoid recomputation by maintaining a list of $1$-simplicial $6$-cliques and detecting when a clique changes to a $1$-simplicial $4$-clique (tab clique) in $O(n)$ time with each tab clique removal. We arrive at the following theorem.

\fi 


\section{From Reconstruction to Recognition}
\ifFull
Note that all of our reconstruction algorithms assign each vertex to a specific position in the constructed polygon. This is not necessarily a requirement of reconstruction algorithms in general. However, as a result, the assignment enables us to develop recognition algorithms for these polygons as well. We first note the following.

\begin{theorem}
\label{theorem:reconstruction-recognition}
If there exists a vertex assignment reconstruction algorithm $\mathcal{A}$ with time $O(f(n))$ for polygons in class $\mathcal{C}$, and further if polygons in class $\mathcal{C}$ can be recognized in time $g(n)$, then the class visibility graphs of polygons in $\mathcal{C}$ can be recognized in time $O(f(n) + g(n) + n\log n + m)$. Where $n$ and $m$ are the number of vertices and edges in the input visibility graph, respectively.
\end{theorem}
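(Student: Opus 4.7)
The plan is to show that the visibility graph recognition problem for class $\mathcal{C}$ reduces to reconstruction followed by a verification pass. Given an input graph $G = (V,E)$, I would first invoke the reconstruction algorithm $\mathcal{A}$, interpreting $G$ as if it were a visibility graph of some polygon in $\mathcal{C}$. There are two outcomes to handle: $\mathcal{A}$ either returns a candidate polygon $P$ together with a bijection between $V$ and the vertices of $P$, or it aborts because some invariant it relies on is violated. In the latter case we immediately reject $G$. This step costs $O(f(n))$.

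Next, I would verify that the candidate $P$ is legitimate. First, run the polygon-class recognition algorithm on $P$ in time $O(g(n))$ to confirm $P \in \mathcal{C}$; reject if not. Then compute the visibility graph $G_P$ of $P$ using the Ghosh--Mount algorithm~\cite{ghosh-mount-optimal-visibility} in $O(n\log n + m)$ time. The crucial observation is that $\mathcal{A}$ is a \emph{vertex assignment} reconstruction algorithm, so we already possess an explicit bijection $\phi : V \to V(P)$. Hence comparing $G$ to $G_P$ reduces to checking, for each edge $(u,v) \in E$, that $(\phi(u),\phi(v)) \in E(G_P)$, and $|E| = |E(G_P)|$. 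This takes $O(n+m)$ time using adjacency-hashing, completely avoiding any general graph isomorphism test. Accept $G$ iff the edge sets match.

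For correctness, if $G$ is the visibility graph of some $P^* \in \mathcal{C}$, then by assumption $\mathcal{A}$ produces a valid labeled reconstruction $P$ with $G_P = G$ (under the bijection), so both checks pass. Conversely, if all the checks pass, then $P \in \mathcal{C}$ and $G_P$ equals $G$ via $\phi$, witnessing that $G$ is the visibility graph of a polygon in $\mathcal{C}$. Summing the costs gives $O(f(n) + g(n) + n\log n + m)$.

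The main obstacle is the behavior of $\mathcal{A}$ on inputs that are not visibility graphs of any $P \in \mathcal{C}$: $\mathcal{A}$'s specification only promises correct output on valid inputs, so on junk it may crash, loop, or emit a malformed object. The clean way around this is to require (or wrap $\mathcal{A}$ so that) any structural assumption it makes is checked as it proceeds, and a failure simply causes $\mathcal{A}$ to return ``not a visibility graph''; the verification stage (polygon-class recognition plus the bijective edge-set comparison) then catches every remaining case where $\mathcal{A}$ happens to return a well-formed but incorrect $P$. This wrapping adds only lower-order overhead, preserving the claimed running time.
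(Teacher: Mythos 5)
Your proof is correct and follows essentially the same approach as the paper: run $\mathcal{A}$, reject on failure, recompute the visibility graph of the candidate polygon via Ghosh--Mount, and exploit the explicit vertex bijection to replace graph isomorphism with a linear-time labeled edge-set comparison (plus the $O(g(n))$ class-membership check). Your additional care about $\mathcal{A}$'s behavior on invalid inputs and the early termination when edge counts diverge matches the paper's handling of the same points.
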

\begin{proof}
Given a graph $G_P$, run $\mathcal{A}$. If $\mathcal{A}$ fails to given a vertex assignment, then $G_P$ is not in class $\mathcal{C}$. If it gives a vertex assignment, construct the visibility graph $G_{P^\prime}$ via the method of Ghosh and Mount~\cite{ghosh-mount-optimal-visibility}, which takes time $O(n\log n + m)$ with respect to the output graph. Note that the number of edges of $G_{P^\prime}$ could exceed that of $G_P$; however, we can stop the algorithm as the number of edges differ and report that $G_P$ is not in class $\mathcal{C}$. Otherwise, if the number of edges is the same, we verify that all edges are between the same vertices.
\end{proof}

Clearly, double staircases and \hpolygon s can be recognized in linear time. And since their visibility graphs have a quadratic number of edges, our reconstruction algorithms imply recognition algorithms with the same running time.
\else
We note that all of our reconstruction algorithms assign each vertex to a specific position in the constructed polygon. Let such an algorithm be called a \emph{vertex assignment reconstruction}. As a result, we get recognition algorithms for these visibility graphs as well: we run our reconstruction until it fails or completes successfully, verify that the resulting polygon has the same visibility graph in time $O(n\log n + m)$ time~\cite{ghosh-mount-optimal-visibility}, and verify that it is a polygon of the given type in linear time. Thus, we conclude that our reconstruction algorithms imply recognition algorithms with the same running times.
\fi

%


\ifFull
\newpage
\else
\ifLipics
\else
\fi 
\fi 
\bibliographystyle{abbrv}
\bibliography{orthovisibility}

\ifFull
\else
\appendix
\newpage
\section{Omitted Proofs: Uniform-Length Orthogonally Convex Polygons}
\label{section:unit-orthogonal-appendix}

\subsection{Proof of Lemma~\ref{lemma:convex-neighbor-two-cliques}}\label{subsection:two-cliques}
To prove Lemma~\ref{lemma:convex-neighbor-two-cliques} we start with the following observations:

\begin{figure}[!tb]
\begin{center}
\includegraphics[scale=0.60]{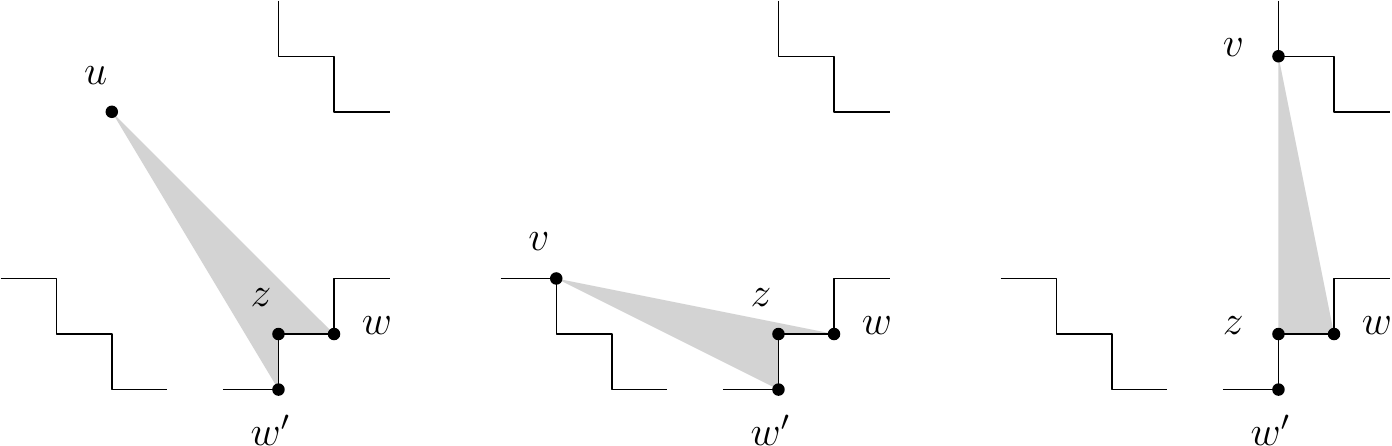}
\caption{Illustration of the proofs of Observations~\ref{obs:opposite-reflex-neighbors} (left figure) and Observation~\ref{obs:adjacent-reflex-neighbors} (middle and right) figures. Vertex $u$ on the northwest staircase can be reflex or convex.}
\label{figure:opposite-visibility}
\end{center}
\vspace*{-.5cm}
\end{figure}

\begin{observation}\label{obs:opposite-reflex-neighbors}
If a vertex $u$ on the northwest staircase of an UP sees a reflex vertex $z$ on the southeast staircase and the slope of the line supporting the line segment $\overline{uz}$ is non-positive, then $u$ sees both (convex) boundary neighbors $w$ and $w'$ of $z$.
\end{observation}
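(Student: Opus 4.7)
The plan is to show that each of the two closed triangles $\triangle uzw$ and $\triangle uzw'$ lies entirely in $P$. Two of the three sides of each triangle are already in $P$: the segment $\overline{uz}$ by hypothesis, and the edges $\overline{zw}$ and $\overline{zw'}$ since they are boundary edges of $P$. Containment of the triangles in $P$ therefore yields $\overline{uw}\subseteq P$ and $\overline{uw'}\subseteq P$, which are exactly the two visibilities claimed.

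First I would pin down $w$ and $w'$. Because $z$ is a reflex vertex on the SE staircase, its $270^{\circ}$ interior angle fills the NW sector at $z$, and the two unit-length boundary edges at $z$ run east and south. Hence $w$ is the convex vertex one unit east of $z$ and $w'$ is the convex vertex one unit south of $z$. The hypothesis that $u$ lies on the NW staircase together with the non-positive slope assumption places $u$ strictly west of $z$ and at height no smaller than that of $z$.

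For $u$ sees $w$, I would parametrize $\overline{uz}$ by $q(t)=u+t(z-u)$ and $\overline{uw}$ by $p(t)=u+t(w-u)$ for $t\in[0,1]$. Because $w$ lies one unit east of $z$ at the same height, $p(t)$ sits exactly $t$ units east of $q(t)$. Orthogonal convexity makes every horizontal cross-section of $P$ a single closed segment. Since $q(t)\in P$, the west end of the cross-section at $p(t)$'s height lies at or west of $q(t)$, hence at or west of $p(t)$. For the east end I would use that $w\in P$ pins the east boundary at height of $z$ to coincide with $w$, and that climbing the SE staircase leaves the east boundary non-decreasing as $y$ rises toward the east tab; since $p(t)$ never lies farther east than $w$, it sits inside the cross-section, so $p(t)\in P$. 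A symmetric argument with vertical cross-sections and the descending SE staircase below $z$ handles the claim that $u$ sees $w'$.

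The main obstacle is this east-boundary bound for $\overline{uw}$ when $u$ is tall enough that its height exceeds the top of the east tab (symmetrically, the south-boundary bound for $\overline{uw'}$ when $u$ sits far enough west to reach into the SW staircase's column range). In that regime the east (resp., south) boundary of $P$ is carried by the NE (resp., SW) staircase and curls back away from $\overline{uz}$. Here I would invoke that $u$ already sees $z$, which forces the NE (resp., SW) staircase to lie east (resp., south) of $\overline{uz}$ at every relevant height (column), and then exploit the non-positive slope of $\overline{uz}$ to show that the single unit of eastward (resp., downward) offset separating $\overline{uw}$ from $\overline{uz}$ (resp., $\overline{uw'}$ from $\overline{uz}$) is absorbed into this slack, keeping the segment inside $P$.
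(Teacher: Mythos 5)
Your cross-section strategy is a genuinely different route from the paper's, which instead argues about which of the four staircases could carry a blocking edge (the northwest staircase cannot block $\overline{uw}$ because $w$ lies in the closed southeast quadrant of $u$; the southeast staircase cannot because $u$ lies in the closed northwest quadrant of $w$; and the northeast and southwest staircases are asserted never to block visibility between northwest- and southeast-staircase vertices at all). Your handling of the west end of each horizontal cross-section, and of the east end at heights up to the top of the east tab, is correct and arguably cleaner than the paper's quadrant arguments.

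However, there is a genuine gap at exactly the step you label the main obstacle, and the repair you sketch does not work. The fact that $u$ sees $z$ only tells you that at each height the east boundary lies \emph{weakly} east of the corresponding point $q(t)$ of $\overline{uz}$; it gives no quantitative margin. Since $p(t)$ sits a positive distance $t$ east of $q(t)$, you would need the east boundary to clear $\overline{uz}$ by at least $t$ at every height above the east tab, and if $\overline{uz}$ were permitted to graze a reflex vertex of the northeast staircase at a height strictly between those of $z$ and $u$ (visibility allows touching the boundary), that margin would be zero there and $\overline{uw}$ would exit the polygon just above that vertex. Neither ``$u$ sees $z$'' nor the non-positive slope of $\overline{uz}$ rules this out. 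What does rule it out is a structural fact about UPs that your sketch never establishes: because all edges are unit length, the reflex vertices of the northeast staircase lie on a common line of slope $-1$, every vertex of the northwest and southeast staircases lies in the closed half-plane on the interior side of that line, and the exterior pockets carved by the northeast staircase (the open quadrants northeast of its reflex vertices) lie entirely in the open complementary half-plane. Hence $\overline{uw}$, both of whose endpoints lie in the closed half-plane, never meets those pockets --- independently of the slope hypothesis. The symmetric statement for the southwest staircase is what you need for $\overline{uw'}$. This is precisely the content of the paper's opening observation that no boundary edge of the northeast or southwest staircase blocks visibility between northwest- and southeast-staircase vertices; until you supply it, your proof is incomplete at its acknowledged crux.
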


\begin{proof}
We will prove that $w$, the horizontal boundary neighbor of $z$, is visible from $u$. Proof for visibility of $w'$, the vertical boundary neighbor of $z$ is symmetric.
See \fig~\ref{figure:opposite-visibility} (left) for an illustration.

First, observe that in an UP, no boundary edge on the northeast and the southwest staircases blocks visibility between the vertices of the northwest staircase and the vertices of the southeast staircase, and, consequently, between $u$ and $w$.

Next, by viewing $u$ as the origin of the Cartesian coordinate system, $z$ lies in the southeast quadrant (inclusive of the axis), because the slope of the line supporting $\overline{uz}$ is non-positive. And since $\overline{zw}$ is a horizontal edge, $w$ also lies in the southeast quadrant. Therefore, no boundary edge of the northwest staircase blocks the visibility between $u$ and $w$ (recall that our definition of visibility allows visibility along the polygon edges).  

Finally, by viewing $w$ as the origin, we can similarly see that no edge of the southeast staircase blocks the visibility between $u$ and $w$. 
\end{proof}

\begin{observation}\label{obs:adjacent-reflex-neighbors}
Let a reflex vertex $v$ be on the southwest or northeast staircase of an UP and a reflex vertex $z$ be on the southeast staircase. If  the slope of the line supporting the line segment $\overline{vz}$ is non-positive, then $v$ sees both (convex) boundary neighbors $w$ and $w'$ of $z$.
\end{observation}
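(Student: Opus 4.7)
The plan is to mirror the three-step structure of the proof of Observation~\ref{obs:opposite-reflex-neighbors}, handling the two subcases ($v$ on the northeast staircase, $v$ on the southwest staircase) in parallel. I will prove visibility of the horizontal boundary neighbor $w$ of $z$; visibility of the vertical boundary neighbor $w'$ follows by a symmetric argument after swapping the roles of the two coordinate axes.

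First I would argue that boundary edges of the two staircases containing neither $v$ nor $z$ cannot block $\overline{vw}$. This follows from orthogonal convexity: the segment $\overline{vw}$ joins two boundary points whose $x$-coordinates lie between the $x$-extrema of the east and south tabs (or symmetrically for the southwest subcase), so an obstructing staircase edge would have to separate $v$ from $w$ across the interior, contradicting orthogonal convexity. Second, I would place $v$ at the origin and verify that the non-positive slope hypothesis forces $z$ (and therefore $w$, since $\overline{zw}$ is horizontal and $w$ extends $z$ toward the right on the southeast staircase) into the closed southeast quadrant relative to $v$. In the subcase $v$ on the northeast staircase, $v_y > z_y$ combined with non-positive slope gives $v_x \le z_x$; in the subcase $v$ on the southwest staircase, $v_x < z_x$ combined with non-positive slope gives $v_y \ge z_y$. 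Since the $v$-containing staircase is monotone in both coordinates and proceeds away from $v$ into the quadrants \emph{other} than the closed southeast one, no edge of $v$'s staircase other than those incident to $v$ itself can enter the region occupied by $\overline{vw}$, and visibility along boundary edges is permitted. Third, I would translate the origin to $w$, observe that $v$ lies weakly northwest of $w$, and use the monotone staircase structure on the southeast side to show that every edge of that staircase other than $\overline{zw}$ lies either strictly south of $w$ or strictly east of $w$, hence outside the closed rectangle spanned by $v$ and $w$ that contains $\overline{vw}$.

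The main obstacle will be the third step in the subcase where $v$ is on the northeast staircase, because the northeast and southeast staircases share the east tab and parts of the southeast staircase lie geometrically near $v$. I would need to check carefully that, because $z$ is reflex with $w$ as its horizontal (rightward) neighbor, every southeast-staircase vertex that is at a height between $z_y$ and $v_y$ has $x$-coordinate strictly greater than $w_x$ — this is forced by the monotonicity of the staircase together with $w$'s status as $z$'s horizontal neighbor — so no such edge crosses $\overline{vw}$. Once this local geometric claim is established, the argument closes in exactly the same way as for Observation~\ref{obs:opposite-reflex-neighbors}, and the symmetric argument for $w'$ yields the full conclusion.
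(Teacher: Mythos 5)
There is a genuine gap in your second step. You claim that $v$'s own staircase ``proceeds away from $v$ into the quadrants other than the closed southeast one,'' and conclude that no edge of that staircase other than those incident to $v$ can meet $\overline{vw}$. This is false in both subcases: the southwest staircase shares the south tab with the southeast staircase, and the northeast staircase shares the east tab with it, so the portion of $v$'s staircase between $v$ and that shared tab runs down and to the right of $v$ --- precisely into the closed southeast quadrant where $z$, $w$, $w'$ and the segment $\overline{vw}$ live. (This is the essential difference from Observation~\ref{obs:opposite-reflex-neighbors}, which you are mirroring: there the viewing vertex lies on the \emph{opposite} staircase, which genuinely avoids its southeast quadrant, so the quadrant argument disposes of that staircase. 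For \emph{adjacent} staircases it does not.) Concretely, if $v$ is a reflex vertex of the southwest staircase and $w'$ is the right endpoint of the south tab (the vertical neighbor of the lowest reflex vertex of the southeast staircase), then $\overline{vw'}$ has slope exactly $-1$ and passes through \emph{every} intermediate reflex vertex of the southwest staircase; edges of $v$'s staircase therefore do touch the segment, contradicting your stated conclusion even though visibility still holds.

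The step can only be closed by invoking the uniform step length, which your proposal never does. Because all edges have unit length, the reflex vertices of $v$'s staircase below $v$ are collinear on a line of slope $-1$ through $v$, with its convex vertices strictly below-left of that line, while the segment from a reflex vertex of the southwest staircase to any vertex of the southeast staircase has slope at least $-\tan(\pi/4)=-1$. Hence $\overline{vw}$ and $\overline{vw'}$ stay weakly above $v$'s staircase, meeting it at most in reflex vertices, through which visibility is permitted. This slope bound is exactly the middle step of the paper's proof, and it is where the uniform-length hypothesis enters; without it the statement itself can fail, so no purely quadrant-based argument can substitute for it. Your first and third steps correspond to the paper's first and third steps and are essentially fine.
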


\begin{proof}
Consider the case when $v$ is on the southwest staircase (the proof of the case when $v$ is on the northeast staircase is symmetric).
See \fig~\ref{figure:opposite-visibility} (middle and right) for illustration.

First, observe that no boundary edge on the northwest and northeast staircase blocks the visibility between the vertices of the southwest staircase and the vertices of the southeast staircase, and, consequently, between $v$ and $w$.

Next, observe that the slope of the line supporting the edges from {\em any} reflex vertex on the southwest staircase and {\em any} vertex on the southeast staircase is at least $-\tan(\pi/4)$ and, therefore, no vertex on the southwest staircase can block the visibility between them.

Finally, by viewing $w$ (resp., $w'$) as the origin of the Cartesian coordinate system, we can see that $v$ is in the northwest quadrant (inclusive of axis) and, therefore, no boundary edge on the southeast staircase can block the visibility between $w$ (resp., $w'$) and $v$.
\end{proof}

\begin{figure}[!tb]
\begin{center}
\includegraphics[scale=0.60]{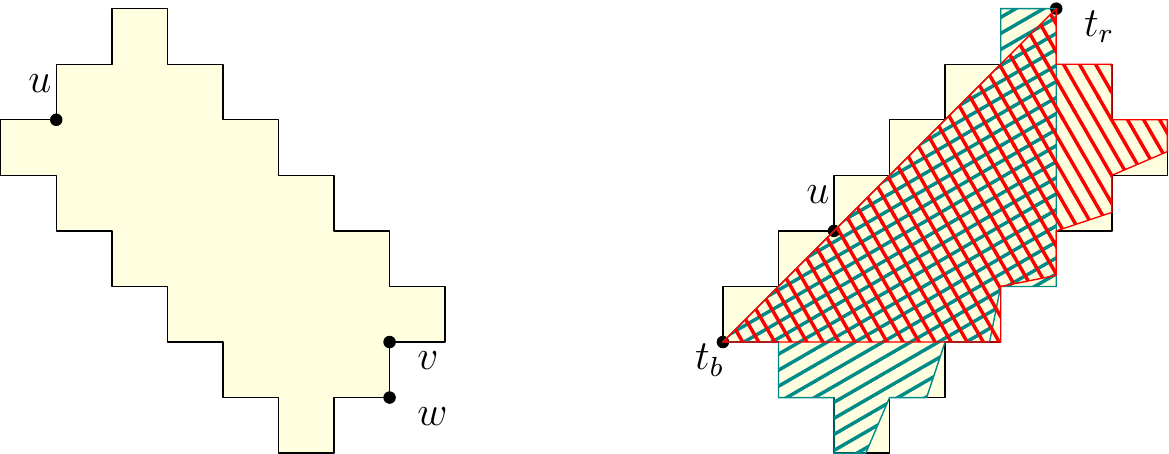}
\caption{Illustration of the proof that an edge $(u,v)$ is non-simplicial, when both $u$ and $v$ are reflex. The proof demonstrates the existence of a convex vertex $w$ visible from both $u$ and $v$.}
\label{figure:reflex-reflex}
\end{center}
\vspace*{-.5cm}
\end{figure}

We are ready to prove Lemma~\ref{lemma:convex-neighbor-two-cliques}.\\

\begin{repeatlemma}{lemma:convex-neighbor-two-cliques}
In a UP, if $u$ or $v$ is a reflex vertex, then edge $(u,v)$ is not $1$-simplicial. 
\end{repeatlemma}

\begin{proof}
Let $(u,v)\in E_P$ and suppose that at least one of $u$ and $v$ is reflex. 

\emph{Case 1: Both $u$ and $v$ are reflex.} Then they belong to a maximal clique consisting of all reflex vertices and no convex vertices. We will show that both $u$ and $v$ also see some convex vertex $w$, therefore, $u,v$ and $w$ are part of another maximal clique. For concreteness of exposition, we orient the polygon so $u$ is always on the northwest staircase. There are two cases to consider:

\begin{enumerate}[(a)]
\item {\bf Both $u$ and $v$ are on the short staircases.} (See \fig~\ref{figure:reflex-reflex} (left)). Observe, that every reflex vertex on a short staircase sees all vertices on the other short staircase. Then, if $u$ and $v$ are on the same (northeast) staircase, there is a convex vertex $w$ on the southeast staircase, visible to both $u$ and $v$. If $v$ is on the southwest staircase, $w$ is either of the two convex boundary neighbors of $v$. Since $w$ is a boundary neighbor of $v$, clearly it is visible from $v$. 

\item {\bf At least one of $u$ and $v$ is on a long staircase.} (See \fig~\ref{figure:reflex-reflex} (right)). Without loss of generality, let it be $u$, which is on the northwest staircase. Vertex $u$ sees the right tab vertex $t_r$ of the north tab and the bottom tab vertex $t_b$ of the west tab. Since northwest staircase is long, the union of reflex vertices visible from $t_r$ and $t_b$ is the set of all reflex vertices in the polygon. Therefore, every reflex vertex $v$ must see at least one of $t_r$ or $t_b$. 
\end{enumerate}

\begin{figure}[!tb]
\begin{center}
\includegraphics[scale=0.60]{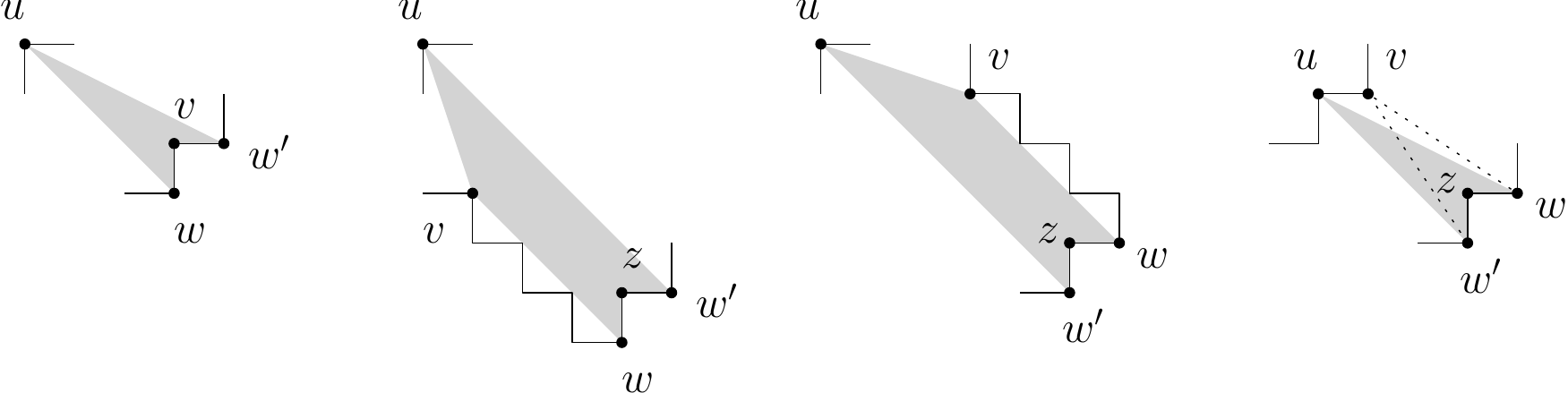}
\caption{Illustration of the proof that an edge $(u,v)$ is non-simplicial, when one vertex, $u$, is convex and another, $v$, is reflex. For each of the four cases of when $v$ is on the four different staircases, there is a pair of convex vertices $w$ and $w'$ on the southeast staircase, which are visible to both $u$ and $v$, but not to each other.}
\label{figure:convex-reflex}
\end{center}
\vspace*{-.5cm}
\end{figure}
\emph{Case 2: One of the vertices is convex.} Without loss of generality, let it be $u$ (again, on the northwest staircase), and let $v$ be reflex. Let $P_u$ be a polygon, which consists of $u$ and all reflex vertices visible from $u$. For every reflex  vertex $v$ in $P_u$, we will identify two convex vertices $w$ and $w'$ on the southeast staircase, which are visible to both $u$ and $v$. Since $w$ and $w'$ are convex vertices on the same staircase, they are invisible to each other and, therefore, $u,v,w$ and $u,v,w'$ will be part of two distinct maximal cliques. 

Let us consider  the four possible locations for $v$  (see \fig~\ref{figure:convex-reflex} for illustrations):

\begin{enumerate}[(a)]
\item {\bf $v$ is on the southeast staircase.} Let $w$ and $w'$ be the two convex boundary neighbors of $v$. Clearly, $v$ sees $w$ and $w'$. Moreover, since $u$ sees $v$ and $v$ is on the opposite staircase from $u$, the slope of the line supporting  the segment $\overline{uv}$ is non-positive. Then, by Observation~\ref{obs:opposite-reflex-neighbors}, $u$ must also see $w$ and $w'$. 

\item \label{item:southwest}{\bf $v$ is on the southwest staircase.} Let $z$ be the lowest reflex vertex on the southeast staircase (adjacent to the south tab's right vertex) and let $w$ and $w'$ be the two convex boundary neighbors of $z$.  Vertex $z$ is visible from $v$ because both $v$ and $z$ are reflex. Since $z$ is the lowest reflex vertex, the slope of the line supporting the segment $\overline{vz}$ is non-positive (the slope is 0 if $v$ is the lowest reflex vertex on the southwest staircase). Thus, by Observation~\ref{obs:adjacent-reflex-neighbors}, $w$ and $w'$ are visible from $v$. Moreover, since $u$ sees $v$ and $v$ is on the southwest staircase, $z$ must be visible from $u$ with even smaller slope of the line supporting segment $\overline{uz}$, i.e., the slope is also non-positive. Therefore, by Observation~\ref{obs:opposite-reflex-neighbors}, $w$ and $w'$ are also visible from $u$.

%

\item {\bf $v$ is on the northeast staircase.} Let $z$ be the highest reflex vertex on the southeast staircase (adjacent to the east tab's bottom vertex) and let $w$ and $w'$ be the two convex boundary neighbors of $z$. Again, vertex $z$ is visible from $v$ because both $v$ and $z$ are reflex. We can see that the slope of the line supporting the segment $\overline{vz}$ is negative by viewing $v$ as the origin of the Cartesian coordinate system and observing that $z$ is in the southeast quadrant. Thus, by Observation~\ref{obs:adjacent-reflex-neighbors}, $w$ and $w'$ are visible from $v$. At the same time, $z$ is visible from $u$ because $v$ is visible from $u$ and $z$ is below and to the right of (or directly below) $v$. Moreover, the slope of the line supporting segment $uz$ is also non-positive. Therefore, by Observation~\ref{obs:opposite-reflex-neighbors}, $w$ and $w'$ are also visible from $u$.


\item {\bf $v$ is on the northwest staircase.} Since $v$ is visible from $u$,  $v$ must a reflex boundary neighbor of $u$. Observe, that because $P_u$ is bounded from all sides, it must contains a non-empty subset $Z$ of reflex vertices from the southeast staircase. Moreover, there exists a vertex $z\in Z$, which is not axis-aligned with $u$, i.e., the slope of the line supporting the segment $\overline{uz}$ is strictly negative. Let $w$ and $w'$ be the two convex boundary neighbors of $z$. By definition of $P_u$, $z$ is visible from $u$, therefore, by Observation~\ref{obs:opposite-reflex-neighbors}, $w$ and $w'$ are visible from $u$. Again, $z$ is visible from $v$ because both $v$ and $z$ are reflex. Since the slope of the line supporting segment $\overline{uz}$ is strictly negative and the polygon has uniform length boundaries, the slope of the line supporting segment $\overline{vz}$ must be non-positive. 
Therefore, by Observation~\ref{obs:opposite-reflex-neighbors}, $w$ and $w'$ are also visible from $v$.
\end{enumerate}
\end{proof}

\subsection{Proof of Lemma~\ref{lemma:iup-tabs}}
\label{subsection:find-4-tabs}

\begin{repeatlemma}{lemma:iup-tabs}
We can identify the four tabs of an IUP in $O(nm)$ time.
\end{repeatlemma}

\begin{proof}
We compute the four $7$-vertex maximal cliques of Lemma~\ref{lemma:cliques-convex-vertices} in $O(nm)$ time using Observation~\ref{claim:simplicial-k-clique-time}.\nodari{Can't we do it faster, e.g. in $n^2$ time? We should state the tightest bound here and update Lemma~\ref{lemma:iup-tabs}}\darren{We must evaluate each of the $m$ edges, and spend $O(n)$ time on each edge determining if the edge is $1$-simplicial and in a maximal $7$-clique.} These cliques have exactly three convex vertices each, and tabs are incident to two convex vertices, narrowing our choice of tab down to $4\cdot\binom{3}{2} = 12$ edges. Four of these are vertical or horizontal (non-boundary) edges, which we can detect and eliminate, as they are not $1$-simplicial: these edges are in both a clique with both tab vertices, and a clique containing a tab vertex, another convex vertex, and reflex vertices not in the tab clique. We have eight remaining edges to consider.  
These eight edges form four disjoint paths on two edges, and the middle vertex on each path is a tab vertex. 

Note that these middle tab vertices are on the long staircases. Let one of them be called $u$. Now it remains to find $u$'s neighbor on its tab. Vertex $u$ has two candidate neighbors; let's call them $v$ and $w$. Just for concreteness, let's say $u$ is the vertex of the north tab on the northeast staircase.

Suppose, without loss of generality, that $v$ has more reflex neighbors than $w$, then $v$ is $u$'s neighbor on a tab, because it sees reflex vertices on the whole northeast and southeast staircases, while $w$ sees only a subset of those.
Otherwise $v$ and $w$ have the same number of reflex neighbors, which only happens when $w$ sees every reflex vertex on the southeast and northeast staircases. Then either $v$ or $w$ has more convex neighbors. Suppose, without loss of generality, that $v$ is a tab vertex, then $v$ has fewer convex neighbors than $w$. To see why, note that since $u$ is on northeast (long) staircase, $v$ is on the northwest (short) staircase. Vertex $v$ has convex neighbors $u$, $w$, and every convex vertex on the southeast (short) staircase. Likewise, $w$ has convex neighbors $v$, $u$, every convex vertex on the northeast (long) staircase (including $u$) and one vertex on the southeast (short) staircase. 
\nodari{This proof lacks a smooth flow... Reads like a list of observations, without overall intuition guiding the reader. I am going to rewrite it later.}

We can do these checks for all such pairs $v$ and $w$, giving us all tabs.
\darren{Brief time breakdown.} Note that, if we have already distinguished convex and reflex vertices, this step takes time $O(nm)$: $O(nm)$ time to compute the four cliques containing tabs, and $O(n)$ to count the number of reflex and convex vertices visible from each tab vertex candidate.
\end{proof}

\subsection{Proof of Lemma~\ref{lemma:elementary-cliques}}
\label{subsection:elementary-cliques}

\begin{repeatlemma}{lemma:elementary-cliques}
We can identify the elementary cliques containing vertices on the northwest staircase in $O(nm)$ time.
\end{repeatlemma}
\begin{proof}
Each elementary clique $C$ contains a $1$-simplicial edge, as $C$ is maximal and two convex vertices in $C$ must be on opposite staircases. Using Observation~\ref{claim:simplicial-k-clique-time} we compute all $1$-simplicial edges in maximal cliques on seven or nine vertices in $O(nm)$ time, and keep the cliques that are elementary cliques---those that contain three convex and either four or six reflex vertices.

Let $k_{NW}$ be the number of convex vertices on the northwest staircase. We number these convex vertices from $v_0$ to $v_{k_{NW}-1}$ in order along the northwest staircase from the north tab to the west tab. We denote by $C_i$ the unique elementary clique containing $v_i$. Note that $C_0$ is the unique maximal (elementary) clique containing the north tab. Furthermore, each clique $C_i$ contains a set of reflex vertices $R_i$ such that $|R_i\cap C_{i+1}| = 3$, and for $j \not\in\{i-1,i,i+1\}$, $R_i\cap C_j = \emptyset$.\darren{Expand to explain why empty.}
Therefore, from elementary clique $C_i$, we can compute elementary clique $C_{i+1}$ by searching for the only other elementary clique containing reflex vertices $R_i$. Once we reach an elementary clique containing a tab, then we have computed all elementary cliques on the northwest staircase. This tab is the west tab and we are finished.
\end{proof}

\subsection{Proof of Lemma~\ref{lemma:long-staircases}}
\label{subsection:long-staircases}
\newcommand{\W}{\mathcal{W}}
\newcommand{\E}{\mathcal{E}}

\begin{repeatlemma}{lemma:long-staircases}
We can assign the remaining convex vertices 
in $O(n^2)$ time.
\end{repeatlemma}

\begin{proof}
Let $\W$ and $\E$ be all the convex vertices on the southwest and northeast staircases (which we are computing) and let $\W_0$ and $\E_0$ be the convex vertices  on the southwest and northeast staircases that are already known from the elementary cliques from Lemma~\ref{lemma:cliques}. Let $N_c(v)$ denote the set of convex neighbors on the opposite staircase of some vertex $v$. Then, for each vertex $w_0\in \W_0$, $N_c(w_0) \subseteq \E$, i.e., the convex neighbors of the (convex) vertices in $\W_0$ are on the northeast staircase. Similarly, for each vertex $e_0\in \E_0$, $N_c(e_0)\subseteq \W$. Then we can iteratively define sets $\E_i = (\cup_{w\in \W_{i-1}} N_c(w)) \setminus \E_{i-1}$ and $\W_i =(\cup_{e\in \E_{i-1}} N_c(e)) \setminus \W_{i-1}$ and identify all vertices of the southwest and northeast staircases as $\W = \cup_i \W_i$ and $\E = \cup_i \E_i$. 

To order the vertices along the southwest staircase, note that the sets $\W_i$ should appear in order of increasing $i$ from top to bottom. Also note that if one were to assign the vertices of $\W_i$ to a staircase from top to bottom, each vertex $w_i$ in this order would see fewer vertices of $\E_{i-1}$. Thus, we can order the vertices within each $\W_i$. The argument for ordering vertices of $\E_i$ is symmetric.

Computing set $\E_i$ takes time $O(|\W_{i-1}|\cdot n)$: we evaluate all $O(n)$ neighbors of each neighbor in $\W_{i-1}$. Likewise, computing $\W_i$ takes time $O(|\E_{i-1}|\cdot n)$. Thus, overall it takes $O((|\W| + |\E|)\cdot n) = O(n^2)$ time to construct (and order) sets $\W$ and $\E$.
\end{proof}

\subsection{Proof of Lemma~\ref{lemma:remaining-reflex}}
\label{subsection:remaining-reflex}

\begin{repeatlemma}{lemma:remaining-reflex}
We can assign the reflex vertices to each staircase in $O(n^2)$ time.
\end{repeatlemma}
\begin{proof}
Once the convex vertices are ordered on the staircases, we can compare the reflex vertices that are seen from the tab vertices. Let $a$ and $b$ be vertices on different tabs visible along a short staircase \ifFull(see \fig~\ref{figure:side-visibility}(left))\else(see \fig~\ref{figure:algorithm-stages}\subref{ab-visibility})\fi. Let $\R$ be the set of all reflex vertices of the IUP, and $N(v)$ be a set of all neighbors of vertex $v$ in the visibility graph of IUP. Then $\R_0 = N(a)\cap N(b)\cap \R$ contains all reflex vertices from the short staircase, plus two extra reflex vertices from its neighboring long staircases. The remaining vertices $N(a)\setminus \R_0$ are on one long staircase (and $N(b)\setminus R_0$ are on the other long staircase).

Thus, we can find many reflex vertices on the long staircases in $O(n)$ time, except the endvertices and potentially those in the middle of the staircases. To find the remaining ones, we build rectangles (maximal cliques) consisting of two convex vertices $u$ and $v$ on the opposite staircases and a known reflex vertex $w$, such that $(u,w)$ forms a boundary edge of the IUP\nodari{how do we know which edges in $G_P$ are boundary edges in $P$?} \ifFull(see \fig~\ref{figure:side-visibility}(right))\else(see \fig~\ref{figure:algorithm-stages}\subref{rectangles})\fi. These rectangles define new reflex vertices on the staircase opposite from $w$. Thus, we iteratively discover all new reflex vertices. 

Given the two convex vertices of the rectangle, it takes $O(n)$ to compute the maximal $6$-clique of the rectangle, and $O(1)$ to determine the new reflex vertex. We must do this for $O(n)$ rectangles. Hence, the total time to discover these new reflex vertices is $O(n^2)$. 
\end{proof}

\section{Omitted Proofs: \Hpolygon s}
\label{section:monotone-staircase-appendix}

\subsection{Reconstructing a Double Staircase Polygon in Linear Time}
\label{subsection:double-staircase}
Every double staircase polygon can be decomposed into $k$ axis-aligned rectangles for some integer $k \ge 1$. Note that the number of vertices in such polygon is $4k$, with each vertex $u$ on one of $k+1$ levels $l_u = 0, \dots, k$. With the exception of levels $l_u = 0$ and $l_u = k$, which contain only two vertices (pairs of tab and base vertices), there are four vertices per level (two top and two bottom vertices).

Observe that the degree $deg(u)$ of each vertex $u \in G_P$ exhibits the following pattern:

\begin{enumerate}[(a)]
\item {\bf $u_k$ is a tab vertex (on level $k$).} $deg(u_k) = k + 2$: $u_k$ is a neighbor to $k$ bottom vertices on the opposite staircase and two boundary vertices of $u$.
\item {\bf $u_l$ is a top vertex on level $l = 1,\dots, k-1$.} $deg(u_l) = l+4$: $u_l$ is a neighbor to $l+1$ bottom vertices on the opposite staircase, 1 convex vertex on the same level as $u_l$ on the opposite staircase, and two boundary vertices of $u_l$. 
\item {\bf $u_l$ is a bottom vertex on level $l = 1,\dots, k-1$.} $deg(u_l) = 3k-l + 2$: $u_l$ is a neighbor to $k$ bottom vertices on the opposite staircase, $k- l+1$ top vertices on the opposite staircase, $k-1$ bottom vertices on the same staircase and two boundary vertices of $u_l$. 
\item {\bf $u_0$ is a base vertex (on level $0$).} $deg(u_0) = 3k$: $u_0$ is a neighbor to $2k$ vertices on the opposite staircase, $k-1$ bottom vertices on the same staircase, and one boundary vertex of $u_0$.
\end{enumerate}

Observe, that most of the above values for degrees come in pairs: one vertex per staircase. For positive integer $k$, the only two exceptions are $deg(u) = k+2$ and $deg(u) = 3k$, each of which appears four times: two tab vertices $u_k$ and $u'_k$ and two top vertices $u_{k-2}$ and $u'_{k-2}$ (on level $k-2$), and two base vertices $u_0$ and $u'_0$ and two bottom vertices $u_2$ and $u'_2$ (on level $2$). However, we can easily differentiate tab vertices $u_k$ and $u'_k$ from top vertices $u_{k-2}$ and $u'_{k-2}$, because $u_k$ and $u'_k$ are neighbors to the two bottom vertices on level $k-1$ (of degree $2k+3$), while $u_{k-2}$ and $u'_{k-2}$ are not. Similarly, we can differentiate the two base vertices $u_0$ and $u'_0$ from bottom vertices $u_2$ and $u'_2$, because $u_0$ and $u'_0$ are neighbors to the two top vertices on level one (of degree $5$), while $u_2$ and $u'_2$ are not.

Thus, we can identify the levels of each vertex by their degrees (or the degrees of their neighbors). Finally, an arbitrary left-right assignment of the two base vertices to the two staircases, defines the assignment of all top (convex) vertices (including tab vertices) to the staircase. The assignment of tab vertices to the staircases defines  the assignment of all reflex vertices to staircases. Such assignment can be performed in linear time using the classification of the vertices into top and bottom vertices based on their degrees. 


\subsection{Proof of Lemma~\ref{lemma:isolated-vertices}}
\begin{figure}[!tb]
\begin{center}
\includegraphics[width=0.9\textwidth]{figures/cliques-in-tab-staircases}
\caption{Edges between $u,v$ on a tab $t$'s staircases, but disjoint from $t$, are in a clique of size at least $5$.}
\label{figure:cliques-in-tab-staircases}
\end{center}
\vspace*{-.5cm}
\end{figure}
\label{subsection:islated-vertices}

\begin{repeatlemma}{lemma:isolated-vertices}
In a \hpolygon, every $1$-simplicial edge in a maximal $4$-clique contains either a tab vertex or an \iso. 
\end{repeatlemma}
\begin{proof}
Let $\ell_u$ denote the \emph{level} of vertex $u$ in $P$, which is its $y$-coordinate assuming all uniform edges have unit length, where the base vertices are at level 0. 
Consider an arbitrary $1$-simplicial edge $(u,v)$ that is part of a maximal $4$-clique $C$. We assume that neither $u$ nor $v$ is a tab vertex, otherwise we are done. Note that if $|\ell_u - \ell_v| \le 1$ then $u$ and $v$ are in an axis-aligned rectangle in $P$ defined by at least six vertices of $G_P$ and thus $|C| \ge 6$. Thus, suppose that $\ell_u - \ell_v \ge 2$ and, without loss of generality, $u$ lies to the right of $v$ (the case of $u$ lying to the left of $v$ can be proven symmetrically). Since a \topV does not see any vertices above it, $v$ must be a \bottomV. Thus, $v$ sees a vertex of some tab $t$. We will show that $t$ cannot see any other vertex of $C$. Let $R$ be a set of reflex vertices on $v$'s staircase on levels $\ell_{v}+1$ up to and including $\ell_u$. Observe, that $v$ cannot be part of the maximal $4$-clique that contains both vertices of $t$, hence, $1 \le |R| \le \ell_u - \ell_v$ ($|R|< \ell_u-\ell_v$ when the vertices of $t$ are below $u$).

\emph{Case 1:} $u$ and $v$ are vertices of an up- and down-staircase of $t$. (See \fig~\ref{figure:cliques-in-tab-staircases}.)
\begin{enumerate}
\item {\bf $u$ is a \topV:} $u$ and $v$ must lie on different staircases, and $(u,v)$ is in a clique consisting of $u$, $(\ell_u-\ell_v) $ reflex vertices on $v$'s staircase, $v$, and $u$'s two (reflex) boundary neighbors and thus $|C|\geq 6$.

\item {\bf $u$ is a \bottomV:} Then $(u,v)$ is in a convex quadrilateral (clique) consisting of at least five vertices: $(\ell_u-\ell_v+1) \geq 3$ \bottomVs on $v$'s (up-)staircase from levels $\ell_v$ through $\ell_u$, and on the opposite (down-)staircase: a \bottomV on level $\ell_u$ and \topV on level $\ell_u+1$.
\end{enumerate}

\emph{Case 2:} Vertices $u$ and $v$ belong to up- and down-staircases of different tabs. Then we call $(u,v)$ a \emph{crossing edge}. Consider the following cases:
\begin{enumerate}
  \item {\bf $u$ is reflex:} Let $(u, u')$ be a horizontal boundary edge. 
    \begin{enumerate}
      \item {\bf $u'$ is convex:} Let $(u', u'')$ be a vertical boundary edge. Then $v$ sees $u''$, $C = \{v, u, u', u''\}$ and $t$ does not see $u, u',$ or $u''$.
      \item {\bf $u'$ is reflex:}   Either some vertex in $R$ sees $u$ and $u'$ (and, consequently, is in $C$) or there is a vertex $w \in C$, such that line segments $\overline{vw}$ and $\overline{wu}$ define the boundary of the convex region of $C$ which exclude the vertices of $R$.  At the same time, there must be at least one vertex $w' \in C$, bounding the convex region of $C$ on the right (e.g. by line segments $\overline{u'w'}$ and $\overline{w'v}$. Either way, $|C|  > 4$.
    \end{enumerate}
  \item {\bf $u$ is convex:}  Let $(u, u')$ and $(u'', u)$ be the vertical and horizontal boundary edges, respectively. Since $v$ sees $u$ and $u'$ is below $u$, $v$ must see $u'$, i.e. $u' \in C$, but $t$ does not see $u'$. 
    \begin{enumerate}
      \item  {\bf $v$ does not see $u''$:} There must be a reflex vertex $w \in C$, that blocks $v$ from seeing $u''$. Note that $u$ sees both $v$ and $u''$ and consequently cannot belong to $v$'s double staircase, i.e. $t$ does not see $w$. Thus $C = \{v, u, u', w\}$ and $t$ sees $v$, but not $u$, $u'$, or $w$.

      \item {\bf $v$ sees $u''$:} In this case, either some vertices of $R$ are in $C$ or there is some other vertex $w \in C$ blocking them from $u$, $u'$ or $u''$. In either case, since $\{v, u, u', u''\} \subsetneq C$, $|C| > 4$.  
    \end{enumerate}%
\end{enumerate} %
%
%
%
%
%
%
%
%
\end{proof}

\subsection{Proof of Lemma~\ref{lemma:monotone-tab-id}}
\label{subsection:monotone-tab-id}
\begin{repeatlemma}{lemma:monotone-tab-id}
In a visibility graph of a \hpolygon, tabs can be computed in time $O(n^2m)$.
\end{repeatlemma}
\begin{proof}
See \fig~\ref{figure:monotone-staircase-tab-candidates}.
We begin by computing all $1$-simplicial edges in maximal $4$-cliques, which takes time $O(n^2m)$ by Observation~\ref{claim:simplicial-k-clique-time}. Call this set of edges $\esim$, and the set of their maximal cliques $\csim$. Then $\esim$ contains the tabs, some edges that share a vertex with the tabs, and edges between staircases of different tabs (crossing edges) (which contain {\isos} by Lemma~\ref{lemma:isolated-vertices}). For all (non-incident) pairs of $1$-simplicial edges $e_1$ and $e_2$ in maximal $4$-cliques $C_1$ and $C_2$, respectively, we check if exactly one vertex of $C_2$ can be seen by an endvertex of $e_1$. That is, we compute the set $\test =\{v\in C_2\mid (u,v)\in E \text{ and } u\in e_1\}$ and verify that $|\test|=1$. If $e_1$ is a tab, then $C_2$ contains an \iso, and is detected as a non-tab clique. Thus, if we compare all pairs of edges and cliques, all $4$-cliques containing crossing edges will be eliminated. We can do this check in $O(nm)$ time by first storing, for each vertex $u$, the edges $\{(u,v)\in \esim\}$ and cliques $\{C\in\csim\mid u\in C\}$. Then for each edge $(u,v)$ in $G_P$, we run the {\iso} check for each pair of edges and cliques stored at the endvertices $u$ and $v$. Each check of all pairs takes $O(n^2)$, and we do this for $|\esim| = O(m)$ edges.

If only disjoint cliques remain after the previous step, then we have computed exactly all $k$ tabs. Otherwise, we need to eliminate non-tab edges that share a tab vertex. Note that non-tab edges form cliques along the staircase incident to the tab. Since staircases in a \hpolygon~are disjoint, 
non-tab cliques only intersect where they intersect a tab clique. Therefore, the remaining set of cliques can be split into $k$ mutually disjoint sets, where $k$ is the number of top tabs, and each set has at most three cliques that intersect (see \fig~\ref{figure:monotone-staircase-tab-candidates}), of which exactly one is a tab clique, and the other at most two cliques contain a tab vertex, and its non-tab neighbors on the opposite staircase.
Let $S$ be one of these $k$ sets. We can compute the tab in $S$ as follows (three cases):
\begin{enumerate}
\item ($|S|=1$) Then the tab vertices see fewer vertices than non-tab (reflex) vertices. We can see this as follows: tab vertices see the vertices in their tab clique, plus the \bottomVs on the tab's staircases. The non-tab reflex vertices see the same vertices, plus at least two other vertices at their same level, which tab vertices cannot see.
\item ($|S|=2$) The two cliques in $S$ share three vertices, and two vertices $u$ and $v$ that are in exactly one unique clique each. One of these vertices (say $u$) is on the tab, and sees fewer vertices than the other non-tab vertex ($v$) does. We can see this as follows: assume without loss of generality that $u$ is on the left of its double staircase. Then $u$ sees the three vertices of its tab clique, and all reflex vertices on the left staircase. Meanwhile $v$ is on the same side (left) as $u$, and sees the same vertices as $u$, plus a (convex) boundary neighbor, which is on the level between $u$ and $v$, which $u$ cannot see. The remaining tab vertex is adjacent to $v$ along its $1$-simplicial edge forming the clique in $S$.
\item ($|S|=3$) The cliques intersect in a symmetric pattern. The tab edge is formed between the two vertices that are in exactly two of these maximal cliques.
\end{enumerate}
There are at most $3k = O(n)$ of these overlapping cliques in total, and they can be separated into their respective $k$ disjoint sets in time $O(k)$ by marking the vertices of each set, and collecting the intersecting sets. Then within each set, it takes $O(1)$ time to find the tab. Thus the running time is dominated by the time to detect crossing edges in $\esim$: $O(n^2m)$.
\end{proof}

\fi{}

\end{document}